\theoremstyle{plain}
\newtheorem{theorem}{Theorem}
\newtheorem{proposition}{Proposition}
\newtheorem{lemma}{Lemma}
\newtheorem{corollary}{Corollary}
\theoremstyle{definition}
\newtheorem{assumption}{Assumption}
\newtheorem{definition}{Definition}
\newcommand{\sigmat}{\sigma_{\theta}}
\def\spacingset#1{\renewcommand{\baselinestretch}%
{#1}\small\normalsize}
\title{\textbf{La}tency-\textbf{R}esponse \textbf{T}heory Model: Evaluating Large Language Models via Response Accuracy and Chain-of-Thought Length}
\author[1]{Zhiyu Xu}
\author[1]{Jia Liu}
\author[2]{Yixin Wang}
\author[1]{Yuqi Gu\footnote{\texttt{\{zx2488,jl6795\}@columbia.edu}; \texttt{yixinw@umich.edu}; \texttt{yuqi.gu@columbia.edu}. Corresponding author: Yuqi Gu.}}
\affil[1]{Department of Statistics, Columbia University.}
\affil[2]{Department of Statistics, University of Michigan}
\date{}
\begin{document}
\maketitle
\vspace{-5mm}
\begin{abstract}
The proliferation of Large Language Models (LLMs) necessitates valid evaluation methods to provide guidance for both downstream applications and actionable future improvements. The Item Response Theory (IRT) model with Computerized Adaptive Testing has recently emerged as a promising framework for evaluating LLMs via their response accuracy. Beyond simple response accuracy, LLMs' chain of thought (CoT) lengths serve as a vital indicator of their reasoning ability. To leverage the CoT length information to assist LLM evaluation, we propose the \textbf{La}tency-\textbf{R}esponse \textbf{T}heory (LaRT) model, which jointly models both the response accuracy and CoT length by introducing a key correlation parameter between the latent ability and the latent speed. We derive an efficient stochastic approximation Expectation-Maximization algorithm for parameter estimation. We establish rigorous identifiability results for the latent ability and latent speed parameters to ensure the statistical validity of their estimation. Through both theoretical asymptotic analyses and simulation studies, we demonstrate LaRT's advantages over IRT in terms of superior estimation accuracy and shorter confidence intervals for latent trait estimation. To evaluate LaRT in real data, we collect responses from diverse LLMs on popular benchmark datasets. We find that LaRT yields different LLM rankings than IRT and outperforms IRT across multiple key evaluation metrics including predictive power, item efficiency, ranking validity, and LLM evaluation efficiency. Code and data are available at \url{https://github.com/Toby-X/Latency-Response-Theory-Model}.
\end{abstract}
\textit{Keywords}: Large language models, Chain of thought, Item response theory, Identifiability, Stochastic Approximation

\section{Introduction}

As large language models (LLMs) continue to advance across diverse tasks, it has become increasingly crucial to effectively and reliably evaluate their abilities \citep{liang2023holistic}. A common approach to evaluating an LLM’s capability is using a corresponding benchmark, such as GSM8K for mathematical reasoning \citep{cobbe2021training}, HumanEval for code generation \citep{chen2021evaluating}, or MMLU for general knowledge reasoning.
The LLM is run on the benchmark items, its outputs are scored using task-specific metrics, and the resulting score is used to compare or rank LLMs. However, benchmark scores alone provide only a coarse measure of performance and offer limited insight into how or why LLMs differ, as they do not account for variation in item characteristics or model-specific response patterns. To address this, item characterization has gained significant traction for LLM evaluation, including application in selecting high-quality test questions from the increasingly saturated pool of LLM evaluation benchmarks.

Recently, Item Response Theory (IRT), together with Computerized Adaptive Testing (CAT), from the psychometrics literature has emerged as a promising framework for evaluating LLMs in a more principled and interpretable manner \citep{laloretal2016building, zhuang2023efficiently, polo2024tinybenchmarks, kipnis2025metabench, castleman2025rethinking, hofmann2025fluid}. Unlike raw benchmark scoring, which treats all items as equally informative, IRT explicitly models variation in item difficulty and discrimination.
% and therefore provides a latent ability estimate that is comparable across models and datasets, 
This granular item-level information not only provides a latent ability estimate that is comparable across models and datasets, but also yields a deeper understanding of the benchmark datasets themselves. In this context, IRT can be used in two complementary ways. First, in a \emph{static evaluation setting}, IRT is applied to existing benchmark responses to estimate each model’s latent ability and to produce model rankings. Second, IRT can also support an \emph{active evaluation setting}, where item parameters are first estimated from a pool of problems and then used to infer a model’s ability based on selectively chosen items \citep{eindoretal2020active}. This enables tracking how a model’s ability evolves during training or under limited evaluation budgets, and closely relates to CAT procedures that aim to estimate ability efficiently by administering only the most informative items \citep{chang1996global}.
%zhuang2023efficiently, polo2024tinybenchmarks, hofmann2025fluid
% \yg{I mean we should also cite tinyBenchmarks and Fluid language models here...}\zx{add the references here; maybe they do not need to appear in the first place?}

% In the development of LLMs, chain-of-thought emerges an important indicator for the reasoning ability of LLMs. \citet{wei2022chain} discovers that through prompts that elicit chain of thought reasoning, LLMs' ability to resolve complex tasks significantly improves. Increasing the length of chain of thought means more compute when we are testing the model, which is often referred to as test-time compute. \citet{snell2025scaling} discovers that increasing test-time compute can be significantly more effective than scaling the model parameters. The idea of increasing chain of thought length to improve a model's reasoning ability has been employed in the training of LLMs. Specifically, Deepseek trains their reasoning model Deepseek-R1 with the goal of increasing their chain-of-thought length to improve their performance \citep{guo2025deepseek}. Similar emphasis on eliciting chain of thought in training is emphasized in OpenAI o1 as well \citep{jaech2024openai}.

% \zx{change the wording of the following paragraph, adding more references to CoT and test-time compute to illustrate CoT is closely related to the reasoning ability of LLMs. }
Existing IRT-based evaluations of LLMs consider only the final response outcome. Yet, the response-generation process inherently contains information about the model’s reasoning behavior. In particular, \emph{chain of thought} (CoT) is a vital indicator of reasoning in LLMs. The seminal paper \citet{wei2022chain} defines chain-of-thought as \textit{"A series of intermediate natural language reasoning steps that lead to the final output."} \citet{wei2022chain} demonstrated that prompts eliciting CoT reasoning significantly enhance LLMs' ability to resolve complex tasks. 
% \zx{Change the following sentence to be clearer.}
Given the autoregressive nature of LLMs, extending the CoT length necessitates additional decoding steps, thereby increasing the computation at test time.
% Extending CoT length increases the computation needed when testing LLMs. 
\citet{snell2025scaling} discovered that scaling the test-time compute via extending CoT length is often more effective than scaling model parameters. Consequently, recent training strategies for reasoning models prioritize extended CoT generation. For instance, DeepSeek-R1 targets longer CoT sequences in training to improve performance \citep{guo2025deepseek}. OpenAI’s o1 similarly emphasizes eliciting CoT during training \citep{jaech2024openai}. Therefore, CoT length can serve 
% In particular, the length of the \emph{chain of thought} (CoT) serves 
as a process-level indicator of how much intermediate reasoning the model performs prior to producing an answer, and longer CoT length is associated with stronger reasoning ability. Incorporating both accuracy and CoT length into a unified modeling and evaluation framework can therefore yield more reliable and discriminative ability estimates, which motivates this work.
CoT length closely parallels the role of response time (RT) in student assessment, where the duration of problem solving is used as an auxiliary indicator of cognitive effort beyond correctness alone (e.g., \citealp{vanderlinden2007Hierarchical, meng2015Conditional, klein2009evaluating}).

Furthermore, within the specific context of LLM evaluation, it is important to recognize that models autonomously dictate their own computational budgets during inference. Unlike traditional algorithms where compute is a strictly exogenous constraint, an LLM's generated reasoning length is an endogenous, self-determined action. Therefore, explicitly incorporating this autonomous behavioral dimension will capture a more holistic representation of an LLM's nature, while simultaneously providing quantitative insight into the expected marginal gain in ability associated with their self-determined computational budgets.

The joint modeling of response accuracy and response time has been studied in psychometrics (see \citealp{schnipke2005exploring, de2019overview, kyllonen2016use}, for overviews). Among these approaches, the hierarchical IRT–RT framework \citep{vanderlinden2007Hierarchical} is particularly suitable for our setting. The hierarchical approach does not commit to a particular cognitive mechanism and treats response time as an observable process signal linked to latent traits. This leads to a flexible and assumption-light measurement framework.

% Foundational work conceptualized response time as an indicator of cognitive processing effort \citep{luce1991response, schnipke2005exploring}, and subsequent developments have proposed a range of cognitive process modeling approaches, including hierarchical joint IRT–RT models \citep{vanderlinden2007Hierarchical}, evidence-accumulation models such as the drift diffusion model (DDM; \citealp{van2011cognitive}), and competitive race formulations \citep{rouder2015lognormal}. Among these approaches, the hierarchical IRT–RT framework is particularly suitable for our setting. Whereas evidence-accumulation and race models impose specific assumptions on the form of the underlying reasoning process, the hierarchical approach does not commit to a particular cognitive mechanism and instead treats response time as an observable process signal linked to latent traits. This leads to a more flexible and assumption-light measurement framework.

Despite these developments, existing methodologies in psychometrics  face limitations that hinder their direct application to LLM evaluation. From a statistical perspective,  these models \emph{lack strict theoretical identifiability guarantees}. Identifiability ensures that no two distinct parameter configurations yield the same marginal distribution of observed data, and it is essential for consistency, interpretability, and reliable parameter recovery. 
From a computational perspective, existing models are typically \emph{computationally expensive}, hence difficult to scale to LLM evaluation settings, where benchmark datasets typically contain a large number of items. Specifically, existing approaches often estimate the joint models using Markov chain Monte Carlo (MCMC; e.g. \citealt{fox2016joint, bolsinova2019modeling}) algorithms due to the intractable integral in the marginal likelihood.
%or the Expectation-Maximization (EM; e.g. \citealt{kang2020online}) algorithm \yg{separate the EM part and no need to put a big emphasize on it. I temporarily commented it out. Please revise the following sentences in this paragraph accordingly.}. 
Yet, MCMC-based estimation is computationally intensive, 
as each iteration requires repeatedly sampling all latent traits and item parameters, leading to slow convergence when the item pool is large. 
% Alternative approaches like EM-based estimation also involve evaluating likelihood terms that aggregate information over many items, making the numerical integration step progressively more expensive as the number of items increases.  
% Consequently, both approaches struggle to scale to benchmark datasets with large item pools.

% \zx{I delete most of the implications for what we did in the introduction section to make it shorter. I remove the interpretation for the four metrics because they will be illustrated just in section 2.}

This article makes the following contributions. \emph{First}, we propose an identifiable Latency–Response Theory (LaRT) model through jointly modeling response accuracy and chain-of-thought length. LaRT introduces a key unknown correlation parameter $\rho\in(-1,1)$ between latent ability and the latent speed, allowing the two signals to complement one another in estimating model proficiency. We establish rigorous identifiability guarantees to ensure reliable inference for the LLM evaluation. Furthermore, our theoretical asymptotic analysis and simulation studies demonstrate that LaRT yields shorter confidence intervals and significantly higher estimation accuracy than IRT.

\emph{Second}, we develop an efficient stochastic approximation Expectation-Maximization (SAEM) algorithm to estimate population parameters and individual latent traits of the LLMs. To accelerate and stabilize SAEM iterations, we design an MCMC-free latent-trait sampler.
% To adapt SAEM to the LaRT setting, we develop tailored implementation strategies that ensure both computational efficiency and algorithmic stability.
% Specifically, to accelerates each SAEM iteration, we design an effective latent-trait sampler that avoids MCMC. Leveraging the probit link and recent analytical results for probit models with normal priors \citep{durante2019Conjugate}, we show that the latent traits admit a simple two-step sampling scheme based solely on efficient normal and truncated-normal draws \citep{li2025Sparse}. \zx{remove illustration of initialization because it is moved to the appendix.}
% Further, we introduce a fast spectral initialization that enhances estimation quality and stability. This method extends the SVD-based procedure of \citet{zhang2020note} to the multimodal response-accuracy/CoT setting by adding CoT-specific steps under the LaRT hierarchy. %As a data-driven approach that exploits the low-rank structure of the observed data, 
% This initialization procedure avoids the burn-in phase of classical SAEM and substantially reduces the number of required iterations. 
After estimating the population parameters, individual latent traits are obtained via maximum-a-posteriori (MAP) estimation, which reduces to a convex optimization problem.

\emph{Third}, to evaluate LaRT in practical settings, we conduct comprehensive empirical data analysis on real data by collecting responses from diverse LLMs on popular benchmark datasets. Qualitatively,
we observe an interpretable and consistent trend  across all benchmark datasets: LLMs with greater latent ability exhibit lower latent speed (longer CoT).
% We demonstrate that by jointly modeling the response accuracy and CoT length, LaRT promotes \yg{change ``promotes'' to another word and be more clear what we mean} LLMs with greater reasoning ability.
Quantitatively, we compare the performance of LaRT against IRT on the real-world data. 
% We find that  LaRT yields different model rankings than IRT. Moreover, 
LaRT  outperforms IRT  across multiple key evaluation metrics in LLM evaluation: \textit{predictive power}, \textit{validity}, \textit{item efficiency}, and \textit{LLM efficiency}. 
% Through extensive comparisons, we find that LaRT achieves better performance than IRT across all four dimensions.

The rest of this paper is organized as follows. Section~\ref{sec:motivation_desiderata} introduces the motivating dataset for this paper and several desiderata valued by LLM evaluation. Section~\ref{sec:model} introduces the Latency-Response Theory model. Section \ref{sec:algorithm} presents an efficient SAEM algorithm with data-driven initialization for estimating population parameters, and a maximum a posteriori (MAP) algorithm to estimate the latent traits. Section \ref{sec:theory} presents the identifiability results for LaRT and the asymptotic distribution of the latent traits. In Section \ref{sec:simulation}, we perform simulations to validate the performance of our proposed methodology. Finally, Section \ref{sec:application} applies LaRT to real-world LLM evaluation settings, offering a comprehensive qualitative and quantitative comparison against standard IRT across multiple performance dimensions. Section \ref{sec:discussion} concludes and discusses future directions.

\section{Motivating Datasets and Evaluation Desiderata}
\label{sec:motivation_desiderata}
In this paper, we focus on binary (correct/incorrect) LLM responses, a setting that encompasses a wide range of application scenarios, including mathematical reasoning \citep{hendrycksmath2021}, reading comprehension \citep{clark2019boolq}, hallucination detection \citep{li2023halueval}, and natural language inference \citep{wang2019superglue}. 

As a primary application, we evaluate mathematical reasoning abilities using four benchmark datasets: MATH500 \citep{hendrycksmath2021} (500 questions), AMC23 (40 questions), AIME24 (30 questions), and AIME25 (30 questions). These benchmarks comprise advanced high-school competition problems of increasing difficulty, with MATH500 being the easiest and AIME25 the hardest. We evaluate over 80 open-source LLMs ranging from 0.6 billion to 32 billion parameters (detailed in Appendix~\ref{append:llm}). Following \citet{castleman2025rethinking}, we employ both zero-shot and one-shot chain-of-thought (CoT) prompting to broaden the scope of evaluated models (see Appendix~\ref{append:llm} for prompt details). To generate the LLM responses, we bypass standard conversational interfaces and query the models directly via API. This setup ensures that each question is processed independently, strictly isolating the models from any conversational context or memory of prior prompts.

% Specifically, we evaluate the mathematics ability of LLMs in this paper as an example on four math reasoning benchmark datasets: MATH500 \citep{hendrycksmath2021}, AMC23, AIME24, and AIME25, containing 500, 40, 30, and 30 questions respectively. These benchmarks consist of advanced high school competition problems ordered by increasing difficulty, with MATH500 being the easiest and AIME25 being the hardest. We evaluate over 80 open-source LLMs whose parameter size ranges from 0.6 billion to 32 billion; a complete list is provided in Appendix~\ref{append:llm}. Following prior work \citet{castleman2025rethinking}, we respectively evaluate these LLMs with zero-shot and one-shot chain-of-thought prompts to increase the number of evaluated LLMs (see Appendix~\ref{append:llm} for prompt details). 

% To generate the LLM responses, we bypass standard conversational interfaces and query the models directly via API. This setup ensures that each question is processed independently, strictly isolating the models from any conversational context or memory of prior prompts. 

Beyond producing qualitative rankings, LLM researchers have established several desiderata for evaluating benchmarking methodologies \citep{polo2024tinybenchmarks, hofmann2025fluid}. We evaluate the LaRT framework in these four key criteria. The first two desiderata address active evaluation, while the final two address static evaluation.

\begin{enumerate}
    \item \textbf{Predictive Power}: Refers to the ability to accurately predict LLM performance on unseen items using a limited set of questions \citep{polo2024tinybenchmarks}. A method with higher predictive power extracts more information from the response data. Held-out predictive power is a standard measure of model fit in statistics \citep{gelman1996posterior}.
    \item \textbf{Item Efficiency}: Measures the number of items required to accurately estimate an LLM's latent ability. IRT with CAT has shown better efficiency in LLM ranking than other methods \citep{zhuang2023efficiently, hofmann2025fluid}.
    \item \textbf{Validity}: Measures the consistency of LLM rankings across different datasets. High consistency is a desirable trait of a robust benchmarking method. IRT-based methods have previously shown strong performance in this area \citep{hofmann2025fluid}.
    \item \textbf{LLM Efficiency}: Measures the number of LLMs required to accurately estimate population parameters for active evaluation. Reducing this number is vital, as full evaluations are resource-intensive, potentially costing thousands of GPU hours for each LLM \citep{liang2023holistic}.
\end{enumerate}

In the real data application, we will not only test on the qualitative properties of our LLM evaluation methods, but compare our proposed LaRT framework to baseline quantitatively in these key metrics valued by LLM practitioners.

% Rearrange Introduction and some following sections into this dedicated section for Motivating Applications and Datasets.

% LLMs and motivations therein, More detailed introduction. Move from the Applications. Motivation for adaptive testing and related questions (four different experiments)

\section{Latency-Response Theory Model (LaRT)}
\label{sec:model}
\subsection{Notation}

We introduce some notations used throughout this paper. Denote the response accuracy matrix as $\Rb = (R_{ij})\in \{0,1\}^{N\times J}$, collecting responses of $N$ LLMs to $J$ test questions, where $R_{ij}=1$ or 0 indicates whether LLM $i$ answers question $j$ correctly or not. Denote the CoT length matrix as $\Tb = (T_{ij})\in \NN_+^{N\times J}$, where $T_{ij}$ is a positive number recording the CoT length when LLM $i$ responds to question $j$. Denote the $K\times K$ identity matrix as $\Ib_K$. Standard multivariate Gaussian probability density function of $K$ dimensions is denoted as $\phi_K(\bx)$, and its 

cumulative distribution function

is denoted as $\Phi_K(\bx)$. When $K=1$, we omit the subscript $K$. For a set $B\subseteq \RR^K$, let $\Phi_K(B) = P(Z\in B)$, where $Z\sim N(\bm{0}_K, \Ib_K)$.
% Additionally, for a random variable $\bX \sim N(\bm{0}_K, \Ib_K)$, for a set $B\in \RR^{K}$, we denote $\tilde{\Phi}_K(B)=P(\bX \in B)$. Let $[K]=\{1,2,\ldots, K\}$ for any non-negative integer $K$.

%\jl{We refer to $\theta$ as the latent ability and $\tau$ as the latent speed of the LLM, and collectively refer to them as latent abilities. However, to avoid confusion with $\theta$ specifically being called "latent ability," should we instead use the term "latent traits" when referring to them jointly?}\yg{totally agreed!}

% \subsection{Proposed Model and Its Advantages}
\subsection{The LaRT Framework}

%Why we model the correlation in a hierarchical perspective.
%We take on random-effect perspective. 

%Motivated by the parallel role of intermediate reasoning—exemplified by CoT processes in LLM assessment and response times in student assessment—we adopt a hierarchical LaRT framework  to jointly model response accuracy and CoT variables 
%Moreover,  incorporating considerations specific to CoT modeling, identifiability guarantees, and computational efficiency, as detailed later in this section.
%These model specifications follow the general principles of widely used hierarchical frameworks for response accuracy and response time in psychometrics \citep{vanderlinden2007Hierarchical, entink2009multivariate}, inspired by the parallel role of CoT processes in LLM assessment and response times in student assessment in intermediate reasoning.

We propose a hierarchical LaRT framework that jointly models response accuracy and CoT variables.
In this framework, the binary response accuracy is modeled using an item response model with a probit link, while CoT is modeled through a log-normal distribution with a latent speed variable. The subject-level latent traits underlying these two components are jointly specified by a two-dimensional multivariate normal distribution, where the correlation captures the dependence between latent traits.
These model specifications draw inspiration from hierarchical frameworks in psychometrics that jointly model response accuracy and response time  \citep{vanderlinden2007Hierarchical, entink2009multivariate, wang2015mixture}, 
while being motivated by the parallel role of CoT processes and response times as indicators of intermediate reasoning. 
\begin{subequations}
\label{eq:model}
\begin{align}
R_{ij} &\sim \text{Bernoulli}(\Phi(a_j\theta_i + b_j)),
    \label{eq:model-R} \\
\log T_{ij} &\sim N(\omega_j - \varphi_j\tau_i, \lambda_j),
    \label{eq:model-T} \\
(\theta_i,\tau_i)^\top &\sim N(0,\bSigma),
    \label{eq:model-latent} \\
\bSigma &= 
\begin{pmatrix}
1 & \rho\\
\rho & 1
\end{pmatrix},
    \label{eq:model-Sigma}
\end{align}
\end{subequations}

In the IRT model in (\ref{eq:model-R}), $\theta_i$ denotes the latent ability of LLM $i$, representing its position on the underlying proficiency scale. The discrimination parameter $a_j$ describes how strongly item $j$ differentiates between LLMs with different ability levels. The parameter $b_j$ represents the difficulty of item $j$, with smaller values indicating more difficult items, i.e. items that require higher ability for a correct response.
Among the two commonly used link functions for IRT models (the logit link and the probit link), we employ the latter, as its desirable mathematical properties support developing both theoretical identifiability results and computational estimation methods, as detailed later in Sections \ref{sec:algorithm} and \ref{sec:identifiability}.

In (\ref{eq:model-T}), we use a log-normal distribution to model the 
length of CoT. Here, $\tau_i$ denotes a latent speed variable representing the CoT-related trait of LLM $i$.
$\omega_j$ represents the CoT-intensity of item $j$, with 
larger values indicating items that require more 
intermediate reasoning. The coefficient $\varphi_j$ reflects the 
CoT-discrimination of item $j$, measuring the sensitivity of CoT lengths to 
differences in LLMs' latent speed $\tau_i$. $\lambda_j$ is the 
residual variance of the log-CoT ($\log T_{ij}$).
The use of log-normal distribution for CoT length is justified by 
both empirical observations and theoretical considerations. Empirically, CoT 
lengths produced by LLMs are typically large (often in the range of hundreds to 
thousands), and in our applications fewer than 1\% of the CoT are shorter than~10. 
Theoretically, the log-normal distribution serves as an asymptotic 
approximation to common count distributions, such as the Poisson and negative 
binomial, when the counts are large. Thus, for large $T_{ij}$, the log-normal is 
a suitable choice for modeling CoT length. Moreover, as shown later, this specification offers mathematical conveniences that, together with the probit link, support rigorous identifiability analysis and 
efficient estimation algorithms.

%the latent variable $\tau_i$ denotes the CoT-related trait of LLM $i$,
%\jl{these explanations need to be check. @Zhiyu, can you check it?} \zx{The explanation is accurate. Maybe we can directly call $\tau_i$ latent speed here.}
%\jl{revised here.}

Finally, in Equations (\ref{eq:model-latent})-(\ref{eq:model-Sigma}), the diagonal entries of $\bSigma$ are fixed to $1$ to identify the scale 
of the latent traits, and the off-diagonal entry of $\bSigma$ is denoted by 
$\rho$, so that $\bSigma$ is essentially a correlation matrix. We provide 
theoretical identifiability results in Section~\ref{sec:identifiability}. 
For notational simplicity, we denote the collection of population parameters (i.e., parameters not depending on individual LLMs) by 
$\bOmega = \{\ba, \bb, \bomega, \bvarphi, \blambda, \rho\}$.

Figure~\ref{fig:LaRT_DAG} gives a graphical model representation of
% presents a directed acyclic graph (DAG) illustrating the conditional independence structure of 
the LaRT model. The current framework relies on two key conditional independence assumptions. First, the prior distributions of the latent traits $\bxi_i=(\theta_i,\tau_i)$ are assumed to be independent across all evaluated LLMs $i\in[N]$. Second, the observed responses are conditionally independent given the latent traits and item parameters. While these assumptions are standard in latent variable modeling, we explore potential relaxations and extensions to this structure in the discussion.

\begin{figure}[h!]
    \centering
    % Left side: The DAG
    \begin{minipage}[c]{0.40\linewidth}
        \centering
        \includegraphics[width=\linewidth]{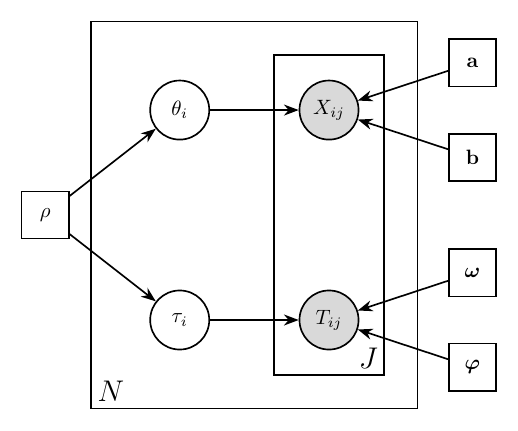}
    \end{minipage}\hfill
    % Right side: The Notation Table
    \begin{minipage}[c]{0.55\linewidth}
        \centering
        \begin{tabular}{ll | ll}
            \toprule
            \multicolumn{2}{c|}{\textbf{LLM Parameters}} & \multicolumn{2}{c}{\textbf{Item Parameters}} \\
            \midrule
            $\theta_i$ & Latent ability & $a_j $ & Accuracy Discrim. \\
            $\tau_i$ & Latent speed & $b_j$ & Difficulty level \\
            $\rho$ & Ability-speed cor. & $\varphi_j$ & Speed Discrim. \\
            & & $\omega_j$ & Baseline CoT \\
            & & $\lambda_j$ & Residual variance \\
            \bottomrule
        \end{tabular}
    \end{minipage}
    
    \vspace{0.5cm} % Adds a little breathing room before the caption
    \caption{Directed acyclic graph (DAG) illustrating the conditional independence structure of the LaRT model. Unshaded circular nodes represent latent variables, while shaded circular nodes denote observed variables. Square nodes indicate parameters that are treated as fixed within the LaRT framework. The table on the right summarizes the interpretation for the model parameters.}
    \label{fig:LaRT_DAG}
\end{figure}

A primary advantage of this hierarchical modeling strategy lies in the interpretability and theoretical consistency. LaRT employs a two-level modeling strategy. At the individual level, the measurement model for the response accuracy is mathematically identical to a standard IRT model. The definition of ability is not directly altered by the CoT length. At the population level, LaRT hypothesizes that across the population of LLMs, stronger reasoning ability correlates with longer CoT. We capture this relationship through the population correlation parameter $\rho$. By leveraging the collateral information provided by CoT length, the model achieves a more precise estimation of this underlying ability, a property we detail both theoretically and empirically in subsequent sections.

\section{Estimation Algorithm}
\label{sec:algorithm}
Parameter estimation is crucial for applying the LaRT framework to LLM evaluation.
In the current setting, the observed data likelihood depends on both population parameters and latent traits, with the latter treated as random effects.  A common approach is to first estimate the population parameters $\bOmega$ by maximizing the observed-data likelihood, obtained by integrating out the latent traits $(\btheta, \btau)$ with respect to their assumed distribution.  This step focuses on capturing the measurement characteristics of the items at the population level while accounting for variability in latent traits through marginalization. 
Subsequently, given the estimated population parameters, the latent traits of individual LLMs are inferred via MAP estimation.
% \jl{Revised}

%Under this framework, we treat the latent traits $(\btheta,\btau)$ as random effects and adopt a two-step estimation procedure. First, we estimate the population parameters $\bOmega$, which characterize the properties of the test items. Second, conditional on these estimated parameters, we maximize the posterior distribution to estimate the latent traits of the individual LLMs.

The main estimation challenge for the population parameters arises from the intractability of the observed-data likelihood:
\begin{equation}
    \label{eqn:marginal_likelihood}
    \begin{aligned}
    P(\Rb,\Tb;\bOmega) &= \int \prod_{i=1}^N\prod_{j=1}^J \left[\frac{\Phi((2R_{ij}-1)(a_j\theta_i+b_j))}{\sqrt{2\pi\lambda_j}}\exp\left\{- \frac{(\log T_{ij}-\omega_j+\varphi_j\tau_i)^2}{2\lambda_j}\right\}\right]\\
    &\prod_{i=1}^N \frac{\exp\left\{-\bxi_i^\top\bSigma^{-1}\bxi_i/2\right\}}{2\pi(1-\rho^2)^{1/2}}d\btheta d\btau
    \end{aligned}
\end{equation}
% \zx{make the following paragraph simpler by dropping some illustration why we need EM and SAEM algorithm, which may be already clear for statistics audience, and not of interests to the LLM audience.}

Direct likelihood maximization is computationally intractable due to the integral over the latent traits $(\btheta,\btau)$. Although the expectation-maximization (EM) algorithm bypasses this intractable integral with the complete-data log-likelihood, the E-step requires evaluating conditional expectations without closed-form solutions, and remains challenging. To circumvent this, we employ Stochastic Approximation EM \citep[][SAEM]{delyon1999Convergence}, %\jl{These abbreviations may be moved to the introduction, where they are first mentioned.} 
which replaces the intractable E-step with Monte Carlo approximations updated through a Robbins–Monro scheme \citep{robbins1951stochastic}. However, successfully scaling SAEM for the LaRT framework relies heavily on designing an efficient latent-variable sampler to ensure rapid computation and algorithmic stability.

% While SAEM has been widely used in latent variable modeling, its successful application to LaRT requires three additional model-specific components that tailor implementation strategies to ensure computational efficiency and algorithmic stability: (1) an efficient sampler for the latent variables that can substantially improve both computational speed and convergence behavior, (2) a suitable initialization strategy that can reduce the number of iterations and enhance estimation accuracy, and (3) a method or estimating individual LLM latent variables after obtaining population parameters. This section develops each component in turn. We first propose an efficient SAEM algorithm for estimating the population
% parameters of LaRT in Subsection \ref{sec:saem}, addressing the aspects discussed above. 
% Next, in Subsection \ref{sec:initialization}, we introduce an effective spectral-based initialization method for 
% the proposed SAEM. Finally, in Subsection \ref{sec:map}, we describe how to estimate the latent 
% variables of LLMs once the population parameters have been obtained, under both static and 
% active evaluation settings.

The remainder of this section is organized as follows. First, we propose an efficient SAEM algorithm for estimating the population parameters tailored for LaRT in Subsection~\ref{sec:saem}. Next, in Subsection~\ref{sec:map}, we describe how to estimate the latent variables of LLMs once the population parameters have been obtained, under both static and active evaluation settings. Additionally, in Appendix~\ref{sec:initialization}, we introduce an effective spectral-based initialization method for the proposed SAEM algorithm.

\subsection{SAEM Algorithm for Population Parameter Estimation}
\label{sec:saem}
The SAEM algorithm addresses the intractable E-step of EM through a stochastic approximation scheme. It alternates three steps: at each iteration, latent variables are sampled in an S-step and used to approximate the conditional expectation in an SA-step, followed by an M-step that updates the population parameters. The SAEM algorithm operates on the following complete-data log-likelihood:
%\yg{would the complete data log likelihood be more accurate?}
%\jl{mention that, without loss of generality,  we drop the superscript t within the equations for simplicity, at somewhere.}\zx{I use the phrase "complete data log posterior" to avoid confusion of "complete data log likelihood", both of which are mentioned in the asymptotics. I don't understand where is the superscript except in the theory section to present the asymptotic w.r.t. $J$?}
%\jl{I mean the iteration index $t$}
% \begin{equation}
% % \label{eqn:log_complete_post}
%     \begin{aligned}
%     L(\bOmega,\btheta,\btau\mid \Rb, \Tb) &= \log P\left(\Rb\mid \btheta; \bOmega \right) + \log P\left(\Tb\mid \btau;\bOmega\right) + \log P\left(\btheta,\btau; \bOmega\right)\\
%     &= \sum_{i=1}^N\sum_{j=1}^J \log \Phi\left((2R_{ij}-1)(a_j\theta_i+b_j)\right) - N\sum_{j=1}^J\log \lambda_j\\
%     &-\sum_{i=1}^N\sum_{j=1}^J \frac{1}{2\lambda_j^2}(\log T_{ij}-\omega_j+\varphi_j\tau_i)^2 -\frac{N}{2}\log (1-\rho^2) -\frac{1}{2} \sum_{i=1}^N \bxi_i^{\top}\bSigma^{-1}\bxi_i.
% \end{aligned}
% \end{equation}
\begin{equation}
\label{eqn:log_complete_post}
    \begin{aligned}
    L(\bOmega,\btheta,\btau\mid \Rb, \Tb) &= \sum_{i,j} \left[ \log \Phi\left((2R_{ij}-1)(a_j\theta_i+b_j)\right) - \frac{1}{2\lambda_j}(\log T_{ij}-\omega_j+\varphi_j\tau_i)^2 \right] \\
    &\quad - N\sum_{j}\log \lambda_j -\frac{N}{2}\log (1-\rho^2) -\frac{1}{2} \sum_{i} \bxi_i^{\top}\bSigma^{-1}\bxi_i.
\end{aligned}
\end{equation}

Next, we introduce the three steps separately and describe 
their detailed derivation and implementation.

\textbf{S-step.} The S-step requires drawing samples of the LLM latent variables $(\theta_i, \tau_i)$ from 
their posterior distribution given the data and the current population parameters:
\begin{equation}
    \label{eqn:posterior_decom_1}
    \begin{aligned}
        P_{\bOmega}\left(\theta_i,\tau_i\mid \Rb_{i,:}, \Tb_{i,:} \right) 
        % &\propto P\left(\theta_i,\tau_i; \bOmega\right)\prod_{j=1}^J P\left(R_{ij}\mid \theta_i;\bOmega\right) P\left(T_{ij}\mid \tau_i; \bOmega\right)\\
        &= {\biggl[P_{\bOmega}(\theta_i) \prod_{j=1}^JP_{\bOmega}\left(R_{ij}\mid \theta_i\right)\biggr]} {\biggl[ P_{\bOmega} \left(\tau_i\mid \theta_i\right)\prod_{j=1}^J P_{\bOmega}\left(T_{ij}\mid \tau_i\right) \biggr]}.
    \end{aligned}
\end{equation}
Direct joint sampling of $(\theta_i,\tau_i)$ from
(\ref{eqn:posterior_decom_1}) is infeasible. Fortunately, LaRT's structure, specifically the probit link for response accuracy and the log-normal model for CoT, enables an efficient two-step sampling procedure that avoids MCMC-based samplers such as Metropolis–Hastings or Gibbs. In particular, because both the conditional prior $P(\tau_i\mid\theta_i;\bOmega)$ and the
likelihood $P(T_{ij}\mid\tau_i;\bOmega)$ are normal, the conditional posterior
$P(\tau_i\mid\theta_i,\Tb_{i,:};\bOmega)$ also remains normal. Thus, once a sample of
$\theta_i$ is obtained, $\tau_i$ can be drawn directly from a normal distribution.
The remaining task is thus to sample $\theta_i$ from its marginal posterior
$P(\theta_i\mid\Rb_{i,:},\bOmega)$, obtained by integrating out $\tau_i$ in
(\ref{eqn:posterior_decom_1}). 
Benefiting from the structure induced by the probit likelihood and Gaussian prior,
this marginal posterior falls within the family of unified skew-normal (SUN) distributions.
Leveraging the analytical characterization developed in \citet{durante2019Conjugate}, 
we explicitly characterize its posterior distribution, as presented in the following lemma.

\begin{lemma}
    \label{lem:posterior}
    The posterior distribution of $(\theta_i,\tau_i)$ given $\Rb_{i,:}$, $\Tb_{i,:}$, and $\bOmega$ follows the following distribution,
    \begin{align*}
        \theta_i\mid \Rb_{i,:},\Tb_{i,:};\bOmega \sim \mathsf{SUN}_{1,J}\left(\mu_{\theta}^{(i)},\sigma_{\theta}^2, \Delta_{i,\text{post}}, \gamma_{i,\text{post}}, \Gamma_{i,\text{post}}  \right),\quad
        \tau_i\mid \theta_i,\Tb_{i,:}; \bOmega \sim N\left(\mu_{\tau}^{(i)}, \sigma_{\tau}^{2}\right),
    \end{align*}
    where $\mathsf{SUN}$ represents the unified skew-normal distribution, and the specification of each term is deferred to Appendix~\ref{append:post_complete}.
%     \begin{align*}
%         \sigma_{\tau}^{2} = \biggl(\frac{1}{1-\rho} + \sum_{j=1}^J\frac{\varphi_j^2}{\lambda_j}\biggr)^{-1}, \quad &\mu_{\tau}^{(i)} = \frac{\rho\theta_i}{1-\rho^2} - \sum_{j=1}^J \frac{(\log T_{ij} - \omega_j)\varphi_j}{\lambda_j},\\
%         \sigma_{\theta}^2 = \biggl(\frac{1}{1-\rho^2} - \sigma_{\tau}^2\frac{\rho^2}{1-\rho^2}\biggr)^{-1},\quad &\mu_{\theta}^{(i)} = \sigma_{\theta}^2 \biggl(-\sum_{j=1}^J\frac{(\log T_{ij}-\omega_j)\varphi_j}{\lambda_j}\biggr) \frac{\sigma_{\tau}^2 \rho}{1-\rho^2}\\
%         \Delta_{i,\text{post}}=\sigma_{\theta}\Db_{i, 1}^\top \Sbb_i^{-1},\quad &\gamma_{i,\text{post}}=\Sbb_i^{-1}\left(\Db_{i,1}\mu_{\theta}^{(i)} +  \Db_{i, 2}\right),\\  \Gamma_{i,\text{post}}=\Sbb_i^{-1}&\left( \sigmat^2 \Db_{i,1}\Db_{i,1}^\top + \Ib_J \right)\Sbb_i^{-1},
%     \end{align*}
% %     \begin{equation*}
% %     \Delta_{i,\text{post}}=\sigma_{\theta}\Db_{i, 1}^\top \Sbb_i^{-1},\quad \gamma_{i,\text{post}}=\Sbb_i^{-1}\left(\Db_{i,1}\mu_{\theta}^{(i)} +  \Db_{i, 2}\right), \quad \Gamma_{i,\text{post}}=\Sbb_i^{-1}\left( \sigmat^2 \Db_{i,1}\Db_{i,1}^\top + \cI_J \right)\Sbb_i^{-1},
% % \end{equation*}
% and
% \begin{align*}
%     \Db_{i,1} &= \diag\left( 2R_{i1}-1,\dots,2R_{iJ}-1 \right)\ba, \quad \Db_{i, 2} = \diag\left( 2R_{i1}-1,\dots,2R_{iJ}-1 \right)\bb,\\
%     \Sbb_i &= \diag\left\{ (\sigmat^2 \Db_{i,11}^\top\Db_{i,11}+1 )^{1/2},\dots, (\sigmat^2 \Db_{i,1J}^\top\Db_{i,1J}+1 )^{1/2} \right\}\in \RR^{J\times J},
% \end{align*}
% where $\Db_{i,1j}$ is the $j$th row of $\Db_{i,1}$.
\end{lemma}

Lemma~\ref{lem:posterior} provides the explicit distributional expressions discussed 
above for sampling $\theta_i$ and $\tau_i$. Furthermore, the SUN 
distribution can be sampled efficiently, which can be 
implemented via a linear combination of samples from a multivariate normal distribution and a truncated normal distribution. We provide the corollary from \citet{li2025Sparse} in Appendix~\ref{append:sample_sun} for completeness. 

Taken together, these results yield a two-step sampler for $(\theta_i,\tau_i)$. First, $\theta_i$ is drawn from the SUN distribution in Lemma~\ref{lem:posterior}. Second, $\tau_i$ is then sampled from the normal distribution in Lemma~\ref{lem:posterior}. Since all target distributions admit straightforward sampling, the S-step can be implemented efficiently.

%\jl{perhaps outline a more specific sampling steps here or later in the algorithm table}

\textbf{SA-step.}
The SA-step updates the current estimate of the conditional expectation of the log complete posterior  using a stochastic 
approximation scheme. At each iteration $t$, it first uses the $C$ samples of 
$(\theta_i, \tau_i)$, denoted by $\bxi_i^{(c)}=(\theta_i^{(c)}, \tau_i^{(c)})$, 
$c \in [C]$, obtained in the S-step to form a Monte Carlo approximation of this 
conditional expectation:
% \begin{equation}
%     \label{eqn:obj_new}
%     \begin{aligned}
%         Q_t^{\text{(new)}}(\bOmega) &= \frac{1}{C}\sum_{c=1}^C\sum_{i=1}^N\sum_{j=1}^J 
%         \log \Phi\!\left((2R_{ij}-1)(a_j\theta_i^{(c)} +b_j)\right) 
%         - N\sum_{j=1}^J\log \lambda_j\\
%         &\quad -\frac{1}{C}\sum_{c=1}^C \sum_{i=1}^N\sum_{j=1}^J 
%         \frac{1}{2\lambda_j^2}\bigl(\log T_{ij}-\omega_j+\varphi_j\tau_i^{(c)}\bigr)^2\\
%         &\quad -\frac{N}{2}\log (1-\rho^2) 
%         -\frac{1}{2C}\sum_{c=1}^C \sum_{i=1}^N 
%         \bxi_i^{(c)\top}\bSigma^{-1}\bxi_i^{(c)}. 
%     \end{aligned}
% \end{equation}
\begin{equation}
    \label{eqn:obj_new}
    \begin{aligned}
    Q_t^{\text{(new)}}(\bOmega) &= \frac{1}{C}\sum_{c,i,j} \left[ \log \Phi\!\left((2R_{ij}-1)(a_j\theta_i^{(c)} +b_j)\right) - \frac{\bigl(\log T_{ij}-\omega_j+\varphi_j\tau_i^{(c)}\bigr)^2}{2\lambda_j} \right] \\
    &\quad - N\sum_{j}\log \lambda_j -\frac{N}{2}\log (1-\rho^2) -\frac{1}{2C}\sum_{c,i} \bxi_i^{(c)\top}\bSigma^{-1}\bxi_i^{(c)}.
\end{aligned}
\end{equation}
The constant $C$ denotes the number of Monte Carlo samples drawn at each SA-step. Although 
a larger $C$ can reduce Monte Carlo variability, it also increases computational cost. 
Following standard SAEM practice, we set $C=1$, which is sufficient for stable estimation 
in our setting.
Then, this Monte Carlo estimate is incorporated into the running approximation of the 
conditional expectation through a Robbins–Monro stochastic approximation update:
\[
    \tilde Q_t(\bOmega)
    = (1-\alpha_t)\,\tilde Q_{t-1}(\bOmega)
      + \alpha_t \, Q_t^{\text{(new)}}(\bOmega),
\]
with initialization $\tilde Q_1(\bOmega)=Q_1^{\text{(new)}}(\bOmega)$.
As shown, the new estimate $\tilde Q_t(\bOmega)$ is a weighted mixture of the current 
Monte Carlo estimate $Q_t^{\text{(new)}}(\bOmega)$ and the previous estimate 
$\tilde Q_{t-1}(\bOmega)$. The sequence $\{\alpha_t\}$ denotes the step sizes and is required to satisfy the 
Robbins–Monro conditions $\sum_{t}\alpha_t=\infty$ and $\sum_{t}\alpha_t^2<\infty$ to 
ensure convergence. A typical choice that meets these requirements is $\alpha_t = 1/t$. 
Through this averaging, the stochastic approximation update stabilizes the sequence of 
expectation estimates by weighting new updates $Q_t^{\text{(new)}}(\bOmega)$ with decreasing step sizes, allowing the 
algorithm to converge even when only a small number of samples is used at each step 
\citep{robbins1951stochastic,delyon1999Convergence}.

%The step-size sequence $\{\alpha_t\}$ controls the weighting between  $Q_t^{\text{(new)}}(\bOmega)$ and $\tilde Q_{t-1}(\bOmega)$ in the stochastic approximation update. To ensure convergence under the Robbins–Monro conditions, the  sequence must satisfy $\sum_{t=1}^{\infty}\alpha_t = \infty$ and  $\sum_{t=1}^{\infty}\alpha_t^2 < \infty$. A typical choice that meets these requirements is  $\alpha_t = 1/t$.

%The step-size sequence $\{\alpha_t\}$ controls the weighting between  $Q_t^{\text{(new)}}(\bOmega)$ and $\tilde Q_{t-1}(\bOmega)$ in the stochastic  approximation update. To ensure convergence under the Robbins–Monro conditions, the  sequence must satisfy $\sum_{t=1}^{\infty}\alpha_t = \infty$ and  $\sum_{t=1}^{\infty}\alpha_t^2 < \infty$. A typical choice that meets these requirements is  $\alpha_t = 1/t$. The condition $\sum_t \alpha_t = \infty$ allows the algorithm to explore  the entire parameter space, while $\sum_t \alpha_t^2 < \infty$ guarantees that the  variance of the stochastic updates remains finite, ensuring convergence.

\textbf{M-step.} In the M-step, the population parameters are updated by maximizing the current 
$\tilde Q_t(\bOmega)$. Importantly, the concavity of $Q_t^{\text{(new)}}$ in $\bOmega$ is 
preserved under the Robbins–Monro averaging step, ensuring that $Q_t$ remains concave and 
can therefore be maximized efficiently. To this end, we employ the L-BFGS algorithm 
\citep{liu1989limited}, a widely used and efficient quasi-Newton method for smooth convex 
optimization, to perform the M-step at each iteration. The complete SAEM procedure is 
outlined in Algorithm \ref{alg:SAEM}.

%Additionally, $Q_t^{\text{(new)}}$ is concave for $\bOmega$, and hence $Q_t$ is concave for $\bOmega$. The optimization in the M-step is a convex optimization. We apply L-BFGS \citep{liu1989limited} algorithm, which is a widely used efficient quasi-Newton algorithm for convex optimization, for the M-step in every iteration. The complete SAEM algorithm is presented in Algorithm \ref{alg:SAEM}.

\begin{algorithm}[h!]
    \caption{SAEM algorithm for the estimation of population parameters.}
    \label{alg:SAEM}
    \begin{algorithmic}[1]
        \Require Binary response matrix $\Rb\in \{0,1\}^{N\times J}$, CoT length matrix $\Tb \in \RR_{+}^{N\times J} $, Initialization $\hat{\bOmega}_0$, Number of Monte Carlo samples $C$, stochastic approximation weights $\{\alpha_t\}_{t\in \NN}$.
        \Ensure Estimated $\hat{\bOmega}$.
        \State Initialize $t\gets 0$, $Q_0\gets 0$.
        \While{not converge}
            \State $t\gets t+1$.
            \State Draw $C$ samples of $(\btheta^{(c)}, \btau^{(c)})$ following Lemma \ref{lem:posterior} and the corollary in Appendix~\ref{append:sample_sun}.
            \State Compute $Q_t^{\text{(new)}}$ by (\ref{eqn:obj_new}), and $\tilde{Q}_{t}\gets (1-\alpha_t)\tilde{Q}_{t-1}+\alpha_tQ_t^{\text{(new)}} $.
            \State $\hat{\bOmega}_t\gets \argmax_{\bOmega}\tilde{Q}_t(\bOmega)$ with a valid convex optimization algorithm.
        \EndWhile
        \State Return $\hat{\bOmega}_t$.
    \end{algorithmic}
\end{algorithm}

\subsection{Maximum-a-posterior estimation for individual latent ability and speed}
\label{sec:map}
After obtaining the estimates of the population parameters, we estimate the individual latent variables $(\btheta,\btau)$ based on maximum a posteriori (MAP) estimation:
\begin{equation}
    \label{eqn:res_indi_param}
    (\btheta,\btau) = \argmax_{(\btheta,\btau)}L(\hat{\bOmega}, \btheta, \btau \mid \Rb, \Tb).
\end{equation}
Given the population parameters $\bOmega$, the log complete posterior 
(\ref{eqn:log_complete_post}) is concave with respect to 
$\bxi=(\btheta,\btau)$. This observation, together with the independence across the 
$\bxi_i$'s, reduces the optimization problem to $N$ convex optimization problems with 
two-dimensional parameters, each of which can be solved efficiently using L-BFGS.

Finally, the proposed estimation procedure for population parameters and  individual latent traits applies to both static and active evaluation settings.  In static evaluation, the population parameters are first estimated by applying SAEM to  fit a LaRT model to the LLM response data, after which the MAP estimates of the LLMs'  latent traits are obtained by solving (\ref{eqn:res_indi_param}). In active  evaluation, LLMs are administered a set of items drawn from a calibrated item pool, where  the items can be calibrated using an appropriate estimator such as the proposed SAEM.  As each LLM answers items, we update its latent trait estimates by solving (\ref{eqn:res_indi_param}) with the current response data.

\section{Theoretical Results}
\label{sec:theory}
This section presents theoretical results for LaRT. We do not make explicit model assumptions about the item parameters for the questions; instead, we treat them as fixed, unknown parameters, as in most existing IRT literature.

Next, we first prove identifiability of the model parameters to ensure statistical validity of inferences about LLM latent traits. Second, we derive asymptotic distributions that characterize estimation precision and reveal when LaRT achieves smaller asymptotic variances than standalone IRT.

%\yg{we can discuss this point about how to link the asymptotic results with showing the benefits of our modeling choice}.

%\jl{@Zhiyu,  can you check the correctness of the statements below in this subsection?}
%\zx{I have checked the result in the following section.}
\subsection{Identifiability}
\label{sec:identifiability}
We next present rigorous identifiability result for the population parameters in LaRT. We begin by defining identifiability for LaRT.
\begin{definition}
    \label{def:identifiability}
    The LaRT model is identifiable at $\bOmega$ if, for any other set of parameters $\bOmega^{\prime}$ that gives rise to the same marginal distribution of the observed data, i.e., $P(\Rb, \Tb; \bOmega) = P(\Rb, \Tb; \bOmega')$, then $\bOmega = \bOmega^{\prime}$ must hold.
    % \begin{equation*}
    %     \bOmega = \bOmega^{\prime}.
    % \end{equation*}
    % must hold.
\end{definition}

%\zx{\citet{fang2021identifiability} establish both identifiability guarantee of linear bi-factor models and IRT models with probit link, here I only claim they show the identifiability for IRT to avoid confusion? We do not utilize the technique from the linear bi-factor models part as well.}
The probit link for modeling the response accuracy in LaRT enables rigorous identifiability analyses and guarantees for LaRT. For single-modal IRT model with a probit link, \citet{fang2021identifiability} build their identifiability analysis on a key proposition showing that establishing identifiability reduces to verifying whether the probit thresholds and tetrachoric correlations admit a unique parameterization.
While their results do not directly apply to LaRT's hierarchical bimodal structure, they inspire our approach. 

Specifically, by leveraging both the probit and log-normal link functions, we can characterize the joint marginal distribution of $(\Rb,\Tb)$ entirely through its means, cross-correlations, and tetrachoric correlations. This analytical tractability allows us to distill the full marginal distribution of $(\Rb,\Tb)$ conditional on $\bOmega$ into a concrete system of algebraic equalities. The precise mathematical formulation of this characterization is detailed in Appendix~\ref{append:iden_prop}.

Importantly, checking identifiability of LaRT via the equalities in Proposition 1 of the supplementary material parallels the proof strategy used for linear factor models. Let $\tilde{\Ab}$ denote the discrimination matrix (factor loadings) and $\tilde{\bSigma}$ the covariance matrix of the latent variables. In that setting, identifiability reduces to verifying the equality
$\tilde{\Ab}\tilde{\Ab}^{\top}+\tilde{\bSigma}
= \tilde{\Ab}^{\prime}\tilde{\Ab}^{\prime\top}+\tilde{\bSigma}^{\prime}$.
\citet{anderson1956statistical} show the identifiability of linear factor models holds under mild conditions.
Building on this analogy and the foundational results of \citet{anderson1956statistical},
we establish identifiability for LaRT in Theorem~\ref{thm:identifiability}.

%The equalities in Proposition \ref{prop:iden} respectively asks for the mean, variance, and covariance between $R_{i,j_1}$ and $T_{i,j_2}$ for all $j_1,j_2\in [J]$ to be the same. Since we are able to characterize the marginal distribution with multivariate Gaussian CDF, if the mean and covariances of the data are equal, the whole distribution is the same. The proof of Proposition \ref{prop:iden} is in Appendix \ref{append:prop_iden}.

%With Proposition \ref{prop:iden}, proving the identifiability of the population parameters reduces to showing that the equalities in Proposition \ref{prop:iden} implies the identity of the two sets of parameters. This is similar to proving identifiability in the linear factor model. Let $\Ab$ be the factor loadings, and $\bSigma$ the covariance matrix of the latent variables. From the equality $\Ab\Ab^{\top} + \bSigma = \Ab^{\prime}\Ab^{\prime\top}+\bSigma^{\prime} $, \citet{anderson1956statistical} shows identifiability with mild conditions. Based on this famous result, we establish the identifiability condition for LaRT to be identifiable.

\begin{theorem}
    \label{thm:identifiability}
    If (1) there are at least 2 non-zero entries in both $\ba$ and $\bvarphi$, (2) $\sum_{j=1}^Ja_{j}>0$, $\sum_{j=1}^J \varphi_{j} >0$, then LaRT is identifiable.
\end{theorem}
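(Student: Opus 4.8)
The plan is to derive Theorem~\ref{thm:identifiability} directly from the reduced equalities in Proposition~\ref{prop:iden}, which already strip the problem down to a finite set of polynomial constraints in the loadings, the correlation, and the item intercepts. First I would record the quantities that are \emph{immediately} pinned down: Proposition~\ref{prop:iden} gives $\omega_j=\omega_j'$ for every $j$, so the CoT-intensities are identified outright; it gives the probit-threshold equality $b_j/\sqrt{a_j^2+1}=b_j'/\sqrt{a_j'^2+1}$, so $b_j$ will follow once $a_j$ is recovered; and it gives $\varphi_j^2+\lambda_j=\varphi_j'^2+\lambda_j'$, so $\lambda_j$ will follow once $\varphi_j$ is recovered. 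Writing $\alpha_j=a_j/\sqrt{a_j^2+1}\in(-1,1)$, the map $\alpha_j\mapsto a_j$ is an odd strictly increasing bijection, so it suffices to recover the entries $\alpha_j$, the vector $\bvarphi$, and the scalar $\rho$. I would also note a useful automatic inheritance: since $\ba$ and $\bvarphi$ each have at least two nonzero entries, the off-diagonal equalities $\alpha_{j_1}\alpha_{j_2}=\alpha_{j_1}'\alpha_{j_2}'$ and $\varphi_{j_1}\varphi_{j_2}=\varphi_{j_1}'\varphi_{j_2}'$ force $\alpha_j'$ and $\varphi_j'$ to have at least two nonzero entries as well, so the competitor parameter cannot be degenerate.

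The core of the argument exploits the cross-modality equalities, which are the feature distinguishing LaRT from two separate factor models and, as in the factor-analytic reduction of \citet{anderson1956statistical} alluded to before the theorem, supply the coupling that makes few nonzero loadings sufficient. When $\rho\neq0$, the $J\times J$ matrix with entries $\rho\,\alpha_{j_1}\varphi_{j_2}$ is \emph{fully} identified (every entry, not merely the off-diagonal ones) and is rank one; equating it to its primed counterpart yields a rank-one factorization, which determines the products $\rho\,\alpha_j$ and $\varphi_j$ up to a single reciprocal scalar, i.e.\ $\alpha_j'=s\,\alpha_j$ and $\varphi_j'=\varphi_j/c$ with $s=c\rho/\rho'$. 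I would then feed these proportionalities back into the within-modality off-diagonal equalities: since some product $\alpha_{j_1}\alpha_{j_2}\neq0$, the relation $s^2\alpha_{j_1}\alpha_{j_2}=\alpha_{j_1}\alpha_{j_2}$ forces $s^2=1$, and symmetrically $c^2=1$ from $\bvarphi$. Hence $\alpha_j'=\pm\alpha_j$ and $\varphi_j'=\pm\varphi_j$, and tracing the two constants through $s=c\rho/\rho'$ gives $\rho'=\pm\rho$ with sign equal to the product of the two.

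It then remains to eliminate the residual signs, which is exactly the role of hypothesis (2). Because $\alpha_j\mapsto a_j$ is odd, a global flip $\alpha_j'=-\alpha_j$ would send $\sum_j a_j'=-\sum_j a_j<0$, contradicting $\sum_j a_j'>0$; the same reasoning with $\sum_j\varphi_j>0$ rules out $\varphi_j'=-\varphi_j$. This forces $\alpha_j'=\alpha_j$ and $\varphi_j'=\varphi_j$, hence $\rho'=\rho$, after which $b_j=b_j'$, $\lambda_j=\lambda_j'$, and $\omega_j=\omega_j'$ follow from the per-item equalities recorded in the first step, completing the identification.

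I expect the genuine difficulty to lie in the degenerate case $\rho=0$, where the cross matrix vanishes and the two modalities decouple, so the coupling used above disappears. One first checks $\rho'=0$ by reading the cross equality at indices with $\alpha_{j_1}'\neq0$ and $\varphi_{j_2}'\neq0$, which exist by the inheritance step. The remaining task—recovering the $\alpha_j$ and $\varphi_j$ individually from the within-modality off-diagonal products alone—is the delicate point, because those equalities supply only the products $\alpha_{j_1}\alpha_{j_2}$, while the binary ``diagonal'' $a_j^2+1$ is normalized away by the probit link and $\varphi_j^2$ is confounded with $\lambda_j$. This is precisely where the interplay with the Anderson--Rubin splitting condition and the block (simple) structure of the loadings must be invoked, and it is the step I would develop most carefully, verifying that two nonzero entries per modality indeed suffice in the decoupled regime.
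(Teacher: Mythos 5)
Your argument for the generic case $\rho\neq 0$ is correct but takes a genuinely different route from the paper. The paper never touches the cross-modality equalities until its last line: it identifies each modality \emph{separately} by assembling from Proposition~\ref{prop:iden} the matrix identity $\tilde{\ba}\tilde{\ba}^{\top}+\Sbb=\tilde{\ba}'\tilde{\ba}'^{\top}+\Sbb'$ with $\Sbb$ diagonal (the diagonal of the sum being the identified threshold $b_j/\sqrt{a_j^2+1}$), invokes Theorem~5.1 of \citet{anderson1956statistical} to get $\Sbb=\Sbb'$ and $\tilde{\ba}\tilde{\ba}^{\top}=\tilde{\ba}'\tilde{\ba}'^{\top}$ --- so that $a_j^2$ is read off the \emph{diagonal} of the rank-one part rather than from ratios of off-diagonal products --- then fixes the sign via their Lemma~5.1 together with $\sum_j a_j>0$, repeats the argument verbatim for $(\bvarphi,\bomega,\blambda)$, and only at the end recovers $\rho$ from the cross equality. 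Your coupling argument buys a more transparent recovery when $\rho\neq 0$, since the full rank-one matrix $\rho\,\alpha_{j_1}\varphi_{j_2}$ (diagonal included) is directly identified and ties the two modalities together; but it is not the paper's mechanism, and it is unavailable exactly when the coupling vanishes.

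That is where the genuine gap sits, and you have located it yourself: for $\rho=0$ your proposal reduces to ``invoke the Anderson--Rubin splitting condition,'' a step you do not carry out --- yet that step \emph{is} the paper's entire proof, executed uniformly in $\rho$. Your worry is not idle: with exactly two nonzero entries in $\ba$, the within-modality information consists of the single off-diagonal product $\tilde a_{j_1}\tilde a_{j_2}$ and the univariate thresholds, which by themselves leave the reciprocal scaling freedom $\tilde a_{j_1}\mapsto t\,\tilde a_{j_1}$, $\tilde a_{j_2}\mapsto \tilde a_{j_2}/t$ that you flag; the diagonal of $\tilde\ba\tilde\ba^{\top}$ is not observed directly, and the row-deletion hypothesis of Anderson--Rubin's Theorem~5.1 for a one-dimensional factor asks for two disjoint rank-one blocks surviving the deletion of any row. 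So to complete your proof you must actually establish unique recovery of the loadings in the decoupled regime under condition~(1) --- either by reproducing the paper's factor-analytic deduction or by an independent argument --- rather than deferring it. As written, your proposal proves the theorem only on the set $\{\rho\neq 0\}$.
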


Condition 1 is standard in the identifiability analysis of generalized linear factor models.
For a general $K$-dimensional setting with factor loading matrix
$\tilde{\Ab}=[\ba_1,\ldots,\ba_J]^\top \in \RR^{J\times K}$, it requires that, after deleting any row of $\tilde{\Ab}$, the remaining matrix still has rank $K$.
In our evaluation context, this condition implies that there must be at least two items that differentiate the latent ability and at least two items that differentiate the latent speed of the LLMs, which is a very mild requirement.

Condition 1 ensures LaRT identifiability only up to a sign indeterminacy arising from the bilinear terms $a_j\theta_i$ and $\varphi_j\tau_i$. To resolve this, we impose the additional constraints $\sum_{j} a_j > 0$ and $\sum_{j}\varphi_j > 0$.
These constraints enforce interpretable parameter orientations: a higher latent ability corresponds to a higher overall probability of correct responses, and a higher latent speed corresponds to faster responses. In addition, requiring only the sums of $a_j$ and $\varphi_j$ to be positive is a weaker restriction than the common psychometric assumption that all item discriminations are strictly positive \citep{hambleton2013item}.
The proof of Theorem~\ref{thm:identifiability} is in Appendix~\ref{append:thm_iden}.

\subsection{Asymptotic Posterior Normality}
We now establish the asymptotic normality of the individual latent-variable estimators $(\theta_i,\tau_i)$. The asymptotic variances can provide guidance on the estimation precision of the latent-variable estimates as the number of items increases, which is particularly valuable in high-stakes large-scale evaluations such as LLM evaluation, where ranking outcomes can influence user adoption and system deployment decisions.
Moreover, as shown later in this section, they offer insight into when LaRT can outperform a standalone IRT model by achieving smaller asymptotic variances.

For general IRT models, \citet{chang1993Asymptotic,kornely2022Asymptotic} present results on the asymptotic distribution of latent ability. In what follows, we derive the asymptotic distribution of the latent variables $(\theta_i,\tau_i)$ in LaRT, with additional considerations arising from the CoT component of the model. Since $(\theta_i,\tau_i)$ are independent across LLMs conditional on the population parameters, we drop the subscript $i$ and use the generic notation $\bxi = (\theta,\tau)$. We further denote its parameter space by $\Theta$ and assume $\Theta = \bar{\RR} \times \bar{\RR}$, which is closed and convex. In the following, we write the log posterior of LaRT as 
\begin{align*}
    % l^{(J)}(\theta,\tau\mid \Rb^{(J)}, \Tb^{(J)};\bOmega) &= \sum_{j=1}^J\left[\log P(R_j^{(J)}\mid \theta ;\bOmega) + \log P(T_j^{(J)}\mid \tau;\bOmega) \right],\\
    \tilde{l}^{(J)}(\theta,\tau\mid \Rb^{(J)}, \Tb^{(J)};\bOmega) &= \sum_{j=1}^J\left[\log P(R_j^{(J)}\mid \theta ;\bOmega) + \log P(T_j^{(J)}\mid \tau;\bOmega) \right]-\frac{1}{2}\bxi^\top \bSigma^{-1} \bxi.
\end{align*}
The corresponding Fisher information matrices for the likelihood and posterior are denoted by $\tilde{\cI}_J(\bxi) = -\EEE_{\bxi}\bigl[\nabla_{\bxi}^2\tilde{l}^{(J)}(\bxi\mid \Rb^{(J)}, \Tb^{(J)};\bOmega)\bigr]$.
% \begin{equation*}
%     {\cI}_J(\bxi) = -\EEE_{\bxi}\left[\nabla_{\bxi}^2l^{(J)}(\bxi\mid \Rb^{(J)}, \Tb^{(J)};\bOmega)\right],\quad \tilde{\cI}_J(\bxi) = -\EEE_{\bxi}\left[\nabla_{\bxi}^2\tilde{l}^{(J)}(\bxi\mid \Rb^{(J)}, \Tb^{(J)};\bOmega)\right].
% \end{equation*}

We require three regularity conditions on the response-time component for establishing the asymptotic distribution, following \citet{chang1993Asymptotic,kornely2022Asymptotic}. We defer the detailed characterization of the assumptions and discussions to Appendix~\ref{append:assump}.

\begin{theorem}
    \label{thm:asymptotic}
    Let $\bz\sim N_2(0, \Ib_2)$, $\{R_j,T_j\}_{j\in\NN} \sim \cP(\bxi_0)$, $\bxi_0\in \Theta\setminus \partial \Theta$. Under Assumptions D.1--D.3 in the supplementary material, for all $B\in \cB(\Theta)$, for a fixed $\bxi_0$,
    \begin{equation*}
        P\left(\tilde{\cI}_J(\tilde{\bxi}_J)^{1/2} (\bxi-\tilde{\bxi}_J)\in B\mid \Rb^{(J)}, \Tb^{(J)} \right) \stackrel{P_{\bxi_0}}{\to} P(\bz\in B),\quad J\to \infty.
    \end{equation*}
    If $\bxi_0\sim \cG$, where $\cG$ is an absolutely continuous proper distribution whose support is within $\Theta$,
    \begin{equation*}
        P\left(\tilde{\cI}_J(\tilde{\bxi}_J)^{1/2} (\bxi-\tilde{\bxi}_J)\in B\mid \Rb^{(J)}, \Tb^{(J)} \right) \stackrel{p}{\to} P(\bz\in B),\quad J\to\infty.
    \end{equation*}
    % Under certain regularity conditions, considering $\bOmega$ as known, the MAP of individual parameters converges to,
    % \begin{equation*}
    %     \sqrt{J}(\theta_i,\tau_i) \stackrel{P_{\theta_i^*,\tau_i^*}}{\to} N(0, \tilde{I}^{-1}),
    % \end{equation*}
    % where
    % \begin{equation*}
    %     \tilde{I} = I + \Sigma^{-1}.
    % \end{equation*}
\end{theorem}

The proof of Theorem~\ref{thm:asymptotic} is in Appendix~\ref{append:apn}. 
To further analyze the factors contributing to the variance of $\theta$, we present the explicit expression for its inverse variance, i.e., the precision $\tilde{\cI}_J(\theta)$, as follows:
\begin{equation}
    \label{eqn:fisher_info_theta}
    \tilde{\cI}_J(\theta) = \frac{1}{1-\rho^2} + \sum_{j=1}^J \frac{a_j^2\phi(a_j\theta_i+b_j)^2}{\Phi(a_j\theta_i+b_j)[1-\Phi(a_j\theta_i+b_j)]}.
\end{equation}
Here, the second term corresponds to the Fisher information, while the first term depends on the correlation between the latent ability and latent speed. In particular, $\tilde{\cI}_J(\theta)$ increases as $|\rho|$ increases, implying that the variance decreases as $|\rho|$ increases and 
is maximized at $\rho=0$, which corresponds to the IRT case where CoT information is ignored. Consequently, the asymptotic estimation precision of $\theta$ under LaRT exceeds that of IRT whenever the latent ability and latent speed are correlated.

\section{Simulation Study}
\label{sec:simulation}
We conduct a simulation study that emulates the characteristics of the real-world application. These simulations serve two primary objectives. First, we validate the performance of our proposed SAEM algorithm with the data-driven initialization and the MAP estimate of latent traits. Second, we demonstrate the superiority of LaRT over standard IRT in terms of finite-sample estimation accuracy of the latent ability. For the IRT baseline, we estimate the population parameters $\boldsymbol{a}$ and $\boldsymbol{b}$ using the SAEM approach described by \citet{li2025Sparse}, and estimate the latent ability $\boldsymbol{\theta}$ using the similar MAP method detailed in Section \ref{sec:map}. Unless otherwise noted, all subsequent implementations of IRT follow this procedure.

The simulation design is as follows.
Each entry of $\ba$ is drawn from $\text{Unif}(0.5,1)$, $\bb$ from $N(0,0.5)$, $\bomega$ from $N(0,1)$, $\bvarphi$ from $\text{Unif}(0.5,1.5)$, $\blambda$ from $\text{Unif}(0.5,2)$. We set $\rho=-0.8$. We fix $J=50$, and let $N\in\{100,200,500\}$. For each simulation setting, we perform $200$ independent replications.

The simulation results are presented in Figure \ref{fig:com_irt_lart_8} and \ref{fig:lart_8}. Figure \ref{fig:com_irt_lart_8} shows that LaRT achieves better estimation accuracy than IRT for all $\btheta$, $\ba$, and $\bb$. For other parameters $\bvarphi$, $\bomega$, $\blambda$, and $\rho$, the simulation results confirm that as $N$ increases, the estimation error decreases.

\begin{figure}[h!]
    \centering
    \includegraphics[width=\linewidth]{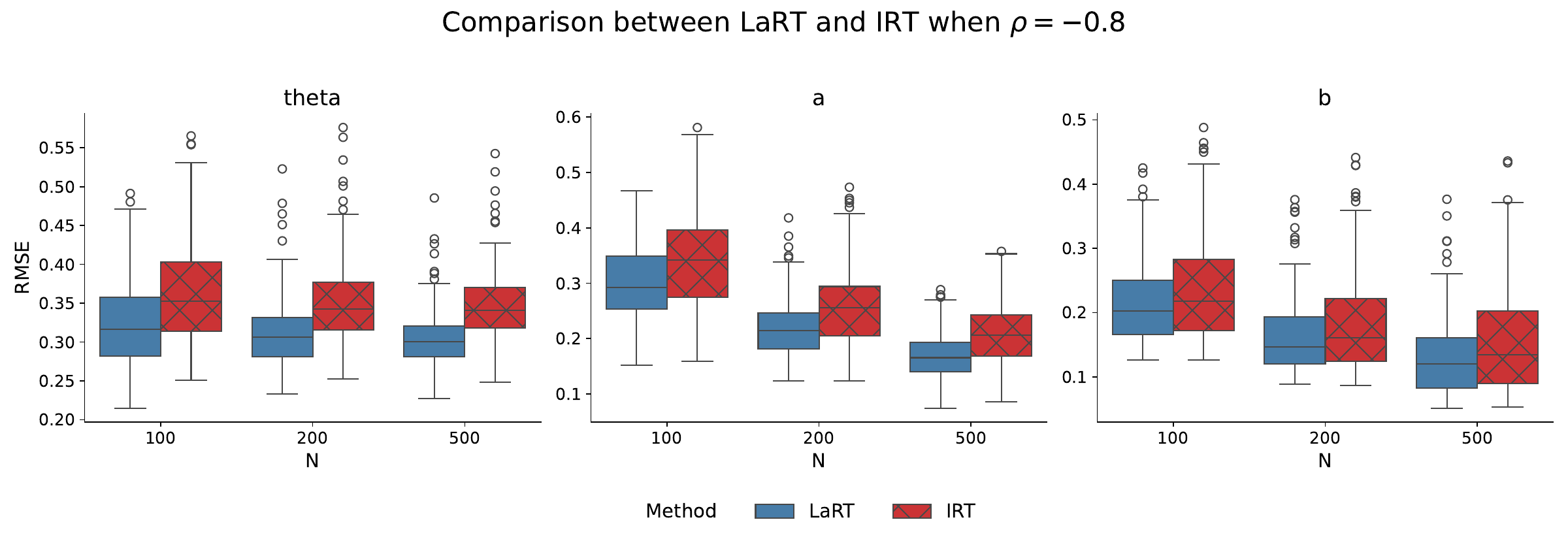}
    \caption{RMSEs of IRT and LaRT when $\rho=-0.8$. LaRT performs uniformly better than IRT. As $N$ grows, RMSE of $\hat{\ba}$ and $\hat{\bb}$ decreases when $J$ is fixed.}
    \label{fig:com_irt_lart_8}
\end{figure}

\begin{figure}[h!]
    \centering
    \includegraphics[width=0.6\linewidth]{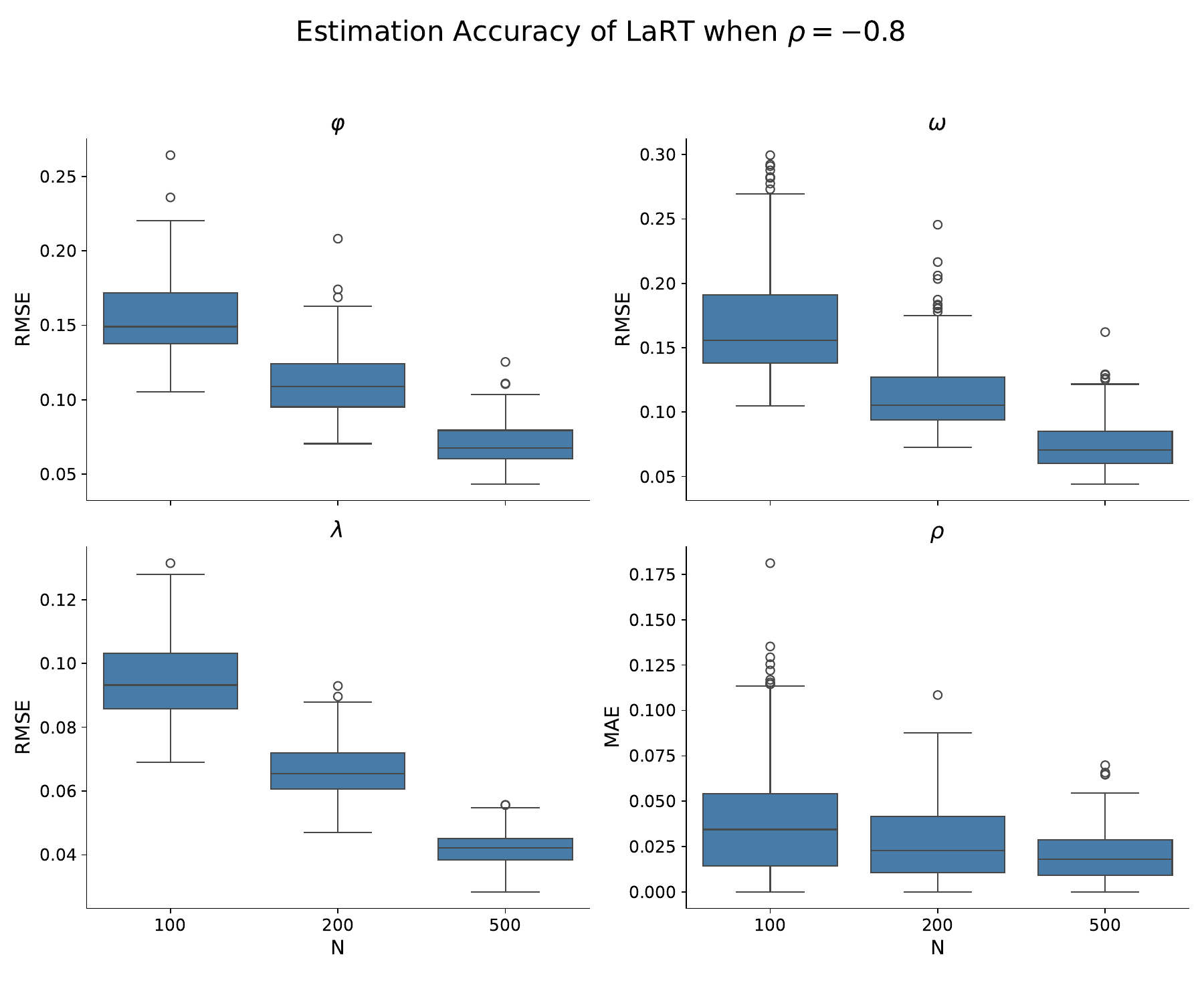}
    \caption{Boxplot of estimation accuracy of other parameters of LaRT when $\rho=-0.8$ in RMSE and MAE. The metric is presented in the plot. As $N$ grows, the estimation error of all parameters decreases.}
    \label{fig:lart_8}
\end{figure}

\section{Applications to Real LLM Data}
\label{sec:application}

Before presenting the real-data analysis, we first establish our operational definition of Chain-of-Thought (CoT) length. Throughout this section, CoT length is calculated as the \textit{total number of tokens in the LLM's generated reasoning sequence prior to producing the final answer}. The LLMs evaluated and benchmark datasets are presented in Section~\ref{sec:motivation_desiderata}. 

% We evaluated over 80 open-source LLMs on four math reasoning benchmark datasets: MATH500 \citep{hendrycksmath2021}, AMC23, AIME24, and AIME25, containing 500, 40, 30, and 30 questions respectively. These benchmarks consist of advanced high school competition problems ordered by increasing difficulty, with MATH500 being the easiest and AIME25 being the hardest. The evaluated LLMs range in parameter size from 0.6 billion to 32 billion; a complete list is provided in Appendix~\ref{append:llm}. 
% Following prior work \citet{castleman2025rethinking}, we respectively evaluate these LLMs with zero-shot and one-shot chain-of-thought prompts to increase the number of evaluated LLMs (see Appendix~\ref{append:llm} 
% for prompt details). 

% To generate the LLM responses, we bypass standard conversational interfaces and query the models directly via API. This setup ensures that each question is processed independently, strictly isolating the models from any conversational context or memory of prior prompts. 

When collecting response data from LLMs, we set the maximum CoT length to 10,240 tokens. This limit is generous for mathematical reasoning, as ground truth solutions in the MATH dataset rarely exceed 1,000 tokens \citep{hendrycksmath2021}. By setting the limit to 10,240, we allow LLMs to utilize ten times the token budget of human references. Furthermore, this setting aligns with the ``medium-to-high'' reasoning regime defined by \citet{agarwal2025gpt}, who evaluated gpt-oss across limits of 1,000, 6,000, and 16,000 tokens. Detailed hyperparameter configurations are in Appendix~\ref{append:llm}.
After obtaining LLMs' responses to each math problem, we use each LLM's tokenizer to count the length of their CoT. We delete LLMs who failed to answer any question correctly in these four benchmarks. After data preprocessing, there are 138 LLMs for further evaluation.

%Given this collected dataset of LLM responses, we apply LaRT to evaluate LLMs on these representative math benchmarks. First, we visualize the evaluation results based on the estimated latent ability and latent speed to assess consistency and demonstrate how LaRT automatically adjusts rankings by accounting for item characteristics across benchmarks. A key finding is strong negative correlation between ability and speed: higher-ability LLMs produce longer CoT. Second, we examine the benchmark characteristics through the estimated population parameters, highlighting their difficulty levels and discriminative power. We also compare LaRT and IRT across four key evaluation metrics from the LLM evaluation literature (Subsection \ref{sec:desiderata}) and quantitatively demonstrate LaRT's superiority over IRT across all four (Subsection \ref{sec:quantitative_compare}).

% \jl{Another version here.} 
In this section, we present a comprehensive analysis that jointly examines LLM performance, benchmark characteristics, and the effectiveness of LaRT across the math datasets introduced above.  First, we visualize evaluation outcomes based on estimated latent ability and latent speed across benchmarks, providing insights into LLM ranking, and the relationship between latent ability and latent speed. Second, leveraging the estimated population parameters, we investigate general trends in item characteristics within and across benchmarks, and illustrate how LaRT improves discriminative utility beyond accuracy-based evaluation. Third, we compare the LLM rankings produced by IRT and LaRT: we not only illustrate representative ranking shifts and explain their causes, but also conduct comprehensive quantitative experiments to demonstrate that LaRT offers more convincing evaluation results with respect to the desiderata of predictive power, item efficiency, validity, and LLM efficiency.

% \zx{Add the following paragraph and change the titles for all the subsections.}
% This remainder of this section is organized as follows. 
% In Subsection \ref{sec:general_eval}, we evaluate the LLMs across the four benchmark datasets. A key empirical finding is the significant negative correlation between the latent ability and speed, which implies LLMs with greater latent ability tend to produce longer CoT. In Subsection \ref{sec:desiderata}, we introduce four essential criteria for valid LLM evaluation motivated by previous works. In Subsection \ref{sec:quantitative_compare}, we quantitatively demonstrate that LaRT outperforms standard IRT across all four dimensions.

\subsection{Qualitative Evaluation}
\label{sec:general_eval}
% We present the estimated latent ability and latent speed of the zero-shot LLMs in Figure \ref{fig:zero_model_all}. The estimated quantities of the one-shot LLMs behave similarly and are presented in Appendix~\ref{append:applied}. 
% LaRT's ability estimates align well with overall accuracy but produce different rankings, automatically adjusting for learned difficulty and discrimination of each item. The $y$-axis of Figure \ref{fig:zero_model_all} is ordered by the total number of questions answered correctly in all 4 datasets. LLMs on the left have fewer questions answered correctly compared with LLMs on the right. From the figure of the latent ability, LLMs' latent abilities evaluated in all 4 datasets have good consistency with the overall accuracy, but the ranking is different. This shows LaRT automatically adjust the rankings based on the learned difficulty level and discriminative power of each question. 
% Additionally, comparing the estimated ability for each dataset, the more difficult datasets (AIME25 and AIME24) differentiate the model more significantly than the easier datasets (MATH500 and AMC23).

% Further, the trend of latent ability and latent speed is reversed, which matches our intuition.
% Specifically, w
Figure \ref{fig:speed_ability} presents the scatter plot of estimated latent speed against latent ability, as well as the estimated correlation $\rho$ between the latent ability and latent speed. Across all datasets, $\rho$ is strongly negative, confirming that LLMs with stronger math reasoning ability have longer CoT, a highly intuitive result. Moreover, the absolute value of $\rho$ increases as the questions in the dataset becomes more difficult, indicating that harder questions require more test-time compute. As $|\rho|$ gets larger, the CoT modeling contributes more to the estimation of the latent ability for each LLM. Thus, this phenomenon implies LaRT may offer greater advantages on more challenging benchmarks.
% This also shows LaRT may work better under more difficult datasets, for the contribution from CoT length is more significant.

% \begin{table}[h!]
% \centering
% \caption{LLMs with stronger ability has a smaller latent speed (longer CoT length). As the dataset becomes more difficult, the correlation Estimated LaRT correlation $\rho$ for each dataset.}
% \label{tab:corr}
% \begin{tabular}{ccccc}
% \hline
% Dataset & MATH500 & AMC23   & AIME24  & AIME25  \\ \hline
% $\rho$       & -0.45 & -0.68 & -0.75 & -0.81 \\ \hline
% \end{tabular}
% \end{table}

\begin{figure}[h!]
    \centering
    \includegraphics[width=0.62\linewidth]{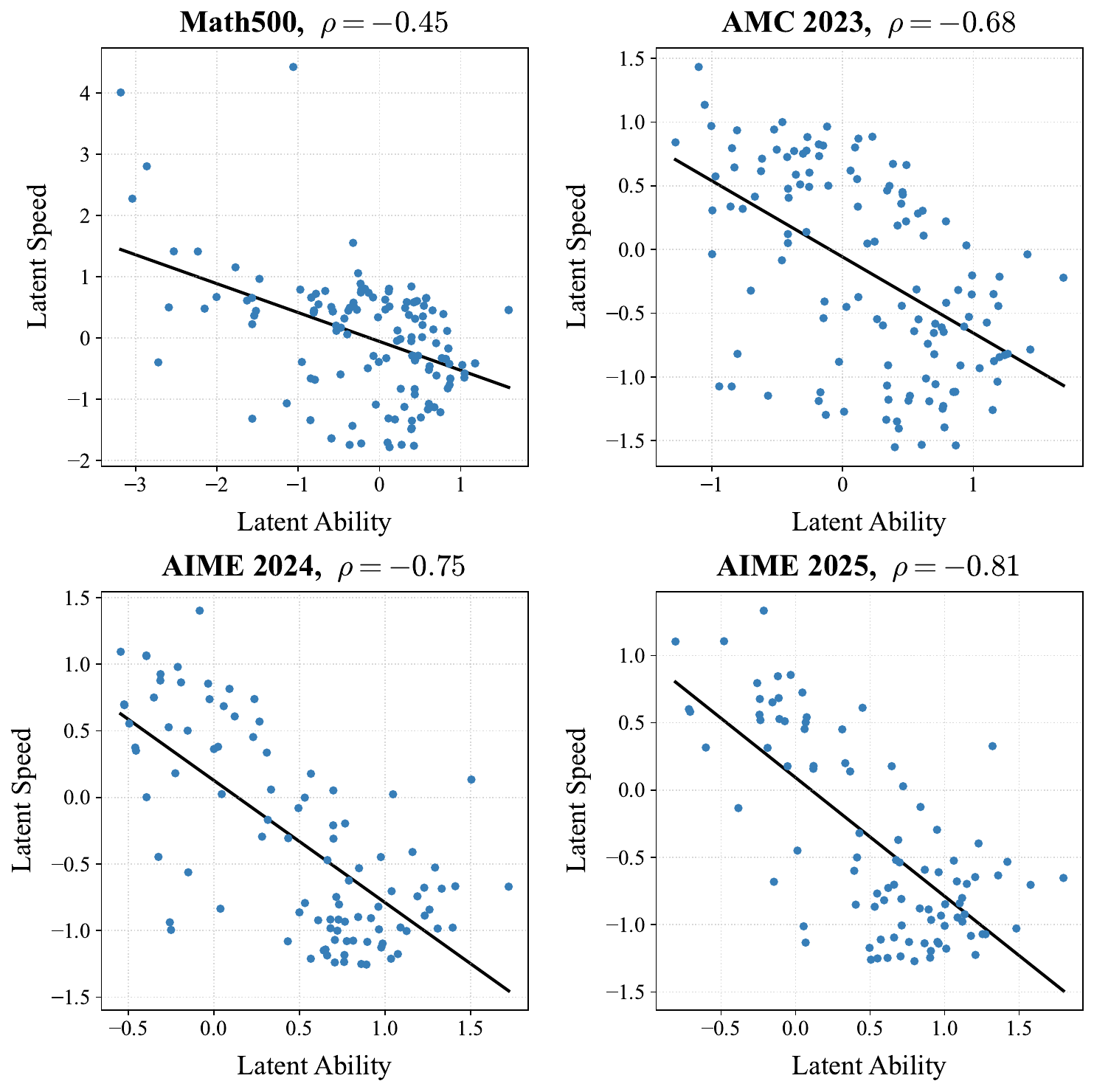}
    \caption{Scatter plot of estimated latent speed against latent ability for each dataset. The estimated correlation for each dataset is presented in the subtitle. LLMs with stronger latent ability have smaller latent speed (longer CoT length). As the dataset becomes more difficult, the estimated correlation $\rho$ increases in absolute value.}
    \label{fig:speed_ability}
\end{figure}

For the estimated population parameters, we present the boxplots of estimated parameters for accuracy and latency respectively in Figure \ref{fig:difficulties} and \ref{fig:latency}. Specifically, we transforms $\bb$ into $\tilde{\bb}=-\bb/\ba$. In this way, the probability of LLM $i$ answering item $j$ correctly is $\Phi(a_j(\theta_i-\tilde{b}_j))$, and $\tilde{b}_j$ can be interpreted as the difficulty level of item $j$. Similarly, we transforms $\bomega$ into $\tilde{\bomega}=\bomega/\bvarphi$. Since there are items whose discrimination $\ba$ is close to 0 or negative, we delete all entries whose discrimination $a_j$ is less than $0.05$ to better present the boxplot of $\tilde{\bb}$.

For $\ba$ and $\bb$, as the benchmark dataset becomes more difficult, the estimated difficulty parameters increase monotonically. Given its relatively large size, MATH500 captures a broad spectrum of difficulty, including some highly challenging items. This shows the validity of estimated population parameters by LaRT. For the discriminative parameters $\ba$, as the dataset becomes more difficult, the discriminative power increases, except AIME25. Since we are testing relatively small LLMs, AIME25 can be too difficult for most of the tested LLMs. Thus, in general, items in AIME25 have less discriminative power in comparison with the easier datasets. Additionally, there are a small proportion of items in these 4 datasets that have close to 0 or negative discriminative power. This is a common phenomenon in LLM benchmarking datasets. This can result from the wrong answers of question items, or the wrong grading \citep{gema2025we}. Through learning the discriminative parameters of question items, LaRT can automatically adjust for this issue. Furthermore, from Theorem \ref{thm:identifiability}, the estimated parameters satisfy the identifiability conditions. Our mild identifiability conditions do not require that every item should have a positive discriminative parameter, which adapts well to the misgrading issue in LLM ranking.

For $\bvarphi$, $\tilde{\bomega}$, and $\blambda$, all the estimated parameters are strictly positive. The identifiability conditions in Theorem \ref{thm:identifiability} are satisfied. For the basic latency level, as the difficulty of the dataset increases, the base CoT length increases, except for AIME25. This may result from the same issue as explained in the previous paragraph about the discriminative parameters. 
Since AIME25 is too difficult for most of the tested LLMs, some LLMs fail to provide a valid CoT, and indeed answers the question incorrectly with a shorter CoT length. For the discriminative parameter of latency, there is no significant trend as the benchmark dataset becomes more difficult.
% As one can see in Figure \ref{fig:zero_model_all}, some best performing LLMs are able to use shorter test-time computing to solve simpler problems. For example, as the difficulty of the dataset increases, the latent speed of Qwen3-30B-A3B-Instruct-2507 decreases sharply. 
% LaRT captures the general relationship between CoT length and response accuracy, but fails to adapt to every specific model.
% We will discuss possible directions for a more refined model for CoT length in the discussion section.

Figure \ref{fig:app_dis} presents the scatterplot of estimated discrimination of response accuracy $a$ against the discrimination of latent speed $\varphi$. We only present the question items whose discriminative power is less than 0.5. Several items exhibit minimal discriminative power for latent ability yet possess significant discriminative power for latent speed. Conceptually, extremely easy or difficult items yield uniform response accuracy, resulting in a small value of $a$, because few LLMs answer them correctly/wrong. However, joint modeling of accuracy and CoT length allows us to additionally evaluate CoT quality on these items. Consequently, the overall discriminative utility of questions with low discriminative power in accuracy is enhanced.

% Additionally, we present the scatter plot of the estimated discrimination of latent speed $\varphi$ against $a$ for every question item in Figure \ref{fig:app_dis}. There are some items that have very small discriminative power in latent ability. However, most of them have significant discriminative power in the latent speed. Conceptually, when a question item is too difficult or simple, almost all or none models answer the question correctly. This results in low discriminative power in the response accuracy $a$. By jointly modeling the response accuracy and CoT length, we are able to additionally evaluate the CoT quality through these questions. Hence, the discriminative power of these questions 

\begin{figure}[h!]
    \centering
    \includegraphics[width=0.6\linewidth]{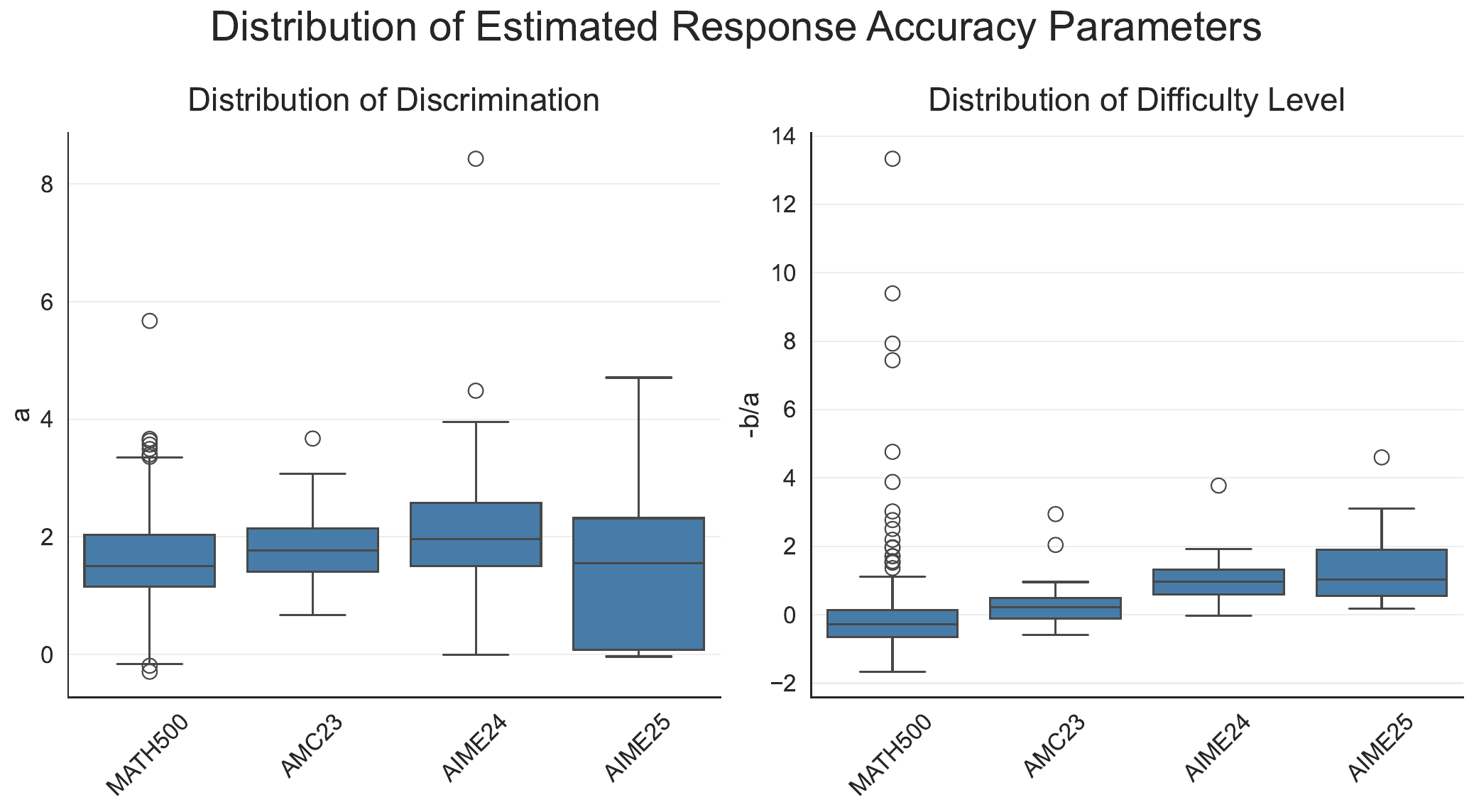}
    \caption{Boxplots of $\ba$ and $\bb$ for 4 datasets estimated by LaRT. The plot on the left is the boxplot of $\ba$. The plot on the right is the boxplot of $-\bb/\ba$. Except the most difficult AIME25, as the dataset becomes more difficult, the average discriminative power of accuracy increases.}
    \label{fig:difficulties}
\end{figure}

\begin{figure}[h!]
    \centering
    \includegraphics[width=0.9\linewidth]{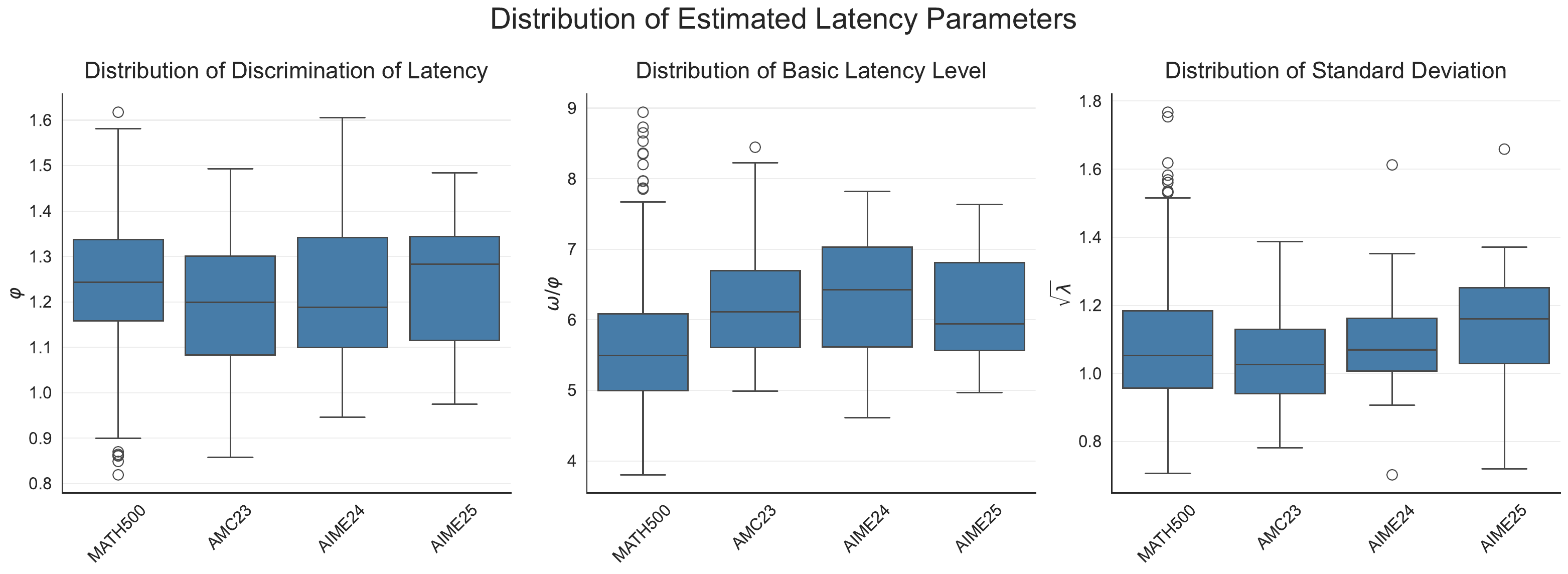}
    \caption{Boxplots of $\bvarphi$, $\bomega$, and $\blambda$ for 4 datasets estimated by LaRT. The left plot is the distribution of $\bvarphi$. The middle plot is the boxplot of $\bomega/\bvarphi$. The right plot is the boxplot of $\blambda$. Except AIME25, as the dataset becomes more difficult, the basic latency level increases.}
    \label{fig:latency}
\end{figure}

\begin{figure}[h!]
    \centering
    \includegraphics[width=0.5\linewidth]{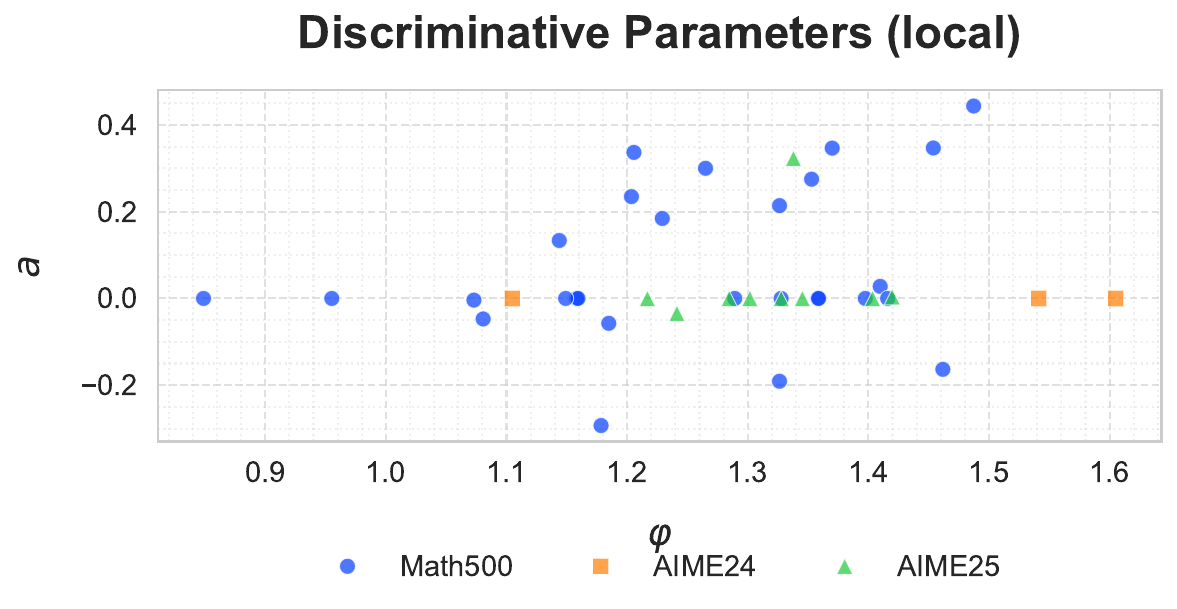}
    \caption{
    Scatterplot of estimated discrimination of the latent ability $a$ against the discrimination of latent speed $\varphi$ for questions with small discriminative ability. The colors of the points represent different datasets as shown in the legend. There are many questions with minimal discriminative power in accuracy, but significant positive discriminative power in CoT length.
    } 
    \label{fig:app_dis}
\end{figure}

To give a direct comparison between the rankings given by IRT and LaRT, Figure~\ref{fig:ranking_shift_aime25} presents LLMs whose rankings given by these two models differ. Ranking differences for other datasets are presented in Appendix~\ref{append:applied}. 
Figure~\ref{fig:ranking_shift_aime25} shows that many LLMs' rankings differ between IRT and LaRT. The difference results from LLMs that correctly answer similar numbers of items but differ in CoT length. In general, LLMs with longer CoT lengths have higher rankings under LaRT than under IRT. 

% However, we should address that this is not the whole story. 
Specifically, we look at the shift in rankings in Qwen3-30B-A3B-Thinking-2507 (shorthand as Qwen-Thinking), Qwen3-30B-A3B-Instruct-2507 (shorthand as Qwen-Instruct), and Qwen3-32B in the zero-shot prompting. Qwen-Thinking, Qwen-Instruct, and Qwen3-32B correctly answer 15, 14, and 14 questions, respectively, and have average CoT lengths of 5168, 5922, and 8333, respectively. In the ranking by IRT, Qwen-Thinking is better than Qwen-Instruct, and Qwen-Instruct is better than Qwen3-32B. In the ranking by LaRT, Qwen-Instruct is better than Qwen-Thinking, and Qwen-Thinking is better than Qwen3-32B. The larger test-time compute for Qwen-Instruct helps it surpass Qwen-Thinking in the ranking by LaRT. However, even though the CoT length for Qwen3-32B is significantly larger than both Qwen-Instruct and Qwen-Thinking, the ranking of Qwen3-32B remains the same. In particular, Qwen3-32B answers the same number of items correctly as Qwen-Instruct.
% the difference in CoT length between Qwen-Instruct and Qwen3-32B is significantly larger than that between Qwen-Instruct and Qwen-Thinking, adding that Qwen-Instruct and Qwen3-32B answer the same number of questions correctly, Qwen3-32B still lags behind Qwen-Instruct. 
This illustrates how LaRT's probabilistic modeling automatically balances difficulty, time requirements, and discrimination in both accuracy and latency. We demonstrate quantitatively that LaRT produces more convincing rankings in the next subsection.

\begin{figure}[h!]
    \centering
    \includegraphics[width=\linewidth]{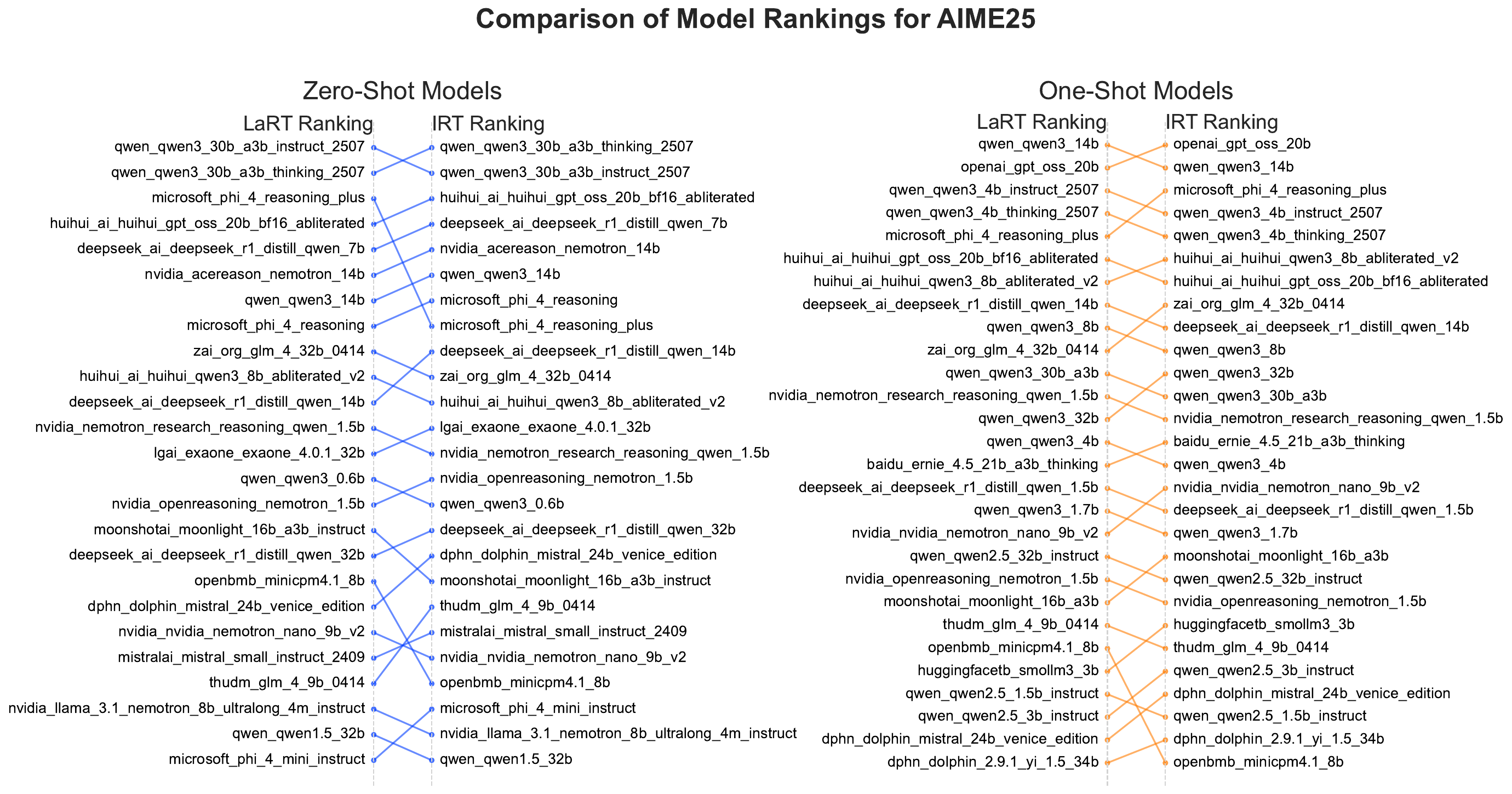}
    \caption{Differences in LLM rankings for both zero-shot models and one-shot models for AIME25. The left panel is for zero-shot models, and the right panel is for one-shot models. For each panel, rankings by LaRT are on the left, and rankings by IRT are on the right. LLMs higher in the plot have higher rankings. The lines connect the same models with different rankings by LaRT and IRT.}
    \label{fig:ranking_shift_aime25}
\end{figure}

\subsection{Quantitative Comparison}
\label{sec:quantitative_compare}
\subsubsection{Predictive Power}
\label{sec:predictive_power}
We test the predictive power of LaRT in comparison with IRT under the active evaluation setting. Prior works on predictive accuracy also studied the active evaluation regime \citep{polo2024tinybenchmarks}. We concatenate AMC23, AIME24, and AIME25 datasets together. We delete LLMs that fail to answer any of the questions correctly. After preprocessing, there are $N=128$ LLMs and $J=100$ question items. We randomly select $N_t=100$ LLMs as training set to estimate the population parameters in LaRT. For the test set, we examine the prediction performance via five-fold cross-validation. We randomly partition the $J=100$ questions into five disjoint sets. For each fold, we first obtain the LLMs' latent abilities with four sets of questions. Then, we predict the responses of each LLM for the questions in the remaining test set. Since both the correct/wrong responses and predicted probability of each item are within $[0,1]$, we use mean absolute error as the metric for the prediction accuracy.

The result for each fold and average mean absolute error is presented in Table \ref{tab:prediction}. LaRT significantly outperforms IRT in all folds. This illustrates that LaRT is able to extract more information from the data in comparison with IRT.
% In 4 folds over 5 folds, LaRT significantly improves prediction accuracy compared with IRT. In the second fold, LaRT is slightly worse than IRT. This results from some models not following the general trend (\textcolor{red}{double check later}). This illustrates the significant improvement for LaRT in capturing the information in the data.

% Only tested for MATH500 for now. First, we obtain the population parameter by training on a training set of 108 models. Then, for model in the test set, we use 5-fold cross-validation to construct training set and test set. On the training set, we fit the model given population parameters to obtain latent abilities. Then, we predict the values in the test set. (Bad Result, Differ by very little) 

\begin{table}[h!]
\centering
\caption{Prediction Accuracy of LaRT and IRT in Mean Absolute Error. LaRT uniformly predicts responses in unseen entries more accurately than IRT.}
\label{tab:prediction}
\begin{tabular}{ccccccc}
\hline
Fold & 1              & 2              & 3              & 4              & 5              & Average        \\ \hline
LaRT & \textbf{0.235} & \textbf{0.177}          & \textbf{0.183} & \textbf{0.160} & \textbf{0.161} & \textbf{0.183} \\
IRT  & 0.391          & 0.190 & 0.268          & 0.257          & 0.238          & 0.269          \\ \hline
\end{tabular}
\end{table}

\subsubsection{Item Efficiency}
For item efficiency, we measure how many questions are needed to estimate the latent ability of each LLM accurately. We take on the computerized adaptive testing \citep[CAT,][]{meijer1999computerized, wainer2000computerized, chang2015psychometrics} perspective to determine which question to add at each time. CAT has gained great popularity in LLM evaluation with IRT \citep{zhuang2023efficiently, hofmann2025fluid}. Specifically, we first give every LLM the first 10 questions. According to their answers, we can estimate their latent abilities. For the estimated latent ability of each LLM, we choose an item that maximizes the Fisher information among the rest of the question items. Statistically, by choosing questions that maximize the Fisher information, we can adaptively obtain the least variance in estimating the latent ability \citep{reckase200618}. For LaRT, at each time, we choose the item that maximizes the Fisher information for $\theta_i$, whose form is presented in (\ref{eqn:fisher_info_theta}).
% Notably, recall the form of Fisher information of LaRT, the computation for Fisher information is actually the same as IRT.

In the experiment, similarly as Section \ref{sec:predictive_power}, we concatenate AMC23, AIME24, and AIME25 together, and delete LLMs that fail to answer any of the questions correctly. We assume the ground truth of the latent ability $\btheta$ is the $\hat{\btheta}$ estimated with all the items in the concatenated dataset. Let $\hat{\btheta}_j$ be the estimated latent ability with $j$ question items for each LLM. Note that for each LLM, the $j$-th question items can be different. We compare the scaled Euclidean distance between $\hat{\btheta}_j$ and $\hat{\btheta}_J$ for each $j$. We plot how the scaled Euclidean distance evolve as the number of questions increase in Figure \ref{fig:com_efficiency}.
% The results as the number of questions gets larger for LaRT and IRT are presented in Figure \ref{fig:com_efficiency}.

Figure \ref{fig:com_efficiency} shows that, in most questions, LaRT estimates the latent ability more accurately than IRT. By incorporating the CoT information, LaRT essentially has more data to estimate the latent ability. Therefore, even with a limited number of items, LaRT is able to estimate the latent ability more accurately.
% When the number of questions is smaller than 200, LaRT significantly estimates the latent ability more accurately than IRT. When the number of questions continues to get larger, their difference becomes smaller. This matches our presented theory. By incorporating the information from CoT, LaRT essentially has more data to estimate the latent ability. Therefore, even with a limited number of items, LaRT is able to estimate the latent ability more accurately. When the number of items continues to get larger, the regularization coming from the correlated prior is overwhelmed by the number of data. Therefore, behavior of LaRT and IRT becomes similar.

\begin{figure}[h!]
    \centering
    \includegraphics[width=0.5\linewidth]{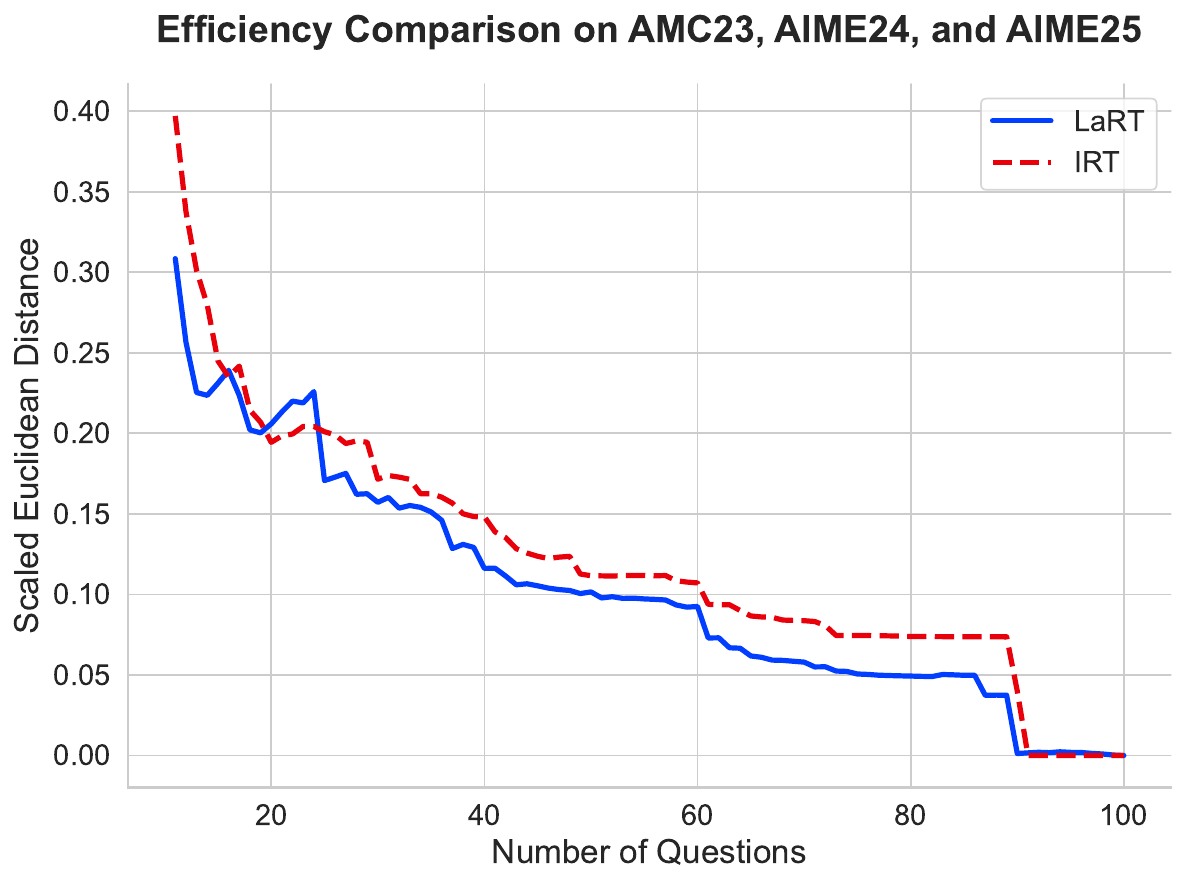}
    \caption{Comparison of Item Efficiency between LaRT and IRT. LaRT can estimate the latent abilities of LLMs accurately with fewer question items than IRT.}
    \label{fig:com_efficiency}
\end{figure}

% How many question needed to obtain a good estimate of the model's latent ability. Using the CAT framework. First, we obtain the population parameter by training on a training set of 108 models. Then, for model in the test set, we first give them the first 10 questions in MATH500, obtain their first latent ability estimate. Next, for each model, we give the the question that maximize its Fisher Information. We keep count of the model's latent ability at each step. We use the latent ability estimated with the full dataset as the "ground truth". We compare the MSE with the ground truth for the estimated latent ability at each step.

% \begin{figure}[h!]
%     \centering
%     \includegraphics[width=0.5\linewidth]{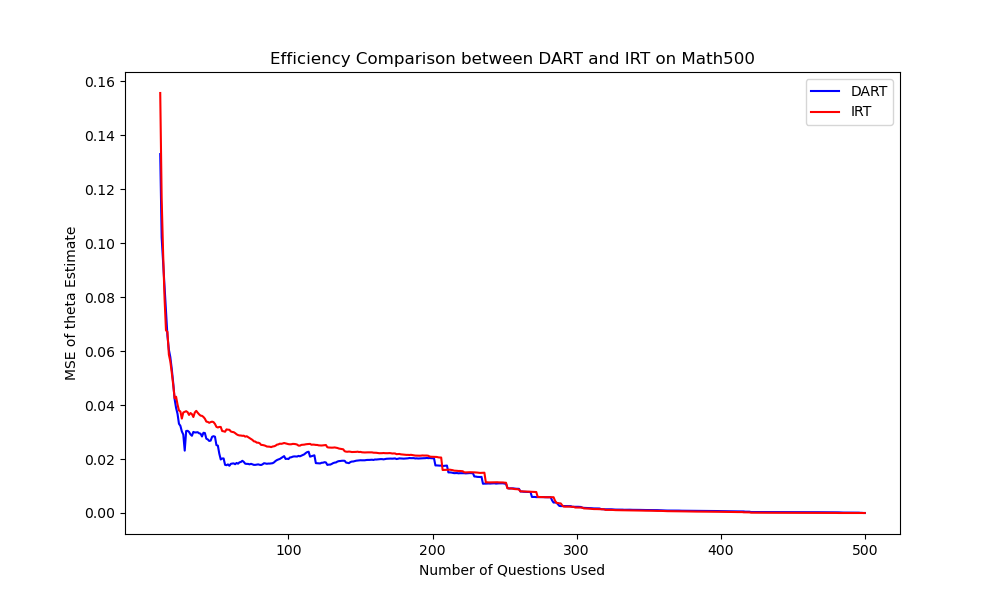}
%     \caption{Efficiency Comparison between DART and IRT on MATH500}
%     \label{fig:efficiency}
% \end{figure}

\subsubsection{Validity}
Prior works have shown IRT-based methods have significant advantage in validity compared with other ranking methods \citep{hofmann2025fluid}. Therefore, if we observe that LaRT outperforms IRT, then it is reasonable to conclude that LaRT enjoys better validity over other ranking methods. 

In this experiment, we randomly partition the questions in MATH500 into 5 non-overlapping sets, where each set contains 100 questions. We apply both LaRT and IRT to each set of questions, and estimate the latent ability of each LLM. We calculate the variance of the estimated latent ability for every LLM, and take the sum of the variances over the 138 LLMs. 

% Randomly partition the questions in MATH500 into 5 sets, each set contains 100 questions. Perform IRT and DART estimation on each of the subset for the estimated $\btheta$. Compare the variance of estimated $\btheta$. \textbf{Rank is not good for the evaluation. Possible Reason: More models compared with Fluid Language Model Evaluation. The ability difference of model is smaller in our case.}

% \zx{change the table into the first sentence.}
The variance of the latent ability estimated by LaRT over five non-overlapping sets is 2.0130, compared to 2.3423 for IRT. LaRT significantly reduces the variance of the estimated latent ability across all subsets of MATH500. Figure \ref{fig:speed_ability} show that, MATH500 exhibits the smallest $|\rho|$, which suggests that CoT modeling provides the least assistance for the latent ability estimation compared to other benchmark datasets. LaRT's ability to significantly decrease variance even for the MATH500 dataset suggests that LaRT may offer greater validity improvements over IRT when applied to more challenging benchmark datasets.

% A significant decrease in variance for MATH500 strengthens our belief that for other more difficult datasets, LaRT will have a more significant improvement in validity compared with IRT.
% We calculate the variance of the estimated latent ability for every model, and take the sum of the variances over the 138 models. The result is presented in Table \ref{tab:validity}. Although MATH500 dataset is the dataset that exhibits the smallest correlation between latent ability and latent speed, LaRT still significantly decreases the variation of estimated latent ability across different datasets. This shows the improvement in validity by LaRT over IRT.

% \begin{table}[h!]
% \centering
% \caption{Variance of estimated latent abilities across 5 subsets of MATH500. LaRT has better estimation consistency in LLMs' latent abilities compared with IRT.}
% \label{tab:validity}
% \begin{tabular}{ccc}
% \hline
%     & LaRT   & IRT    \\ \hline
% Variance & \textbf{2.013} & 2.3423 \\ \hline
% \end{tabular}
% \end{table}

\subsubsection{LLM Efficiency}
For LaRT estimation efficiency, we evaluate how many LLMs are needed to yield a good estimate of the population parameters. The accuracy of population parameters is crucial to the active evaluation. First, it determines how well we can estimate the latent ability, since the population parameters are fixed when performing MAP estimation for the individual latent variables. Second, we select the question to assign to each LLM adaptively based on the Fisher information. If the population parameters are poorly estimated, the efficacy of this adaptive procedure will be hampered.

% \zx{change the following paragraph.}
In this specific experiment, we select 100 question items in MATH500 with the largest Fisher information averaged over all LLMs. We randomly permute the 140 LLMs and train LaRT and IRT using cumulative subsets of size $N\in \{50, 75, 100, 125, 140\}$. These subsets are nested, such that the set of LLMs used for a smaller sample size is strictly contained within the set for any larger sample size. We treat $\hat{\ba}$ and $\hat{\bb}$ estimated with the full 140 LLMs as the ground truth. We evaluated convergence by the scaled Euclidean distance between the estimates $\hat{\ba}_N$, $\hat{\bb}_N$ at each $N$, and the ground truth. The result is presented in Table \ref{tab:eff_LLMs}.

Table \ref{tab:eff_LLMs} shows that LaRT outperforms IRT in most of the different $N$. 
% When $N=75$, IRT performs slightly better than LaRT. This results from that some LLMs among the 75 ones do not follow the general trend \yg{specify what is the} between response accuracy and CoT length.
Specifically, the drastic improvement when $N=50$ is particularly striking. When $N$ is small, two possible scenarios may happen that result in a bad estimate of $\ba$ or $\bb$. For $\bb$, suppose there is item $j$ that almost every LLM answers correctly (or almost every LLM answers incorrectly), and hence $\hat{b}_j$ is close to $\infty$ (or $-\infty$). For $\ba$, suppose there is item $j$ that all LLMs with $\theta_i>0$ answer it correctly, and all LLMs with $\theta_i<0$ answer it wrong. Then, $\hat{a}_j$ will be $\infty$. Due to these two major difficulties for IRT, when $N=50$, IRT has unusable estimates of $\ba$ and $\bb$, while LaRT leverages CoT length for valid estimation.
% How many models required to obtain a good estimate of the discrimination parameter and difficulty level. (Also important for Fluid Language Model for selection.) We train the DART and IRT respectively on 25, 50, 75, 100, 125, 140 models (each larger set contains the previous set). Use the one with 140 models (every model we have) as the "ground truth". Compare MSE for each estimate.

\begin{table}[h!]
\centering
\caption{Comparison between IRT and LaRT in LLM efficiency. LaRT can estimate the population parameters more accurately with fewer LLMs.}
\label{tab:eff_LLMs}
\begin{tabular}{ccccccccc}
\hline
N     & \multicolumn{2}{c}{50}   & \multicolumn{2}{c}{75}   & \multicolumn{2}{c}{100}  & \multicolumn{2}{c}{125}  \\
Method            & IRT    & LaRT            & IRT             & LaRT   & IRT    & LaRT            & IRT    & LaRT            \\ \hline
$\ba$      & 28.8248 & \textbf{3.8537} & \textbf{5.5137} & 5.6749 & 0.9078 & \textbf{0.8896} & 0.6288 & \textbf{0.5825} \\
$\bb$     & 18.5782 & \textbf{2.3181} & \textbf{2.7583} & 3.0197 & 0.7409 & \textbf{0.7105} & 0.7464 & \textbf{0.7010} \\ \hline
\end{tabular}

\end{table}

% \begin{enumerate}
%     \item Between IRT and the number of questions answered correctly, there is discrepancy. IRT adjust for the discrimination power of each question.
%     \item In Joint modeling approach, there is a 0.6 correlation between $\btheta$ and $\btau$. To be specific, the longer the CoT, the more capable a model is.
%     \item There is ranking difference between the IRT one and joint modeling one. Qwen2.5-1.5B-Instruct and Llama-3.2-3B-Instruct exchange places, and DeepSeek-R1-Distill-Qwen-7B and gpt-oss-20b exchange places. They exchange places because some models put in more thoughts (longer CoT).
%     \item (Discussion) In this modeling approach, we only have the general trend (the same covariance), instead of individual trend. For example, we should rank models with shorter CoT higher if they have the same performance. But it is the other way around for this modeling approach, since we are only capturing the genral trend.
% \end{enumerate}

\section{Discussion}
\label{sec:discussion}
We proposed the \textbf{La}tency-\textbf{R}esponse \textbf{T}heory model (LaRT) for LLM evaluation in both static and active evaluation settings, with efficient SAEM estimation using spectral initialization. 
%the population parameters. 
%For individual latent variables, we perform Maximum-A-Posterior estimation, which is a convex optimization for LaRT. 
We justify our modeling choice via rigorous theoretical results in both identifiability and asymptotic distribution. In simulation studies, we confirm LaRT achieves better estimation accuracy and confidence interval coverage. In real-data application, we generate the responses and chain-of-thought (CoT) of LLMs in multiple popular datasets ourselves. Through comprehensive comparison in \textit{Predictive Power}, \textit{Item Efficiency}, \textit{Validity}, \textit{LLM Efficiency}, we find that LaRT has outstanding performance in all four dimensions.

We make the assumption of prior independence for the latent traits in the current LaRT framework. To account for the prior dependence, we can extend LaRT using a multilevel modeling approach \citep{fox2005multilevel, fox2007multilevel}. For example, one can assign a shared hyperprior to the prior distributions of the latent traits $(\theta_i,\tau_i)$ based on the base model family (e.g. Llama or Qwen). We expect that similar identifiability guarantees and algorithms can be derived with a suitable specification of the prior distribution.

% We have several ideas for future extensions. 
This work opens up multiple directions for future research.
% First, LaRT may improve saturation resistance, which is a major concern for LLM evaluation \citep{liang2023holistic}. Due to the rapid development of LLMs, the most advanced models can solve most items in the benchmark dataset. If two LLMs solve exactly the same items correctly, IRT is unable to distinguish their latent ability. However, thanks to the additional information from the CoT length, LaRT is able to provide a more refined latent ability estimation, distinguishing LLMs even if they answer the same items correctly. While we only considered open-source small-to-medium size LLMs for evaluation, %and failed to construct a valid saturated dataset. 
% it will be a fruitful direction to explore LaRT's performance in saturated dataset in the future.
First, our algorithmic and theoretical results for LaRT extend naturally to multidimensional latent abilities. If the goal of LLM evaluation is to predict the response for future question items, extending $\ba$, $\btheta$, $\bvarphi$, $\btau$ from one dimension to multiple dimensions can help extract more information from the data. \citet{polo2024tinybenchmarks} applies Multidimensional Item Response Theory models to predict the responses of future question items. Based on the real data application, we believe extending LaRT to multiple dimensions will achieve even better predictive performance.

% In this work, we incorporate the information from the CoT length, and significantly improve the ranking of LLMs in multiple desiderata. Our modeling framework LaRT actually allows for the incorporation of other sources of information that highly concerns with the latent ability of LLMs. For example, the Chain-of-Thought themselves contain rich information about how LLMs are able to think. Modeling Chain-of-Thought data into the LaRT framework can boost major improvement in LLM rankings. For example, there are previous works that use LLM graders to grade the step-by-step CoT \citep{xia2025evaluating}. This approach provides a binary tensor data serving as a summary statistic for CoT. Under LaRT modeling framework, via the probit link, we can model these binary correct/wrong outcomes to assist the learning of the latent abilities of LLMs. 

Second, LaRT's flexibility allows incorporating other covariates that are highly correlated with latent ability. For example, the CoT reasoning themselves contain rich information about an LLM's reasoning ability. Integrating this data directly into the LaRT framework could substantially enhance ranking performance. Prior work has used LLM graders to evaluate step-by-step CoT reasoning, essentially yielding binary tensor data \citep{xia2025evaluating}. The step-by-step grading can serve as another summary of the information contained in the CoT. The LaRT framework can model these binary correct/wrong outcomes, using a probit link, to further refine the estimation of LLM latent abilities. Other considerations such as environmental impact \citep{liang2023holistic} can also be summarized as an additional covariate to modify the rankings of LLMs. 

% \zx{Add the following paragraph}
% By jointly modeling response accuracy and CoT length, LaRT significantly advances key evaluation desiderata. Nevertheless, deviations from this general trend exist. For instance, on challenging items, certain LLMs may engage in extended but erroneous reasoning processes, yielding long CoT sequences despite incorrect outcomes. 
% This behavior violates the model's assumption of a homogeneous correlation $\rho$ between latent ability $\theta_i$ and the length parameter $\tau_i$ across all LLMs.
% Future work could address this heterogeneity via mixture modeling to disentangle CoT length distributions conditional on response correctness \citep{wang2015mixture}.
By jointly modeling response accuracy and CoT length, LaRT significantly advances key evaluation desiderata. Nevertheless, deviations from this general trend exist. For instance, on challenging items, certain LLMs may engage in extended but erroneous reasoning processes, yielding long CoT sequences despite incorrect outcomes. This behavior may violate LaRT's assumption of a homogeneous correlation $\rho$ between latent ability $\theta_i$ and the length parameter $\tau_i$ across all LLMs. Future work could address this heterogeneity via mixture modeling to disentangle CoT length distributions conditional on response correctness, possibly by extending the approach in \citep{wang2015mixture}.

Finally, beyond LLM evaluation, this work contributes new methodology with sound theoretical guarantee to the field of psychometrics. Our methods can be readily deployed in educational assessment applications and used to analyze student test takers' response data.

\bibliographystyle{apalike}
\bibliography{ref}

\clearpage
\appendix
\setcounter{equation}{0}
\setcounter{figure}{0}
\setcounter{table}{0}
\renewcommand{\theequation}{S.\arabic{equation}}
\renewcommand{\thefigure}{S\arabic{figure}}
\renewcommand{\thetable}{S\arabic{table}}

% The supplementary material provides technical derivations, additional theoretical details, simulation results, and implementation information supporting the main paper. Appendix A derives the SAEM algorithm, including the posterior characterization of the latent traits and the sampling steps used in the stochastic approximation procedure. Appendix B presents the spectral initialization method used to improve the stability and efficiency of SAEM. Appendix C contains the identifiability analysis, including auxiliary results and proofs of the main identifiability theorem. Appendix D provides the assumptions, lemmas, and proof of asymptotic posterior normality for the latent trait estimators. Appendix E reports additional simulation studies, including robustness to link-function misspecification, settings in which LaRT and IRT exhibit convergent behavior, and comparisons with traditional SAEM implementations. Appendix F gives additional details on the real-data implementation, including model lists, prompting protocols, and hyperparameter settings. Appendix G presents supplementary empirical results for the LLM application, including additional confidence intervals, model behavior summaries, and ranking comparisons across benchmark datasets.

\section{Derivation of SAEM Algorithm}
\label{append:alg}
\subsection{Complete form of Lemma \ref{lem:posterior}}
\label{append:post_complete}
\begin{lemma}
    \label{lem:posterior_complete}
    The posterior distribution of $(\theta_i,\tau_i)$ given $\Rb_{i,:}$, $\Tb_{i,:}$, and $\bOmega$ follows the following distribution,
    \begin{align}
        \theta_i\mid \Rb_{i,:},\Tb_{i,:};\bOmega &\sim \mathsf{SUN}_{1,J}\left(\mu_{\theta}^{(i)},\sigma_{\theta}^2, \Delta_{i,\text{post}}, \gamma_{i,\text{post}}, \Gamma_{i,\text{post}}  \right)\label{SUN1}\\
        \tau_i\mid \theta_i,\Tb_{i,:}; \bOmega &\sim N\left(\mu_{\tau}^{(i)}, \sigma_{\tau}^{2}\right),\label{SUN2}
    \end{align}
    where $\mathsf{SUN}$ represents the unified skew-normal distribution, and
    \begin{align*}
        \sigma_{\tau}^{2} = \biggl(\frac{1}{1-\rho^2} + \sum_{j=1}^J\frac{\varphi_j^2}{\lambda_j}\biggr)^{-1}, \quad &\mu_{\tau}^{(i)} = \sigma_{\tau}^2\biggl(\frac{\rho\theta_i}{1-\rho^2} - \sum_{j=1}^J \frac{(\log T_{ij} - \omega_j)\varphi_j}{\lambda_j}\biggr),\\
        \sigma_{\theta}^2 = \biggl(\frac{1}{1-\rho^2} - \sigma_{\tau}^2\frac{\rho^2}{1-\rho^2}\biggr)^{-1},\quad &\mu_{\theta}^{(i)} = \sigma_{\theta}^2 \biggl(-\sum_{j=1}^J\frac{(\log T_{ij}-\omega_j)\varphi_j}{\lambda_j}\biggr) \frac{\sigma_{\tau}^2 \rho}{1-\rho^2}\\
        \Delta_{i,\text{post}}=\sigma_{\theta}\Db_{i, 1}^\top \Sbb_i^{-1},\quad &\gamma_{i,\text{post}}=\Sbb_i^{-1}\left(\Db_{i,1}\mu_{\theta}^{(i)} +  \Db_{i, 2}\right),\\  \Gamma_{i,\text{post}}=\Sbb_i^{-1}&\left( \sigmat^2 \Db_{i,1}\Db_{i,1}^\top + \Ib_J \right)\Sbb_i^{-1},
    \end{align*}
%     \begin{equation*}
%     \Delta_{i,\text{post}}=\sigma_{\theta}\Db_{i, 1}^\top \Sbb_i^{-1},\quad \gamma_{i,\text{post}}=\Sbb_i^{-1}\left(\Db_{i,1}\mu_{\theta}^{(i)} +  \Db_{i, 2}\right), \quad \Gamma_{i,\text{post}}=\Sbb_i^{-1}\left( \sigmat^2 \Db_{i,1}\Db_{i,1}^\top + \cI_J \right)\Sbb_i^{-1},
% \end{equation*}
and
\begin{align*}
    \Db_{i,1} &= \diag\left( 2R_{i1}-1,\dots,2R_{iJ}-1 \right)\ba, \quad \Db_{i, 2} = \diag\left( 2R_{i1}-1,\dots,2R_{iJ}-1 \right)\bb,\\
    \Sbb_i &= \diag\left\{ (\sigmat^2 \Db_{i,11}^\top\Db_{i,11}+1 )^{1/2},\dots, (\sigmat^2 \Db_{i,1J}^\top\Db_{i,1J}+1 )^{1/2} \right\}\in \RR^{J\times J},
\end{align*}
where $\Db_{i,1j}$ is the $j$th row of $\Db_{i,1}$.
\end{lemma}

\subsection{Proof of Lemma \ref{lem:posterior}}
In this section, we derive the complete conditional distribution of $\bxi_i=(\theta_i,\tau_i)$ given $\bOmega$, $\Rb$, and $\Tb$. Due to the conditional independence structure, given $\bOmega$, $\Rb$, and $\Tb$, $\bxi_i$'s are independent. Hence, we only need to derive the complete conditional of $\bxi_i$. The complete conditional is,
\begin{align*}
    P(\theta_i,\tau_i\mid \Rb_{i,:}, \Tb_{i,:};\bOmega) &\propto P(\theta_i,\tau_i; \bOmega)\prod_{j=1}^J P(R_{ij}\mid \theta_i;\bOmega) P(T_{ij}\mid \tau_i;\bOmega)\\
    &= \left[P(\theta_i; \bOmega) \prod_{j=1}^JP(R_{ij}\mid \theta_i; \bOmega)\right]\left[ P(\tau_i\mid \theta_i;\bOmega)\prod_{j=1}^J P(T_{ij}\mid \tau_i;\bOmega) \right].
\end{align*}

First, we focus on the conditional distribution of $P(\tau_i\mid\theta_i,\Tb_{i,:};\bOmega)$. Since $(\theta_i,\tau_i)\sim N(0,\bSigma) $, $\tau_i\mid\theta_i,\bOmega\sim N(\rho\theta_i,1-\rho^2 )$. The likelihood $P(T_{ij}\mid \tau_i;\bOmega)$ is also a normal distribution. Hence, the conditional distribution is still normal.
\begin{align*}
    P(\tau_i\mid \theta_i, \Tb_{i,:};\bOmega) &\propto p(\tau_i\mid \theta_i;\bOmega) \prod_{j=1}^J P(T_{ij}\mid \tau_i;\bOmega)\\
    &= \exp\left\{ -\frac{1}{2(1-\rho^2)}(\tau_i-\rho\theta_i)^2 \right\}\exp\left\{ -\sum_{j=1}^J \frac{1}{2\lambda_j}(\log t_{ij}-\omega_j+\varphi_j\tau_i)^2 \right\}\\
    &\propto \exp\left\{-\frac{1}{2\check{\sigma}_{\tau}^{(i)2}} \left(\tau_i-\check{\mu}_{\tau}^{(i)}\right)^2\right\},
\end{align*}
where
\begin{equation*}
    \check{\sigma}_{\tau}^{(i)2} = \left(\frac{1}{1-\rho^2} + \sum_{j=1}^J\frac{\varphi_j^2}{\lambda_j}\right)^{-1}, \quad \check{\mu}_{\tau}^{(i)} = \left(\frac{1}{1-\rho^2} + \sum_{j=1}^J\frac{\varphi_j^2}{\lambda_j}\right)^{-1}\left(\frac{\rho\theta_i}{1-\rho^2} - \sum_{j=1}^J \frac{(\log t_{ij} - \omega_j)\varphi_j}{\lambda_j}\right).
\end{equation*}
Note that $\check{\sigma}_{\tau}^{(i)}$ is independent of the index $i$. For simplicity, we will denote it as $\check{\sigma}_{\tau}$ from now on. Thus, the conditional distribution of $\tau_i$ is,
\begin{equation*}
    \tau_i\mid \theta_i,\Tb_{i,:};\bOmega \sim N\left(\check{\mu}_{\tau}^{(i)}, \check{\sigma}_{\tau}^2\right).
\end{equation*}

Then, for the marginal distribution $P(\theta_i\mid \Rb_{i,:},\Tb_{i,:};\bOmega)$, note that by marginalizing out $\tau_i$, normalizing constants containing $\theta_i$ will contribute to the marginal posterior of $\theta_i$. In the sequel, we consider the normalizing constant "twist" $P(\theta_i;\bOmega)$, and compute the twisted prior. First, for the normalizing constant concerning $\theta_i$,
\begin{align*}
    &\int_{\RR}P(\tau_i\mid \theta_i;\bOmega)\prod_{j=1}^J P(T_{ij}\mid \tau_i;\bOmega)d\tau_i\\ &\propto \int_{\RR}\exp\left\{ -\frac{1}{2(1-\rho^2)}(\tau_i-\rho\theta_i)^2 \right\}\exp\left\{ -\sum_{j=1}^J \frac{1}{2\lambda_j}(\log t_{ij}-\omega_j+\varphi_j\tau_i)^2 \right\}d\tau_i\\
    &=\exp\left\{ -\frac{\rho^2}{2(1-\rho^2)}\theta_i^2 \right\}\exp\left\{\frac{1}{2\check{\sigma}_{\tau}^2}\left(\check{\mu}_{\tau}^{(i)}\right)^2 \right\}\int_{\RR} \exp\left\{-\frac{1}{2\check{\sigma}_{\tau}^{(i)2}} \left(\tau_i-\check{\mu}_{\tau}^{(i)}\right)^2\right\} d\tau_i\\
    &\propto \exp\left\{ -\frac{\rho^2}{2(1-\rho^2)}\theta_i^2 \right\} \exp\left\{ \frac{\rho^2}{2(1-\rho^2)^2}\left( \frac{1}{1-\rho^2} + \sum_{j=1}^J \frac{\varphi_j^2}{\lambda_j} \right)^{-1}\theta_i^2 \right.\\
    &\left.- \left( \frac{1}{1-\rho^2} + \sum_{j=1}^J \frac{\varphi_j^2}{\lambda_j} \right)^{-1}\left(\sum_{j=1}^J \frac{(\log T_{ij}-\omega_j)\varphi_j}{\lambda_j} \right)\frac{\rho}{1-\rho^2}\theta_i \right\}.
\end{align*}

Note that
\begin{align*}
    \frac{1}{1-\rho^2}\left( \frac{1}{1-\rho^2} + \sum_{j=1}^J \frac{\varphi_j^2}{\lambda_j} \right)^{-1} = \left( 1 + (1-\rho^2)\sum_{j=1}^J\frac{\varphi_j^2}{\lambda_j} \right)^{-1} \leq 1.
\end{align*}

Thus, for the normalizing constant concerning $\theta_i$, inside the exponential, it is still a quadratic form whose quadratic term has negative coefficient. Note that $P(\theta_i;\bOmega)=N(0,1)$. Denote the twisted prior of $\theta_i$ as $\tilde{\phi}(\theta_i\mid \Tb_{i,:} ;\bOmega)$, which is
\begin{align*}
    \tilde{\phi}(\theta_i\mid \Tb_{i,:}; \bOmega) &\propto \exp\left\{-\frac{1}{2}\theta_i^2\right\} \exp\left\{ -\frac{\rho^2}{2(1-\rho^2)}\theta_i^2 \right\} \exp\left\{ \frac{\rho^2}{2(1-\rho^2)^2}\left( \frac{1}{1-\rho^2} + \sum_{j=1}^J \frac{\varphi_j^2}{\lambda_j} \right)^{-1}\theta_i^2 \right.\\
    &\left.- \left( \frac{1}{1-\rho^2} + \sum_{j=1}^J \frac{\varphi_j^2}{\lambda_j} \right)^{-1}\left(\sum_{j=1}^J \frac{(\log T_{ij}-\omega_j)\varphi_j}{\lambda_j} \right)\frac{\rho}{1-\rho^2}\theta_i \right\}\\
    &\propto \exp\left\{ -\frac{1-\rho^2+\rho^2-(1/(1-\rho^2)+\sum_j\varphi_j^2/\lambda_j)^{-1}\rho^2}{2(1-\rho^2)} \theta_i^2 \right.\\
    &\left.- \left( \frac{1}{1-\rho^2} + \sum_{j=1}^J \frac{\varphi_j^2}{\lambda_j} \right)^{-1}\left(\sum_{j=1}^J \frac{(\log T_{ij}-\omega_j)\varphi_j}{\lambda_j} \right)\frac{\rho}{1-\rho^2}\theta_i  \right\}\\
    &\propto \exp\left\{ -\frac{1}{2\sigma_{\theta}^{(i)2}}(\theta_i-\mu_{\theta}^{(i)})^2 \right\},
\end{align*}
where
\begin{equation*}
    \sigma_{\theta}^2 = \biggl(\frac{1}{1-\rho^2} - \sigma_{\tau}^2\frac{\rho^2}{1-\rho^2}\biggr)^{-1},\quad \mu_{\theta}^{(i)} = \sigma_{\theta}^2 \biggl(-\sum_{j=1}^J\frac{(\log t_{ij}-\omega_j)\varphi_j}{\lambda_j}\biggr) \frac{\sigma_{\tau}^2 \rho}{1-\rho^2}.
\end{equation*}

\subsection{Sampling from SUN distribution}
\label{append:sample_sun}

\begin{corollary}[Corollary 4.3 in \citet{li2025Sparse}]
    \label{coro:posterior}
    If $(\theta_i,\tau_i)$ follows the distribution specified in Lemma \ref{lem:posterior}, then
    \begin{equation}
        \theta_i\mid \Rb_{i,:},\Tb_{i,:};\bOmega \stackrel{d}{=} \mu_{\theta}^{(i)} + \sigma_{\theta}\left[ U_0 + \sigma_{\theta}\Db_{i,1}^{\top} \left( \sigma_{\theta}^2 \Db_{i,1}\Db_{i,1}^{\top} + \Ib_J \right)^{-1}\Sbb_i U_1 \right],\label{eq_coro}
    \end{equation}
    where $U_0\sim N\Bigl(0,1-\sigma_{\theta}^2\Db_{i,1}^{\top}\bigl( \sigma_{\theta}^2 \Db_{i,1} \Db_{i,1}^{\top} + \Ib_J\bigr)^{-1}\Db_{i,1} \Bigr)$, $U_1$ follows a truncated normal distribution with mean $0$ and variance $\Sbb_i^{-1}\bigl( \sigma_{\theta}^2 \Db_{i,1}\Db_{i,1}^{\top}+\Ib_J\bigr)\Sbb_i^{-1} $, truncated with lower bound at $-\Sbb_i^{-1}\bigl(\Db_{i,1}\mu_{\theta}^{(i)}+\Db_{i,2}\bigr)$, and $U_0$ is independent of $U_1$.
\end{corollary}

\section{Initialization for the SAEM algorithm}
\label{sec:initialization}
Effective initialization is critical for SAEM's efficiency and stability. We present a spectral-based initialization that is non-iterative, data-driven, and statistically consistent. 
This approach exploits the fact that the probit link for response accuracy and 
the log-normal specification for CoT fall within the scope of generalized linear factor 
models (GLFMs) or linear factor models, allowing us to adapt the spectral-based method of 
\citet{zhang2020note}, originally developed for GLFM settings, to initialize the LaRT 
framework. Specifically, we extend their procedure—which was designed for single-modality 
response-accuracy data—to accommodate the bimodal structure of response accuracy and CoT, 
introducing additional steps to initialize the CoT-related parameters under the LaRT 
hierarchical model. Algorithm~\ref{alg:initialize} summarizes the full algorithm.

\begin{algorithm}[h!]
    \caption{Nonlinear Spectral Initialization for the SAEM algorithm.}
    \label{alg:initialize}
    \begin{algorithmic}[1]
        \Require Binary response matrix $\Rb\in \{0,1\}^{N\times J}$, CoT length matrix $\Tb \in \RR_{+}^{N\times J} $, thresholding parameter $\epsilon_{N,J}$.
        \Ensure Spectral estimates of $\check{\bOmega}$, $\check{\theta}_i$, $\check{\tau}_i$ for $i\in [N]$.

        \State Perform a full SVD of $\Rb=\sum_{i=1}^{N\wedge J}\sigma_i\bu_i\bv_i^\top$.
        \State Let $\Xb = \sum_{k=1}^{\tilde{K}}\sigma_k\bu_k\bv_k^\top$, where $\tilde{K}=\max\bigl\{K+1,\argmax_k\{ \sigma_k\geq 1.01\sqrt{N\vee J} \}\bigr\}$.
        \State Let $\tilde{\Mb}=(\tilde{M}_{ij})_{N\times J}$ be
        \begin{equation*}
            \hat{M}_{ij}=\left\{
            \begin{aligned}
                &\Phi^{-1} (\epsilon_{N,J}),\quad \text{if } x_{ij} < \epsilon_{N,J},\\
                &\Phi^{-1}(x_{ij}), \quad \text{if } \epsilon_{N,J} \leq x_{ij} \leq 1-\epsilon_{N,J},\\
                &\Phi^{-1}(1-\epsilon_{N,J}),\quad \text{if } x_{ij} > 1-\epsilon_{N,J}.
            \end{aligned}
            \right.
        \end{equation*}
        % \State Let $\tilde{\Mb}=(\tilde{M}_{ij})_{N\times J}$, where $\tilde{m}_{ij}=\Phi^{-1}(\hat{x}_{ij})$.
        \State Let $\check{b}_j=\sum_{i=1}^N\tilde{m}_{ij}/N$, $\check{\omega}_k=\sum_{i=1}^N\log T_{ij}/N$ $\forall j \in [J]$.
        \State Perform top-1 SVD on $\hat{\Mb}=(\tilde{M}_{ij}-\check{b}_j)_{N\times J}$ and $\log \hat{\Tb} = (\log T_{ij} - \check{\omega}_j)_{N\times J}$ for respectively $\check{\sigma}_1\check{\bu}_1\check{\bv}_1^\top$ and $\tilde{\sigma}_1\tilde{\bu}_1\tilde{\bv}_1^{\top} $.
        \State Let $\check{\btheta}=\sqrt{N}\check{\bu}_1$, $\check{\ba}= \check{\sigma}_1\check{\bv}_1/\sqrt{N}$, $\check{\btau}=\sqrt{N}\tilde{\bu}_1$, and $\check{\bvarphi} = \tilde{\sigma}_1\tilde{\bv}_1/\sqrt{N}$.
        \State Let $\check{\rho}=\sum_{i=1}^N\check{\theta}_i\check{\tau}_i/N$, and $\check{\lambda}_j = \sum_{i=1}^N(\log \hat{T}_{ij}-\check{\tau}_i \check{\varphi}_j)^2/N$.
    \end{algorithmic}
\end{algorithm}

The algorithm proceeds in three stages. The algorithm begins by first performing an SVD on the response accuracy matrix to extract its 
dominant latent structure and reduce noise, followed by an inverse link transformation to 
obtain an approximately linear latent representation (steps 1–3). In step 3, 
$\epsilon_{N,J}$ serves as a threshold for truncating $\Xb$ to the range of $\Rb$, and is 
set to $10^{-9}$ in this work, following the general guidelines in \citet{zhang2020note}. 
The estimates of the difficulty parameters $\check{b}_j$ and the intensity parameters 
$\check{\omega}_k$ are obtained by averaging the corresponding entries in the transformed data $\tilde{\Mb}$,  respectively, as these parameters 
serve as intercepts in their linear factor model forms (step 4). Then, SVD is applied to 
the centered $\tilde{\Mb}$ and $\log(\Tb)$, and the estimates of the discrimination 
parameters $\check{\ba}$ and $\check{\bvarphi}$, as well as the latent ability variables 
$\check{\btheta}$ and $\check{\btau}$, are obtained by extracting their corresponding 
components from the SVD (steps 5–6). Finally, the log-CoT residual variance estimates $\check{\lambda}_j$ and the correlation 
estimate $\check{\rho}$ are computed using their closed-form expressions,
$\sum_{i=1}^N (\log \hat{T}_{ij}-\check{\tau}_i \check{\varphi}_j)^2 / N$ and 
$\sum_{i=1}^N \theta_i \tau_i / N$, respectively (step 7).
For more details on the implementation, please refer to \citet{zhang2020note}.

This informed initialization enables the algorithm to adopt a decaying step size $\alpha_t$ from the very first iteration. In contrast, standard SAEM implementations typically require an initial \emph{burn-in} phase, during which a large step size (e.g., $\alpha_t = 1$) must be used for many iterations \citep{lavielle2014improved, kuhn2004coupling, camilli2019stochastic}. This phase is crucial when using random initializations, which often start far from the optimum and therefore require large updates to move into the optimal region. However, it also introduces an additional tuning burden: the appropriate length of the burn-in phase varies with the initial values, making the procedure more unstable and sensitive to initialization.
This initialization strategy bypasses this burn-in phase entirely by replacing random initialization with an efficient, non-iterative, data-driven strategy, while also enjoying favorable statistical consistency properties \citep{zhang2020note}. Our spectral initialization starts near the optimum, allowing SAEM to use a decaying step size from iteration one. This yields both faster convergence and improved estimation accuracy compared with traditional SAEM implementations, as demonstrated in Appendix~\ref{append:SAEM}.

\section{Proof of Identifiability}

\subsection{Additional Results of Identifiability}
\label{append:iden_prop}
We first establish Proposition~\ref{prop:iden},
which reduces the identifiability analysis to checking a set of more tractable conditions.

%The Gaussian CDF allows for the analysis of tetrachoric correlation for a succinct representation of the marginal distribution $P(\Rb,\Tb;\bOmega)$. \citet{fang2021identifiability} establishes identifiability result for both IRT with probit link and linear bi-factor models. Following their procedures, we establish the following proposition for a necessary and sufficient representation of the marginal distribution.

%\textcolor{red}{Can be put into appendix for the proposition.}\yg{please clarify whether the following proposition is a new result or a result from a previous paper}

\begin{proposition}
    \label{prop:iden}
    Two sets of parameters $(\ba, \bbb, \bomega, \bvarphi, \blambda, \rho)$ and $(\ba^{\prime} , \bbb^{\prime}, \bomega^{\prime}, \bvarphi^{\prime}, \blambda^{\prime}, \rho^{\prime}) $ give rise to the same marginal distribution if and only if
    \begin{equation*}
        \frac{b_j}{\sqrt{a_j^2+1}} = \frac{b_j^{\prime}}{\sqrt{a_{j}^{\prime 2}+1}},\quad \omega_{j} = \omega_{j}^{\prime},\quad \varphi_j^2+\lambda_j = \varphi_j^{\prime 2}+\lambda_j^{\prime},
    \end{equation*}
    for all $j\in[J]$,
    \begin{equation*}
        \frac{a_{j_1}a_{j_2}}{\sqrt{a_{j_1}^2+1}\sqrt{a_{j_2}^2+1}} = \frac{a_{j_1}^\prime a_{j_2}^\prime }{\sqrt{a_{j_1}^{\prime2} + 1} \sqrt{a_{j_2}^{\prime 2}+1}},\quad \varphi_{j_1}\varphi_{j_2} = \varphi_{j_1}^{\prime}\varphi_{j_2}^\prime,
    \end{equation*}
    for all $j_1\neq j_2$,
    \begin{equation*}
        \frac{\rho a_{j_1}\varphi_{j_2}}{\sqrt{a_{j_1}^2+1}} = \frac{\rho^{\prime} a_{j_1}^{\prime}\varphi_{j_2}^{\prime}}{\sqrt{a_{j_1}^{\prime 2}+1}},
    \end{equation*}
    for all $j_1,j_2\in [J]$.
    % \begin{align*}
    %     \frac{b_j}{\sqrt{a_j^2+1}} &= \frac{b_j^{\prime}}{\sqrt{a_{j}^{\prime 2}+1}}, & \frac{a_{j_1}a_{j_2}}{\sqrt{a_{j_1}^2+1}\sqrt{a_{j_2}^2+1}} &= \frac{a_{j_1}^\prime a_{j_2}^\prime }{\sqrt{a_{j_1}^{\prime2} + 1} \sqrt{a_{j_2}^{\prime 2}+1}}\\
    %     \varphi_j^2+\lambda_j &= \varphi_j^{\prime 2}+\lambda_j^{\prime} , & \omega_{j} &= \omega_{j}^{\prime}\\
    %     \varphi_{j_1}\varphi_{j_2} &= \varphi_{j_1}^{\prime}\varphi_{j_2}^\prime, & \frac{\rho a_{j_1}\varphi_{j_2}}{\sqrt{a_{j_1}^2+1}} &= \frac{\rho^{\prime} a_{j_1}^{\prime}\varphi_{j_2}^{\prime}}{\sqrt{a_{j_1}^{\prime 2}+1}},
    % \end{align*}
    % for all $j$, $j_1$, $j_2 \in [J]$.
\end{proposition}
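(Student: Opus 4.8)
The plan is to recast the model in terms of the latent Gaussian variables underlying the probit and log-normal components, and then to reduce the marginal-distribution equality to a statement about the first two moments of these Gaussians on a standardized scale. For each LLM $i$, introduce the continuous latent variable $Z_{ij} = a_j\theta_i + b_j + \epsilon_{ij}$ with $\epsilon_{ij}\sim N(0,1)$, so that $R_{ij}=\mathbf{1}(Z_{ij}>0)$ under the probit link, and set $Y_{ij}=\log T_{ij}=\omega_j-\varphi_j\tau_i+\eta_{ij}$ with $\eta_{ij}\sim N(0,\lambda_j)$. Marginalizing out $(\theta_i,\tau_i)\sim N(\mathbf{0},\bSigma)$, the stacked vector $(\mathbf{Z}_i,\mathbf{Y}_i)\in\mathbb{R}^{2J}$ is jointly Gaussian, and the observed data $(\Rb_{i,:},\Tb_{i,:})$ is the image of $(\mathbf{Z}_i,\mathbf{Y}_i)$ under the fixed componentwise map $(\mathbf{z},\mathbf{y})\mapsto(\mathbf{1}(z>0),\exp(y))$. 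Since $\exp$ is a bijection, the marginal law of the observed data is exactly the law of $(\mathrm{sign}(\mathbf{Z}_i),\mathbf{Y}_i)$. A direct moment computation gives $E[Z_{ij}]=b_j$, $\mathrm{Var}(Z_{ij})=a_j^2+1$, $\mathrm{Cov}(Z_{ij_1},Z_{ij_2})=a_{j_1}a_{j_2}$; $E[Y_{ij}]=\omega_j$, $\mathrm{Var}(Y_{ij})=\varphi_j^2+\lambda_j$, $\mathrm{Cov}(Y_{ij_1},Y_{ij_2})=\varphi_{j_1}\varphi_{j_2}$; and the cross term $\mathrm{Cov}(Z_{ij_1},Y_{ij_2})=-\rho\,a_{j_1}\varphi_{j_2}$.

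The crucial structural fact is that $\mathrm{sign}(\mathbf{Z}_i)$ is invariant under rescaling each coordinate $Z_{ij}$ by an arbitrary positive constant, whereas $\mathbf{Y}_i$ is observed directly. Hence the law of $(\mathrm{sign}(\mathbf{Z}_i),\mathbf{Y}_i)$ is unchanged if we replace the $Z$-block by $D^{-1}\mathbf{Z}_i$ for any positive diagonal $D$. Standardizing so that each $Z_{ij}$ has unit variance, the observed law is therefore determined by the standardized moments: the thresholds $b_j/\sqrt{a_j^2+1}$, the correlation matrix of $\mathbf{Z}_i$ with entries $a_{j_1}a_{j_2}/\sqrt{(a_{j_1}^2+1)(a_{j_2}^2+1)}$, the mean $\omega_j$ and covariance $\varphi_{j_1}\varphi_{j_2}$, $\varphi_j^2+\lambda_j$ of $\mathbf{Y}_i$, and the standardized cross-covariances $-\rho a_{j_1}\varphi_{j_2}/\sqrt{a_{j_1}^2+1}$. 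This immediately yields the ``if'' direction: if two parameter sets agree on all quantities listed in the proposition, then after standardization the two $2J$-dimensional Gaussians have identical mean vectors and identical covariance matrices, so the induced laws of $(\mathrm{sign}(\mathbf{Z}_i),\mathbf{Y}_i)$---and thus of the observed data---coincide. Here I use that a Gaussian is determined by its first two moments, and that every entry of the standardized mean and covariance is one of the listed univariate or pairwise quantities.

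For the ``only if'' direction I recover each standardized moment from low-dimensional marginals of the observed law. The marginal law of $\mathbf{Y}_i$ is Gaussian and fully observed, so it identifies $\omega_j$, $\varphi_j^2+\lambda_j$, and $\varphi_{j_1}\varphi_{j_2}$. The marginal law of $\mathrm{sign}(\mathbf{Z}_i)$ is a multivariate probit model, whose correspondence with standardized thresholds and tetrachoric (standardized) correlations is precisely the reduction underlying \citet{fang2021identifiability}; this recovers $b_j/\sqrt{a_j^2+1}$ and $a_{j_1}a_{j_2}/\sqrt{(a_{j_1}^2+1)(a_{j_2}^2+1)}$. The genuinely new ingredient, which I expect to be the main obstacle, is the cross term: I must show that the bivariate law of $(\mathrm{sign}(Z_{ij_1}),Y_{ij_2})$ identifies the standardized cross-covariance $\rho a_{j_1}\varphi_{j_2}/\sqrt{a_{j_1}^2+1}$, including the within-item case $j_1=j_2$. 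To do so I condition on $Y_{ij_2}=y$; then $Z_{ij_1}\mid Y_{ij_2}=y$ is univariate Gaussian with mean affine in $y$ and constant variance, so $P(Z_{ij_1}>0\mid Y_{ij_2}=y)=\Phi(\kappa_0+\kappa_1 y)$ for constants $\kappa_0,\kappa_1$ determined by the standardized threshold and cross-covariance. Since $\omega_{j_2}$ and $\varphi_{j_2}^2+\lambda_{j_2}$ are already identified, the slope $\kappa_1$ is a strictly monotone function of the standardized cross-covariance, so matching the observed conditional sign probabilities across $y$ pins the cross term down. This is the polyserial-correlation identification, the multimodal analogue of the tetrachoric argument, and it is where the argument must be carried out in detail.

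Finally, I translate these standardized-moment equalities into the stated parameter conditions using the moment formulas from the first step. Equality of thresholds is $b_j/\sqrt{a_j^2+1}=b_j'/\sqrt{a_j'^2+1}$; equality of $\mathbf{Y}$-moments gives $\omega_j=\omega_j'$, $\varphi_j^2+\lambda_j=\varphi_j'^2+\lambda_j'$, and $\varphi_{j_1}\varphi_{j_2}=\varphi_{j_1}'\varphi_{j_2}'$; equality of tetrachoric correlations gives the stated condition on the $a_j$'s; and equality of standardized cross-covariances---the global sign being immaterial to an equality---gives $\rho a_{j_1}\varphi_{j_2}/\sqrt{a_{j_1}^2+1}=\rho' a_{j_1}'\varphi_{j_2}'/\sqrt{a_{j_1}'^2+1}$ for all $j_1,j_2$. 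Both directions of the equivalence are thereby established, with the principal effort residing in the polyserial identification above and in adapting the tetrachoric reduction of \citet{fang2021identifiability} from the purely dichotomous setting to the present model, where it must hold jointly with a fully observed Gaussian block.
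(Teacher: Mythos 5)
Your proposal is correct and follows essentially the same route as the paper: both arguments reduce the observed law to that of an underlying $2J$-dimensional Gaussian, identify the model through its one- and two-dimensional marginals (standardized probit thresholds, tetrachoric correlations, and the mean/covariance of the log-CoT block plus the cross-covariances), and then translate these standardized moments into the stated parameter equalities. Your explicit polyserial-conditioning argument for the cross term $\mathrm{Cov}(Z_{ij_1},Y_{ij_2})$ is a slightly more detailed justification of a step the paper handles implicitly via monotonicity of the bivariate orthant probability in its correlation parameter, but the underlying reduction is identical.
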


For the parameters associated with response accuracy $\Rb$, this proposition
requires that the probit thresholds and tetrachoric correlations implied by the marginal
distribution of $\Rb$ admit a unique parameterization. For the parameters associated with
CoT $\Tb$, it further requires a unique parameterization of the mean and
variance of its marginal distribution. In addition, it imposes that the covariance
between $R_{i,j_1}$ and $T_{i,j_2}$ remains the same for all $j_1,j_2\in[J]$. The proof of
Proposition~\ref{prop:iden} is provided in Appendix~\ref{append:prop_iden}.

\subsection{Proof of Proposition \ref{prop:iden}}
\label{append:prop_iden}
This proof is similar to the proof of Proposition 3.1 in \citet{fang2021identifiability}.
First, we derive the marginal distribution of $R_{ij}$ and $\log T_{ij}$ knowing $\bOmega$. Let $\varepsilon_{j}\sim N(0,1)$ independently for all $j$. Then, for $(\theta_i,\tau_i) \sim N(0,\bSigma)$,
\begin{align*}
    P(R_{ij}=1) = \EEE_{\theta_i}\left[P(R_{ij}=1\mid \theta_i)\right] &= \EEE_{\theta_i} \left[\EEE_{\varepsilon_j}\left[ \mathds{1}(\varepsilon_j \leq b_j+a_j\theta_i) \right]\right] \\
    &= P(\varepsilon_j\leq b_j+a_j\theta_i)\\
    &= P\left(\sqrt{a_{j}^2+1}\eta_j+b_j\geq 0\right)\\
    &= \Psi \biggl( -\frac{b_j}{\sqrt{a_j^2+1}} \biggr),
\end{align*}
where $\eta_{j} = (a_j\theta_i-\varepsilon_j)/\sqrt{a_{j}^2+1}\sim N(0,1)$ and $\Psi(x)=1-\Phi(x)$ the complementary cumulative density function for a standard normal variable.

Similarly, let $e_{j}\sim N(0,\lambda_j)$ independently for all $j$,
\begin{align*}
    P\left(\log T_{ij} \geq \log t_{ij} \right) &= \EEE_{\tau_i}\left[P(\omega_j-\varphi_j\tau_i+e_j\geq \log t_{ij}\mid \tau_i)\right]\\
    &= \EEE_{\tau_i}\left[\EEE_{e_j}\left[\mathds{1}(\omega_j - \varphi_j\tau_i + e_j \geq \log t_{ij})\mid \tau_i\right]\right]\\
    &= P(\omega_j - \varphi_j\tau_i +e_j\geq \log t_{ij})\\
    &= P\left( \sqrt{\varphi_j^2+\lambda_j}\zeta_j + \omega_j \geq \log t_{ij} \right)\\
    &= \Psi \biggl( \frac{\log t_{ij} - \omega_j}{\sqrt{\varphi_j^2+\lambda_j}} \biggr),
\end{align*}
where $\zeta_{j} = (e_j-\varphi_j\tau_i)/\sqrt{\varphi_j^2+\lambda_j}\sim N(0,1)$.

Then, we calculate the two-component marginal distribution of $(R_{i,j_1}, R_{i,j_2})$, $(\log T_{i,j_1},\log T_{i,j_2})$, and $(R_{i,j_1},\log T_{i,j_2})$. We first compute the covariance between the following quantities.
\begin{equation*}
    \Cov{\eta_{j_1}}{\eta_{j_2}} = \Cov{\frac{a_{j_1}\theta_i-\varepsilon_{j_1}}{\sqrt{a_{j_1}^2+1}}}{\frac{a_{j_2}\theta_i-\varepsilon_{j_2}}{\sqrt{a_{j_2}^2+1}}} = \frac{a_{j_1}a_{j_2}}{\sqrt{a_{j_1}^2+1}\sqrt{a_{j_2}^2+1}}.
\end{equation*}
Similarly,
\begin{align*}
    \Cov{\zeta_{j_1}}{\zeta_{j_2}} &= \frac{\varphi_{j_1}\varphi_{j_2}}{\sqrt{\varphi_{j_1}^2+\lambda_{j_1}}\sqrt{\varphi_{j_2}^2+\lambda_{j_2}}},\\
    \Cov{\eta_{j_1}}{\zeta_{j_2}} &= -\frac{\rho a_{j_1}\varphi_{j_2}}{\sqrt{a_{j_1}^2+1}\sqrt{\varphi_{j_2}^2+\lambda_{j_2}}}.
\end{align*}

Therefore, the two-component marginal distributions are,
\begin{align*}
    P(R_{i,j_1}=1, R_{i,j_2}=1) &= P(\varepsilon_{j_1}\leq b_{j_1}+a_{j_1}\theta_i, \quad \varepsilon_{j_2}\leq b_{j_2}+a_{j_2}\theta_i)\\
    &= P\left(\sqrt{a_{j_1}^2+1} \eta_{j_1} + b_{j_1}\geq 0, \quad \sqrt{a_{j_2}^2+1} \eta_{j_2}+b_{j_2}\geq 0 \right)\\
    &= \Psi \biggl( -\frac{b_{j_1}}{\sqrt{a_{j_1}^2+1}}, - \frac{b_{j_2}}{\sqrt{a_{j_2}^2+1}}, \frac{a_{j_1}a_{j_2}}{\sqrt{a_{j_1}^2+1}\sqrt{a_{j_2}^2+1}} \biggr),
\end{align*}
where $\Psi(x_1,x_2,\rho)=P(X_1\geq x_1, X_{2}\geq x_2)$, $X_1, X_2\sim N(0,1)$ and $\Cov{X_1}{X_2}=\rho$. Similarly,
\begin{align*}
    &P(\log T_{i,j_1}\geq \log t_{i,j_1},\log T_{i,j_2}\geq \log t_{i,j_2}) \\
    &= P\left(\sqrt{\varphi_{j_1}^2+\lambda_{j_1}}\zeta_{j_1}+\omega_{j_1}\geq \log t_{i,j_1}, \sqrt{\varphi_{j_2}^2+\lambda_{j_2}}\zeta_{j_2}+\omega_{j_2}\geq \log t_{i,j_2} \right)\\
    &= \Psi\biggl( \frac{\log t_{i,j_1}-\omega_{j_1}}{\sqrt{\varphi_{j_1}^2+\lambda_{j_1}}}, \frac{\log t_{i,j_2}-\omega_{j_2}}{\sqrt{\varphi_{j_2}^2+\lambda_{j_2}}}, \frac{\varphi_{j_1}\varphi_{j_2}}{\sqrt{\varphi_{j_1}^2+\lambda_{j_1}}\sqrt{\varphi_{j_2}^2+\lambda_{j_2}}}  \biggr),
\end{align*}
\begin{align*}
    P(R_{ij}=1,\log T_{ij}\geq \log t_{ij}) &= P\left( b_{j_1}+\sqrt{a_{j_1}^2+1}\eta_{j_1} \geq 0, \sqrt{\varphi_{j_1}^2+\lambda_{j_2}} \zeta_{j_2}+\omega_{j_2}\geq \log t_{i,j_2} \right)\\
    &= \Psi\biggl(-\frac{b_{j_1}}{\sqrt{a_{j_1}^2+1}}, \frac{\log t_{i,j_2}-\omega_{j_2}}{\sqrt{\varphi_{j_2}^2+\lambda_{j_2}}}, -\frac{\rho a_{j_1}\varphi_{j_2}}{\sqrt{a_{j_1}^2+1}\sqrt{\varphi_{j_2}^2+\lambda_{j_2}}}\biggr).
\end{align*}
Following this strategy, we can write out the joint distribution of $R_{i,1},\ldots, R_{i,J}, \log T_{i,1},\ldots \log T_{i,J}$ by their pairwise covariance. For simplicity, we omit it here.

For sufficiency, suppose there are two sets of parameters $\bOmega$ and $\bOmega^{\prime}$ following conditions in Proposition \ref{prop:iden}. Then, note that the joint distribution of $\Rb_i$ and $\log \Tb_i$ only depends on $b_j/\sqrt{a_{j}^2+1}$, $\omega_j$, $\sqrt{\varphi_{j}^2+\lambda_j}$, $\Cov{\eta_{j_1}}{\eta_{j_2}} $, $\Cov{\zeta_{j_1}}{\zeta_{j_2}} $, and $\Cov{\eta_{j_1}}{\zeta_{j_2}} $. When $\bOmega$ and $\bOmega^{\prime}$ satisfy the set of conditions in Proposition \ref{prop:iden}, these quantities are the same. Therefore, $\bOmega$ and $\bOmega^{\prime}$ give rise to the same joint distribution of $\Rb_{i}$ and $\log \Tb_{i}$.

For necessity, suppose $\bOmega$ and $\bOmega^{\prime}$ give rise to the same joint distribution of $\Rb_i$ and $\log \Tb_i$. First, for one-component marginal distribution of $R_{ij}$ and $\log T_{ij}$, $\bOmega$ and $\bOmega^{\prime}$ need to satisfy,
\begin{equation*}
    \frac{b_{j}}{\sqrt{a_j^2+1}} = \frac{b_{j}^{\prime}}{\sqrt{a_j^{\prime 2}+1}},\quad \frac{\log t_{ij}-\omega_j}{\sqrt{\varphi_j^2 + \lambda_j}} = \frac{\log t_{ij}-\omega_j^{\prime}}{\sqrt{\varphi_j^{\prime 2} + \lambda_j^{\prime}}},
\end{equation*}
for all $\log t_{ij}\in \RR$. Therefore, 
\begin{equation}
    \label{eqn:mar_impli}
    \omega_j = \omega_{j}^{\prime},\quad \varphi_j^2 + \lambda_j = \varphi_j^{\prime 2} + \lambda_j^{\prime}.
\end{equation}

Then, consider the two-component marginals, $\bOmega$ and $\bOmega^{\prime}$ giving rise to the same distribution asks for the following equalities,
\begin{align*}
    \frac{a_{j_1}a_{j_2}}{\sqrt{a_{j_1}^2+1}\sqrt{a_{j_2}^2+1}} &= \frac{a_{j_1}^{\prime} a_{j_2}^{\prime}}{\sqrt{a_{j_1}^{\prime 2}+1} \sqrt{a_{j_2}^{\prime 2}+1}},\\
    \frac{\varphi_{j_1}\varphi_{j_2}}{\sqrt{\varphi_{j_1}^2 + \lambda_{j_1}} \sqrt{\varphi_{j_2}^2 + \lambda_{j_2}}} &= \frac{\varphi_{j_1}^{\prime} \varphi_{j_2}^{\prime}}{\sqrt{\varphi_{j_1}^{\prime 2} + \lambda_{j_1}^{\prime}} \sqrt{\varphi_{j_2}^{\prime 2} + \lambda_{j_2}^{\prime}}},\\
    \frac{\rho a_{j_1}\varphi_{j_2}}{\sqrt{a_{j_1}^2+1}\sqrt{\varphi_{j_2}+\lambda_{j_2}}} &= \frac{\rho^{\prime} a_{j_1}^{\prime}\varphi_{j_2}^{\prime}}{\sqrt{a_{j_1}^{\prime 2}+1}\sqrt{\varphi_{j_2}^{\prime 2}+\lambda_{j_2}^{\prime}}}.    
\end{align*}
Combining with (\ref{eqn:mar_impli}), this requires
\begin{align*}
    \frac{a_{j_1}a_{j_2}}{\sqrt{a_{j_1}^2+1}\sqrt{a_{j_2}^2+1}} = \frac{a_{j_1}^{\prime} a_{j_2}^{\prime}}{\sqrt{a_{j_1}^{\prime 2}+1} \sqrt{a_{j_2}^{\prime 2}+1}}, \quad \varphi_{j_1}\varphi_{j_2} = \varphi_{j_1}^{\prime} \varphi_{j_2}^{\prime}, \quad \frac{\rho a_{j_1}\varphi_{j_2}}{\sqrt{a_{j_1}^2+1}} = \frac{\rho^{\prime} a_{j_1}^{\prime}\varphi_{j_2}^{\prime}}{\sqrt{a_{j_1}^{\prime 2}+1}}.
\end{align*}
Therefore, conditions in Proposition \ref{prop:iden} are necessary.

\subsection{Proof of Theorem \ref{thm:identifiability}}
\label{append:thm_iden}
First, we show the identifiability of the probit model part. Suppose there are two sets of parameters $\bOmega$ and $\bOmega^{\prime}$ that give rise to the same distribution for $\Rb$ and $\log \Tb$. Define $\tilde{\ba}=(\tilde{a}_1,\ldots ,\tilde{a}_J)$, where $\tilde{a}_j=a_j/\sqrt{a_j^2+1}$. Then, from Proposition \ref{prop:iden}, we have
\begin{equation*}
    \tilde{\ba}\tilde{\ba}^\top + \Sbb = \tilde{\ba}^{\prime}\tilde{\ba}^{\prime \top} + \Sbb^{\prime},
\end{equation*}
where $\Sbb=\diag\{ b_j/\sqrt{a_j^2+1}-a_j^2/(a_j^2+1) \}_{j=1}^J$ and similarly for $\Sbb^{\prime}$. 

If any row of $\tilde{\ba}$ is deleted, $\tilde{\ba}$ still ranks 1 because there are at least 2 non-zero entries in $\tilde{\ba}$. Then, from Theorem 5.1 in \citet{anderson1956statistical}, $\Sbb^{\prime}=\Sbb$ and $\tilde{\ba}\tilde{\ba}^\top=\tilde{\ba}^{\prime}\tilde{\ba}^{\prime \top}$. The diagonal entries of $\tilde{\ba}\tilde{\ba}^\top$ and $\tilde{\ba}^{\prime} \tilde{\ba}^{\prime \top}$ being equal implies
\begin{equation*}
    \frac{a_j^2}{a_j^2+1} = \frac{a_j^{\prime 2}}{a_j^{\prime 2}+1}.
\end{equation*}
Hence, $a_j^2=a_j^{\prime 2}$ for all $j\in [J]$. Combining with the definition of $\Sbb$, we have $b_j=b_j^{\prime}$ for all $j\in [J]$.

Additionally, by Lemma 5.1 in \citet{anderson1956statistical}, we have
\begin{equation*}
    \frac{a_j}{\sqrt{a_{j}^2+1}}=\frac{ca_j^{\prime}}{\sqrt{a_{j}^{\prime 2}+1}},
\end{equation*}
where $c\in \{-1,1\}$. Since $\sum_{j=1}^Ja_j>0$, $c$ can only be $1$. Therefore, $\ba=\ba^{\prime}$.

For the parameters $\bvarphi$, $\bomega$, the proof is the same as the probit model case. For $\blambda$, since $\bvarphi=\bvarphi^{\prime}$, $\varphi_j^2=\varphi_j^{\prime 2}$ for all $j\in [J]$, and thus $\lambda_j=\lambda_j^{\prime}$ for all $j\in [J]$. For $\rho$, since every other parameter is identified, following Proposition \ref{prop:iden}, $\rho=\rho^{\prime}$.

\section{Proof of APN}
\label{append:apn}
\subsection{Required Assumptions}
\label{append:assump}
\begin{assumption}
    \label{assump:finite}
    If restricted to any compact set $K\subseteq \Theta$, $|a_j\phi(a_j\theta+b_j)|$, $|a_j^2\phi^{\prime}(a_j\theta+b_j)|$, and $|a_j^3\phi^{\prime\prime}(a_j\theta+b_j)|$ are uniformly bounded for all $j\in \mathbb{N}$. 
    Additionally, there exists constants $0< d_0(K) < d_1(K) <1$, such that for all $j\in \mathbb{N}$, we have
    \begin{equation*}
        d_0(K)\leq \inf _{(j,\theta)\in \mathbb{N}\times K}\Phi(a_j\theta+b_j)\leq \sup_{(j,\theta)\in \mathbb{N} \times K} \Phi(a_j\theta+b_j)\leq d_1(K).
    \end{equation*}
    Moreover, $\sup_{j\in \mathbb{N}}\varphi_j^2/\lambda_j\leq C <\infty$, where $C$ is a constant.
    
\end{assumption}

% Serve as identifiability condition that optimizing the likelihood, we can find the true value $\theta_0$, $\tau_0$. Only needed when there is non-iid data. (log T should need a counterpart?)
\begin{assumption}
    \label{assump:iden_theta}
    For every $(\theta,\tau)\neq (\theta_0,\tau_0)$, there is $c_1(\theta), c_2(\tau) < 0$, such that
    \begin{align*}
        \limsup_{J\to \infty} \frac{1}{J}\sum_{j=1}^J \left[ \Phi(a_j\theta_0+b_j)\log \frac{\Phi(a_j\theta+b_j)}{\Phi(a_j\theta_0+b_j)} + \Phi(-a_j\theta_0-b_j)\log \frac{\Phi(-a_j\theta-b_j)}{\Phi(-a_j\theta_0-b_j)} \right] &\leq c_1(\theta),\\
        \limsup_{J\to \infty}-\frac{1}{J}\sum_{j=1}^J \frac{1}{2\lambda_j} \left[ 2\varphi_j(\tau-\tau_0)(\omega_j-\varphi_j\tau_0)+\varphi_j^2(\tau^2-\tau_0^2) - 2\varphi_j\omega_j(\tau-\tau_0) \right] &\leq c_2(\tau),
    \end{align*}
    and additionally
    \begin{align*}
        \sup_{(\theta,\tau)\in \Theta\setminus B_{\delta}(\bxi_0)} c_1(\theta) <0, 
        \quad \sup_{(\theta,\tau)\in \Theta\setminus B_{\delta}(\bxi_0)} c_2(\tau) < 0,
    \end{align*}
    for all $\delta>0$.
\end{assumption}
% If $\inf_{j\in \mathbb{N}}\varphi_j^2/\lambda_j \geq C > 0$, the conditions for the linear term is satisfied.

\begin{assumption}
    \label{assump:fisher_info}
    Assume
    \begin{equation*}
        \sum_j\frac{\varphi_j^2}{\lambda_j} > 0,\quad \sum_j\frac{a_j^2\phi(a_j\theta_0+b_j)^2}{\Phi(a_j\theta_0+b_j)[1-\Phi(a_j\theta_0+b_j)]} > 0, \quad |\rho|<1.
    \end{equation*}
\end{assumption}
% The first part can be combined with the infimum greater than 0 for the second assumption. The second part should be implied by Assumption 1.

Assumption \ref{assump:finite} ensures that there is randomness in each entry, and the variance of the individual latent variables are bounded. Assumption \ref{assump:iden_theta} serves as an identifiability condition for $\bxi$ from the log posterior. That is, the ground truth $\bxi_0$ is unique and maximizes the log posterior when $J\to \infty$. Assumption \ref{assump:fisher_info} makes sure the Fisher information of both the likelihood and posterior is non-degenerate. These assumptions are mild and can be satisfied in common scenarios with finite population parameters in $\bOmega$. For a more detailed discussion, one can refer to Section 6 in \citet{kornely2022Asymptotic}. Next,  we establish the asymptotic distribution of the latent traits in Theorem~\ref{thm:asymptotic}.

\subsection{Auxiliary Lemmas}
First, we present there the Kolmogorov's strong law of large numbers \citep{serfling2009approximation} for completeness.
\begin{theorem}[Kolmogorov's Strong Law of Large Numbers]
    \label{thm:k_slln}
    Let $\{X_i\}_{i\in\mathbb{N}}$ a sequence of independent random variables with $\EE{X_i}=\mu_i\in \mathbb{R}$ and $0<\Var{X_i}=\sigma_i^2< \infty$. If $\sum_{i=1}^{\infty}\sigma_i^2/i^2<\infty $, then almost surely
    \begin{equation*}
        \frac{1}{d}\sum_{i=1}^dX_i-\frac{1}{d}\sum_{i=1}^d\mu_i\to 0,
    \end{equation*}
    for $d\to \infty$.
\end{theorem}

Denote the probabilistic model as $P_{\xi_0}$, where $\xi_0=(\theta_0,\tau_0)$ the true value.
\begin{lemma}
    \label{lem:fix_xi}
    Let $\{R_j,T_j\}_{j\in \NN}$ be a set of data generated by fixed $\xi_0=(\theta_0,\tau_0)$. Under Assumption \ref{assump:finite} and \ref{assump:iden_theta}, 
    \begin{equation*}
        \limsup_{J\to \infty} \frac{1}{J}\left[ l^{(J)}(\xi\mid \Rb,\Tb)-l^{(J)}(\xi_0\mid \Rb, \Tb) \right] \leq c_1(\theta) + c_2(\tau)<0.
    \end{equation*}
\end{lemma}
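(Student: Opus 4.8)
The plan is to exploit the separable additive structure of the log-likelihood. Since $l^{(J)}(\theta,\tau\mid\Rb,\Tb) = \sum_{j=1}^J[\log P(R_j\mid\theta;\bOmega)+\log P(T_j\mid\tau;\bOmega)]$, the difference $l^{(J)}(\xi\mid\Rb,\Tb)-l^{(J)}(\xi_0\mid\Rb,\Tb)$ splits into a response-accuracy part depending only on $\theta$ and a CoT part depending only on $\tau$. Writing $X_j^{(R)}=\log P(R_j\mid\theta;\bOmega)-\log P(R_j\mid\theta_0;\bOmega)$ and $X_j^{(T)}=\log P(T_j\mid\tau;\bOmega)-\log P(T_j\mid\tau_0;\bOmega)$, this difference equals $\sum_{j=1}^J X_j^{(R)}+\sum_{j=1}^J X_j^{(T)}$. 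Under $P_{\xi_0}$ the $R_j$ are independent Bernoulli and the $\log T_j$ are independent Gaussian, so each family $\{X_j^{(R)}\}_j$ and $\{X_j^{(T)}\}_j$ is independent across $j$; I would therefore treat the two sums separately and apply Kolmogorov's strong law of large numbers (Theorem~\ref{thm:k_slln}) to each.

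First I would compute the per-item means and show that they coincide with the summands appearing in Assumption~\ref{assump:iden_theta}. Using $1-\Phi(x)=\Phi(-x)$, a direct calculation shows that $\EEE_{\xi_0}[X_j^{(R)}]$ equals the $j$-th summand of the first display in Assumption~\ref{assump:iden_theta}. For the CoT part, a short expansion of the two quadratic exponents gives
\[
X_j^{(T)}=-\frac{\varphi_j(\tau-\tau_0)}{\lambda_j}(\log T_j-\omega_j)-\frac{\varphi_j^2(\tau^2-\tau_0^2)}{2\lambda_j},
\]
and substituting $\EEE_{\xi_0}[\log T_j-\omega_j]=-\varphi_j\tau_0$ yields $\EEE_{\xi_0}[X_j^{(T)}]=-\varphi_j^2(\tau-\tau_0)^2/(2\lambda_j)$, which is exactly the $j$-th summand of the second display in Assumption~\ref{assump:iden_theta} once its bracket is simplified to $\varphi_j^2(\tau-\tau_0)^2$.

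Next I would verify the variance-summability hypothesis $\sum_j\sigma_j^2/j^2<\infty$ of Theorem~\ref{thm:k_slln}. Fix a compact $K\subseteq\Theta$ containing both $\theta$ and $\theta_0$. By Assumption~\ref{assump:finite} the probabilities $\Phi(a_j\theta+b_j)$ and $\Phi(a_j\theta_0+b_j)$ lie in $[d_0(K),d_1(K)]\subset(0,1)$ uniformly in $j$, so the two attainable values of $X_j^{(R)}$ are bounded in magnitude by $|\log(d_1(K)/d_0(K))|$ and hence $\mathrm{Var}_{\xi_0}(X_j^{(R)})$ is uniformly bounded. Since $X_j^{(T)}$ is affine in the Gaussian $\log T_j$, its variance is $\varphi_j^2(\tau-\tau_0)^2/\lambda_j\le C(\tau-\tau_0)^2$, using the bound $\sup_j\varphi_j^2/\lambda_j\le C$ in Assumption~\ref{assump:finite}. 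Uniform boundedness of the variances together with $\sum_j j^{-2}<\infty$ gives the required summability, so Theorem~\ref{thm:k_slln} yields, almost surely, $\tfrac{1}{J}\sum_{j=1}^J X_j^{(R)}-\tfrac{1}{J}\sum_{j=1}^J\EEE_{\xi_0}[X_j^{(R)}]\to0$, and likewise for the $T$-part; consequently each sample average shares the same $\limsup$ as its averaged expectations (and when $\theta=\theta_0$ the first sum vanishes identically, so the bound is trivial, similarly for $\tau$).

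Finally I would assemble the pieces. Taking $\limsup_{J\to\infty}$ and using the subadditivity $\limsup(A_J+B_J)\le\limsup A_J+\limsup B_J$, the almost-sure replacements above combined with the two $\limsup$ bounds in Assumption~\ref{assump:iden_theta} give
\[
\limsup_{J\to\infty}\frac{1}{J}\left[l^{(J)}(\xi\mid\Rb,\Tb)-l^{(J)}(\xi_0\mid\Rb,\Tb)\right]\le c_1(\theta)+c_2(\tau),
\]
and strict negativity of the right-hand side for $\xi\neq\xi_0$ is precisely what Assumption~\ref{assump:iden_theta} asserts. I expect the main obstacle to be the variance-summability check: one must confirm that $X_j^{(R)}$ remains bounded away from the degenerate probabilities $0$ and $1$ (guaranteed by the two-sided bound on $\Phi$) and that the Gaussian variance of $X_j^{(T)}$ is controlled uniformly in $j$ by $\sup_j\varphi_j^2/\lambda_j\le C$; without these uniform controls Kolmogorov's condition $\sum_j\sigma_j^2/j^2<\infty$ could break down. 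The sign bookkeeping needed to line up $\EEE_{\xi_0}[X_j^{(T)}]$ with the bracket in Assumption~\ref{assump:iden_theta} is routine but must be carried out carefully.
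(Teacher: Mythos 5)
Your proposal is correct and follows essentially the same route as the paper: the same additive split into a response part and a CoT part, the same expansion of the log-normal difference into the linear statistic $Z_j$ with mean $-\varphi_j^2(\tau-\tau_0)^2/(2\lambda_j)$ and variance bounded via $\sup_j \varphi_j^2/\lambda_j \le C$, Kolmogorov's strong law, and then Assumption~\ref{assump:iden_theta}. The only (harmless) difference is that for the response-accuracy part the paper simply cites Lemma~W.2 of \citet{kornely2022Asymptotic}, whereas you reprove it directly by bounding the two attainable values of $X_j^{(R)}$ using the uniform bounds $d_0(K),d_1(K)$ from Assumption~\ref{assump:finite} and applying the same strong law.
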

\begin{proof}
    First, note that $l^{(J)}$ can be decomposed into two parts.
    \begin{align*}
        l^{(J)}(\xi\mid \Rb,\Tb) &= l^{(J)}_R(\theta) + l_T^{(J)}(\tau)\\
        &= \sum_{j=1}^J\left[R_{j}\log \Phi(a_j\theta+b_j)+(1-R_j)\log\Phi(-a_j\theta-b_j)\right]\\
        &- \sum_{j=1}^J \frac{1}{2\lambda_j}(\log T_{ij}+\varphi_j\tau-\omega_j)^2.
    \end{align*}

    Lemma W.2 in the web-appendix of \citet{kornely2022Asymptotic} shows that
    \begin{equation*}
        \limsup_{J\to \infty} \frac{1}{J}\left[ l_R^{(J)}(\theta\mid\Rb)-l_R^{(J)}(\theta_0\mid \Rb) \right]\leq c_1(\theta)<0. 
    \end{equation*}
    Then, we focus on proving $l_T^{(J)}$. Define $Z_j=-[2\varphi_j(\tau-\tau_0)\log T_{ij} + \varphi_j^2(\tau^2-\tau_0^2) - 2\varphi_j\omega_j(\tau-\tau_0)]/2\lambda_j $, then $l_T^{(J)}(\tau\mid \Tb)-l_T^{(J)}(\tau_0\mid \Tb) = \sum_{j=1}^J Z_j $. Since $\log T_{ij}\sim N(\omega_j-\varphi_j\tau_0,\lambda_j)$, we have
    \begin{equation*}
        \EE{Z_j} = -\frac{1}{2\lambda_j}\left[2\varphi_j(\tau-\tau_0)(\omega_j-\varphi_j\tau_0) + \varphi_j^2(\tau^2-\tau_0^2) - 2\varphi_j\omega_j(\tau-\tau_0)\right],
    \end{equation*}
    \begin{align*}
        \Var{Z_j} = \frac{\varphi_j^2(\tau-\tau_0)^2}{\lambda_j^2}\Var{\log T_{j}} = \frac{\varphi_j^2(\tau-\tau_0)^2}{\lambda_j} \leq (\tau-\tau_0)^2\sup_{j\in \mathbb{N}}\frac{\varphi_j^2}{\lambda_j}.
    \end{align*}

    Therefore, under Assumption \ref{assump:finite},
    \begin{equation*}
        \sum_{j=1}^J \frac{\Var{Z_j}}{j^2} \leq (\tau-\tau_0)^2\sup_{j\in \mathbb{N}}\frac{\varphi_j^2}{\lambda_j} \sum_{j=1}^J\frac{1}{j^2} \leq \infty.
    \end{equation*}

    Then, by Kolmogorov's strong law of large numbers, we have,
    \begin{equation*}
        \frac{1}{J}\sum_{j=1}^J Z_j - \frac{1}{J}\sum_{j=1}^J\EEE_{\tau_0}[Z_j]\stackrel{a.s.}{\to} 0,\quad J\to \infty.
    \end{equation*}

    % Then, we want to show under Assumption \ref{assump:finite}, $\limsup_{J\to\infty}\sum_{j=1}^J\EEE_{\tau_0}[Z_j]/J\leq c_2(\tau) < 0$.
    % \begin{align*}
    %     \frac{1}{J}\sum_{j=1}^J\EEE_{\tau_0}[Z_j] &= -\frac{1}{J}\sum_{j=1}^J\frac{1}{2\lambda_j}\left[2\varphi_j(\tau-\tau_0)(\omega_j-\varphi_j\tau_0) + \varphi_j^2(\tau^2-\tau_0)^2 - 2\varphi_j\omega_j(\tau-\tau_0)\right]\\
    %     &= -\frac{1}{J}\sum_{j=1}^J\frac{\varphi_j^2}{2\lambda_j}(\tau-\tau_0)^2\\
    %     &\leq 
    % \end{align*}
    Under Assumption \ref{assump:iden_theta}, $\limsup_{J\to\infty}\sum_{j=1}^J\EEE_{\tau_0}[Z_j]/J\leq c_2(\tau)$, and hence
    \begin{equation*}
        \limsup_{J\to \infty } \frac{1}{J}\left[ l^{(J)}_T(\tau\mid \Tb) - l^{(J)}_T(\tau_0\mid \Tb) \right] \leq c_2(\tau) <0. 
    \end{equation*}
\end{proof}

\begin{lemma}
    \label{lem:any_xi}
    Under Assumption \ref{assump:finite} and \ref{assump:iden_theta}, for any $\delta>0$, there exists a $k(\delta)<0$ so that
    \begin{equation*}
        \lim_{J\to \infty}P_{\xi_0}\left( \sup_{\xi\in \Theta\setminus B_{\delta}(\xi_0)} \frac{1}{J} \left( l^{(J)}(\xi\mid \Rb^{(J)},\Tb^{(J)}) - l^{(J)}(\xi_0\mid \Rb^{(J)},\Tb^{(J)}) \right) < k(\delta) \right) = 1.
    \end{equation*}
\end{lemma}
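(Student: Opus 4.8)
The plan is to upgrade the pointwise control of Lemma~\ref{lem:fix_xi} to a bound that is uniform over the set $\Theta\setminus B_\delta(\xi_0)$, which is compact in the two-point compactification $\Theta=\bar{\RR}\times\bar{\RR}$, by exploiting the separable and concave structure of the log-likelihood. Writing $l^{(J)}(\xi\mid\Rb,\Tb)=l_R^{(J)}(\theta)+l_T^{(J)}(\tau)$ as in the proof of Lemma~\ref{lem:fix_xi}, both summands are concave in their respective arguments: $l_R^{(J)}$ inherits concavity from the log-concavity of the probit link, and $l_T^{(J)}$ is a concave quadratic in $\tau$. Set $g_R(\theta)=J^{-1}\bigl(l_R^{(J)}(\theta)-l_R^{(J)}(\theta_0)\bigr)$ and $g_T(\tau)=J^{-1}\bigl(l_T^{(J)}(\tau)-l_T^{(J)}(\tau_0)\bigr)$, so that $g_R(\theta_0)=g_T(\tau_0)=0$. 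Since $\{\|\xi-\xi_0\|\ge\delta\}\subseteq\{|\theta-\theta_0|\ge\delta/\sqrt2\}\cup\{|\tau-\tau_0|\ge\delta/\sqrt2\}$ and the supremum over a union is the maximum of the suprema, it suffices to bound the two ``one-coordinate-far'' regions separately; on each, the supremum separates into a constrained supremum in the far coordinate plus an unconstrained supremum in the other.

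The decisive device is a one-dimensional concavity slope inequality: if $h$ is concave on $\bar{\RR}$ with $h(x_0)=0$ and $h(x_0+r)<0$, then for every $x\ge x_0+r$ one has $h(x)\le h(x_0+r)$ (and symmetrically to the left), \emph{without} any knowledge of the location of the maximizer. Applying this to $g_R$ at $x_0=\theta_0$, and invoking the response-accuracy ingredient of Lemma~\ref{lem:fix_xi} (which gives $\limsup_J g_R(\theta_0\pm\delta/\sqrt2)\le c_1(\theta_0\pm\delta/\sqrt2)<0$), yields, with probability tending to one, $\sup_{|\theta-\theta_0|\ge\delta/\sqrt2}g_R(\theta)\le\max\{g_R(\theta_0+\delta/\sqrt2),g_R(\theta_0-\delta/\sqrt2)\}$, a maximum of two quantities with strictly negative limsup. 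The analogue for $g_T$ at $\tau_0\pm\delta/\sqrt2$ follows from Kolmogorov's SLLN (Theorem~\ref{thm:k_slln}) applied to the coefficients of the concave quadratic, exactly as in Lemma~\ref{lem:fix_xi}; indeed $g_T(\tau)$ converges to $-\tfrac12 a_\infty(\tau-\tau_0)^2$ with $a_\infty=\lim_J J^{-1}\sum_j\varphi_j^2/\lambda_j>0$ under Assumption~\ref{assump:fisher_info}, so its boundary values tend to $-a_\infty\delta^2/4<0$.

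It remains to bound the two unconstrained suprema, $\sup_\tau g_T$ (in the $\theta$-far region) and $\sup_\theta g_R$ (in the $\tau$-far region), and to show each tends to $0$ so as not to spoil the strict negativity. For $g_T$ this is immediate: the maximum of the concave quadratic equals $\tfrac12 a_J(v_J-\tau_0)^2$ with vertex $v_J\to\tau_0$ by SLLN, hence $\sup_\tau g_T\to0$. For $g_R$ I combine the slope inequality above, which confines the maximizer to a shrinking neighborhood of $\theta_0$, with the supporting-hyperplane bound $g_R(\theta)\le g_R'(\theta_0)(\theta-\theta_0)$, where the normalized score $g_R'(\theta_0)=J^{-1}\partial_\theta l_R^{(J)}(\theta_0)$ is an average of independent mean-zero summands whose variances are bounded under Assumption~\ref{assump:finite} and therefore $\to0$ almost surely; together these give $\sup_\theta g_R\to0$. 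Collecting the four pieces and taking $k(\delta)$ to be one half of the (negative) limiting bound from the worse of the two regions completes the argument, and the almost-sure statements upgrade to the stated convergence in $P_{\xi_0}$-probability.

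The main obstacle is the upgrade from pointwise to uniform control over an effectively \emph{unbounded} domain: naively, the normalized log-likelihood difference need not converge uniformly on all of $\RR$, because the quadratic in $\tau$ and the probit score both grow, so a blunt uniform law of large numbers is unavailable. Concavity is what resolves this, reducing each supremum over a half-line to its value at a single boundary point and each global supremum to a vanishing score through a supporting hyperplane. The remaining care lies entirely in verifying, via Assumption~\ref{assump:finite}, that the probit terms neither blow up nor create spurious near-zero values as $\theta\to\pm\infty$, a point already treated for the response component in \citet{kornely2022Asymptotic}.
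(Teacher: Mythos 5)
Your proof is correct, but it takes a genuinely different route from the paper's. The paper proves the uniform bound by a covering argument in the style of \citet{kornely2022Asymptotic}: it cites their Lemma 1 for the response component, and for the CoT component it controls the local oscillation $\frac{1}{J}|l_T^{(J)}(\tau)-l_T^{(J)}(\tau_i)|$ over small balls via a Gaussian tail bound, extracts a finite subcover of the compact annulus $\Theta\setminus B_\delta(\xi_0)$, applies a union bound, and then handles unboundedness of $\Theta$ by decomposing it into countably many compact shells. You instead exploit the concavity of both $l_R^{(J)}(\theta)$ (from log-concavity of $\Phi$) and $l_T^{(J)}(\tau)$ (a concave quadratic): the slope inequality reduces the supremum over each half-line $\{|\theta-\theta_0|\ge\delta/\sqrt2\}$ to the value at the two boundary points, where the pointwise Lemma~\ref{lem:fix_xi} already gives strict negativity, and the unconstrained suprema of the complementary coordinate are shown to vanish via the vertex of the quadratic and a supporting-hyperplane bound with the normalized score. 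This avoids the covering, the modulus-of-continuity estimate, and the shell decomposition entirely, and handles the unbounded domain for free; the price is that the argument is tied to the concavity of the probit log-likelihood and would not transfer to link functions without log-concave CDFs, whereas the paper's covering argument is link-agnostic. Two small points to fix: the positivity of $\liminf_J \frac{1}{J}\sum_j\varphi_j^2/\lambda_j$ that you need for $\sup_\tau g_T\to 0$ comes from Assumption~\ref{assump:iden_theta} (whose expectation term simplifies to $-\frac{1}{2J}\sum_j\frac{\varphi_j^2}{\lambda_j}(\tau-\tau_0)^2\le c_2(\tau)<0$), not from Assumption~\ref{assump:fisher_info}, which only asserts $\sum_j\varphi_j^2/\lambda_j>0$ without normalization; and you should state explicitly that the confinement radius $r$ in the supporting-hyperplane step can be any fixed constant, since $g_R'(\theta_0)\to 0$ almost surely by Kolmogorov's SLLN under the uniform bounds of Assumption~\ref{assump:finite}.
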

\begin{proof}
    Similarly as in the proof of Lemma \ref{lem:fix_xi}, the log likelihood can be decomposed into $l^{(J)}_R$ and $l_T^{(J)}$. The bound for $l^{(J)}_R$ is shown in Lemma 1 in \citet{kornely2022Asymptotic}. Here, we focus on the proving the following argument,
    \begin{equation}
        \label{eqn:any_xi_tau_target}
        \lim_{J\to \infty} P_{\tau_0}\left(\sup_{\xi \in \Theta\setminus B_\delta(\xi_0)}\frac{1}{J}\left( l_T^{(J)}(\tau\mid \Tb^{(J)}) - l_T^{(J)}(\tau_0\mid \Tb^{(J)}) \right) < k_2(\delta) \right) = 1.
    \end{equation}

    Before digging into the proof, we first show combining Lemma 1 in \citet{kornely2022Asymptotic} and (\ref{eqn:any_xi_tau_target}), we obtain the desired result. For simplicity, denote $A_J$ the series of events in Lemma 1 in \citet{kornely2022Asymptotic}, and $B_J$ the series of events in (\ref{eqn:any_xi_tau_target}). We know $\lim_{J\to \infty} P(A_J)=1 $, $\lim_{J\to\infty} P(B_J)=1$. Then, 
    \begin{equation*}
        \lim_{J\to \infty} P(A_J^{C}\cup B_J^C)\leq \lim_{J\to\infty} P(A_J^C) + \lim_{J\to \infty}P(B_J^C) = 0.
    \end{equation*}
    Therefore, $\lim_{J\to\infty}P(A_J\cap B_J)=1$, and we obtain the desired result.

    To prove (\ref{eqn:any_xi_tau_target}), we have the following decomposition. For any $\tau_i\neq \tau_0$, sufficiently small $\delta_i>0$,
    \begin{align*}
        \frac{1}{J}\left( l_T^{(J)}(\tau\mid \Tb^{(J)}) - l_T^{(J)}(\tau_0\mid \Tb^{(J)}) \right) &= \underbrace{\frac{1}{J}\left( l_T^{(J)}(\tau\mid \Tb^{(J)}) - l_T^{(J)}(\tau_i\mid \Tb^{(J)}) \right)}_{\alpha_1}\\
        &+ \underbrace{\frac{1}{J} \left( l_T^{(J)}(\tau_i\mid \Tb^{(J)}) - l_T^{(J)}(\tau_0\mid \Tb^{(J)}) \right)}_{\alpha_2}.
    \end{align*}

    For $\alpha_2$, from Lemma \ref{lem:fix_xi}, we have
    \begin{equation*}
        \limsup_{J\to\infty} \frac{1}{J} \left( l_T^{(J)}(\tau_i\mid \Tb^{(J)}) - l_T^{(J)}(\tau_0\mid \Tb^{(J)}) \right) \leq c_2(\tau_0) < 0,\quad P_{\xi_0}-a.s.
    \end{equation*}

    In the sequel, we will bound $\alpha_1$. Consider $\tau\in \bar{B}_{\delta_i}(\tau_i) $, we first bound $\sup_{\tau \in B_{\delta_i}(\tau_i)}(l_T^{(J)}(\tau\mid\Tb^{(J)})-l_T^{(J)}(\tau_i\mid\Tb^{(J)}))/J$. Define $Z_j=[2\varphi_j(\tau-\tau_i)\log T_{j}+\varphi_j^2(\tau^2-\tau_i^2)-2\varphi_j\omega_j(\tau-\tau_i)]/(2\lambda_j) $, and $l_T^{(J)}(\tau\mid\Tb^{(J)})-l_T^{(J)}(\tau_i\mid\Tb^{(J)}) = \sum_{j=1}^JZ_j$. Then, we bound $|\sum_{j=1}^JZ_j|/J$.
    \begin{align*}
        \frac{1}{J}\left|\sum_{j=1}^JZ_j\right|\leq \frac{1}{J}|\tau-\tau_i|\Biggl[ \underbrace{\left|\sum_{j=1}^J\frac{\varphi_j}{\lambda_j}(\log T_{j}-\omega_j+\varphi_j\tau) \right|}_{\beta_1} + \underbrace{|\tau-\tau_i|\left|\sum_{j=1}^J\frac{\varphi_j^2}{2\lambda_j} \right|}_{\beta_2} \Biggr].
    \end{align*}

    For $\beta_1$, note that $\varphi_j(\log T_j-\omega_j+\varphi_j\tau)/\lambda_j\sim N(0,\varphi_j^2/\lambda_j)$, and $\log T_j$s' are independent. Hence,
    \begin{equation*}
        \sum_{j=1}^J\frac{\varphi_j}{\lambda_j}(\log T_j-\omega_j+\varphi_j\tau) \sim N\left(0,\sum_{j=1}^J \frac{\varphi_j^2}{\lambda_j}\right).
    \end{equation*}
    By Assumption \ref{assump:finite}, $\sum_{j=1}^J\varphi_j^2/\lambda_j\leq CJ$.
    Then, by standard Gaussian tail bound, with probability $1-O(J^{-8})$,
    \begin{equation*}
        \beta_1 \leq 4\sqrt{CJ\log J}.
    \end{equation*}

    For $\beta_2$, by Assumption \ref{assump:finite},
    \begin{equation*}
        \beta_2\leq \frac{CJ}{2}|\tau-\tau_i|.
    \end{equation*}
    
    Moreover, $|\tau-\tau_i|\leq \delta_i$. Thus, with probability at least $1-O(J^{-8})$,
    \begin{equation*}
        \frac{1}{J}\left|\sum_{j=1}^JZ_j\right| \leq \delta_i\left[4\sqrt{\frac{C\log J}{J}} + \frac{C}{2}\delta_i \right].
    \end{equation*}

    Therefore, 
    \begin{equation*}
        \lim_{\delta\to 0}\sup_{\xi\in \bar{B}_{\delta}(\xi_i)}\frac{1}{J}\left| l_T^{(J)}(\tau\mid\Tb^{(J)})-l_T^{(J)}(\tau_i\mid\Tb^{(J)}) \right| = 0.
    \end{equation*}
    Let $\varepsilon=-c_2(\tau_0)/2$, $\exists \delta_i>0$ and $c_i=c_2(\tau_0)/2$,
    \begin{equation*}
        \lim_{J\to\infty} P_{\bxi_0}\left( \sup_{\xi\in \bar{B}_{\delta_i}(\bxi_i)} \frac{1}{J}\left(l_T^{(J)}(\tau\mid\Tb^{(J)})-l_T^{(J)}(\tau_i\mid\Tb^{(J)})\right)<c_i<0\right) = 1.
    \end{equation*}

    Next, we first show the result assuming $\Theta$ is compact. Then, we extend the result to unbounded $\Theta$. For all $\delta>0$, $\Theta\setminus B_{\delta}(\bxi_0) $ is still compact. For each $\delta^{\prime}<\delta$, $\cup_{\bxi\in\Theta\setminus B_{\delta}(\bxi_0) } B_{\delta^{\prime}}(\bxi)$ is a cover for $\Theta\setminus B_{\delta}(\bxi_0)$. Hence, there exists a finite cover $B_{\delta}(\bxi_1), \ldots , B_{\delta}(\bxi_n) $ that form a cover of $\Theta\setminus B_{\delta}(\bxi_0) $.

    For each $B_{\delta}(\bxi_{k})$, there exists $c_k<0$, such that
    \begin{equation*}
        \lim_{J\to \infty} P_{\bxi_0}\left( \sup_{\xi\in \bar{B}_{\delta_i}(\bxi_i)} \frac{1}{J}\left(l_T^{(J)}(\tau\mid\Tb^{(J)})-l_T^{(J)}(\tau_i\mid\Tb^{(J)})\right)<c_k<0\right) = 1.
    \end{equation*}

    let $k = \max_{m\in [n]}c_m$, $\forall \bxi \in \Theta\setminus B_{\delta}(\bxi_0) $, by union bound
    \begin{align*}
        &\lim_{J\to \infty} P\left(\sup_{\bxi\in \Theta\setminus B_{\delta}(\bxi_0)} \frac{1}{J}\left(l_T^{(J)}(\tau\mid\Tb^{(J)})-l_T^{(J)}(\tau_i\mid\Tb^{(J)})\right)\geq k\right)\\&\leq \lim_{J\to \infty}\sum_{m=1}^n P\left(\sup_{\bxi\in B_{\delta^{\prime}}(\bxi_m)} \frac{1}{J}\left(l_T^{(J)}(\tau\mid\Tb^{(J)})-l_T^{(J)}(\tau_i\mid\Tb^{(J)})\right)\geq k\right)\\
        &= 0.
    \end{align*}

    Hence, for every compact $\Theta$, we have the desired result. Then, we extend the result to unbounded $\Theta$. Define $\Theta^{(j)}=\{ (\theta,\tau)\in\Theta: \delta+j\leq |\theta-\theta_0|\leq \delta+j+1, \quad \delta+j\leq|\tau-\tau_0|\leq \delta+j+1 \}$, $j\in \mathbb{N}$. Each $\Theta^{(j)}$ is compact and enjoys the above property. Recall the definition of $k\leq \sup_{\bxi\in\Theta\setminus B_{\delta}(\bxi_0)}c_2(\tau)/2$. Hence, let $k_j=\sup_{\bxi\in \Theta^{(j)}}c_2(\tau)/2 $, we have
    \begin{align*}
        \sup_{\bxi\in \Theta\setminus B_{\delta}(\bxi_0)} \frac{1}{J}\left(l_T^{(J)}(\tau\mid\Tb^{(J)})-l_T^{(J)}(\tau_i\mid\Tb^{(J)})\right) &= \sup_{j\in \mathbb{N}}\left( \sup_{\bxi\in \Theta^{(j)}} \frac{1}{J}\left(l_T^{(J)}(\tau\mid\Tb^{(J)})-l_T^{(J)}(\tau_i\mid\Tb^{(J)})\right)\right)\\
        \sup_{j\in \mathbb{N}} k_j &\leq \sup_{\bxi\in\Theta\setminus B_{\delta}(\bxi_0)}c_2(\tau)/2:=k_2(\delta).
    \end{align*}

    Therefore,
    \begin{align*}
        &\lim_{J\to \infty} P_{\bxi_0}\left(\sup_{\bxi \in \Theta\setminus B_\delta(\bxi_0)}\frac{1}{J}\left( l_T^{(J)}(\tau\mid \Tb^{(J)}) - l_T^{(J)}(\tau_0\mid \Tb^{(J)}) \right) < k_2(\delta) \right)\\
        &\geq \lim_{J\to\infty} P_{\bxi_0}\left(\sup_{j\in \mathbb{N}}\left( \sup_{\bxi\in \Theta^{(j)}} \frac{1}{J}\left(l_T^{(J)}(\tau\mid\Tb^{(J)})-l_T^{(J)}(\tau_i\mid\Tb^{(J)})\right)\right)\leq \sup_{j\in\NN}k_j\right)=1.
    \end{align*}
    \end{proof}

    \begin{lemma}
    \label{lem:consistency}
        \begin{enumerate}
            \item[(1)] There exists $\hat{\bxi}=(\hat{\theta},\hat{\tau})$ such that 
            \begin{align}
                \label{eqn:MLE_cond}
                \lim_{J\to \infty} P_{\bxi_0}\left( \nabla l^{(J)}(\hat{\theta}, \hat{\tau}\mid \Rb^{(J)}, \Tb^{(J)})=0 \right) &= 1,\\
                \label{eqn:MLE_reach}
                \lim_{J\to\infty}P_{\bxi_0}\left( l^{(J)}(\hat{\theta},\hat{\tau}\mid \Rb^{(J)},\Tb^{(J)}) = \max_{\bxi\in\Theta}l^{(J)}(\bxi\mid\Rb^{(J)},\Tb^{(J)}) \right)&=1,\\
                \label{eqn:consistent}
                (\hat{\theta},\hat{\tau}) &\stackrel{p}{\to} (\theta,\tau), \quad J\to \infty.
            \end{align}
            \item[(2)] There exists $\tilde{\bxi}=(\tilde{\theta},\tilde{\tau})$ such that when the log likelihood $l$ is replaced by log posterior $\tilde{l}$, the above result still holds.
        \end{enumerate}
    \end{lemma}
    \begin{proof}
        \textbf{(1)} First, we show the existence of such solution. 
        % From Theorem 5(i) in \citet{kornely2022Asymptotic}, we have
        % \begin{equation*}
        %     \lim_{J\to\infty} P_{\bxi_0}\left( \frac{1}{J}l_R^{(J)}(\theta\mid \Rb^{(J)}) < \frac{1}{J}l_R^{(J)}(\theta_0\mid \Rb^{(J)}) \right) = 1,\quad \forall \bxi \in \Theta\setminus B_{\delta}(\bxi_0).
        % \end{equation*}
        % Even though here we expand $\theta$ to $(\theta,\tau)$, in the likelihood, $\theta$ and $\tau$ can be fully separated. Hence, the result still holds.

        % Then, we want to show
        % \begin{equation*}
        %     \lim_{J\to \infty} P_{\bxi_0}\left( \frac{1}{J}l_T^{(J)}(\tau\mid \Tb^{(J)}) < \frac{1}{J} l_T^{(J)}(\tau\mid \Tb^{(J)}) \right) = 1,\quad \forall \bxi \in \Theta\setminus B_{\delta}(\bxi_0).
        % \end{equation*}
        Define
        \begin{equation*}
            A_{\delta,\varepsilon,J} = \left\{ \sup_{\bxi:\|\bxi-\bxi_0\|\geq \delta} \frac{1}{J}\left( l^{(J)}(\bxi\mid \Rb^{(J)}, \Tb^{(J)})- l^{(J)}(\bxi_0\mid \Rb^{(J)}, \Tb^{(J)}) \right) < \varepsilon \right\}.
        \end{equation*}
        By Lemma \ref{lem:any_xi}, $\lim_{J\to\infty}P_{\bxi_0}(A_{\delta,\varepsilon,J})=1$, for all $\delta > 0$ and $\varepsilon>0$. Given $A_{\delta,\varepsilon,J} $, the global minimum of the log likelihood must lie in $B_{\delta}(\bxi_0)$. Next, we construct a measurable mapping from $\left(\{0,1\}^J\times \RR^{J},\text{Pow}(\{0,1\}^J)\otimes \cB(\RR^J)\right) \to (\Theta, \cB(\Theta)) $. $\text{Pow}(\{0,1\}^J)$ denotes the power set of $\{0,1\}^J$. 

        Note that $l^{(J)}(\cdot \mid \Rb^{(J)}, \Tb^{(J)}) $ is continuous for every fixed $\Rb^{(J)} $, $\Tb^{(J)}$, and $l^{(J)}(\bxi\mid \cdot) $ is continuous for every fixed $\bxi$. Let $\Theta_{\delta}=\bar{B}_{\delta}(\bxi_0)\cap \Theta$. For simplicity, we assume $\Theta_{\delta}$ is compact. If $\Theta$ is unbounded, similar techniques as in the proof of Lemma \ref{lem:any_xi} can be applied similarly, and we omit it here.

        By continuity, there exists $\bxi^*$ such that,
        \begin{equation*}
            l^{(J)}(\bxi^*\mid \Rb^{(J)},\Tb^{(J)})=\sup_{\bxi \in \Theta_{\delta}} l^{(J)}(\bxi \mid \Rb^{(J)},\Tb^{(J)}).
        \end{equation*}
        Then by Lemma W.3 in \citet{kornely2022Asymptotic}, there exists a measurable mapping $\check{\boldsymbol{\xi}}_J$, such that $\bxi^*=\check{\boldsymbol{\xi}}_J(\Rb^{(J)},\Tb^{(J)})$. By Lemma \ref{lem:any_xi}, let $\hat{\bxi}_J = \check{\bxi}_{J}(\Rb^{(J)},\Tb^{(J)})$, we have a sequence $\hat{\bxi}_J$ that satisfies (\ref{eqn:MLE_cond}) and (\ref{eqn:MLE_reach}).

        For (\ref{eqn:consistent}), we prove by contradiction. Suppose $\hat{\bxi}_J$ is not consistent. There exists $\varepsilon_0>0$, for all $\delta_0>0$, $\forall J\in \NN$, $P(\|\hat{\bxi}_J-\bxi_0\|>\delta_0) \geq \varepsilon_0$. Let $\tilde{\delta} = \delta_0/2$, from Lemma \ref{lem:any_xi}, we have
        \begin{equation*}
            \lim_{J\to \infty} P\left( \sup_{\bxi\in\Theta \setminus B_{\tilde{\delta}}(\bxi_0)} \frac{1}{J}\left( l^{(J)}(\bxi\mid \Rb^{(J)},\Tb^{(J)}) - l^{(J)}(\bxi_0\mid \Rb^{(J)}, \Tb^{(J)})\right) < c(\tilde{\delta})<0 \right) = 1.
        \end{equation*}

        Let
        \begin{equation*}
            A_{J,\tilde{\delta}}=\left\{ \frac{1}{J}\left( l^{(J)}(\bxi\mid \Rb^{(J)},\Tb^{(J)}) - l^{(J)}(\bxi_0\mid \Rb^{(J)}, \Tb^{(J)}) \right)< c(\tilde{\delta})<0 \right\}.
        \end{equation*}
        For all $\tilde{\varepsilon}$, there exists $J_0>0$, $\forall J>J_0$, $P(A_{J,\tilde{\delta}})>1-\tilde{\varepsilon}$. Let $\tilde{\varepsilon} = \varepsilon_0/2$. Since $A_{J,\tilde{\delta}}\cap \{\|\bxi^*_J-\bxi_0\| > \delta_0\}=\emptyset$, $\{\|\bxi^*_J-\bxi_0\| > \delta_0\}\subseteq A_{J,\delta_0}^c$. Thus,
        \begin{equation*}
            P\left( \|\hat{\bxi}_J-\bxi_0\| > \delta_0 \right) \leq P(A_{J,\tilde{\delta}}^c) < \varepsilon_0,
        \end{equation*}
        for all $J>J_0$. There is a contradiction and $\hat{\bxi}_J$ is consistent.

        \textbf{(2)} The key difference for $\tilde{\bxi}_J$ to satisfy (\ref{eqn:MLE_cond})-(\ref{eqn:consistent}) is to show an equivalent version of Lemma \ref{lem:any_xi} for $\tilde{l}^{(J)}$. For $\tilde{l}^{(J)} $, there is the following decomposition,
        \begin{align*}
            &\frac{1}{J}\left[ \tilde{l}^{(J)}(\bxi\mid \Rb^{(J)}, \Tb^{(J)})-\tilde{l}^{(J)}(\bxi_0\mid \Rb^{(J)}, \Tb^{(J)}) \right]\\ 
            &= \underbrace{\frac{1}{J}\left[ {l}^{(J)}(\bxi\mid \Rb^{(J)}, \Tb^{(J)})-{l}^{(J)}(\bxi_0\mid \Rb^{(J)}, \Tb^{(J)}) \right]}_{\alpha_1} + \underbrace{\frac{1}{J}\left[ -\frac{1}{2}\bxi^\top\bSigma^{-1}\bxi + \frac{1}{2}\bxi_0^\top \bSigma^{-1}\bxi_0 \right]}_{\alpha_2} .
        \end{align*}

        We have shown $\alpha_1$ in Lemma \ref{lem:any_xi}. For $\alpha_2$, since $\bSigma$ is positive definite, $-\bxi^\top\bSigma^{-1}\bxi/2\leq 0 $. Because $\bxi_0$ and $\bSigma$ are constants, $\forall \tilde{\varepsilon} > 0$, $\exists J_0>0$, for $\forall J > J_0$,
        \begin{equation*}
            \sup_{\bxi \in \Theta\setminus B_{\delta}(\bxi_0)}\frac{1}{J}\left[ -\frac{1}{2}\bxi^\top\bSigma^{-1}\bxi + \frac{1}{2}\bxi_0^\top \bSigma^{-1}\bxi_0 \right] \leq \tilde{\varepsilon}.
        \end{equation*}

        Let $\tilde{\varepsilon}=k(\delta)/2$, $\tilde{k}(\delta)=k(\delta)/2$, then
        \begin{equation*}
            \lim_{J\to \infty}P_{\bxi_0}\left( \sup_{\xi\in \Theta\setminus B_{\delta}(\xi_0)} \frac{1}{J} \left( \tilde{l}^{(J)}(\bxi\mid \Rb^{(J)},\Tb^{(J)}) - \tilde{l}^{(J)}(\bxi_0\mid \Rb^{(J)},\Tb^{(J)}) \right) < \tilde{k}(\delta) \right) = 1.
        \end{equation*}

        The following proof is the same as the proof in (1).
    \end{proof}

    \begin{lemma}
        \label{lem:taylor_exp}
        \begin{enumerate}
            \item[1.] For any $\bxi\in \Theta$, there exists $\{a_J\}_{J\in \NN}$, $a_J\in [0,1] $, such that
            \begin{align*}
                &\tilde{l}^{(J)}(\bxi\mid \Rb^{(J)},\Tb^{(J)}) - \tilde{l}^{(J)}(\tilde{\bxi}_J\mid \Rb^{(J)}, \Tb^{(J)})\\ &= \frac{1}{2}\left( \bxi - \tilde{\bxi}_J \right)^{\top} \tilde{\Hb}_J(\bxi_J^*)\left( \bxi - \tilde{\bxi}_J \right)\\
                &= -\frac{1}{2}\left( \bxi - \tilde{\bxi}_J \right)^{\top} \left[ \cI_{J}(\tilde{\bxi}_J)(\Ib_2-E_J(\bxi))+\bSigma^{-1} \right]\left( \bxi - \tilde{\bxi}_J \right),
            \end{align*}
            where $\bxi_J^*=a_J\tilde{\bxi}_J+(1-a_J)\bxi$, $\tilde{\Hb}_J$ is the Hessian of the log posterior, $E_J = \Ib_K + \cI_{J}(\tilde{\bxi}_J)^{-1}\Hb_{J}(\bxi_J^*)$, and $\Hb_J$ is the Hessian of the log likelihood.
            \item[2.] For any $\varepsilon>0$, there is $\delta>0$, such that
            \begin{equation*}
                \lim_{J\to \infty} P_{\bxi_0}\left( \sup_{\bxi\in B_{\delta}(\bxi_0)}\|E_J(\bxi)\|<\varepsilon \right) = 1.
            \end{equation*}
            \item[3.] $\forall \varepsilon>0$, $\exists \delta>0$, for all $\bxi \in B_{\delta}(\bxi_0) $,
            \begin{align*}
                \lim_{J\to \infty} P_{\bxi_0}\biggl( (1+\varepsilon) \tilde{V}_J(\bxi)&\leq -\frac{1}{2}\left( \bxi - \tilde{\bxi}_J \right)^{\top} \left[ \cI_{J}(\tilde{\bxi}_J)(\Ib_2-E_J(\bxi))+\bSigma^{-1} \right]\left( \bxi - \tilde{\bxi}_J \right)\\
                &\leq (1-\varepsilon)\tilde{V}_J(\bxi) \biggr) = 1,
            \end{align*}
            where $\tilde{V}_J(\bxi) = -\frac{1}{2}\left( \bxi - \tilde{\bxi}_J \right)^{\top} \tilde{\cI}_{J}(\tilde{\bxi}_J)\left( \bxi - \tilde{\bxi}_J \right)$.
        \end{enumerate}
    \end{lemma}
    \begin{proof}
        \textbf{(1)} The inequality directly comes from Taylor Expansion with Cauchy form of the remainder. We omit the detailed algebraic computation here.

        \textbf{(2)} First, since 
        \begin{equation*}
            \frac{\partial^2 l^{(J)}}{\partial \theta \partial \tau} = 0,
        \end{equation*}
        the Hessian of the log likelihood is a diagonal matrix. Hence, the Fisher information of the log likelihood is also a diagonal matrix. By Assumption \ref{assump:fisher_info}, because the diagonal entries are both greater than 0, the Fisher information of the log likelihood is full rank of 2. Additionally, since $\|\cI_J(\tilde{\bxi}_J)/J\|^{-1} = 1/\sigma_{\min}(\cI_J(\tilde{\bxi}_J)/J)$, there exists constant $C_0>0$, such that $\|\cI_J(\tilde{\bxi}_J)/J\|^{-1}\leq 1/C_0$, when $J$ is sufficiently large due to consistency of $\tilde{\bxi}_J$ shown in Lemma \ref{lem:consistency}.

        Hence,
        \begin{align*}
            \|E_J(\bxi)\| &= \left\| \left(\frac{1}{J}\cI_J(\tilde{\bxi}_J)\right)^{-1} \frac{1}{J}\left(\cI_J(\tilde{\bxi}_J)+\nabla^2l^{(J)}(\bxi\mid \Rb^{(J)}, \Tb^{(J)}) \right) \right\| \\
            &\leq \left\|\frac{1}{J}\cI_J(\tilde{\bxi}_J)\right\|^{-1} \left\| \frac{1}{J}\left(\cI_J(\tilde{\bxi}_J)+\nabla^2l^{(J)}(\bxi\mid \Rb^{(J)}, \Tb^{(J)}) \right) \right\|\\
            &\leq \frac{1}{C_0}\left\| \frac{1}{J}\left(\cI_J(\tilde{\bxi}_J)+\nabla^2l^{(J)}(\bxi\mid \Rb^{(J)}, \Tb^{(J)}) \right) \right\|\\
            &\leq \frac{1}{C_0}\max\left\{\alpha_1,\alpha_2\right\},
        \end{align*}
        where
        \begin{align*}
            \alpha_1 &= \frac{1}{J}\left[\sum_{j=1}^J \frac{a_j^2\phi(a_j\tilde{\theta}+b_j)}{\Phi(a_j\tilde{\theta}+b_j)[1-\Phi(a_j\tilde{\theta}+b_j)]} - \sum_{j=1}^J \frac{a_j^2\phi(a_j{\theta}+b_j)}{\Phi(a_j{\theta}+b_j)[1-\Phi(a_j{\theta}+b_j)]}\right]\\
            \alpha_2 &= \frac{1}{J}\left[ \sum_{j=1}^J\frac{\varphi_j^2}{\lambda_j} - \sum_{j=1}^J\frac{\varphi_j^2}{\lambda_j} \right] = 0.
        \end{align*}
        The last inequality utilizes that both the Fisher information $\cI_J(\tilde{\bxi}_J)$ and $\nabla^2l^{(J)}(\bxi\mid \Rb^{(J)},\Tb^{(J)})$ are diagonal matrices.

        Since $\alpha_2=0$, we only need to bound $\alpha_1$. Following the same proof of Lemma 2 in \citet{kornely2022Asymptotic}, for any $\varepsilon>0$, when $J$ is sufficiently large, there exists $\delta>0$, such that
        \begin{equation*}
            \lim_{J\to\infty} P_{\bxi_0}\left( \sup_{\bxi \in B_{\delta}(\bxi_0)} \frac{1}{C_0}\alpha_1 < \varepsilon \right) = 1.
        \end{equation*}

        Therefore, the second part of Lemma \ref{lem:taylor_exp} is proven.

        \textbf{(3)} For the result in the third part, we first show $\forall \varepsilon >0$, $\exists \delta > 0$, such that,
        \begin{equation*}
            \lim_{J\to \infty} P_{\bxi_0}\left( \left| (\bxi-\tilde{\bxi}_J)^\top\cI_J(\tilde{\bxi}_J)E_J(\bxi)(\bxi-\tilde{\bxi}_J) \right|\leq -2\varepsilon V_J(\bxi) \right) = 1,
        \end{equation*}
        where ${V}_J(\bxi) = -\frac{1}{2}( \bxi - \tilde{\bxi}_J )^{\top} {\cI}_{J}(\tilde{\bxi}_J)( \bxi - \tilde{\bxi}_J )$ 
        % \textcolor{red}{(should be some kind of MLE variant)}.

        By Lemma W.5 in \citet{kornely2022Asymptotic},
        \begin{equation*}
            \left|  \frac{1}{2}(\bxi-\tilde{\bxi}_J)^\top\cI_J(\tilde{\bxi}_J)E_J(\bxi)(\bxi-\tilde{\bxi}_J) \right| \leq -\kappa\bigl(\cI_J(\tilde{\bxi}_J)\bigr) \|E_J(\bxi)\|V_J(\bxi).
        \end{equation*}

        By Assumption \ref{assump:fisher_info} and Lemma \ref{lem:consistency} and continuous mapping theorem, there exists $C_1^{\prime}$ such that
        \begin{equation*}
            P_{\bxi_0}\left( \limsup_{J\to \infty} \kappa\bigl(\cI_J(\tilde{\bxi}_J)\bigr) \leq C_1^{\prime} \right) = 1.
        \end{equation*}

        Additionally, from (2) of this Lemma, $\|E_J(\bxi)\|$ converges to 0 in probability. Hence,
        \begin{equation*}
            % \label{eqn:bound_1}
            \lim_{J\to \infty} P_{\bxi_0}\left( \left| (\bxi-\tilde{\bxi}_J)^\top\cI_J(\tilde{\bxi}_J)E_J(\bxi)(\bxi-\tilde{\bxi}_J) \right|\leq -2\varepsilon V_J(\bxi) \right) = 1.
        \end{equation*}

        Therefore, under $\{ | (\bxi-\tilde{\bxi}_J)^\top\cI_J(\tilde{\bxi}_J)E_J(\bxi)(\bxi-\tilde{\bxi}_J) |\leq -2\varepsilon V_J(\bxi)\}$,
        \begin{align*}
            &-\frac{1}{2}\left( \bxi - \tilde{\bxi}_J \right)^{\top} \left[ \cI_{J}(\tilde{\bxi}_J)(\Ib_2-E_J(\bxi))+\bSigma^{-1} \right]\left( \bxi - \tilde{\bxi}_J \right)\\
            &\leq V_J(\bxi)+ \frac{1}{2}\left(\bxi-\tilde{\bxi}_J\right)^\top \cI_{J}(\tilde{\bxi}_J)E_J(\bxi)\left(\bxi-\tilde{\bxi}_J\right) - \frac{1}{2}\left( \bxi-\tilde{\bxi}_J \right)^{\top} \bSigma^{-1} \left( \bxi - \tilde{\bxi}_J \right)\\
            &\leq (1-\varepsilon)V_J(\bxi) - \frac{1}{2}\left( \bxi-\tilde{\bxi}_J \right)^{\top} \bSigma^{-1} \left( \bxi - \tilde{\bxi}_J \right).
        \end{align*}
        
        Since $\bSigma$ is positive definite,
        \begin{equation*}
            (1-\varepsilon)\left( \bxi-\tilde{\bxi}_J \right)^{\top} \bSigma^{-1} \left( \bxi - \tilde{\bxi}_J \right) \leq \left( \bxi-\tilde{\bxi}_J \right)^{\top} \bSigma^{-1} \left( \bxi - \tilde{\bxi}_J \right) \leq (1+\varepsilon)\left( \bxi-\tilde{\bxi}_J \right)^{\top} \bSigma^{-1} \left( \bxi - \tilde{\bxi}_J \right).
        \end{equation*}
        Also, note that
        \begin{equation*}
            \tilde{V}_J(\bxi) = V_J(\bxi) - \frac{1}{2}\left( \bxi-\tilde{\bxi}_J \right)^{\top} \bSigma^{-1} \left( \bxi - \tilde{\bxi}_J \right).
        \end{equation*}

        Therefore,
        \begin{align*}
            (1-\varepsilon)\tilde{V}_J(\bxi) &\geq (1-\varepsilon)V_J(\bxi) - \frac{1}{2}\left( \bxi-\tilde{\bxi}_J \right)^{\top} \bSigma^{-1} \left( \bxi - \tilde{\bxi}_J \right),\\
            (1+\varepsilon)\tilde{V}_J(\bxi) &\leq (1+\varepsilon)V_J(\bxi) - \frac{1}{2}\left( \bxi-\tilde{\bxi}_J \right)^{\top} \bSigma^{-1} \left( \bxi - \tilde{\bxi}_J \right).
        \end{align*}
        
        Hence,
        \begin{align*}
            -\frac{1}{2}\left( \bxi - \tilde{\bxi}_J \right)^{\top} \left[ \cI_{J}(\tilde{\bxi}_J)(\cI_2-E_J(\bxi))+\bSigma^{-1} \right]\left( \bxi - \tilde{\bxi}_J \right) \leq (1-\varepsilon)\tilde{V}_J(\bxi).
        \end{align*}
        The other side of the inequality holds similarly.
    \end{proof}

    \begin{lemma}
        \label{lem:converge}
        Let ${\Phi}(B) = P(Z\in B)$, where $Z\sim N(0,\cI)$. Under Assumption \ref{assump:finite}, \ref{assump:iden_theta}, \ref{assump:fisher_info}
        \begin{enumerate}
            \item[1.] For every function $f$ that the integral $\int_{\Theta}f(\bxi)\pi(\bxi)d\bxi$ exists, for every $\delta > 0$, we have,
            \begin{equation*}
                \frac{\int_{\Theta\setminus B_{\delta}(\bxi_0)}f(\bxi)P^{(J)}(\Rb^{J}, \Tb^{(J)}\mid \bxi)\pi(\bxi)d\bxi }{P^{(J)}(\Rb^{(J)},\Tb^{(J)}\mid \tilde{\bxi}_J)}\det(\tilde{\cI}_J(\tilde{\bxi}_J))^{1/2} \stackrel{P_{\bxi_0}}{\to} 0, \quad J\to \infty.
            \end{equation*}
            \item[2.] Consider a sequence of mappings $\{G_J\}_{J\in \NN} $, $G_J: \bigl(\Theta,\cB(\Theta)\bigr)\to \bigl(\Theta, \cB(\Theta)\bigr) $ satisfying either of the following condition
            \begin{align}
                \label{eqn:cond_1}
                \lim_{J\to \infty}P_{\bxi_0}\left( G_J(B) \subseteq B_{\delta}(\bxi_0) \right) &= 1, \quad \forall \delta > 0,\\
                \label{eqn:cond_2}
                \lim_{J\to \infty}P_{\bxi_0}\left( G_J(B) \supseteq B_{\delta}(\bxi_0) \right) &= 1, \quad \forall \delta>0,
            \end{align}
            for all bounded $B\in \cB(\Theta)$. Then,
            \begin{equation*}
                \frac{\int_{G_J(B)} P^{(J)}( \Rb^{(J)}, \Tb^{(J)}\mid \bxi ) \pi(\bxi) d\bxi }{P^{(J)}(\Rb^{(J)},\Tb^{(J)}\mid \tilde{\bxi}_J)} \det(\tilde{\cI}_J(\tilde{\bxi}_J))^{1/2} - {\Phi}\left( \tilde{\cI}(\tilde{\bxi}_J)^{1/2}(G_J(B)-\tilde{\bxi}_J) \right)\pi(\bxi_0)(2\pi) = o_{P_{\bxi_0}}(1).
            \end{equation*}
        \end{enumerate}
    \end{lemma}
    \begin{proof}
        \textbf{(1)} First, note that
        \begin{align*}
            &\frac{\int_{\Theta\setminus B_{\delta}(\bxi_0)} f(\bxi)P^{(J)}(\Rb^{(J)}, \Tb^{(J)} \mid \bxi) \pi(\bxi)d\bxi }{\pi(\tilde{\bxi}_J)P^{(J)}(\Rb^{(J)}, \Tb^{(J)}\mid \tilde{\bxi}_J)}\det(\tilde{\cI}_J(\tilde{\bxi}_J))^{1/2}\\
            &= \exp \left( \tilde{l}^{(J)}(\bxi_0\mid \Rb^{(J)}, \Tb^{(J)}) - \tilde{l}^{(J)}(\tilde{\bxi}_J\mid \Rb^{(J)},\Tb^{(J)}) \right) \tilde{L}_J \det(\tilde{\cI}_J(\tilde{\bxi}_J))^{1/2},
        \end{align*}
        where
        \begin{equation*}
            \tilde{L}_J = \int_{\Theta\setminus B_{\delta}(\bxi_0)}\exp \left( \tilde{l}^{(J)}(\bxi_0\mid \Rb^{(J)}, \Tb^{(J)}) - \tilde{l}^{(J)}(\tilde{\bxi}_J\mid \Rb^{(J)},\Tb^{(J)}) \right) f(\bxi) d\bxi.
        \end{equation*}

        Since $\tilde{\bxi}_J$ is a maximum of $\tilde{l}^{(J)} $, one has
        \begin{equation*}
            \exp \left( \tilde{l}^{(J)}(\bxi_0\mid \Rb^{(J)}, \Tb^{(J)}) - \tilde{l}^{(J)}(\tilde{\bxi}_J\mid \Rb^{(J)},\Tb^{(J)}) \right) \leq 1.
        \end{equation*}

        Hence,
        \begin{equation*}
            \left|\frac{\int_{\Theta\setminus B_{\delta}(\bxi_0)} f(\bxi)P^{(J)}(\Rb^{(J)}, \Tb^{(J)} \mid \bxi) \pi(\bxi)d\bxi }{\pi(\tilde{\bxi}_J)P^{(J)}(\Rb^{(J)}, \Tb^{(J)}\mid \tilde{\bxi}_J)}\det(\tilde{\cI}_J(\tilde{\bxi}_J))^{1/2}\right| \leq \left|\tilde{L}_J\det(\tilde{\cI}_J(\tilde{\bxi}_J))^{1/2}\right|.
        \end{equation*}

        For the determinant,
        \begin{align*}
            \det(\tilde{\cI}_J(\tilde{\bxi}_J))^{1/2} &= \sqrt{\det\left(\cI_J(\tilde{\bxi}_J) + \bSigma^{-1}\right)}\\
            &= J\sqrt{\det\left(\frac{1}{J}\cI_J(\tilde{\bxi}_J) + \frac{1}{J}\bSigma^{-1}\right) }\\
            &\leq J \sigma_1\left(\frac{1}{J}\cI_J(\tilde{\bxi}_J) + \frac{1}{J}\bSigma^{-1}\right)\\
            &\leq J\left[\sigma_1\left(\frac{1}{J}\cI_J(\tilde{\bxi}_J)\right)+\sigma_1\left( \frac{1}{J}\bSigma^{-1} \right)\right],
        \end{align*}
        where the last inequality comes from Weyl's inequality. By Assumption \ref{assump:finite}, $\sigma_1(\cI_J(\tilde{\bxi}_J)/J)$ is bounded by some constant. Since $\bSigma$ is a constant, $\sigma_1(\bSigma^{-1}/J)=O(1/J)$. Hence,
        \begin{equation*}
            \det(\tilde{\cI}_J(\tilde{\bxi}_J))^{1/2} = O_{P_{\bxi_0}}(J).
        \end{equation*}

        Since $\pi(\bxi)$ is proper and has support over $\Theta$,
        \begin{equation*}
            \frac{1}{P^{(J)}\left(\Rb^{(J)}, \Tb^{(J)}\mid \bxi_0\right)} \int_{\Theta\setminus B_{\delta}(\bxi_0)} P^{(J)}(\Rb^{(J)},\Tb^{(J)}\mid \bxi)\pi(\bxi)d\bxi = o_{P_{\bxi_0}}(J^{-1}),\quad \forall \delta > 0.
        \end{equation*}
        For detailed discussion, one can refer to Equation (28) in \citet{kornely2022Asymptotic}. Additionally, suppose there exists constant $C_f>0$, $|f(\bxi)|<C_f$ for $\bxi \in \Theta$ almost everywhere. Then,
        \begin{equation*}
            \tilde{L}_J \leq C_f\left| \frac{\int_{\Theta\setminus B_{\delta}(\bxi_0)}P^{(J)}\left(\Rb^{(J)}, \Tb^{(J)}\mid \bxi \right)\pi(\bxi)d\bxi}{\pi(\bxi_0)P^{(J)}(\Rb^{(J)},\Tb^{(J)}\mid\bxi_0)} \right| = o_{P_{\bxi_0}}(J^{-1}).
        \end{equation*}

        Therefore,
        \begin{equation*}
            \left|\tilde{L}_J\det(\tilde{\cI}_J(\tilde{\bxi}_J))^{1/2} \right| = o_{P_{\bxi_0}}(1).
        \end{equation*}

        \textbf{(2)}
        Let $M_{\delta,J}=B_{\delta}(\bxi_0)$, $U_J=\int_{M_{\delta,J}}P^{(J)}\left(\Rb^{(J)},\Tb^{(J)}\mid \bxi\right)\pi(\bxi)d\bxi $.
        \begin{align*}
            &\frac{U_J\det(\tilde{\cI}_J(\tilde{\bxi}_J))^{1/2}}{\pi(\tilde{\bxi}_J)P^{(J)}(\Rb^{(J)},\Tb^{(J)}\mid \tilde{\bxi}_J)}\\
            &= \frac{\det(\tilde{\cI}_J(\tilde{\bxi}_J))^{1/2}}{\pi(\tilde{\bxi}_J)}\int_{M_{\delta,J}}\exp\left( -\frac{1}{2} \left(\bxi-\tilde{\bxi}_J\right)^\top \left( \cI_J(\tilde{\bxi}_J)(\Ib_2 - E_J(\bxi)) + \bSigma^{-1} \right) \left(\bxi-\tilde{\bxi}_J\right) \right)d\bxi.
        \end{align*}

        From Lemma \ref{lem:consistency}, for any $\varepsilon > 0$,
        \begin{align*}
            &(1-o_{P_{\bxi_0}}(1))\int_{M_{\delta,J}}\exp\left(-\frac{\varepsilon+1}{2}\left(\bxi-\tilde{\bxi}_J\right)^\top\tilde{\cI}_J(\tilde{\bxi}_J)\left(\bxi-\tilde{\bxi}_J\right)\right)d\bxi\\
            &\leq \int_{M_{\delta,J}}\exp\left( -\frac{1}{2} \left(\bxi-\tilde{\bxi}_J\right)^\top \left( \cI_J(\tilde{\bxi}_J)(\Ib_2 - E_J(\bxi)) + \bSigma^{-1} \right) \left(\bxi-\tilde{\bxi}_J\right) \right)d\bxi\\
            &\leq (1+o_{P_{\bxi_0}}(1))\int_{M_{\delta,J}}\exp\left(-\frac{1-\varepsilon}{2}\left(\bxi-\tilde{\bxi}_J\right)^\top\tilde{\cI}_J(\tilde{\bxi}_J)\left(\bxi-\tilde{\bxi}_J\right)\right)d\bxi.
        \end{align*}

        Therefore,
        \begin{align*}
            &(1-o_{P_{\bxi_0}}(1)){\Phi}\left( \sqrt{1+\varepsilon} \tilde{\cI}_J(\tilde{\bxi}_J)^{1/2}(M_{\delta,J}-\tilde{\bxi}_J) \right)(2\pi)\frac{\pi(\tilde{\bxi}_J)}{1+\varepsilon}\\
            &\leq \frac{U_J\det(\tilde{\cI}_J(\tilde{\bxi}_J))^{1/2}}{P^{(J)}(\Rb^{(J)},\Tb^{(J)}\mid\tilde{\bxi}_J)}\\
            &\leq (1+o_{P_{\bxi_0}}(1)){\Phi}\left( \sqrt{1-\varepsilon} \tilde{\cI}_J(\tilde{\bxi}_J)^{1/2}(M_{\delta,J}-\tilde{\bxi}_J) \right)(2\pi)\frac{\pi(\tilde{\bxi}_J)}{1-\varepsilon}.
        \end{align*}
        By setting $\varepsilon$ arbitrarily small and continuous mapping theorem from the consistency of $\tilde{\bxi}_J$, the desired result is obtained.
    \end{proof}

    \begin{corollary}
        \label{coro:converge}
        Suppose a sequence $\{R_j,T_j\}_{j\in \NN}$ generated from a fixed $\bxi_0\in \Theta$, under Assumption \ref{assump:finite}-\ref{assump:fisher_info}, for $J\to\infty$,
        \begin{equation*}
            \left( \frac{P^{(J)}(\Rb^{(J)},\Tb^{(J)})\det(\tilde{\cI}_J (\tilde{\bxi}_J))^{1/2}}{P^{(J)}(\Rb^{(J)},\Tb^{(J)}\mid\tilde{\bxi}_J)} \right)^{-1} \stackrel{P_{\bxi_0}}{\to} \frac{1}{(2\pi)\pi(\bxi_0)}.
        \end{equation*}
    \end{corollary}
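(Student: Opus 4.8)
The plan is to recognize the quantity in the numerator as the full marginal likelihood and then to apply Lemma~\ref{lem:converge} with the integration region exhausting the whole parameter space. Since $P^{(J)}(\Rb^{(J)},\Tb^{(J)})=\int_{\Theta}P^{(J)}(\Rb^{(J)},\Tb^{(J)}\mid\bxi)\pi(\bxi)\,d\bxi$, the target ratio is exactly the scaled integral appearing in part~(2) of Lemma~\ref{lem:converge}, but taken over all of $\Theta$ rather than over a single image $G_J(B)$. I would therefore choose a sequence of set-mappings $\{G_J\}$ whose images exhaust $\Theta$, e.g.\ $G_J(B)=\bar{B}_{R_J}(\bxi_0)\cap\Theta$ with $R_J\to\infty$; for every fixed $\delta>0$ one has $R_J>\delta$ eventually, so $G_J(B)\supseteq B_\delta(\bxi_0)$ and condition~(\ref{eqn:cond_2}) holds.

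Next I would split $\int_{\Theta}=\int_{G_J(B)}+\int_{\Theta\setminus G_J(B)}$ and scale both pieces by the common factor $\det(\tilde{\cI}_J(\tilde{\bxi}_J))^{1/2}/P^{(J)}(\Rb^{(J)},\Tb^{(J)}\mid\tilde{\bxi}_J)$. For the tail piece, since the integrand is nonnegative and $\Theta\setminus G_J(B)\subseteq\Theta\setminus B_\delta(\bxi_0)$ for any fixed $\delta$ once $R_J>\delta$, part~(1) of Lemma~\ref{lem:converge} with $f\equiv 1$ shows this contribution is dominated by the scaled integral over $\Theta\setminus B_\delta(\bxi_0)$ and is hence $o_{P_{\bxi_0}}(1)$. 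For the main piece, part~(2) of Lemma~\ref{lem:converge} gives that it equals $\tilde{\Phi}\bigl(\tilde{\cI}_J(\tilde{\bxi}_J)^{1/2}(G_J(B)-\tilde{\bxi}_J)\bigr)\,\pi(\bxi_0)\,(2\pi)+o_{P_{\bxi_0}}(1)$.

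The crux is then to show the Gaussian mass $\tilde{\Phi}\bigl(\tilde{\cI}_J(\tilde{\bxi}_J)^{1/2}(G_J(B)-\tilde{\bxi}_J)\bigr)$ converges to $1$ in $P_{\bxi_0}$-probability. Here I would use that the rescaled region contains $\tilde{\cI}_J(\tilde{\bxi}_J)^{1/2}(B_\delta(\bxi_0)-\tilde{\bxi}_J)$ and argue that this set eventually contains any fixed ball $B_M(0)\subseteq\RR^2$: a vector $v$ with $\|v\|\le M$ lies in it whenever $\|\tilde{\cI}_J(\tilde{\bxi}_J)^{-1/2}v+\tilde{\bxi}_J-\bxi_0\|<\delta$, and since the smallest eigenvalue of $\tilde{\cI}_J(\tilde{\bxi}_J)$ diverges (so $\|\tilde{\cI}_J(\tilde{\bxi}_J)^{-1/2}\|\to 0$ under Assumptions~\ref{assump:finite} and~\ref{assump:fisher_info}) and $\tilde{\bxi}_J\stackrel{p}{\to}\bxi_0$ by Lemma~\ref{lem:consistency}, this inequality holds uniformly over $\|v\|\le M$ for $J$ large. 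Hence $\tilde{\Phi}(\cdots)\ge\tilde{\Phi}(B_M(0))$, and letting $M\to\infty$ forces the mass to $1$. The main obstacle, and the only genuinely delicate point, is precisely this step: it requires controlling simultaneously the divergence of the precision (so the ellipsoidal region blows up) and the recentering at $\tilde{\bxi}_J$ (so the region stays anchored near the origin after the affine map), which is exactly where the divergence of the Fisher information established from the assumptions is essential.

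Combining the three pieces yields $P^{(J)}(\Rb^{(J)},\Tb^{(J)})\det(\tilde{\cI}_J(\tilde{\bxi}_J))^{1/2}/P^{(J)}(\Rb^{(J)},\Tb^{(J)}\mid\tilde{\bxi}_J)\stackrel{P_{\bxi_0}}{\to}(2\pi)\pi(\bxi_0)$. Since $\bxi_0\in\Theta\setminus\partial\Theta$ and $\pi$ is a proper density with support $\Theta$, the limit $(2\pi)\pi(\bxi_0)$ is a strictly positive constant, so the continuous mapping theorem applied to $x\mapsto 1/x$ delivers the stated convergence of the reciprocal to $1/\bigl((2\pi)\pi(\bxi_0)\bigr)$, completing the argument.
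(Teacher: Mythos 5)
Your proposal is correct and follows essentially the same route as the paper, which proves the corollary in one line by applying Lemma~\ref{lem:converge}(2) with the constant mapping $G_J(B)=\mathbb{R}^2$ (so that $\tilde{\Phi}$ of the rescaled region is identically $1$) and then inverting. Your version merely spells out the details that this shortcut glosses over --- splitting $\Theta$ into a growing ball plus a tail handled by Lemma~\ref{lem:converge}(1) with $f\equiv 1$, and verifying that the Gaussian mass of the rescaled ball tends to one via the divergence of $\tilde{\cI}_J$ and the consistency of $\tilde{\bxi}_J$ --- which is a sound and arguably more careful execution of the same idea.
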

    \begin{proof}
        By setting $G_J(B)=\mathbb{R}^2$, $J\in \NN$, from Lemma \ref{lem:converge}, we have the desired result.
    \end{proof}

    \subsection{Proof of Theorem 2}
    First, we show the result for all bounded $B$. Then we extend to unbounded $B$ and convergence in probability. Let $G_J(B)=\{\tilde{\cI}_J(\tilde{\bxi}_J)^{-1/2}\bx+\tilde{\bxi}_J:\bx\in B \} = \tilde{\cI}_J(\tilde{\bxi}_J)^{-1/2}B + \tilde{\bxi}_J $. Note that
    \begin{align*}
        &P(\tilde{\cI}_J(\tilde{\bxi}_J)^{-1/2}(\bxi-\tilde{\bxi}_J)\in B\mid \Rb^{(J)}, \Tb^{(J)})\\
        &= \underbrace{\frac{\det(\tilde{\cI}_J) \int_{G_J(B)}P^{(J)}(\Rb, \Tb\mid \bxi)\pi(\bxi)d\bxi}{P^{(J)}(\Rb^{(J)},\Tb^{(J)}\mid \tilde{\bxi}_J)}}_{\alpha_1} \cdot \underbrace{\left(\frac{P^{(J)}(\Rb^{(J)},\Tb^{(J)}) \det(\tilde{\cI}_J (\tilde{\bxi}_J))}{P^{(J)}(\Rb^{(J)},\Tb^{(J)}\mid \tilde{\bxi}_J)}\right)^{-1}}_{\alpha_2}.
    \end{align*}

    From Lemma \ref{lem:consistency}, $\tilde{\bxi}_J\stackrel{P_{\bxi_0}}{\to}\bxi_0$, hence,
    \begin{align*}
        \left\|\tilde{\cI}_J^{-1}(\tilde{\bxi}_J)\right\| = \frac{1}{J}\left\| \left( \frac{1}{J}\tilde{\cI}_J(\tilde{\bxi}_J) \right)^{-1} \right\| = O_{P_{\bxi_0}}\left(\frac{1}{J}\right).
    \end{align*}
    Thus, $\tilde{\cI}_J^{-1}(\tilde{\bxi}_J)\stackrel{P_{\bxi_0}}{\to} 0$.

    From Lemma \ref{lem:converge} (2),
    \begin{equation*}
        \alpha_1 \stackrel{P_{\bxi_0}}{\to} {\Phi}\left( \tilde{\cI}_J(\tilde{\bxi}_J)^{-1/2}(G_J(B) - \tilde{\bxi}_J) \right) \pi(\bxi_0)(2\pi) = {\Phi}\left( B \right) \pi(\bxi_0)(2\pi), \quad J\to \infty.
    \end{equation*}

    From Corollary \ref{coro:converge},
    \begin{equation*}
        \alpha_2 \stackrel{P_{\bxi_0}}{\to} \frac{1}{(2\pi)\pi(\bxi_0)}.
    \end{equation*}

    Therefore, for every bounded $B$, combining limit distribution of $\alpha_1$ and $\alpha_2$, we have the desired result. Then, for unbounded $B\in \cB(\Theta)$, define the posterior probability measure as
    \begin{equation*}
        \tilde{\Psi}_J(A) = \int_{G_J(A)}P(\bxi\mid \Rb^{(J)},\Tb^{(J)})d\bxi.
    \end{equation*}

    For an unbounded Borel set $B$, it can be written as $B=\cup_{m=1}^{\infty}B_m $, where $B_m\cap B_n = \emptyset$, $\forall m\neq n$, and $B_m$'s are bounded. Hence, for any $\varepsilon>0$,
    \begin{equation*}
        \lim_{J\to\infty} P_{\bxi_0}\left( \left| \tilde{\Psi}_J(B_m) - \Phi_2(B_m) \right| < \varepsilon \right) = 1.
    \end{equation*}

    Let $\varepsilon=6\varepsilon^\prime/(\pi m^2)$, we have
    \begin{align*}
        |\tilde{\Psi}_J(B) - \Phi_2(B)| &\leq \sum_{m=1}^{\infty}|\tilde{\Psi}_J(B_m)-\Phi_2(B_m)|\\
        &< \sum_{m=1}^{\infty} \frac{6\varepsilon^{\prime}}{\pi^2 m^2}\\
        &= \varepsilon^{\prime}.
    \end{align*}
    Hence, the result holds for arbitrary $B$.

    Let $H_{d,\epsilon}(\bxi^\prime)=P(|\tilde{\Psi}_J(B)-\Phi_2(B)| > \epsilon\mid \bxi_0=\bxi^{\prime})$. Since $H_{d,\epsilon}\leq 1$ uniformly, by dominated convergence theorem,
    \begin{align*}
        \lim_{J\to\infty}P(|\tilde{\Psi}_J(B)-\Phi_2(B)|>\epsilon) &= \lim_{J\to\infty} \int_{\Theta} H_{d,\epsilon}(\bxi)d\cG(\bxi)\\
        &=\int_{\Theta}\lim_{J\to\infty} H_{d,\epsilon}(\bxi)d\cG(\bxi)=0,
    \end{align*}
    where $\cG(\bxi)$ is any proper probability measure on $\Theta$. The last inequality comes from convergence in $P_{\bxi_0}$.

\section{Additional Simulations}

\subsection{Robustness to Model Misspecification}
We examine the robustness of LaRT under link-function misspecification. We present the results for the doubly-misspecified Logistic-Poisson generative model, as it represents the most challenging scenario. In generalized linear models (GLMs), utilizing a misspecified link function typically shifts the parameter estimates by an unknown scalar constant \citep{li1989regression}. Therefore, rather than evaluating RMSE, we assess performance using the Spearman rank correlation between the estimated quantities and the ground truth. This rank-based evaluation aligns with two practical objectives of LLM evaluation. First, relative rankings are prioritized over absolute ability scores. Second, the true parameters under link function misspecification only shift by an unknown constant, and the true ordering should be preserved. For a baseline comparison, we fit a correctly specified logistic-link IRT model using Marginal Maximum Likelihood (MML) via the \texttt{girth} Python package with default settings, chosen for its algorithmic similarity to our approach.

The true parameters for this doubly-misspecified setting are generated similarly to the previous experiment. For the response accuracy, each entry of $\ba$ is drawn from $\text{Unif}(0.5,1)$ and $\bb$ from $N(0,0.5)$. For the CoT length, we draw each entry of $\bomega$ from $N(0,40^2)$ and $\bvarphi$ from $\text{Unif}(30,70)$. To reflect the empirically large token counts observed in CoT reasoning, we generate the lengths from a shifted Poisson distribution $T_{ij} \sim \text{Poisson}(\omega_j - \tau_i\varphi_j + 800)$. Across all simulations in this setting, we maintain $J=50$ and $\rho=-0.8$, varying the sample size $N \in \{100, 200, 500\}$. 200 independent replications are performed for each setting.

The result is presented in Figure~\ref{fig:logistic_poisson}. We compare the performance in Spearman correlation with the correctly specified IRT model. Notably, LaRT yields uniformly higher rank-estimation accuracy for both $\btheta$ and $\ba$, and performs only marginally worse for $\bb$. This robust performance stems from two primary factors. First, our proposed SAEM algorithm avoids the numerical integration approximations inherent to the baseline. Under a logistic link, the absence of a closed-form posterior for $\btheta$ necessitates Gauss-Hermite quadrature during the MML algorithm's E-step, which introduces approximation biases into the estimates of both latent traits and item parameters. Second, the joint modeling in LaRT allows the latent ability estimation to borrow strength from the CoT lengths,  which offers an informational advantage that persists even under link misspecification.

Furthermore, Figure~\ref{fig:logistic_poisson_time} illustrates the estimation accuracy for the CoT-related parameters. All parameters achieve high Spearman correlations, with the correlation $\rho$ performing comparably to the correctly specified case Figure~\ref{fig:lart_8}. This confirms that when token counts are sufficiently large, the continuous log-normal distribution serves as a highly effective approximation for the discrete Poisson data-generating process.

\begin{figure}[h!]
    \centering
    \includegraphics[width=\linewidth]{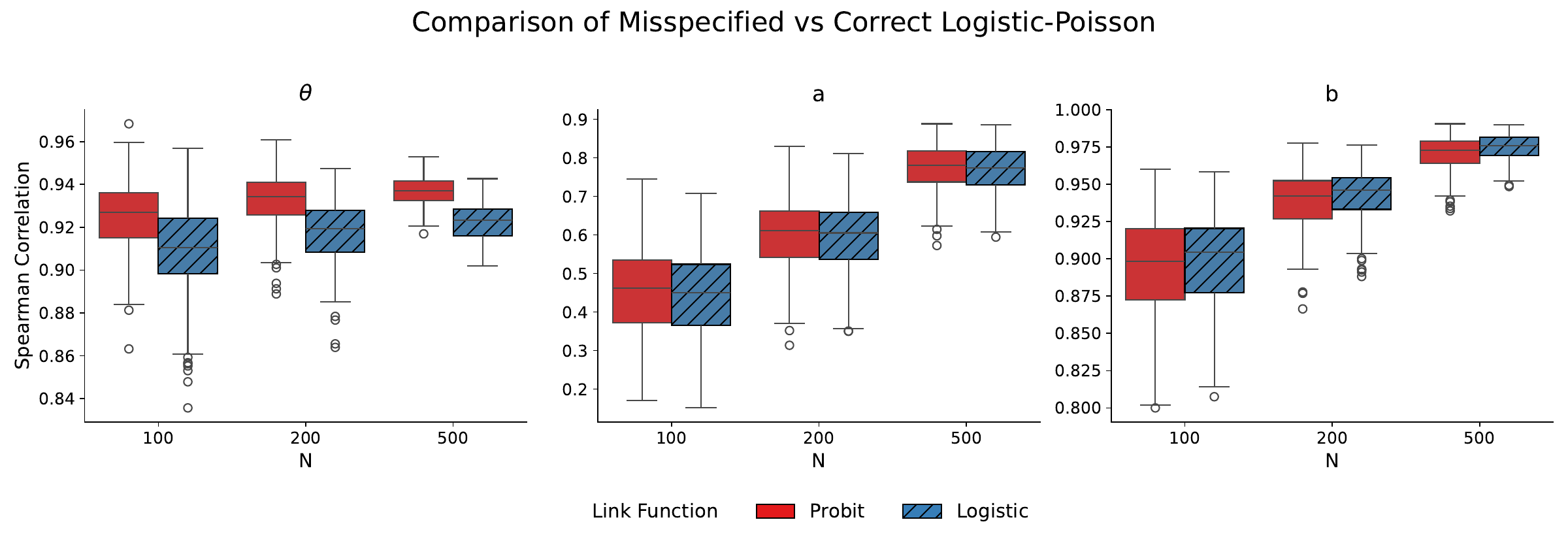}
    \caption{Spearman rank correlation of parameter estimates from the doubly-misspecified LaRT model versus a correctly specified standard IRT model. Despite the link-function misspecification, LaRT achieves higher ranking accuracy for $\btheta$ and $\ba$, and remains highly competitive for $\bb$.}
    \label{fig:logistic_poisson}
\end{figure}

\begin{figure}[h!]
    \centering
    \includegraphics[width=0.66\linewidth]{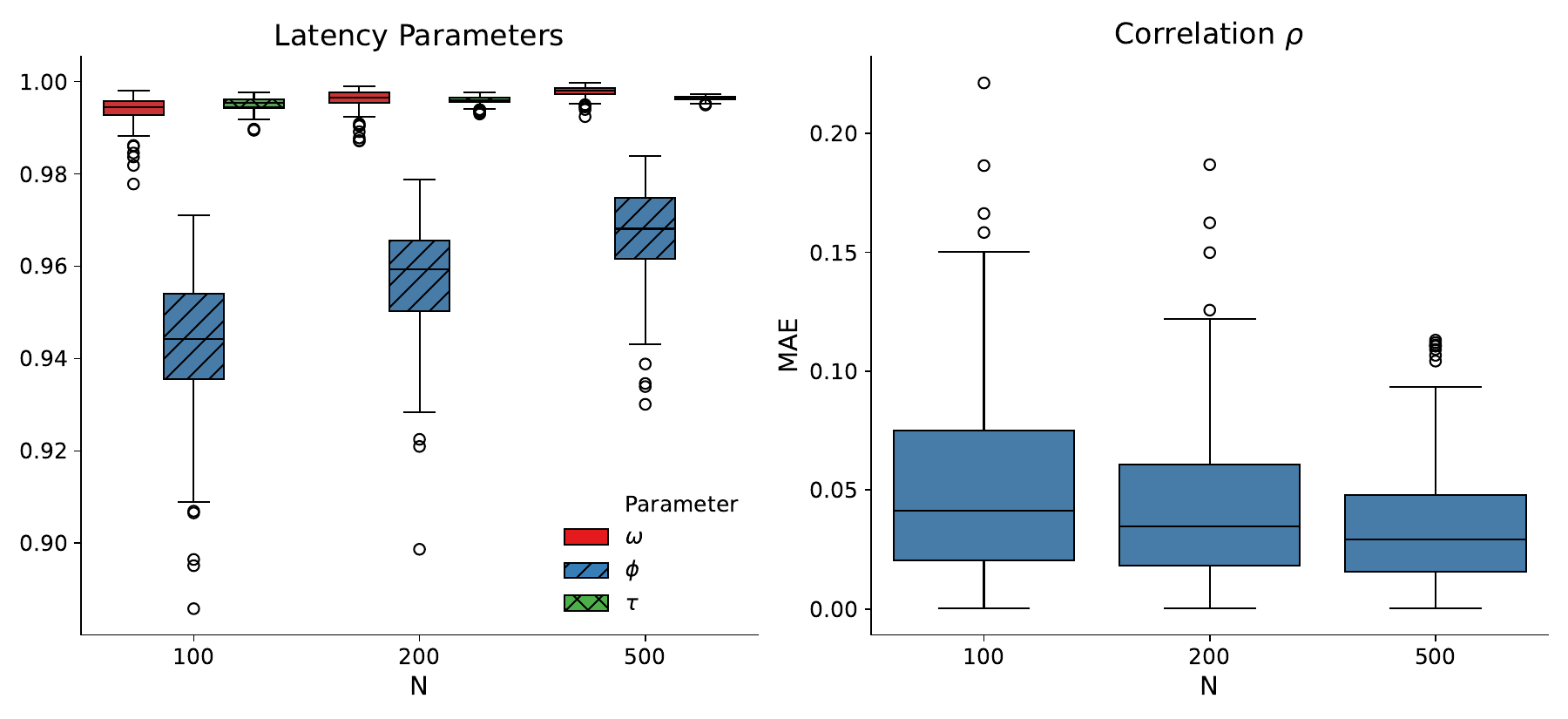}
    \caption{Spearman rank correlation of the CoT-related time parameter estimates using the misspecified LaRT model. High correlations across all parameters demonstrate the robustness of the log-normal approximation for large-count Poisson data.}
    \label{fig:logistic_poisson_time}
\end{figure}

\subsection{Converging Scenarios between LaRT and IRT}
To further illustrate the conditions under which LaRT and IRT align, we conduct additional simulations. First, we vary the correlation parameter $\rho \in \{-0.2, -0.4, -0.6, -0.8\}$ while fixing $N=200$ and $J=50$. We restrict $\rho$ to negative values to reflect the empirical observation that longer Chain-of-Thought (CoT) sequences generally correlate with higher latent ability. The data generation process for the remaining parameters $\bOmega=\{\ba,\bb,\bomega,\bvarphi,\blambda\}$ follows the simulation setup in the simulation section.

Figure~\ref{fig:rho_diff} presents the estimation accuracy for $\btheta$, $\ba$, and $\bb$. While LaRT consistently outperforms standard IRT across all settings, the performance gap narrows as the magnitude of the correlation decreases. Fortunately, our real-data analysis demonstrates a strong empirical correlation between latent ability and latent speed, underscoring the practical advantage of LaRT in applied settings.

\begin{figure}[h!]
    \centering
    \includegraphics[width=\linewidth]{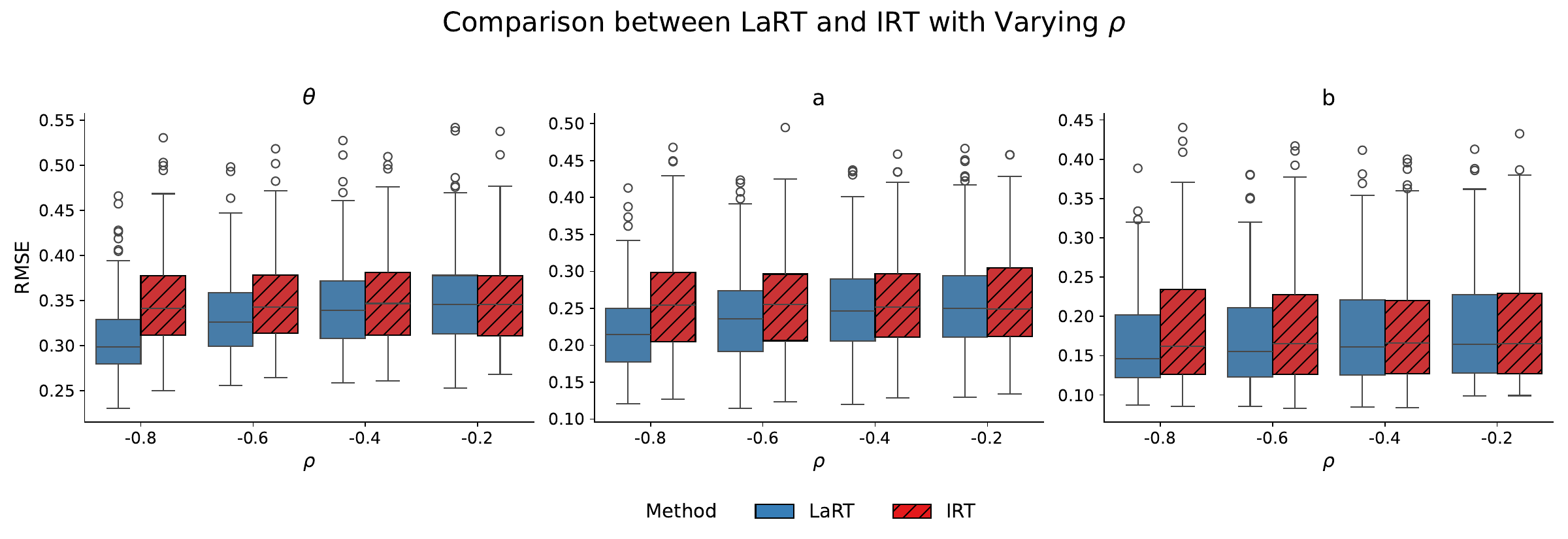}
    \caption{Performance comparison between LaRT and IRT across varying $\rho$. As the magnitude of $\rho$ increases, the advantage of LaRT becomes more significant.}
    \label{fig:rho_diff}
\end{figure}

Second, we evaluate the impact of the test length by varying $J \in \{20, 50, 100, 200\}$, while fixing $N=200$ and $\rho=-0.8$ (with $\bOmega$ generated as before). As shown in Figure~\ref{fig:J_diff}, while LaRT maintains a significant edge over IRT, this comparative advantage diminishes as $J$ increases. This empirical trend aligns directly with our asymptotic variance analysis. Specifically, the Fisher Information for $\theta_i$ is given by:
\begin{equation*}
    % \label{eqn:fisher_info_theta}
    \tilde{\cI}_J(\theta) = \underbrace{\frac{1}{1-\rho^2}}_{\cI_{\rho}} + \underbrace{\sum_{j=1}^J \frac{a_j^2\phi(a_j\theta_i+b_j)^2}{\Phi(a_j\theta_i+b_j)[1-\Phi(a_j\theta_i+b_j)]}}_{\cI_{\text{item}}} .
\end{equation*}
For a fixed $\rho$, as $J\to\infty$, $\tilde{\cI}_J(\theta)$ is dominated by $\cI_{\text{item}}$. Thus, the performance of LaRT and IRT becomes closer when $J$ grows, justifying the empirical observation.

\begin{figure}[h!]
    \centering
    \includegraphics[width=\linewidth]{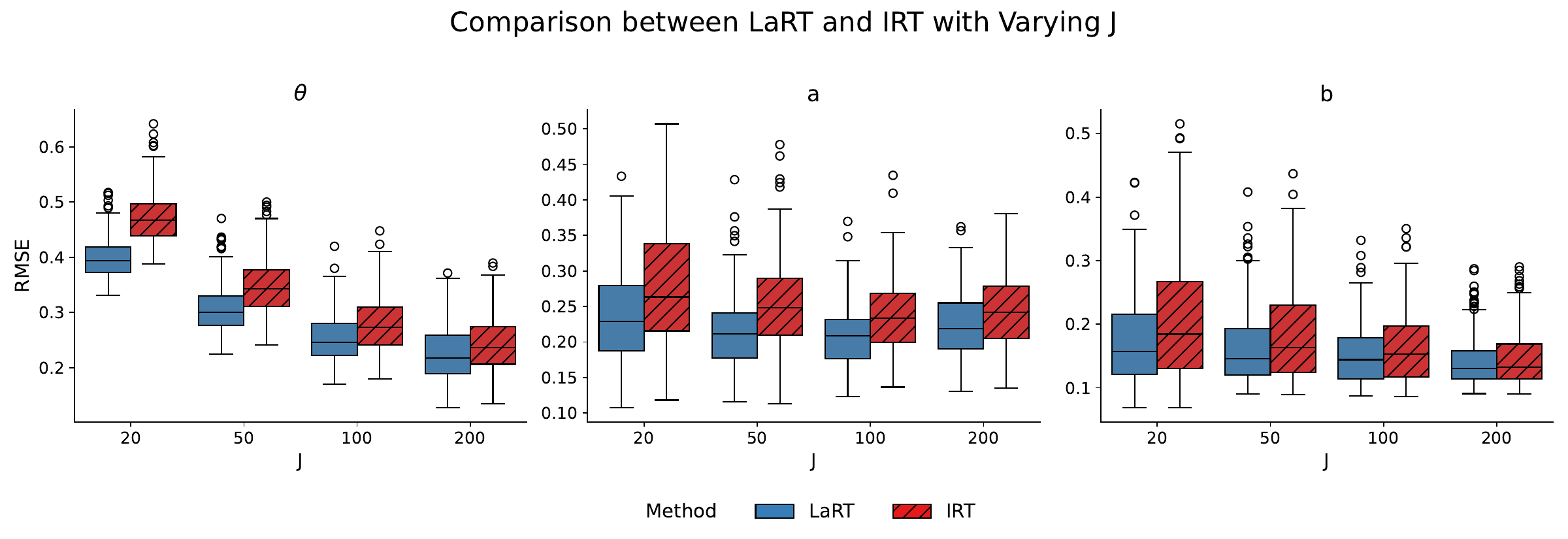}
    \caption{Performance comparison between LaRT and IRT across varying $J$. As $J$ increases, the advantage of LaRT diminishes.}
    \label{fig:J_diff}
\end{figure}

\subsection{Simulation comparison with traditional SAEM implementation}
\label{append:SAEM}

This section compares the parameter estimation performance of smart initialized SAEM against traditional SAEM. The traditional SAEM is implemented with the first 20 steps with weight $\alpha_t=1$, followed by a decay of $\alpha_t=1/(t-20)$. The smart initialized SAEM starts with $\alpha_t=1/t$. Parameters are generated in the same configuration as in Section~\ref{sec:simulation}. We conduct 200 parallel simulations.

The result is presented in Figure \ref{fig:sim_com_saem_1} and \ref{fig:sim_com_saem_2}. For the probit part, there are significant outliers in the estimation of $\ba$ and $\bb$. The outlier results from instability of the stochastic approximation E-step, when $\alpha_t=1$. The suboptimal optimization target drives the estimate away from the true optimal region. Even excluding the outlier, the smart initialized SAEM consistently yields lower estimation error across all parameters. When the number of "burn-in" steps increases, the estimation accuracy exacerbates. We do not present the result because the many outliers when the "burn-in" steps become larger, and not ideal for presentation.

\begin{figure}[h!]
    \centering
    \includegraphics[width=\linewidth]{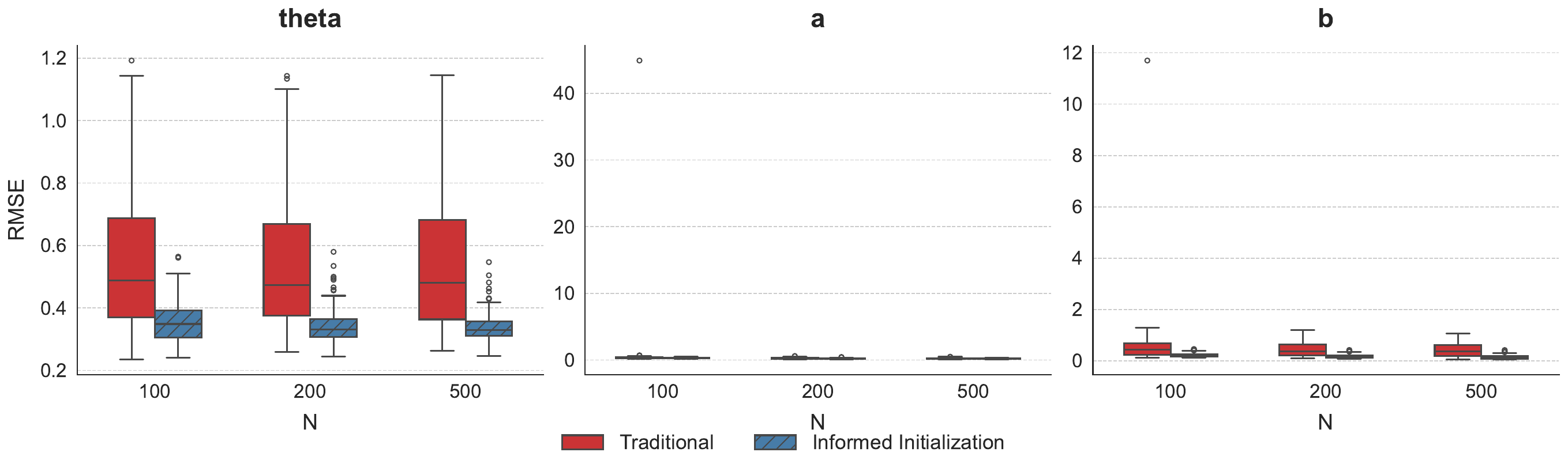}
    \caption{Comparison between smart initialized SAEM and traditional SAEM in $\btheta$, $\ba$, and $\bb$.}
    \label{fig:sim_com_saem_1}
\end{figure}

\begin{figure}[h!]
    \centering
    \includegraphics[width=0.6\linewidth]{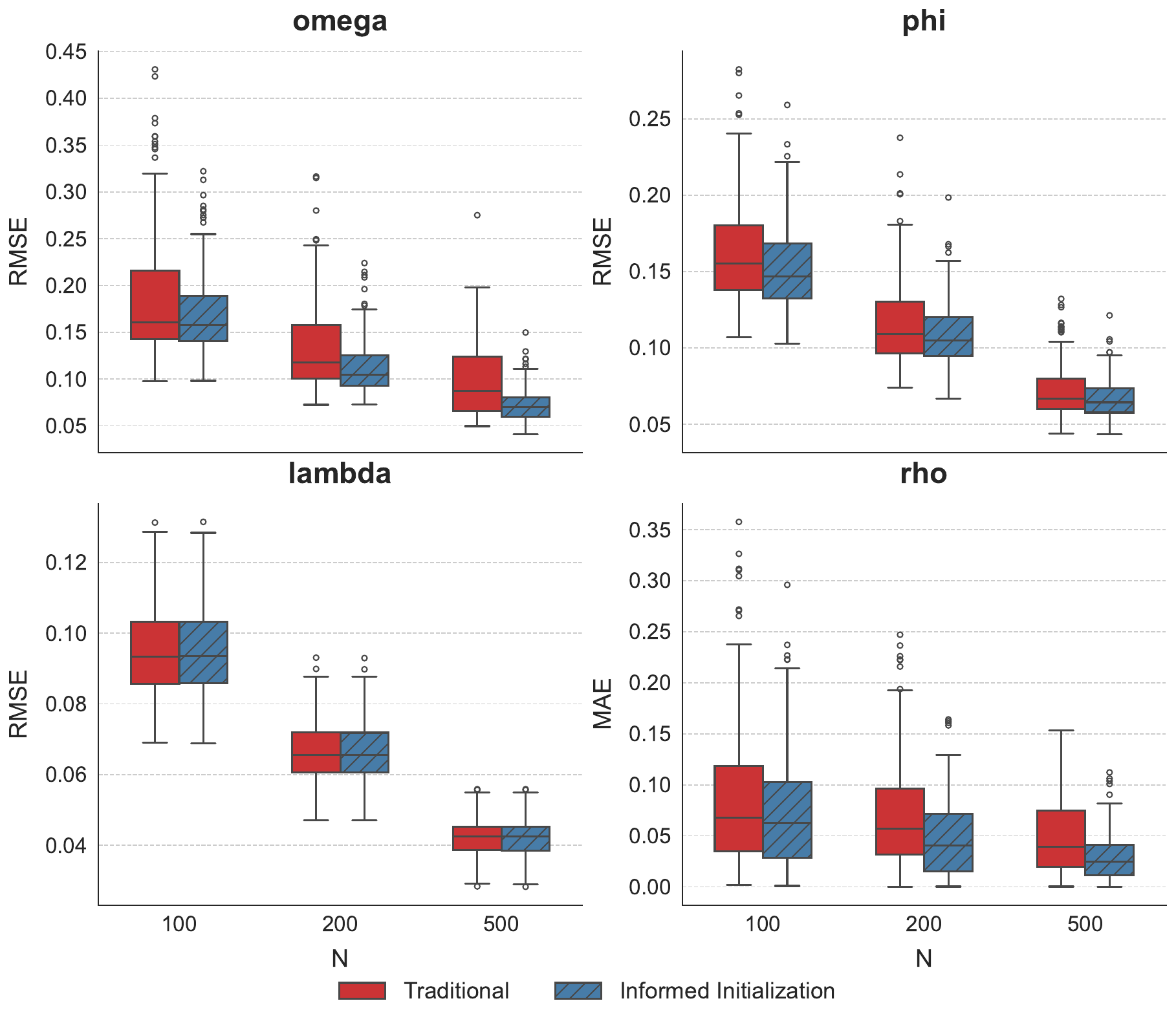}
    \caption{Comparison between smart initialized SAEM and traditional SAEM in $\bomega$, $\bvarphi$, $\blambda$, and $\rho$.}
    \label{fig:sim_com_saem_2}
\end{figure}

\section{List of Evaluated LLMs and Hyperparameters for generation}
\label{append:llm}
\subsection{List of Evaluated LLMs}
The list of models is as follows. For MATH500, we have an additional LLM google/gemma-2-27b-it.
\footnotesize
\begin{multicols}{2}
    \begin{itemize}[noitemsep, topsep=0pt]
        \item 01-ai/Yi-34B
        \item baidu/ERNIE-4.5-21B-A3B-PT
        \item baidu/ERNIE-4.5-21B-A3B-Thinking
        \item deepseek-ai/DeepSeek-R1-0528-Qwen3-8B
        \item deepseek-ai/DeepSeek-R1-Distill-Llama-8B
        \item deepseek-ai/DeepSeek-R1-Distill-Qwen-1.5B
        \item deepseek-ai/DeepSeek-R1-Distill-Qwen-14B
        \item deepseek-ai/DeepSeek-R1-Distill-Qwen-32B
        \item deepseek-ai/DeepSeek-R1-Distill-Qwen-7B
        \item dphn/dolphin-2.9.1-yi-1.5-34b
        \item dphn/Dolphin-Mistral-24B-Venice-Edition
        \item google/gemma-2b-it
        \item google/gemma-3-1b-it
        \item google/gemma-3-1b-pt
        \item google/gemma-7b-it
        \item google/vaultgemma-1b
        \item HuggingFaceTB/SmolLM3-3B
        \item huihui-ai/Huihui-gpt-oss-20b-BF16-abliterated
        \item huihui-ai/Huihui-Qwen3-8B-abliterated-v2
        \item ibm-granite/granite-3.3-2b-instruct
        \item internlm/internlm2-chat-20b
        \item LGAI-EXAONE/EXAONE-4.0.1-32B
        \item LLM360/K2-Think
        \item meta-llama/Llama-2-7b-chat-hf
        \item meta-llama/Llama-2-7b-hf
        \item meta-llama/Llama-3.1-8B-Instruct
        \item meta-llama/Llama-3.2-1B
        \item meta-llama/Llama-3.2-1B-Instruct
        \item meta-llama/Llama-3.2-3B
        \item meta-llama/Llama-3.2-3B-Instruct
        \item meta-llama/Meta-Llama-3-8B
        \item meta-llama/Meta-Llama-3-8B-Instruct
        \item microsoft/Phi-3.5-mini-instruct
        \item microsoft/Phi-3.5-MoE-instruct
        \item microsoft/phi-4
        \item microsoft/Phi-4-mini-instruct
        \item microsoft/Phi-4-reasoning
        \item microsoft/Phi-4-reasoning-plus
        \item mistralai/Magistral-Small-2507
        \item mistralai/Magistral-Small-2509
        \item mistralai/Mistral-7B-Instruct-v0.1
        \item mistralai/Mistral-7B-Instruct-v0.2
        \item mistralai/Mistral-7B-Instruct-v0.3
        \item mistralai/Mistral-Small-3.2-24B-Instruct-2506
        \item mistralai/Mistral-Small-Instruct-2409
        \item moonshotai/Moonlight-16B-A3B
        \item moonshotai/Moonlight-16B-A3B-Instruct
        \item nvidia/AceReason-Nemotron-1.1-7B
        \item nvidia/AceReason-Nemotron-14B
        \item nvidia/Llama-3.1-Nemotron-8B-UltraLong-4M-Instruct
        \item nvidia/Nemotron-Research-Reasoning-Qwen-1.5B
        \item nvidia/NVIDIA-Nemotron-Nano-12B-v2
        \item nvidia/NVIDIA-Nemotron-Nano-9B-v2
        \item nvidia/OpenReasoning-Nemotron-1.5B
        \item nvidia/OpenReasoning-Nemotron-7B
        \item openai-community/gpt2
        \item openai/gpt-oss-20b
        \item openbmb/MiniCPM4.1-8B
        \item Qwen/Qwen1.5-32B
        \item Qwen/Qwen2-7B-Instruct
        \item Qwen/Qwen2.5-0.5B-Instruct
        \item Qwen/Qwen2.5-1.5B-Instruct
        \item Qwen/Qwen2.5-14B-Instruct
        \item Qwen/Qwen2.5-32B-Instruct
        \item Qwen/Qwen2.5-3B-Instruct
        \item Qwen/Qwen2.5-7B-Instruct
        \item Qwen/Qwen3-0.6B
        \item Qwen/Qwen3-1.7B
        \item Qwen/Qwen3-14B
        \item Qwen/Qwen3-30B-A3B
        \item Qwen/Qwen3-30B-A3B-Instruct-2507
        \item Qwen/Qwen3-30B-A3B-Thinking-2507
        \item Qwen/Qwen3-32B
        \item Qwen/Qwen3-4B
        \item Qwen/Qwen3-4B-Instruct-2507
        \item Qwen/Qwen3-4B-Thinking-2507
        \item Qwen/Qwen3-8B
        \item Qwen/QwQ-32B
        \item swiss-ai/Apertus-8B-Instruct-2509
        \item THUDM/GLM-4-9B-0414
        \item TinyLlama/TinyLlama-1.1B-Chat-v1.0
        \item zai-org/GLM-4-32B-0414
    \end{itemize}
\end{multicols}
\normalsize % Restore font size after the appendix

\subsection{Hyperparameters and Prompts for generation}
For the hyperparameters of LLMs generation, we set the temperature to be 0.5, top p 0.95, max output tokens as 10,240, and repetition penalty of 1.05. Without the repetition penalty, some LLMs will keep repeat until reach the maximum output token. Therefore, we set a mild repetition penalty such that the CoT is not repeated, and the CoT length will be a better summary of the thinking quality of an LLM.

For the prompts, we use CoT zero-shot prompting and one-shot prompting. The specific forms of the prompts are as follows. The \{problem\} provides the detailed question of the item. The one-shot example comes from a question item in MATH dataset \citep{hendrycksmath2021} that is not included in MATH500.
\begin{enumerate}
    \item \textbf{Zero-shot Prompt}: Solve the following math problem. Be clear and concise.
    Problem: "\{problem\}"
    Provide a \textbf{step-by-step solution}. Start each step with a number followed by a period (e.g., '1.', '2.', etc.).
    Use basic LaTeX for mathematical expressions, such as for fractions, exponents, and variables. Avoid complex formatting.
    At the very end of your entire response, and only at the very end, state the final answer.
    This final answer must be enclosed in a single LaTeX box, like so: \boxed{{Your Answer}}.
    \item \textbf{One-shot Prompt}: Solve the following math problem. Please think \textbf{step-by-step} to obtain the solution. Use basic LaTeX for mathematical expressions, such as for fractions, exponents, and variables. Avoid complex formatting. At the very end of your entire response, and only at the very end, state the final answer. This final answer must be enclosed in a single LaTeX box, like so: \boxed{{Your Answer}}.

    Here is an example of how to format your response and think about solving the problem:
    Example Problem:
    What is the sum of the two values of $x$ for which $(x+3)^2 = 121$?

    Example Solution:
    Expanding the left side, we have $x^2+6x+9=121 \Rightarrow x^2+6x-112=0$. For a quadratic with the equation $ax^2+bx+c=0$, the sum of the roots is $-b/a$. Applying this formula to the problem, we have that the sum of the two roots is $-6/1=\boxed{-6}$.

    Solution: \boxed{{-6}}

    --- New Problem:
    \{problem\}.
\end{enumerate}

\section{Supplementary Materials for Application}
\label{append:applied}
\subsection{Confidence Interval Estimate of Latent Ability of LLMs}
Additionally, we present the asymptotic confidence intervals estimated by Theorem~\ref{thm:asymptotic} in MATH500 for zero-shot models in Figure~\ref{fig:lart_math500_ci}. The LLMs are ordered by their estimated $\theta$. These confidence intervals allow us to assess whether one LLM outperforms another with statistical significance. Since $\sqrt{\text{sd}_1^2+\text{sd}_2^2}\leq \text{sd}_1+\text{sd}_2$, if the upper limit of the confidence interval of LLM A is smaller than the lower limit of the confidence interval of LLM B, then we can conclude at the 0.05 significance level that LLM B has better latent ability in this dataset than LLM A. For example, 
%even though we cannot conclude with 95\% confidence that Qwen3-30B-A3B-Instruct-2507 is better than LLM360-K2-Think,
we can conclude with 95\% confidence that Qwen3-30B-A3B-Instruct-2507 is better than Nvidia-Acereason-Nemotron-1.1-7B.

% first present the figure showing the estimation result for models with zero-shot prompting on AIME25 in Figure \ref{fig:lart_irt_aime25_0}. For better presentation, we delete models that answer none of the questions in AIME25 correctly. The $x$-axis is ordered by the number of questions answered correctly in AIME25, with models on the left answering less question correctly than models on the right. Other figures exhibit similar patterns and displayed in Appendix \ref{append:applied}. Both IRT and LaRT are consistent with the overall accuracy and close. 
% However, the estimated latent ability by IRT is more attenuated, and the latent ability estimated by LaRT differentiates the models more significantly. This phenomenon is present across all datasets, and the more difficult a dataset is, the more significantly different LLMs' latent abilities are. This is highly consistent with the higher correlation between the higher correlation between the latent ability and the latent speed resulting. The latent speed contributes more in more difficult datasets and help differentiate the models.

\begin{figure}[h!]
    \centering
    \includegraphics[width=\linewidth]{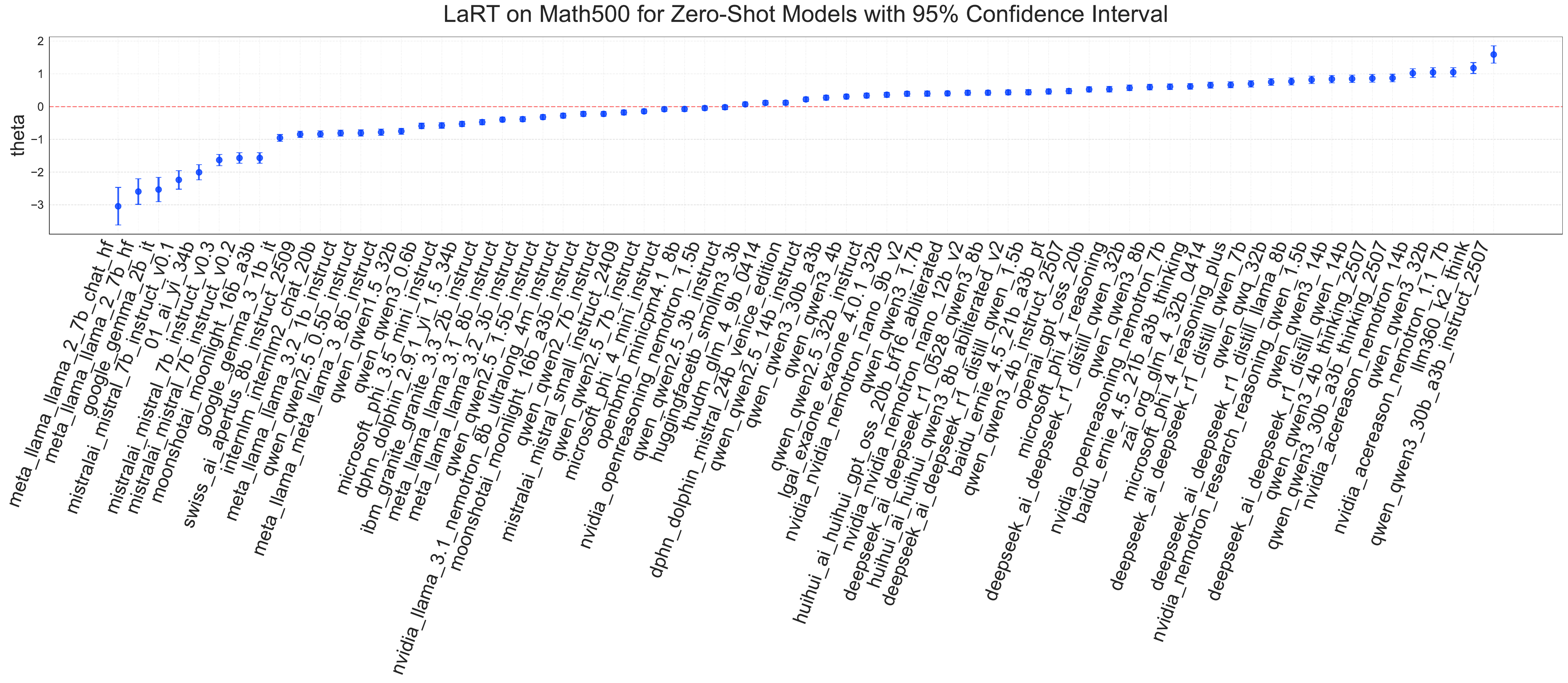}
    \caption{Latent ability estimates for zero-shot models on the MATH500 dataset using LaRT. The error bars represent 95\% asymptotic confidence intervals. Non-overlapping intervals indicate statistically significant differences in performance between LLMs.}
    \label{fig:lart_math500_ci}
\end{figure}

\subsection{Behavior of One-shot Models}
\begin{figure}[h!]
    \centering
    \includegraphics[width=0.85\linewidth]{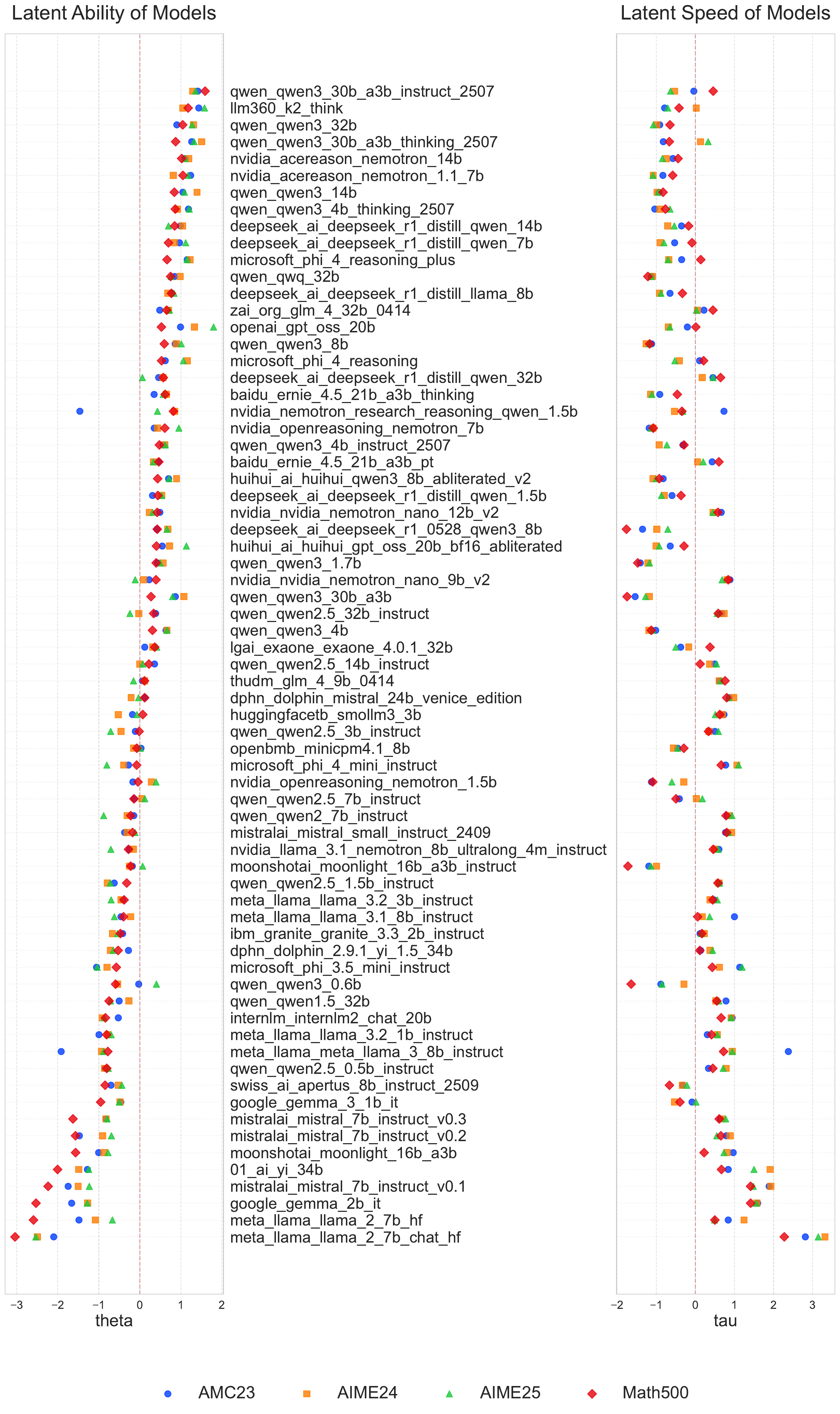}
    \caption{Estimated latent ability and latent speed of the LLMs with one-shot prompting by LaRT. The left panel shows the estimated latent ability, while the right panel shows the estimated latent speed. Different colors represent estimated results for different datasets.}
    \label{fig:model_one_all}
\end{figure}

\subsection{Ranking differences of Other Datasets}
\begin{figure}[h!]
    \centering
    \includegraphics[width=0.80\linewidth]{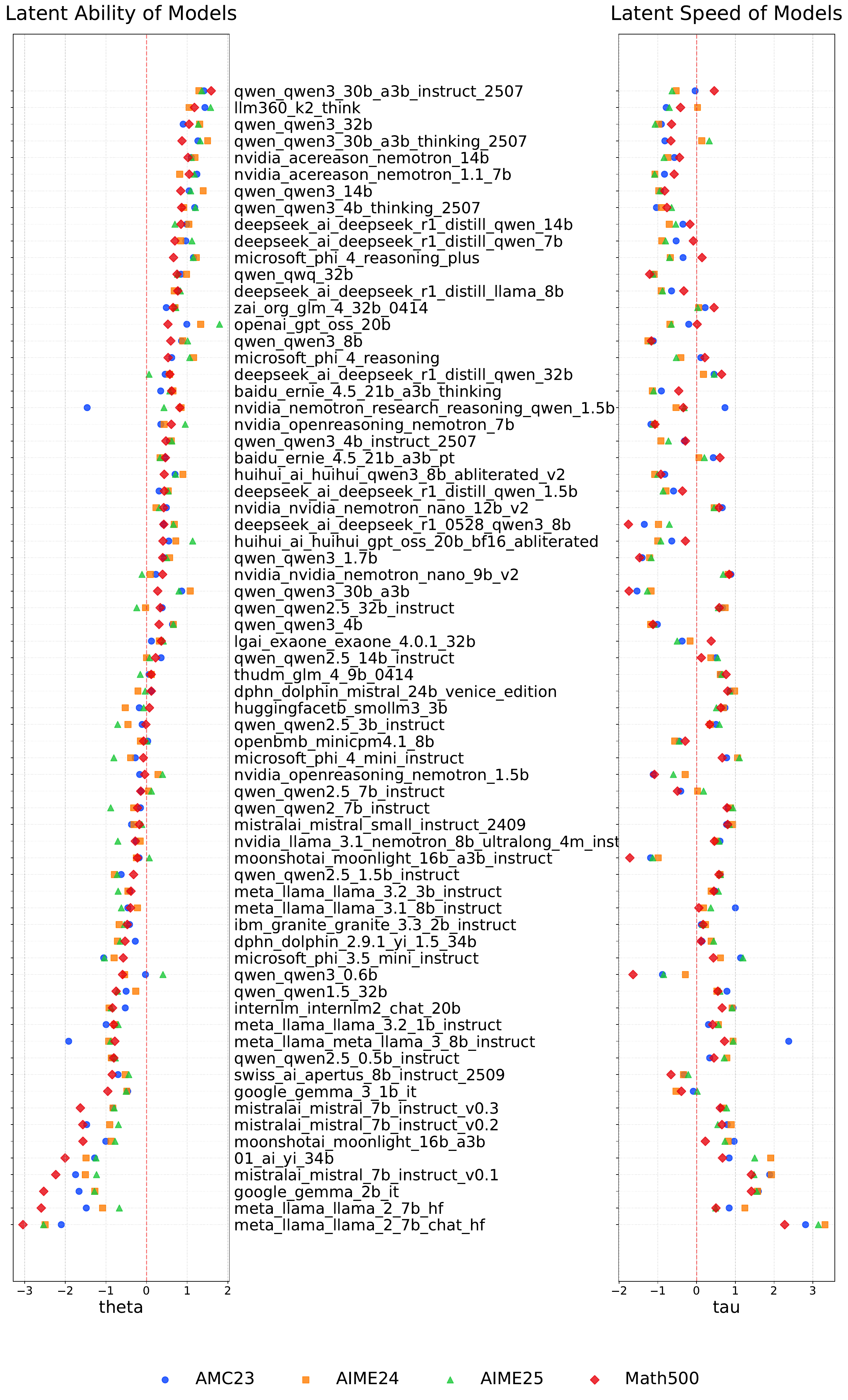}
    \caption{Estimated latent ability and latent speed of the LLMs with zero-shot prompting by LaRT. The left panel shows the estimated latent ability, while the right panel shows the estimated latent speed. Different colors represent estimated results for different benchmark datasets. As the latent ability becomes larger, the latent speed becomes smaller (longer CoT), which matches intuition.}
    \label{fig:zero_model_all}
\end{figure}
\begin{figure}[h!]
    \centering
    \includegraphics[width=\linewidth]{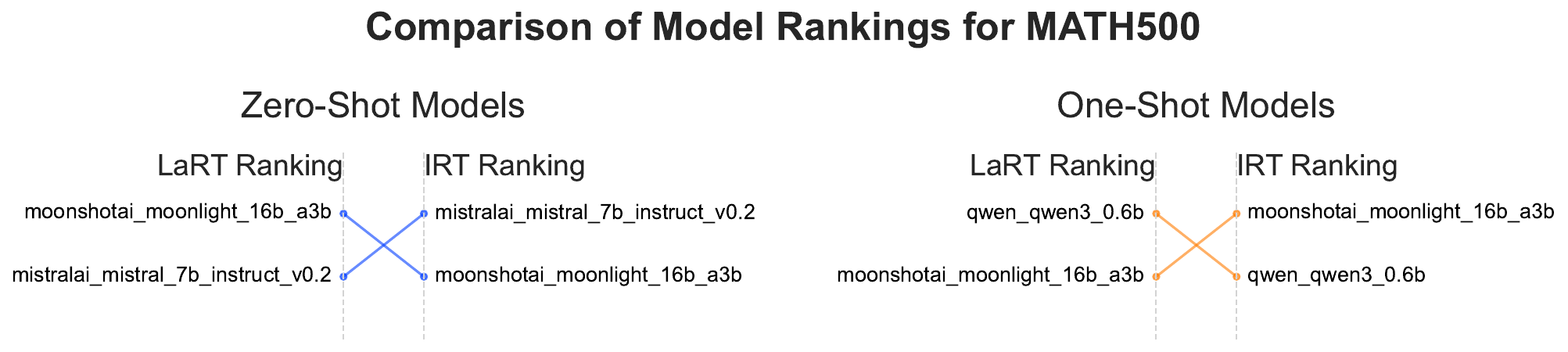}
    \caption{Differences in LLM rankings for both zero-shot models and one-shot models for MATH500. The left panel is for zero-shot models, and the right panel is for one-shot models. For each panel, rankings by LaRT are on the left, and rankings by IRT are on the right. LLMs higher in the plot have higher rankings. The lines connect the same models with different rankings by LaRT and IRT.}
    \label{fig:ranking_shift_math500}
\end{figure}

\begin{figure}[h!]
    \centering
    \includegraphics[width=\linewidth]{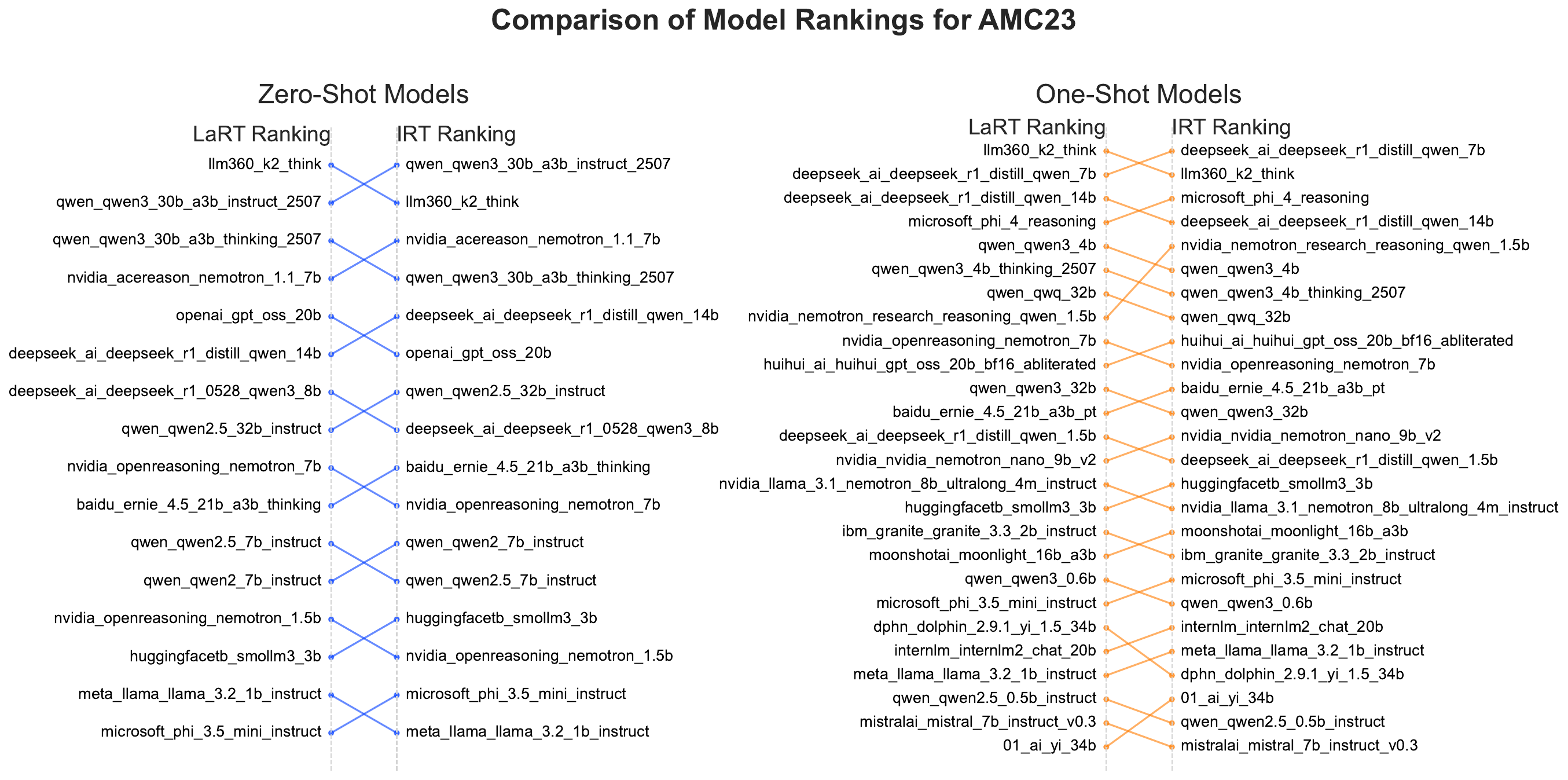}
    \caption{Differences in LLM rankings for both zero-shot models and one-shot models for AMC23. The left panel is for zero-shot models, and the right panel is for one-shot models. For each panel, rankings by LaRT are on the left, and rankings by IRT are on the right. LLMs higher in the plot have higher rankings. The lines connect the same models with different rankings by LaRT and IRT.}
    \label{fig:ranking_shift_amc23}
\end{figure}

\begin{figure}[h!]
    \centering
    \includegraphics[width=\linewidth]{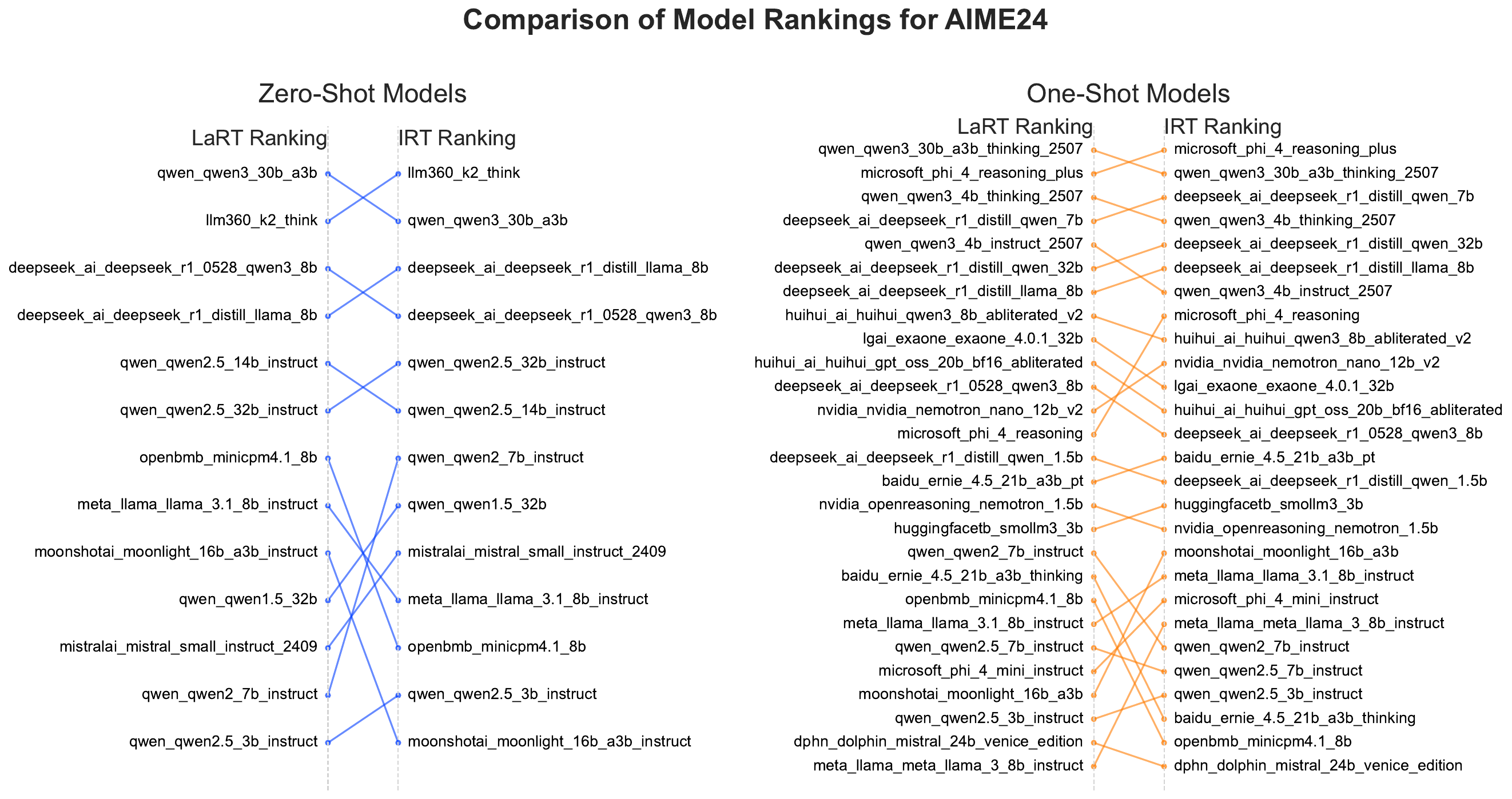}
    \caption{Differences in LLM rankings for both zero-shot models and one-shot models for AIME24. The left panel is for zero-shot models, and the right panel is for one-shot models. For each panel, rankings by LaRT are on the left, and rankings by IRT are on the right. LLMs higher in the plot have higher rankings. The lines connect the same models with different rankings by LaRT and IRT.}
    \label{fig:ranking_shift_aime24}
\end{figure}

\end{document}

% --- supplement: arxiv_folder/supplement.tex ---

\begin{frontmatter}
%%%%%%%%%%%%%%%%%%%%%%%%%%%%%%%%%%%%%%%%%%%%%%
%%                                          %%
%% Enter the title of your article here     %%
%%                                          %%
%%%%%%%%%%%%%%%%%%%%%%%%%%%%%%%%%%%%%%%%%%%%%%
\title{Supplementary Material to "Latency-Response Theory Model: Evaluating LLMs via Response Accuracy and Chain-of-Thought Length"}
\begin{aug}
%%%%%%%%%%%%%%%%%%%%%%%%%%%%%%%%%%%%%%%%%%%%%%%
%% Only one address is permitted per author. %%
%% Only division, organization and e-mail is %%
%% included in the address.                  %%
%% Additional information such as            %%
%% identifying the corresponding author must %%
%% be included in in the Acknowledgments     %%
%% section if necessary.                     %%
%% ORCID can be inserted by command:         %%
%% \orcid{0000-0000-0000-0000}               %%
%%%%%%%%%%%%%%%%%%%%%%%%%%%%%%%%%%%%%%%%%%%%%%%
\author[A]{\fnms{Zhiyu}~\snm{Xu}\ead[label=e1]{xu.zhiyu@columbia.edu}},
\author[A]{\fnms{Jia}~\snm{Liu}\ead[label=e2]{jl6795@columbia.edu}},
\author[B]{\fnms{Yixin}~\snm{Wang}\ead[label=e3]{yixinw@umich.edu}}
\and
\author[A]{\fnms{Yuqi}~\snm{Gu}\ead[label=e4]{yuqi.gu@columbia.edu}}
% %%%%%%%%%%%%%%%%%%%%%%%%%%%%%%%%%%%%%%%%%%%%%%
% %% Addresses                                %%
% %%%%%%%%%%%%%%%%%%%%%%%%%%%%%%%%%%%%%%%%%%%%%%
\address[A]{Department of Statistics, Columbia University\printead[presep={,\ }]{e1,e2,e4}}

\address[B]{Department of Statistics, University of Michigan\printead[presep={,\ }]{e3}}
\end{aug}

\end{frontmatter}

% \clearpage
\appendix
\setcounter{equation}{0}
\setcounter{figure}{0}
\setcounter{table}{0}
\setcounter{page}{1}
\renewcommand{\theequation}{S.\arabic{equation}}
\renewcommand{\thefigure}{S\arabic{figure}}
\renewcommand{\thetable}{S\arabic{table}}
\section{Derivation of SAEM Algorithm}
\label{append:alg}
\subsection{Proof of Lemma 1}
In this section, we derive the complete conditional distribution of $\bxi_i=(\theta_i,\tau_i)$ given $\bOmega$, $\Rb$, and $\Tb$. Due to the conditional independence structure, given $\bOmega$, $\Rb$, and $\Tb$, $\bxi_i$'s are independent. Hence, we only need to derive the complete conditional of $\bxi_i$. The complete conditional is,
\begin{align*}
    P(\theta_i,\tau_i\mid \Rb_{i,:}, \Tb_{i,:};\bOmega) &\propto P(\theta_i,\tau_i; \bOmega)\prod_{j=1}^J P(R_{ij}\mid \theta_i;\bOmega) P(T_{ij}\mid \tau_i;\bOmega)\\
    &= \left[P(\theta_i; \bOmega) \prod_{j=1}^JP(R_{ij}\mid \theta_i; \bOmega)\right]\left[ P(\tau_i\mid \theta_i;\bOmega)\prod_{j=1}^J P(T_{ij}\mid \tau_i;\bOmega) \right].
\end{align*}

First, we focus on the conditional distribution of $P(\tau_i\mid\theta_i,\Tb_{i,:};\bOmega)$. Since $(\theta_i,\tau_i)\sim N(0,\bSigma) $, $\tau_i\mid\theta_i,\bOmega\sim N(\rho\theta_i,1-\rho^2 )$. The likelihood $P(T_{ij}\mid \tau_i;\bOmega)$ is also a normal distribution. Hence, the conditional distribution is still normal.
\begin{align*}
    P(\tau_i\mid \theta_i, \Tb_{i,:};\bOmega) &\propto p(\tau_i\mid \theta_i;\bOmega) \prod_{j=1}^J P(T_{ij}\mid \tau_i;\bOmega)\\
    &= \exp\left\{ -\frac{1}{2(1-\rho^2)}(\tau_i-\rho\theta_i)^2 \right\}\exp\left\{ -\sum_{j=1}^J \frac{1}{2\lambda_j}(\log t_{ij}-\omega_j+\varphi_j\tau_i)^2 \right\}\\
    &\propto \exp\left\{-\frac{1}{2\check{\sigma}_{\tau}^{(i)2}} \left(\tau_i-\check{\mu}_{\tau}^{(i)}\right)^2\right\},
\end{align*}
where
\begin{equation*}
    \check{\sigma}_{\tau}^{(i)2} = \left(\frac{1}{1-\rho^2} + \sum_{j=1}^J\frac{\varphi_j^2}{\lambda_j}\right)^{-1}, \quad \check{\mu}_{\tau}^{(i)} = \left(\frac{1}{1-\rho^2} + \sum_{j=1}^J\frac{\varphi_j^2}{\lambda_j}\right)^{-1}\left(\frac{\rho\theta_i}{1-\rho^2} - \sum_{j=1}^J \frac{(\log t_{ij} - \omega_j)\varphi_j}{\lambda_j^2}\right).
\end{equation*}
Note that $\check{\sigma}_{\tau}^{(i)}$ is independent of the index $i$. For simplicity, we will denote it as $\check{\sigma}_{\tau}$ from now on. Thus, the conditional distribution of $\tau_i$ is,
\begin{equation*}
    \tau_i\mid \theta_i,\Tb_{i,:};\bOmega \sim N\left(\check{\mu}_{\tau}^{(i)}, \check{\sigma}_{\tau}^2\right).
\end{equation*}

Then, for the marginal distribution $P(\theta_i\mid \Rb_{i,:};\bOmega)$, note that by marginalizing out $\tau_i$, normalizing constants containing $\theta_i$ will contribute to the marginal posterior of $\theta_i$. In the sequel, we consider the normalizing constant "twist" $P(\theta_i;\bOmega)$, and compute the twisted prior. First, for the normalizing constant concerning $\theta_i$,
\begin{align*}
    &\int_{\RR}P(\tau_i\mid \theta_i;\bOmega)\prod_{j=1}^J P(T_{ij}\mid \tau_i;\bOmega)d\tau_i\\ &\propto \int_{\RR}\exp\left\{ -\frac{1}{2(1-\rho^2)}(\tau_i-\rho\theta_i)^2 \right\}\exp\left\{ -\sum_{j=1}^J \frac{1}{2\lambda_j}(\log t_{ij}-\omega_j+\varphi_j\tau_i)^2 \right\}d\tau_i\\
    &=\exp\left\{ -\frac{\rho^2}{2(1-\rho^2)}\theta_i^2 \right\}\exp\left\{\frac{1}{2\check{\sigma}_{\tau}^2}\left(\check{\mu}_{\tau}^{(i)}\right)^2 \right\}\int_{\RR} \exp\left\{-\frac{1}{2\check{\sigma}_{\tau}^{(i)2}} \left(\tau_i-\check{\mu}_{\tau}^{(i)}\right)^2\right\} d\tau_i\\
    &\propto \exp\left\{ -\frac{\rho^2}{2(1-\rho^2)}\theta_i^2 \right\} \exp\left\{ \frac{\rho^2}{2(1-\rho^2)^2}\left( \frac{1}{1-\rho^2} + \sum_{j=1}^J \frac{\varphi_j^2}{\lambda_j} \right)^{-1}\theta_i^2 \right.\\
    &\left.- \left( \frac{1}{1-\rho^2} + \sum_{j=1}^J \frac{\varphi_j^2}{\lambda_j} \right)^{-1}\left(\sum_{j=1}^J \frac{(\log T_{ij}-\omega_j)\varphi_j}{\lambda_j} \right)\frac{\rho}{1-\rho^2}\theta_i \right\}.
\end{align*}

Note that
\begin{align*}
    \frac{1}{1-\rho^2}\left( \frac{1}{1-\rho^2} + \sum_{j=1}^J \frac{\varphi_j^2}{\lambda_j} \right)^{-1} = \left( 1 + (1-\rho^2)\sum_{j=1}^J\frac{\varphi_j^2}{\lambda_j} \right)^{-1} \leq 1.
\end{align*}

Thus, for the normalizing constant concerning $\theta_i$, inside the exponential, it is still a quadratic form whose quadratic term has negative coefficient. Note that $P(\theta_i;\bOmega)=N(0,1)$. Denote the twisted prior of $\theta_i$ as $\tilde{\phi}(\theta_i\mid \Rb_{i,:} ;\bOmega)$, which is
\begin{align*}
    \tilde{\phi}(\theta_i\mid \Rb_{i,:}; \bOmega) &\propto \exp\left\{-\frac{1}{2}\theta_i^2\right\} \exp\left\{ -\frac{\rho^2}{2(1-\rho^2)}\theta_i^2 \right\} \exp\left\{ \frac{\rho^2}{2(1-\rho^2)^2}\left( \frac{1}{1-\rho^2} + \sum_{j=1}^J \frac{\varphi_j^2}{\lambda_j} \right)^{-1}\theta_i^2 \right.\\
    &\left.- \left( \frac{1}{1-\rho^2} + \sum_{j=1}^J \frac{\varphi_j^2}{\lambda_j} \right)^{-1}\left(\sum_{j=1}^J \frac{(\log T_{ij}-\omega_j)\varphi_j}{\lambda_j} \right)\frac{\rho}{1-\rho^2}\theta_i \right\}\\
    &\propto \exp\left\{ -\frac{1-\rho^2+\rho^2-(1/(1-\rho^2)+\sum_j\varphi_j^2/\lambda_j)^{-1}\rho^2}{2(1-\rho^2)} \theta_i^2 \right.\\
    &\left.- \left( \frac{1}{1-\rho^2} + \sum_{j=1}^J \frac{\varphi_j^2}{\lambda_j} \right)^{-1}\left(\sum_{j=1}^J \frac{(\log T_{ij}-\omega_j)\varphi_j}{\lambda_j} \right)\frac{\rho}{1-\rho^2}\theta_i  \right\}\\
    &\propto \exp\left\{ -\frac{1}{2\sigma_{\theta}^{(i)2}}(\theta_i-\mu_{\theta}^{(i)})^2 \right\},
\end{align*}
where
\begin{equation*}
    \sigma_{\theta}^2 = \biggl(\frac{1}{1-\rho^2} - \sigma_{\tau}^2\frac{\rho^2}{1-\rho^2}\biggr)^{-1},\quad \mu_{\theta}^{(i)} = \sigma_{\theta}^2 \biggl(-\sum_{j=1}^J\frac{(\log t_{ij}-\omega_j)\varphi_j}{\lambda_j^2}\biggr) \frac{\sigma_{\tau}^2 \rho}{1-\rho^2}.
\end{equation*}

\section{Proof of Identifiability}
\subsection{Proof of Proposition 1}
\label{append:prop_iden}
This proof is similar to the proof of Proposition 3.1 in \citet{fang2021identifiability}.
First, we derive the marginal distribution of $R_{ij}$ and $\log T_{ij}$ knowing $\bOmega$. Let $\varepsilon_{j}\sim N(0,1)$ independently for all $j$. Then, for $(\theta_i,\tau_i) \sim N(0,\bSigma)$,
\begin{align*}
    P(R_{ij}=1) = \EEE_{\theta_i}\left[P(R_{ij}=1\mid \theta_i)\right] &= \EEE_{\theta_i} \left[\EEE_{\varepsilon_j}\left[ \mathbbm{1}(\varepsilon_j \leq b_j+a_j\theta_i) \right]\right] \\
    &= P(\varepsilon_j\leq b_j+a_j\theta_i)\\
    &= P\left(\sqrt{a_{j}^2+1}\eta_j+b_j\geq 0\right)\\
    &= \Psi \biggl( -\frac{b_j}{\sqrt{a_j^2+1}} \biggr),
\end{align*}
where $\eta_{j} = (a_j\theta_i-\varepsilon_j)/\sqrt{a_{j}^2+1}\sim N(0,1)$ and $\Psi(x)=1-\Phi(x)$ the complementary cumulative density function for a standard normal variable.

Similarly, let $\epsilon_{j}\sim N(0,1)$ independently for all $j$,
\begin{align*}
    P\left(\log T_{ij} \geq \log t_{ij} \right) &= \EEE_{\tau_i}\left[P(\omega_j-\varphi_j\tau_i+\epsilon_j\geq \log t_{ij}\mid \tau_i)\right]\\
    &= \EEE_{\tau_i}\left[\EEE_{\epsilon_j}\left[\mathbbm{1}(\omega_j - \varphi_j\tau_i + \epsilon_j \geq \log t_{ij})\mid \tau_i\right]\right]\\
    &= P(\omega_j - \varphi_j\tau_i +\epsilon_j\geq \log t_{ij})\\
    &= P\left( \sqrt{\varphi_j^2+\lambda_j}\zeta_j + \omega_j \geq \log t_{ij} \right)\\
    &= \Psi \biggl( \frac{\log t_{ij} - \omega_j}{\sqrt{\varphi_j^2+\lambda_j}} \biggr),
\end{align*}
where $\zeta_{j} = (\epsilon_j-\varphi_j\tau_i)/\sqrt{\varphi_j^2+\lambda_j}\sim N(0,1)$.

Then, we calculate the two-component marginal distribution of $(R_{i,j_1}, R_{i,j_2})$, $(\log T_{i,j_1},\log T_{i,j_2})$, and $(R_{i,j_1},\log T_{i,j_2})$. We first compute the covariance between the following quantities.
\begin{equation*}
    \Cov{\eta_{j_1}}{\eta_{j_2}} = \Cov{\frac{a_{j_1}\theta_i-\varepsilon_{j_1}}{\sqrt{a_{j_1}^2+1}}}{\frac{a_{j_2}\theta_i-\varepsilon_{j_2}}{\sqrt{a_{j_2}^2+1}}} = \frac{a_{j_1}a_{j_2}}{\sqrt{a_{j_1}^2+1}\sqrt{a_{j_2}^2+1}}.
\end{equation*}
Similarly,
\begin{align*}
    \Cov{\zeta_{j_1}}{\zeta_{j_2}} &= \frac{\varphi_{j_1}\varphi_{j_2}}{\sqrt{\varphi_{j_1}^2+\lambda_{j_1}}\sqrt{\varphi_{j_2}^2+\lambda_{j_2}}},\\
    \Cov{\eta_{j_1}}{\zeta_{j_2}} &= -\frac{\rho a_{j_1}\varphi_{j_2}}{\sqrt{a_{j_1}^2+1}\sqrt{\varphi_{j_2}^2+\lambda_{j_2}}}.
\end{align*}

Therefore, the two-component marginal distributions are,
\begin{align*}
    P(R_{i,j_1}=1, R_{i,j_2}=1) &= P(\varepsilon_{j_1}\leq b_{j_1}+a_{j_1}\theta_i, \quad \varepsilon_{j_2}\leq b_{j_2}+a_{j_2}\theta_i)\\
    &= P\left(\sqrt{a_{j_1}^2+1} \eta_{j_1} + b_{j_1}\geq 0, \quad \sqrt{a_{j_2}^2+1} \eta_{j_2}+b_{j_2}\geq 0 \right)\\
    &= \Psi \biggl( -\frac{b_{j_1}}{\sqrt{a_{j_1}^2+1}}, - \frac{b_{j_2}}{\sqrt{a_{j_2}^2+1}}, \frac{a_{j_1}a_{j_2}}{\sqrt{a_{j_1}^2+1}\sqrt{a_{j_2}^2+1}} \biggr),
\end{align*}
where $\Psi(x_1,x_2,\rho)=P(X_1\geq x_1, X_{2}\geq x_2)$, $X_1, X_2\sim N(0,1)$ and $\Cov{X_1}{X_2}=\rho$. Similarly,
\begin{align*}
    &P(\log T_{i,j_1}\geq \log t_{i,j_1},\log T_{i,j_2}\geq \log t_{i,j_2}) \\
    &= P\left(\sqrt{\varphi_{j_1}^2+\lambda_{j_1}}\zeta_{j_1}+\omega_{j_1}\geq \log t_{i,j_1}, \sqrt{\varphi_{j_2}^2+\lambda_{j_2}}\zeta_{j_2}+\omega_{j_2}\geq \log t_{i,j_2} \right)\\
    &= \Psi\biggl( \frac{\log t_{i,j_1}-\omega_{j_1}}{\sqrt{\varphi_{j_1}^2+\lambda_{j_1}}}, \frac{\log t_{i,j_2}-\omega_{j_2}}{\sqrt{\varphi_{j_2}^2+\lambda_{j_2}}}, \frac{\varphi_{j_1}\varphi_{j_2}}{\sqrt{\varphi_{j_1}^2+\lambda_{j_1}}\sqrt{\varphi_{j_2}^2+\lambda_{j_2}}}  \biggr),
\end{align*}
\begin{align*}
    P(R_{ij}=1,\log T_{ij}\geq \log t_{ij}) &= P\left( b_{j_1}+\sqrt{a_{j_1}^2+1}\eta_{j_1} \geq 0, \sqrt{\varphi_{j_1}^2+\lambda_{j_2}} \zeta_{j_2}+\omega_{j_2}\geq \log t_{i,j_2} \right)\\
    &= \Psi\biggl(-\frac{b_{j_1}}{\sqrt{a_{j_1}^2+1}}, \frac{\log t_{i,j_2}-\omega_{j_2}}{\sqrt{\varphi_{j_2}^2+\lambda_{j_2}}}, -\frac{\rho a_{j_1}\varphi_{j_2}}{\sqrt{a_{j_1}^2+1}\sqrt{\varphi_{j_2}^2+\lambda_{j_2}}}\biggr).
\end{align*}
Following this strategy, we can write out the joint distribution of $R_{i,1},\ldots, R_{i,J}, \log T_{i,1},\ldots \log T_{i,J}$ by their pairwise covariance. For simplicity, we omit it here.

For sufficiency, suppose there are two sets of parameters $\bOmega$ and $\bOmega^{\prime}$ following conditions in Proposition 1. Then, note that the joint distribution of $\Rb_i$ and $\log \Tb_i$ only depends on $b_j/\sqrt{a_{j}^2+1}$, $\omega_j$, $\sqrt{\varphi_{j}^2+\lambda_j}$, $\Cov{\eta_{j_1}}{\eta_{j_2}} $, $\Cov{\zeta_{j_1}}{\zeta_{j_2}} $, and $\Cov{\eta_{j_1}}{\zeta_{j_2}} $. When $\bOmega$ and $\bOmega^{\prime}$ satisfy the set of conditions in Proposition 1, these quantities are the same. Therefore, $\bOmega$ and $\bOmega^{\prime}$ give rise to the same joint distribution of $\Rb_{i}$ and $\log \Tb_{i}$.

For necessity, suppose $\bOmega$ and $\bOmega^{\prime}$ give rise to the same joint distribution of $\Rb_i$ and $\log \Tb_i$. First, for one-component marginal distribution of $R_{ij}$ and $\log T_{ij}$, $\bOmega$ and $\bOmega^{\prime}$ need to satisfy,
\begin{equation*}
    \frac{b_{j}}{\sqrt{a_j^2+1}} = \frac{b_{j}^{\prime}}{\sqrt{a_j^{\prime 2}+1}},\quad \frac{\log t_{ij}-\omega_j}{\sqrt{\varphi_j^2 + \lambda_j}} = \frac{\log t_{ij}-\omega_j^{\prime}}{\sqrt{\varphi_j^{\prime 2} + \lambda_j^{\prime}}},
\end{equation*}
for all $\log t_{ij}\in \RR$. Therefore, 
\begin{equation}
    \label{eqn:mar_impli}
    \omega_j = \omega_{j}^{\prime},\quad \varphi_j^2 + \lambda_j = \varphi_j^{\prime 2} + \lambda_j^{\prime}.
\end{equation}

Then, consider the two-component marginals, $\bOmega$ and $\bOmega^{\prime}$ giving rise to the same distribution asks for the following equalities,
\begin{align*}
    \frac{a_{j_1}a_{j_2}}{\sqrt{a_{j_1}^2+1}\sqrt{a_{j_2}^2+1}} &= \frac{a_{j_1}^{\prime} a_{j_2}^{\prime}}{\sqrt{a_{j_1}^{\prime 2}+1} \sqrt{a_{j_2}^{\prime 2}+1}},\\
    \frac{\varphi_{j_1}\varphi_{j_2}}{\sqrt{\varphi_{j_1}^2 + \lambda_{j_1}} \sqrt{\varphi_{j_2}^2 + \lambda_{j_2}}} &= \frac{\varphi_{j_1}^{\prime} \varphi_{j_2}^{\prime}}{\sqrt{\varphi_{j_1}^{\prime 2} + \lambda_{j_1}^{\prime}} \sqrt{\varphi_{j_2}^{\prime 2} + \lambda_{j_2}^{\prime}}},\\
    \frac{\rho a_{j_1}\varphi_{j_2}}{\sqrt{a_{j_1}^2+1}\sqrt{\varphi_{j_2}+\lambda_{j_2}}} &= \frac{\rho^{\prime} a_{j_1}^{\prime}\varphi_{j_2}^{\prime}}{\sqrt{a_{j_1}^{\prime 2}+1}\sqrt{\varphi_{j_2}^{\prime 2}+\lambda_{j_2}^{\prime}}}.    
\end{align*}
Combining with (\ref{eqn:mar_impli}), this requires
\begin{align*}
    \frac{a_{j_1}a_{j_2}}{\sqrt{a_{j_1}^2+1}\sqrt{a_{j_2}^2+1}} = \frac{a_{j_1}^{\prime} a_{j_2}^{\prime}}{\sqrt{a_{j_1}^{\prime 2}+1} \sqrt{a_{j_2}^{\prime 2}+1}}, \quad \varphi_{j_1}\varphi_{j_2} = \varphi_{j_1}^{\prime} \varphi_{j_2}^{\prime}, \quad \frac{\rho a_{j_1}\varphi_{j_2}}{\sqrt{a_{j_1}^2+1}} = \frac{\rho^{\prime} a_{j_1}^{\prime}\varphi_{j_2}^{\prime}}{\sqrt{a_{j_1}^{\prime 2}+1}}.
\end{align*}
Therefore, conditions in Proposition 1 are necessary.

\subsection{Proof of Theorem 1}
\label{append:thm_iden}
First, we show the identifiability of the probit model part. Suppose there are two sets of parameters $\bOmega$ and $\bOmega^{\prime}$ that give rise to the same distribution for $\Rb$ and $\log \Tb$. Define $\tilde{\ba}=(\tilde{a}_1,\ldots ,\tilde{a}_J)$, where $\tilde{a}_j=a_j/\sqrt{a_j^2+1}$. Then, from Proposition 1, we have
\begin{equation*}
    \tilde{\ba}\tilde{\ba}^\top + \Sbb = \tilde{\ba}^{\prime}\tilde{\ba}^{\prime \top} + \Sbb^{\prime},
\end{equation*}
where $\Sbb=\diag\{ b_j/\sqrt{a_j^2+1}-a_j^2/(a_j^2+1) \}_{j=1}^J$ and similarly for $\Sbb^{\prime}$. 

If any row of $\tilde{\ba}$ is deleted, $\tilde{\ba}$ still ranks 1 because there are at least 2 non-zero entries in $\tilde{\ba}$. Then, from Theorem 5.1 in \citet{anderson1956statistical}, $\Sbb^{\prime}=\Sbb$ and $\tilde{\ba}\tilde{\ba}^\top=\tilde{\ba}^{\prime}\tilde{\ba}^{\prime \top}$. The diagonal entries of $\tilde{\ba}\tilde{\ba}^\top$ and $\tilde{\ba}^{\prime} \tilde{\ba}^{\prime \top}$ being equal implies
\begin{equation*}
    \frac{a_j^2}{a_j^2+1} = \frac{a_j^{\prime 2}}{a_j^{\prime 2}+1}.
\end{equation*}
Hence, $a_j^2=a_j^{\prime 2}$ for all $j\in [J]$. Combining with the definition of $\Sbb$, we have $b_j=b_j^{\prime}$ for all $j\in [J]$.

Additionally, by Lemma 5.1 in \citet{anderson1956statistical}, we have
\begin{equation*}
    \frac{a_j}{\sqrt{a_{j}^2+1}}=\frac{ca_j^{\prime}}{\sqrt{a_{j}^{\prime 2}+1}},
\end{equation*}
where $c\in \{-1,1\}$. Since $\sum_{j=1}^Ja_j>0$, $c$ can only be $1$. Therefore, $\ba=\ba^{\prime}$.

For the parameters $\bvarphi$, $\bomega$, the proof is the same as the probit model case. For $\blambda$, since $\bvarphi=\bvarphi^{\prime}$, $\varphi_j^2=\varphi_j^{\prime 2}$ for all $j\in [J]$, and thus $\lambda_j=\lambda_j^{\prime}$ for all $j\in [J]$. For $\rho$, since every other parameter is identified, following Proposition 1, $\rho=\rho^{\prime}$.

\section{Proof of APN}
\label{append:apn}
\subsection{Auxiliary Lemmas}
First, we present there the Kolmogorov's strong law of large numbers \citep{serfling2009approximation} for completeness.
\begin{theorem}[Kolmogorov's Strong Law of Large Numbers]
    \label{thm:k_slln}
    Let $\{X_i\}_{i\in\mathbb{N}}$ a sequence of independent random variables with $\EE{X_i}=\mu_i\in \mathbb{R}$ and $0<\Var{X_i}=\sigma_i^2< \infty$. If $\sum_{i=1}^{\infty}\sigma_i^2/i^2<\infty $, then almost surely
    \begin{equation*}
        \frac{1}{d}\sum_{i=1}^dX_i-\frac{1}{d}\sum_{i=1}^d\mu_i\to 0,
    \end{equation*}
    for $d\to \infty$.
\end{theorem}

Denote the probabilistic model as $P_{\xi_0}$, where $\xi_0=(\theta_0,\tau_0)$ the true value.
\begin{lemma}
    \label{lem:fix_xi}
    Let $\{R_j,T_j\}_{j\in \NN}$ be a set of data generated by fixed $\xi_0=(\theta_0,\tau_0)$. Under Assumption 1 and 2, 
    \begin{equation*}
        \limsup_{J\to \infty} \frac{1}{J}\left[ l^{(J)}(\xi\mid \Rb,\Tb)-l^{(J)}(\xi_0\mid \Rb, \Tb) \right] \leq c_1(\theta) + c_2(\tau)<0.
    \end{equation*}
\end{lemma}
\begin{proof}
    First, note that $l^{(J)}$ can be decomposed into two parts.
    \begin{align*}
        l^{(J)}(\xi\mid \Rb,\Tb) &= l^{(J)}_R(\theta) + l_T^{(J)}(\tau)\\
        &= \sum_{j=1}^J\left[R_{j}\log \Phi(a_j\theta+b_j)+(1-R_j)\log\Phi(-a_j\theta-b_j)\right]\\
        &- \sum_{j=1}^J \frac{1}{2\lambda_j}(\log T_{ij}+\varphi_j\tau-\omega_j)^2.
    \end{align*}

    Lemma W.2 in the web-appendix of \citet{kornely2022Asymptotic} shows that
    \begin{equation*}
        \limsup_{J\to \infty} \frac{1}{J}\left[ l_R^{(J)}(\theta\mid\Rb)-l_R^{(J)}(\theta_0\mid \Rb) \right]\leq c_1(\theta)<0. 
    \end{equation*}
    Then, we focus on proving $l_T^{(J)}$. Define $Z_j=[2\varphi_j(\tau-\tau_0)\log T_{ij} + \varphi_j^2(\tau^2-\tau_0)^2 - 2\varphi_j\omega_j(\tau-\tau_0)]/2\lambda_j $, then $l_T^{(J)}(\tau\mid \Tb)-l_T^{(J)}(\tau_0\mid \Tb) = \sum_{j=1}^J Z_j $. Since $\log T_{ij}\sim N(\omega_j-\varphi_j\tau_0,\lambda_j)$, we have
    \begin{equation*}
        \EE{Z_j} = -\frac{1}{2\lambda_j}\left[2\varphi_j(\tau-\tau_0)(\omega_j-\varphi_j\tau_0) + \varphi_j^2(\tau^2-\tau_0^2) - 2\varphi_j\omega_j(\tau-\tau_0)\right],
    \end{equation*}
    \begin{align*}
        \Var{Z_j} = \frac{\varphi_j^2(\tau-\tau_0)^2}{\lambda_j^2}\Var{\log T_{j}} = \frac{\varphi_j^2(\tau-\tau_0)^2}{\lambda_j} \leq (\tau-\tau_0)^2\sup_{j\in \mathbb{N}}\frac{\varphi_j^2}{\lambda_j}.
    \end{align*}

    Therefore, under Assumption 1,
    \begin{equation*}
        \sum_{j=1}^J \frac{\Var{Z_j}}{j^2} \leq (\tau-\tau_0)^2\sup_{j\in \mathbb{N}}\frac{\varphi_j^2}{\lambda_j} \sum_{j=1}^J\frac{1}{j^2} \leq \infty.
    \end{equation*}

    Then, by Kolmogorov's strong law of large numbers, we have,
    \begin{equation*}
        \frac{1}{J}\sum_{j=1}^J Z_j - \frac{1}{J}\sum_{j=1}^J\EEE_{\tau_0}[Z_j]\stackrel{a.s.}{\to} 0,\quad J\to \infty.
    \end{equation*}

    % Then, we want to show under Assumption \ref{assump:finite}, $\limsup_{J\to\infty}\sum_{j=1}^J\EEE_{\tau_0}[Z_j]/J\leq c_2(\tau) < 0$.
    % \begin{align*}
    %     \frac{1}{J}\sum_{j=1}^J\EEE_{\tau_0}[Z_j] &= -\frac{1}{J}\sum_{j=1}^J\frac{1}{2\lambda_j}\left[2\varphi_j(\tau-\tau_0)(\omega_j-\varphi_j\tau_0) + \varphi_j^2(\tau^2-\tau_0)^2 - 2\varphi_j\omega_j(\tau-\tau_0)\right]\\
    %     &= -\frac{1}{J}\sum_{j=1}^J\frac{\varphi_j^2}{2\lambda_j}(\tau-\tau_0)^2\\
    %     &\leq 
    % \end{align*}
    Under Assumption 2, $\limsup_{J\to\infty}\sum_{j=1}^J\EEE_{\tau_0}[Z_j]/J\leq c_2(\tau)$, and hence
    \begin{equation*}
        \limsup_{J\to \infty } \frac{1}{J}\left[ l^{(J)}_T(\tau\mid \Tb) - l^{(J)}_T(\tau_0\mid \Tb) \right] \leq c_2(\tau) <0. 
    \end{equation*}
\end{proof}

\begin{lemma}
    \label{lem:any_xi}
    Under Assumption 1 and 2, for any $\delta>0$, there exists a $k(\delta)<0$ so that
    \begin{equation*}
        \lim_{J\to \infty}P_{\xi_0}\left( \sup_{\xi\in \Theta\setminus B_{\delta}(\xi_0)} \frac{1}{J} \left( l^{(J)}(\xi\mid \Rb^{(J)},\Tb^{(J)}) - l^{(J)}(\xi_0\mid \Rb^{(J)},\Tb^{(J)}) \right) < k(\delta) \right) = 1.
    \end{equation*}
\end{lemma}
\begin{proof}
    Similarly as in the proof of Lemma \ref{lem:fix_xi}, the log likelihood can be decomposed into $l^{(J)}_R$ and $l_T^{(J)}$. The bound for $l^{(J)}_R$ is shown in Lemma 1 in \citet{kornely2022Asymptotic}. Here, we focus on the proving the following argument,
    \begin{equation}
        \label{eqn:any_xi_tau_target}
        \lim_{J\to \infty} P_{\tau_0}\left(\sup_{\xi \in \Theta\setminus B_\delta(\xi_0)}\frac{1}{J}\left( l_T^{(J)}(\tau\mid \Tb^{(J)}) - l_T^{(J)}(\tau_0\mid \Tb^{(J)}) \right) < k_2(\delta) \right) = 1.
    \end{equation}

    Before digging into the proof, we first show combining Lemma 1 in \citet{kornely2022Asymptotic} and (\ref{eqn:any_xi_tau_target}), we obtain the desired result. For simplicity, denote $A_J$ the series of events in Lemma 1 in \citet{kornely2022Asymptotic}, and $B_J$ the series of events in (\ref{eqn:any_xi_tau_target}). We know $\lim_{J\to \infty} P(A_J)=1 $, $\lim_{J\to\infty} P(B_J)=1$. Then, 
    \begin{equation*}
        \lim_{J\to \infty} P(A_J^{C}\cup B_J^C)\leq \lim_{J\to\infty} P(A_J^C) + \lim_{J\to \infty}P(B_J^C) = 0.
    \end{equation*}
    Therefore, $\lim_{J\to\infty}P(A_J\cap B_J)=1$, and we obtain the desired result.

    To prove (\ref{eqn:any_xi_tau_target}), we have the following decomposition. For any $\tau_i\neq \tau_0$, sufficiently small $\delta_i>0$,
    \begin{align*}
        \frac{1}{J}\left( l_T^{(J)}(\tau\mid \Tb^{(J)}) - l_T^{(J)}(\tau_0\mid \Tb^{(J)}) \right) &= \underbrace{\frac{1}{J}\left( l_T^{(J)}(\tau\mid \Tb^{(J)}) - l_T^{(J)}(\tau_i\mid \Tb^{(J)}) \right)}_{\alpha_1}\\
        &+ \underbrace{\frac{1}{J} \left( l_T^{(J)}(\tau_i\mid \Tb^{(J)}) - l_T^{(J)}(\tau_0\mid \Tb^{(J)}) \right)}_{\alpha_2}.
    \end{align*}

    For $\alpha_2$, from Lemma \ref{lem:fix_xi}, we have
    \begin{equation*}
        \limsup_{J\to\infty} \frac{1}{J} \left( l_T^{(J)}(\tau_i\mid \Tb^{(J)}) - l_T^{(J)}(\tau_0\mid \Tb^{(J)}) \right) \leq c_2(\tau_0) < 0,\quad P_{\xi_0}-a.s.
    \end{equation*}

    In the sequel, we will bound $\alpha_1$. Consider $\tau\in \bar{B}_{\delta_i}(\tau_i) $, we first bound $\sup_{\tau \in B_{\delta_i}(\tau_i)}(l_T^{(J)}(\tau\mid\Tb^{(J)})-l_T^{(J)}(\tau_i\mid\Tb^{(J)}))/J$. Define $Z_j=[2\varphi_j(\tau-\tau_i)\log T_{j}+\varphi_j^2(\tau^2-\tau_i^2)-2\varphi_j\omega_j(\tau-\tau_i)]/(2\lambda_j) $, and $l_T^{(J)}(\tau\mid\Tb^{(J)})-l_T^{(J)}(\tau_i\mid\Tb^{(J)}) = \sum_{j=1}^JZ_j$. Then, we bound $|\sum_{j=1}^JZ_j|/J$.
    \begin{align*}
        \frac{1}{J}\left|\sum_{j=1}^JZ_j\right|\leq \frac{1}{J}|\tau-\tau_i|\Biggl[ \underbrace{\left|\sum_{j=1}^J\frac{\varphi_j}{\lambda_j}(\log T_{j}-\omega_j+\varphi_j\tau) \right|}_{\beta_1} + \underbrace{|\tau-\tau_i|\left|\sum_{j=1}^J\frac{\varphi_j^2}{2\lambda_j} \right|}_{\beta_2} \Biggr].
    \end{align*}

    For $\beta_1$, note that $\varphi_j(\log T_j-\omega_j+\varphi_j\tau)/\lambda_j\sim N(0,\varphi_j^2/\lambda_j)$, and $\log T_j$s' are independent. Hence,
    \begin{equation*}
        \sum_{j=1}^J\frac{\varphi_j}{\lambda_j}(\log T_j-\omega_j+\varphi_j\tau) \sim N\left(0,\sum_{j=1}^J \frac{\varphi_j^2}{\lambda_j}\right).
    \end{equation*}
    By Assumption 1, $\sum_{j=1}^J\varphi_j^2/\lambda_j\leq CJ$.
    Then, by standard Gaussian tail bound, with probability $1-O(J^{-8})$,
    \begin{equation*}
        \beta_1 \leq 4\sqrt{CJ\log J}.
    \end{equation*}

    For $\beta_2$, by Assumption 1,
    \begin{equation*}
        \beta_2\leq \frac{CJ}{2}|\tau-\tau_i|.
    \end{equation*}
    
    Moreover, $|\tau-\tau_i|\leq \delta_i$. Thus, with probability at least $1-O(J^{-8})$,
    \begin{equation*}
        \frac{1}{J}\left|\sum_{j=1}^JZ_j\right| \leq \delta_i\left[4\sqrt{\frac{C\log J}{J}} + \frac{C}{2}\delta_i \right].
    \end{equation*}

    Therefore, 
    \begin{equation*}
        \lim_{\delta\to 0}\sup_{\xi\in \bar{B}_{\delta}(\xi_i)}\frac{1}{J}\left| l_T^{(J)}(\tau\mid\Tb^{(J)})-l_T^{(J)}(\tau_i\mid\Tb^{(J)}) \right| = 0.
    \end{equation*}
    Let $\varepsilon=-c_2(\tau_0)/2$, $\exists \delta_i>0$ and $c_i=c_2(\tau_0)/2$,
    \begin{equation*}
        \lim_{J\to\infty} P_{\bxi_0}\left( \sup_{\xi\in \bar{B}_{\delta_i}(\bxi_i)} \frac{1}{J}\left(l_T^{(J)}(\tau\mid\Tb^{(J)})-l_T^{(J)}(\tau_i\mid\Tb^{(J)})\right)<c_i<0\right) = 1.
    \end{equation*}

    Next, we first show the result assuming $\Theta$ is compact. Then, we extend the result to unbounded $\Theta$. For all $\delta>0$, $\Theta\setminus B_{\delta}(\bxi_0) $ is still compact. For each $\delta^{\prime}<\delta$, $\cup_{\bxi\in\Theta\setminus B_{\delta}(\bxi_0) } B_{\delta^{\prime}}(\bxi)$ is a cover for $\Theta\setminus B_{\delta}(\bxi_0)$. Hence, there exists a finite cover $B_{\delta}(\bxi_1), \ldots , B_{\delta}(\bxi_n) $ that form a cover of $\Theta\setminus B_{\delta}(\bxi_0) $.

    For each $B_{\delta}(\bxi_{k})$, there exists $c_k<0$, such that
    \begin{equation*}
        \lim_{J\to \infty} P_{\bxi_0}\left( \sup_{\xi\in \bar{B}_{\delta_i}(\bxi_i)} \frac{1}{J}\left(l_T^{(J)}(\tau\mid\Tb^{(J)})-l_T^{(J)}(\tau_i\mid\Tb^{(J)})\right)<c_k<0\right) = 1.
    \end{equation*}

    let $k = \max_{m\in [n]}c_m$, $\forall \bxi \in \Theta\setminus B_{\delta}(\bxi_0) $, by union bound
    \begin{align*}
        &\lim_{J\to \infty} P\left(\sup_{\bxi\in \Theta\setminus B_{\delta}(\bxi_0)} \frac{1}{J}\left(l_T^{(J)}(\tau\mid\Tb^{(J)})-l_T^{(J)}(\tau_i\mid\Tb^{(J)})\right)\geq k\right)\\&\leq \lim_{J\to \infty}\sum_{m=1}^n P\left(\sup_{\bxi\in B_{\delta^{\prime}}(\bxi_m)} \frac{1}{J}\left(l_T^{(J)}(\tau\mid\Tb^{(J)})-l_T^{(J)}(\tau_i\mid\Tb^{(J)})\right)\geq k\right)\\
        &= 0.
    \end{align*}

    Hence, for every compact $\Theta$, we have the desired result. Then, we extend the result to unbounded $\Theta$. Define $\Theta^{(j)}=\{ (\theta,\tau)\in\Theta: \delta+j\leq |\theta-\theta_0|\leq \delta+j+1, \quad \delta+j\leq|\tau-\tau_0|\leq \delta+j+1 \}$, $j\in \mathbb{N}$. Each $\Theta^{(j)}$ is compact and enjoys the above property. Recall the definition of $k\leq \sup_{\bxi\in\Theta\setminus B_{\delta}(\bxi_0)}c_2(\tau)/2$. Hence, let $k_j=\sup_{\bxi\in \Theta^{(j)}}c_2(\tau)/2 $, we have
    \begin{align*}
        \sup_{\bxi\in \Theta\setminus B_{\delta}(\bxi_0)} \frac{1}{J}\left(l_T^{(J)}(\tau\mid\Tb^{(J)})-l_T^{(J)}(\tau_i\mid\Tb^{(J)})\right) &= \sup_{j\in \mathbb{N}}\left( \sup_{\bxi\in \Theta^{(j)}} \frac{1}{J}\left(l_T^{(J)}(\tau\mid\Tb^{(J)})-l_T^{(J)}(\tau_i\mid\Tb^{(J)})\right)\right)\\
        \sup_{j\in \mathbb{N}} k_j &\leq \sup_{\bxi\in\Theta\setminus B_{\delta}(\bxi_0)}c_2(\tau)/2:=k_2(\delta).
    \end{align*}

    Therefore,
    \begin{align*}
        &\lim_{J\to \infty} P_{\bxi_0}\left(\sup_{\bxi \in \Theta\setminus B_\delta(\bxi_0)}\frac{1}{J}\left( l_T^{(J)}(\tau\mid \Tb^{(J)}) - l_T^{(J)}(\tau_0\mid \Tb^{(J)}) \right) < k_2(\delta) \right)\\
        &\geq \lim_{J\to\infty} P_{\bxi_0}\left(\sup_{j\in \mathbb{N}}\left( \sup_{\bxi\in \Theta^{(j)}} \frac{1}{J}\left(l_T^{(J)}(\tau\mid\Tb^{(J)})-l_T^{(J)}(\tau_i\mid\Tb^{(J)})\right)\right)\leq \sup_{j\in\NN}k_j\right)=1.
    \end{align*}
    \end{proof}

    \begin{lemma}
    \label{lem:consistency}
        \begin{enumerate}
            \item[(1)] There exists $\hat{\bxi}=(\hat{\theta},\hat{\tau})$ such that 
            \begin{align}
                \label{eqn:MLE_cond}
                \lim_{J\to \infty} P_{\bxi_0}\left( \nabla l^{(J)}(\hat{\theta}, \hat{\tau}\mid \Rb^{(J)}, \Tb^{(J)})=0 \right) &= 1,\\
                \label{eqn:MLE_reach}
                \lim_{J\to\infty}P_{\bxi_0}\left( l^{(J)}(\hat{\theta},\hat{\tau}\mid \Rb^{(J)},\Tb^{(J)}) = \max_{\bxi\in\Theta}l^{(J)}(\bxi\mid\Rb^{(J)},\Tb^{(J)}) \right)&=1,\\
                \label{eqn:consistent}
                (\hat{\theta},\hat{\tau}) &\stackrel{p}{\to} (\theta,\tau), \quad J\to \infty.
            \end{align}
            \item[(2)] There exists $\tilde{\bxi}=(\tilde{\theta},\tilde{\tau})$ such that when the log likelihood $l$ is replaced by log posterior $\tilde{l}$, the above result still holds.
        \end{enumerate}
    \end{lemma}
    \begin{proof}
        \textbf{(1)} First, we show the existence of such solution. 
        % From Theorem 5(i) in \citet{kornely2022Asymptotic}, we have
        % \begin{equation*}
        %     \lim_{J\to\infty} P_{\bxi_0}\left( \frac{1}{J}l_R^{(J)}(\theta\mid \Rb^{(J)}) < \frac{1}{J}l_R^{(J)}(\theta_0\mid \Rb^{(J)}) \right) = 1,\quad \forall \bxi \in \Theta\setminus B_{\delta}(\bxi_0).
        % \end{equation*}
        % Even though here we expand $\theta$ to $(\theta,\tau)$, in the likelihood, $\theta$ and $\tau$ can be fully separated. Hence, the result still holds.

        % Then, we want to show
        % \begin{equation*}
        %     \lim_{J\to \infty} P_{\bxi_0}\left( \frac{1}{J}l_T^{(J)}(\tau\mid \Tb^{(J)}) < \frac{1}{J} l_T^{(J)}(\tau\mid \Tb^{(J)}) \right) = 1,\quad \forall \bxi \in \Theta\setminus B_{\delta}(\bxi_0).
        % \end{equation*}
        Define
        \begin{equation*}
            A_{\delta,\varepsilon,J} = \left\{ \sup_{\bxi:\|\bxi-\bxi_0\|\geq \delta} \frac{1}{J}\left( l^{(J)}(\bxi\mid \Rb^{(J)}, \Tb^{(J)})- l^{(J)}(\bxi_0\mid \Rb^{(J)}, \Tb^{(J)}) \right) < \varepsilon \right\}.
        \end{equation*}
        By Lemma \ref{lem:any_xi}, $\lim_{J\to\infty}P_{\bxi_0}(A_{\delta,\varepsilon,J})=1$, for all $\delta > 0$ and $\varepsilon>0$. Given $A_{\delta,\varepsilon,J} $, the global minimum of the log likelihood must lie in $B_{\delta}(\bxi_0)$. Next, we construct a measurable mapping from $\left(\{0,1\}^J\times \RR^{J},\text{Pow}(\{0,1\}^J)\otimes \cB(\RR^J)\right) \to (\Theta, \cB(\Theta)) $. $\text{Pow}(\{0,1\}^J)$ denotes the power set of $\{0,1\}^J$. 

        Note that $l^{(J)}(\cdot \mid \Rb^{(J)}, \Tb^{(J)}) $ is continuous for every fixed $\Rb^{(J)} $, $\Tb^{(J)}$, and $l^{(J)}(\bxi\mid \cdot) $ is continuous for every fixed $\bxi$. Let $\Theta_{\delta}=\bar{B}_{\delta}(\bxi_0)\cap \Theta$. For simplicity, we assume $\Theta_{\delta}$ is compact. If $\Theta$ is unbounded, similar techniques as in the proof of Lemma \ref{lem:any_xi} can be applied similarly, and we omit it here.

        By continuity, there exists $\bxi^*$ such that,
        \begin{equation*}
            l^{(J)}(\bxi^*\mid \Rb^{(J)},\Tb^{(J)})=\sup_{\bxi \in \Theta_{\delta}} l^{(J)}(\bxi \mid \Rb^{(J)},\Tb^{(J)}).
        \end{equation*}
        Then by Lemma W.3 in \citet{kornely2022Asymptotic}, there exists a measurable mapping $\check{\boldsymbol{\xi}}_J$, such that $\bxi^*=\check{\boldsymbol{\xi}}_J(\Rb^{(J)},\Tb^{(J)})$. By Lemma \ref{lem:any_xi}, let $\hat{\bxi}_J = \check{\bxi}_{J}(\Rb^{(J)},\Tb^{(J)})$, we have a sequence $\hat{\bxi}_J$ that satisfies (\ref{eqn:MLE_cond}) and (\ref{eqn:MLE_reach}).

        For (\ref{eqn:consistent}), we prove by contradiction. Suppose $\hat{\bxi}_J$ is not consistent. There exists $\varepsilon_0>0$, for all $\delta_0>0$, $\forall J\in \NN$, $P(\|\hat{\bxi}_J-\bxi_0\|>\delta_0) \geq \varepsilon_0$. Let $\tilde{\delta} = \delta_0/2$, from Lemma \ref{lem:any_xi}, we have
        \begin{equation*}
            \lim_{J\to \infty} P\left( \sup_{\bxi\in\Theta \setminus B_{\tilde{\delta}}(\bxi_0)} \frac{1}{J}\left( l^{(J)}(\bxi\mid \Rb^{(J)},\Tb^{(J)}) - l^{(J)}(\bxi_0\mid \Rb^{(J)}, \Tb^{(J)})\right) < c(\tilde{\delta})<0 \right) = 1.
        \end{equation*}

        Let
        \begin{equation*}
            A_{J,\tilde{\delta}}=\left\{ \frac{1}{J}\left( l^{(J)}(\bxi\mid \Rb^{(J)},\Tb^{(J)}) - l^{(J)}(\bxi_0\mid \Rb^{(J)}, \Tb^{(J)}) \right)< c(\tilde{\delta})<0 \right\}.
        \end{equation*}
        For all $\tilde{\varepsilon}$, there exists $J_0>0$, $\forall J>J_0$, $P(A_{J,\tilde{\delta}})>1-\tilde{\varepsilon}$. Let $\tilde{\varepsilon} = \varepsilon_0/2$. Since $A_{J,\tilde{\delta}}\cap \{\|\bxi^*_J-\bxi_0\| > \delta_0\}=\emptyset$, $\{\|\bxi^*_J-\bxi_0\| > \delta_0\}\subseteq A_{J,\delta_0}^c$. Thus,
        \begin{equation*}
            P\left( \|\hat{\bxi}_J-\bxi_0\| > \delta_0 \right) \leq P(A_{J,\tilde{\delta}}^c) < \varepsilon_0,
        \end{equation*}
        for all $J>J_0$. There is a contradiction and $\hat{\bxi}_J$ is consistent.

        \textbf{(2)} The key difference for $\tilde{\bxi}_J$ to satisfy (\ref{eqn:MLE_cond})-(\ref{eqn:consistent}) is to show an equivalent version of Lemma \ref{lem:any_xi} for $\tilde{l}^{(J)}$. For $\tilde{l}^{(J)} $, there is the following decomposition,
        \begin{align*}
            &\frac{1}{J}\left[ \tilde{l}^{(J)}(\bxi\mid \Rb^{(J)}, \Tb^{(J)})-\tilde{l}^{(J)}(\bxi_0\mid \Rb^{(J)}, \Tb^{(J)}) \right]\\ 
            &= \underbrace{\frac{1}{J}\left[ {l}^{(J)}(\bxi\mid \Rb^{(J)}, \Tb^{(J)})-{l}^{(J)}(\bxi_0\mid \Rb^{(J)}, \Tb^{(J)}) \right]}_{\alpha_1} + \underbrace{\frac{1}{J}\left[ -\frac{1}{2}\bxi^\top\bSigma^{-1}\bxi + \frac{1}{2}\bxi_0^\top \bSigma^{-1}\bxi_0 \right]}_{\alpha_2} .
        \end{align*}

        We have shown $\alpha_1$ in Lemma \ref{lem:any_xi}. For $\alpha_2$, since $\bSigma$ is positive definite, $-\bxi^\top\bSigma^{-1}\bxi/2\leq 0 $. Because $\bxi_0$ and $\bSigma$ are constants, $\forall \tilde{\varepsilon} > 0$, $\exists J_0>0$, for $\forall J > J_0$,
        \begin{equation*}
            \sup_{\bxi \in \Theta\setminus B_{\delta}(\bxi_0)}\frac{1}{J}\left[ -\frac{1}{2}\bxi^\top\bSigma^{-1}\bxi + \frac{1}{2}\bxi_0^\top \bSigma^{-1}\bxi_0 \right] \leq \tilde{\varepsilon}.
        \end{equation*}

        Let $\tilde{\varepsilon}=k(\delta)/2$, $\tilde{k}(\delta)=k(\delta)/2$, then
        \begin{equation*}
            \lim_{J\to \infty}P_{\bxi_0}\left( \sup_{\xi\in \Theta\setminus B_{\delta}(\xi_0)} \frac{1}{J} \left( \tilde{l}^{(J)}(\bxi\mid \Rb^{(J)},\Tb^{(J)}) - \tilde{l}^{(J)}(\bxi_0\mid \Rb^{(J)},\Tb^{(J)}) \right) < \tilde{k}(\delta) \right) = 1.
        \end{equation*}

        The following proof is the same as the proof in (1).
    \end{proof}

    \begin{lemma}
        \label{lem:taylor_exp}
        \begin{enumerate}
            \item[1.] For any $\bxi\in \Theta$, there exists $\{a_J\}_{J\in \NN}$, $a_J\in [0,1] $, such that
            \begin{align*}
                &\tilde{l}^{(J)}(\bxi\mid \Rb^{(J)},\Tb^{(J)}) - \tilde{l}^{(J)}(\tilde{\bxi}_J\mid \Rb^{(J)}, \Tb^{(J)})\\ &= \frac{1}{2}\left( \bxi - \tilde{\bxi}_J \right)^{\top} \tilde{\Hb}_J(\bxi_J^*)\left( \bxi - \tilde{\bxi}_J \right)\\
                &= -\frac{1}{2}\left( \bxi - \tilde{\bxi}_J \right)^{\top} \left[ \cI_{J}(\tilde{\bxi}_J)(\Ib_2-E_J(\bxi))+\bSigma^{-1} \right]\left( \bxi - \tilde{\bxi}_J \right),
            \end{align*}
            where $\bxi_J^*=a_J\tilde{\bxi}_J+(1-a_J)\bxi$, $\tilde{\Hb}_J$ is the Hessian of the log posterior, $E_J = \Ib_K + \cI_{J}(\tilde{\bxi}_J)^{-1}\Hb_{J}(\bxi_J^*)$, and $\Hb_J$ is the Hessian of the log likelihood.
            \item[2.] For any $\varepsilon>0$, there is $\delta>0$, such that
            \begin{equation*}
                \lim_{J\to \infty} P_{\bxi_0}\left( \sup_{\bxi\in B_{\delta}(\bxi_0)}\|E_J(\bxi)\|<\varepsilon \right) = 1.
            \end{equation*}
            \item[3.] $\forall \varepsilon>0$, $\exists \delta>0$, for all $\bxi \in B_{\delta}(\bxi_0) $,
            \begin{align*}
                \lim_{J\to \infty} P_{\bxi_0}\biggl( (1+\varepsilon) \tilde{V}_J(\bxi)&\leq -\frac{1}{2}\left( \bxi - \tilde{\bxi}_J \right)^{\top} \left[ \cI_{J}(\tilde{\bxi}_J)(\Ib_2-E_J(\bxi))+\bSigma^{-1} \right]\left( \bxi - \tilde{\bxi}_J \right)\\
                &\leq (1-\varepsilon)\tilde{V}_J(\bxi) \biggr) = 1,
            \end{align*}
            where $\tilde{V}_J(\bxi) = -\frac{1}{2}\left( \bxi - \tilde{\bxi}_J \right)^{\top} \tilde{\cI}_{J}(\tilde{\bxi}_J)\left( \bxi - \tilde{\bxi}_J \right)$.
        \end{enumerate}
    \end{lemma}
    \begin{proof}
        \textbf{(1)} The inequality directly comes from Taylor Expansion with Cauchy form of the remainder. We omit the detailed algebraic computation here.

        \textbf{(2)} First, since 
        \begin{equation*}
            \frac{\partial^2 l^{(J)}}{\partial \theta \partial \tau} = 0,
        \end{equation*}
        the Hessian of the log likelihood is a diagonal matrix. Hence, the Fisher information of the log likelihood is also a diagonal matrix. By Assumption 3, because the diagonal entries are both greater than 0, the Fisher information of the log likelihood is full rank of 2. Additionally, since $\|\cI_J(\tilde{\bxi}_J)/J\|^{-1} = 1/\sigma_{\min}(\cI_J(\tilde{\bxi}_J)/J)$, there exists constant $C_0>0$, such that $\|\cI_J(\tilde{\bxi}_J)/J\|^{-1}\leq 1/C_0$, when $J$ is sufficiently large due to consistency of $\tilde{\bxi}_J$ shown in Lemma \ref{lem:consistency}.

        Hence,
        \begin{align*}
            \|E_J(\bxi)\| &= \left\| \left(\frac{1}{J}\cI_J(\tilde{\bxi}_J)\right)^{-1} \frac{1}{J}\left(\cI_J(\tilde{\bxi}_J)+\nabla^2l^{(J)}(\bxi\mid \Rb^{(J)}, \Tb^{(J)}) \right) \right\| \\
            &\leq \left\|\frac{1}{J}\cI_J(\tilde{\bxi}_J)\right\|^{-1} \left\| \frac{1}{J}\left(\cI_J(\tilde{\bxi}_J)+\nabla^2l^{(J)}(\bxi\mid \Rb^{(J)}, \Tb^{(J)}) \right) \right\|\\
            &\leq \frac{1}{C_0}\left\| \frac{1}{J}\left(\cI_J(\tilde{\bxi}_J)+\nabla^2l^{(J)}(\bxi\mid \Rb^{(J)}, \Tb^{(J)}) \right) \right\|\\
            &\leq \frac{1}{C_0}\max\left\{\alpha_1,\alpha_2\right\},
        \end{align*}
        where
        \begin{align*}
            \alpha_1 &= \frac{1}{J}\left[\sum_{j=1}^J \frac{a_j^2\phi(a_j\tilde{\theta}+b_j)}{\Phi(a_j\tilde{\theta}+b_j)[1-\Phi(a_j\tilde{\theta}+b_j)]} - \sum_{j=1}^J \frac{a_j^2\phi(a_j{\theta}+b_j)}{\Phi(a_j{\theta}+b_j)[1-\Phi(a_j{\theta}+b_j)]}\right]\\
            \alpha_2 &= \frac{1}{J}\left[ \sum_{j=1}^J\frac{\varphi_j^2}{\lambda_j} - \sum_{j=1}^J\frac{\varphi_j^2}{\lambda_j} \right] = 0.
        \end{align*}
        The last inequality utilizes that both the Fisher information $\cI_J(\tilde{\bxi}_J)$ and $\nabla^2l^{(J)}(\bxi\mid \Rb^{(J)},\Tb^{(J)})$ are diagonal matrices.

        Since $\alpha_2=0$, we only need to bound $\alpha_1$. Following the same proof of Lemma 2 in \citet{kornely2022Asymptotic}, for any $\varepsilon>0$, when $J$ is sufficiently large, there exists $\delta>0$, such that
        \begin{equation*}
            \lim_{J\to\infty} P_{\bxi_0}\left( \sup_{\bxi \in B_{\delta}(\bxi_0)} \frac{1}{C_0}\alpha_1 < \varepsilon \right) = 1.
        \end{equation*}

        Therefore, the second part of Lemma \ref{lem:taylor_exp} is proven.

        \textbf{(3)} For the result in the third part, we first show $\forall \varepsilon >0$, $\exists \delta > 0$, such that,
        \begin{equation*}
            \lim_{J\to \infty} P_{\bxi_0}\left( \left| (\bxi-\tilde{\bxi}_J)^\top\cI_J(\tilde{\bxi}_J)E_J(\bxi)(\bxi-\tilde{\bxi}_J) \right|\leq -2\varepsilon V_J(\bxi) \right) = 1,
        \end{equation*}
        where ${V}_J(\bxi) = -\frac{1}{2}( \bxi - \tilde{\bxi}_J )^{\top} {\cI}_{J}(\tilde{\bxi}_J)( \bxi - \tilde{\bxi}_J )$ 
        % \textcolor{red}{(should be some kind of MLE variant)}.

        By Lemma W.5 in \citet{kornely2022Asymptotic},
        \begin{equation*}
            \left|  \frac{1}{2}(\bxi-\tilde{\bxi}_J)^\top\cI_J(\tilde{\bxi}_J)E_J(\bxi)(\bxi-\tilde{\bxi}_J) \right| \leq -\kappa\bigl(\cI_J(\tilde{\bxi}_J)\bigr) \|E_J(\bxi)\|V_J(\bxi).
        \end{equation*}

        By Assumption 3 and Lemma \ref{lem:consistency} and continuous mapping theorem, there exists $C_1^{\prime}$ such that
        \begin{equation*}
            P_{\bxi_0}\left( \limsup_{J\to \infty} \kappa\bigl(\cI_J(\tilde{\bxi}_J)\bigr) \leq C_1^{\prime} \right) = 1.
        \end{equation*}

        Additionally, from (2) of this Lemma, $\|E_J(\bxi)\|$ converges to 0 in probability. Hence,
        \begin{equation*}
            % \label{eqn:bound_1}
            \lim_{J\to \infty} P_{\bxi_0}\left( \left| (\bxi-\tilde{\bxi}_J)^\top\cI_J(\tilde{\bxi}_J)E_J(\bxi)(\bxi-\tilde{\bxi}_J) \right|\leq -2\varepsilon V_J(\bxi) \right) = 1.
        \end{equation*}

        Therefore, under $\{ | (\bxi-\tilde{\bxi}_J)^\top\cI_J(\tilde{\bxi}_J)E_J(\bxi)(\bxi-\tilde{\bxi}_J) |\leq -2\varepsilon V_J(\bxi)\}$,
        \begin{align*}
            &-\frac{1}{2}\left( \bxi - \tilde{\bxi}_J \right)^{\top} \left[ \cI_{J}(\tilde{\bxi}_J)(\Ib_2-E_J(\bxi))+\bSigma^{-1} \right]\left( \bxi - \tilde{\bxi}_J \right)\\
            &\leq V_J(\bxi)+ \frac{1}{2}\left(\bxi-\tilde{\bxi}_J\right)^\top \cI_{J}(\tilde{\bxi}_J)E_J(\bxi)\left(\bxi-\tilde{\bxi}_J\right) - \frac{1}{2}\left( \bxi-\tilde{\bxi}_J \right)^{\top} \bSigma^{-1} \left( \bxi - \tilde{\bxi}_J \right)\\
            &\leq (1-\varepsilon)V_J(\bxi) - \frac{1}{2}\left( \bxi-\tilde{\bxi}_J \right)^{\top} \bSigma^{-1} \left( \bxi - \tilde{\bxi}_J \right).
        \end{align*}
        
        Since $\bSigma$ is positive definite,
        \begin{equation*}
            (1-\varepsilon)\left( \bxi-\tilde{\bxi}_J \right)^{\top} \bSigma^{-1} \left( \bxi - \tilde{\bxi}_J \right) \leq \left( \bxi-\tilde{\bxi}_J \right)^{\top} \bSigma^{-1} \left( \bxi - \tilde{\bxi}_J \right) \leq (1+\varepsilon)\left( \bxi-\tilde{\bxi}_J \right)^{\top} \bSigma^{-1} \left( \bxi - \tilde{\bxi}_J \right).
        \end{equation*}
        Also, note that
        \begin{equation*}
            \tilde{V}_J(\bxi) = V_J(\bxi) - \frac{1}{2}\left( \bxi-\tilde{\bxi}_J \right)^{\top} \bSigma^{-1} \left( \bxi - \tilde{\bxi}_J \right).
        \end{equation*}

        Therefore,
        \begin{align*}
            (1-\varepsilon)\tilde{V}_J(\bxi) &\geq (1-\varepsilon)V_J(\bxi) - \frac{1}{2}\left( \bxi-\tilde{\bxi}_J \right)^{\top} \bSigma^{-1} \left( \bxi - \tilde{\bxi}_J \right),\\
            (1+\varepsilon)\tilde{V}_J(\bxi) &\leq (1+\varepsilon)V_J(\bxi) - \frac{1}{2}\left( \bxi-\tilde{\bxi}_J \right)^{\top} \bSigma^{-1} \left( \bxi - \tilde{\bxi}_J \right).
        \end{align*}
        
        Hence,
        \begin{align*}
            -\frac{1}{2}\left( \bxi - \tilde{\bxi}_J \right)^{\top} \left[ \cI_{J}(\tilde{\bxi}_J)(\cI_2-E_J(\bxi))+\bSigma^{-1} \right]\left( \bxi - \tilde{\bxi}_J \right) \leq (1-\varepsilon)\tilde{V}_J(\bxi).
        \end{align*}
        The other side of the inequality holds similarly.
    \end{proof}

    \begin{lemma}
        \label{lem:converge}
        Let $\tilde{\Phi}(B) = P(Z\in B)$, where $Z\sim N(0,\cI)$. Under Assumption 1, 2, 3
        \begin{enumerate}
            \item[1.] For every function $f$ that the integral $\int_{\Theta}f(\bxi)\pi(\bxi)d\bxi$ exists, for every $\delta > 0$, we have,
            \begin{equation*}
                \frac{\int_{\Theta\setminus B_{\delta}(\bxi_0)}f(\bxi)P^{(J)}(\Rb^{J}, \Tb^{(J)}\mid \bxi)\pi(\bxi)d\bxi }{P^{(J)}(\Rb^{(J)},\Tb^{(J)}\mid \tilde{\bxi}_J)}\det(\tilde{\cI}_J(\tilde{\bxi}_J))^{1/2} \stackrel{P_{\bxi_0}}{\to} 0, \quad J\to \infty.
            \end{equation*}
            \item[2.] Consider a sequence of mappings $\{G_J\}_{J\in \NN} $, $G_J: \bigl(\Theta,\cB(\Theta)\bigr)\to \bigl(\Theta, \cB(\Theta)\bigr) $ satisfying either of the following condition
            \begin{align}
                \label{eqn:cond_1}
                \lim_{J\to \infty}P_{\bxi_0}\left( G_J(B) \subseteq B_{\delta}(\bxi_0) \right) &= 1, \quad \forall \delta > 0,\\
                \label{eqn:cond_2}
                \lim_{J\to \infty}P_{\bxi_0}\left( G_J(B) \supseteq B_{\delta}(\bxi_0) \right) &= 1, \quad \forall \delta>0,
            \end{align}
            for all bounded $B\in \cB(\Theta)$. Then,
            \begin{equation*}
                \frac{\int_{G_J(B)} P^{(J)}( \Rb^{(J)}, \Tb^{(J)}\mid \bxi ) \pi(\bxi) d\bxi }{P^{(J)}(\Rb^{(J)},\Tb^{(J)}\mid \tilde{\bxi}_J)} \det(\tilde{\cI}_J(\tilde{\bxi}_J))^{1/2} - \tilde{\Phi}\left( \tilde{\cI}(\tilde{\bxi}_J)^{1/2}(G_J(B)-\tilde{\bxi}_J) \right)\pi(\bxi_0)(2\pi) = o_{P_{\bxi_0}}(1).
            \end{equation*}
        \end{enumerate}
    \end{lemma}
    \begin{proof}
        \textbf{(1)} First, note that
        \begin{align*}
            &\frac{\int_{\Theta\setminus B_{\delta}(\bxi_0)} f(\bxi)P^{(J)}(\Rb^{(J)}, \Tb^{(J)} \mid \bxi) \pi(\bxi)d\bxi }{\pi(\tilde{\bxi}_J)P^{(J)}(\Rb^{(J)}, \Tb^{(J)}\mid \tilde{\bxi}_J)}\det(\tilde{\cI}_J(\tilde{\bxi}_J))^{1/2}\\
            &= \exp \left( \tilde{l}^{(J)}(\bxi_0\mid \Rb^{(J)}, \Tb^{(J)}) - \tilde{l}^{(J)}(\tilde{\bxi}_J\mid \Rb^{(J)},\Tb^{(J)}) \right) \tilde{L}_J \det(\tilde{\cI}_J(\tilde{\bxi}_J))^{1/2},
        \end{align*}
        where
        \begin{equation*}
            \tilde{L}_J = \int_{\Theta\setminus B_{\delta}(\bxi_0)}\exp \left( \tilde{l}^{(J)}(\bxi_0\mid \Rb^{(J)}, \Tb^{(J)}) - \tilde{l}^{(J)}(\tilde{\bxi}_J\mid \Rb^{(J)},\Tb^{(J)}) \right) f(\bxi) d\bxi.
        \end{equation*}

        Since $\tilde{\bxi}_J$ is a maximum of $\tilde{l}^{(J)} $, one has
        \begin{equation*}
            \exp \left( \tilde{l}^{(J)}(\bxi_0\mid \Rb^{(J)}, \Tb^{(J)}) - \tilde{l}^{(J)}(\tilde{\bxi}_J\mid \Rb^{(J)},\Tb^{(J)}) \right) \leq 1.
        \end{equation*}

        Hence,
        \begin{equation*}
            \left|\frac{\int_{\Theta\setminus B_{\delta}(\bxi_0)} f(\bxi)P^{(J)}(\Rb^{(J)}, \Tb^{(J)} \mid \bxi) \pi(\bxi)d\bxi }{\pi(\tilde{\bxi}_J)P^{(J)}(\Rb^{(J)}, \Tb^{(J)}\mid \tilde{\bxi}_J)}\det(\tilde{\cI}_J(\tilde{\bxi}_J))^{1/2}\right| \leq \left|\tilde{L}_J\det(\tilde{\cI}_J(\tilde{\bxi}_J))^{1/2}\right|.
        \end{equation*}

        For the determinant,
        \begin{align*}
            \det(\tilde{\cI}_J(\tilde{\bxi}_J))^{1/2} &= \sqrt{\det\left(\cI_J(\tilde{\bxi}_J) + \bSigma^{-1}\right)}\\
            &= J\sqrt{\det\left(\frac{1}{J}\cI_J(\tilde{\bxi}_J) + \frac{1}{J}\bSigma^{-1}\right) }\\
            &\leq J \sigma_1\left(\frac{1}{J}\cI_J(\tilde{\bxi}_J) + \frac{1}{J}\bSigma^{-1}\right)\\
            &\leq J\left[\sigma_1\left(\frac{1}{J}\cI_J(\tilde{\bxi}_J)\right)+\sigma_1\left( \frac{1}{J}\bSigma^{-1} \right)\right],
        \end{align*}
        where the last inequality comes from Weyl's inequality. By Assumption 1, $\sigma_1(\cI_J(\tilde{\bxi}_J)/J)$ is bounded by some constant. Since $\bSigma$ is a constant, $\sigma_1(\bSigma^{-1}/J)=O(1/J)$. Hence,
        \begin{equation*}
            \det(\tilde{\cI}_J(\tilde{\bxi}_J))^{1/2} = O_{P_{\bxi_0}}(J).
        \end{equation*}

        Since $\pi(\bxi)$ is proper and has support over $\Theta$,
        \begin{equation*}
            \frac{1}{P^{(J)}\left(\Rb^{(J)}, \Tb^{(J)}\mid \bxi_0\right)} \int_{\Theta\setminus B_{\delta}(\bxi_0)} P^{(J)}(\Rb^{(J)},\Tb^{(J)}\mid \bxi)\pi(\bxi)d\bxi = o_{P_{\bxi_0}}(J^{-1}),\quad \forall \delta > 0.
        \end{equation*}
        For detailed discussion, one can refer to Equation (28) in \citet{kornely2022Asymptotic}. Additionally, suppose there exists constant $C_f>0$, $|f(\bxi)|<C_f$ for $\bxi \in \Theta$ almost everywhere. Then,
        \begin{equation*}
            \tilde{L}_J \leq C_f\left| \frac{\int_{\Theta\setminus B_{\delta}(\bxi_0)}P^{(J)}\left(\Rb^{(J)}, \Tb^{(J)}\mid \bxi \right)\pi(\bxi)d\bxi}{\pi(\bxi_0)P^{(J)}(\Rb^{(J)},\Tb^{(J)}\mid\bxi_0)} \right| = o_{P_{\bxi_0}}(J^{-1}).
        \end{equation*}

        Therefore,
        \begin{equation*}
            \left|\tilde{L}_J\det(\tilde{\cI}_J(\tilde{\bxi}_J))^{1/2} \right| = o_{P_{\bxi_0}}(1).
        \end{equation*}

        \textbf{(2)}
        Let $M_{\delta,J}=B_{\delta}(\bxi_0)$, $U_J=\int_{M_{\delta,J}}P^{(J)}\left(\Rb^{(J)},\Tb^{(J)}\mid \bxi\right)\pi(\bxi)d\bxi $.
        \begin{align*}
            &\frac{U_J\det(\tilde{\cI}_J(\tilde{\bxi}_J))^{1/2}}{\pi(\tilde{\bxi}_J)P^{(J)}(\Rb^{(J)},\Tb^{(J)}\mid \tilde{\bxi}_J)}\\
            &= \frac{\det(\tilde{\cI}_J(\tilde{\bxi}_J))^{1/2}}{\pi(\tilde{\bxi}_J)}\int_{M_{\delta,J}}\exp\left( -\frac{1}{2} \left(\bxi-\tilde{\bxi}_J\right)^\top \left( \cI_J(\tilde{\bxi}_J)(\Ib_2 - E_J(\bxi)) + \bSigma^{-1} \right) \left(\bxi-\tilde{\bxi}_J\right) \right)d\bxi.
        \end{align*}

        From Lemma \ref{lem:consistency}, for any $\varepsilon > 0$,
        \begin{align*}
            &(1-o_{P_{\bxi_0}}(1))\int_{M_{\delta,J}}\exp\left(-\frac{\varepsilon+1}{2}\left(\bxi-\tilde{\bxi}_J\right)^\top\tilde{\cI}_J(\tilde{\bxi}_J)\left(\bxi-\tilde{\bxi}_J\right)\right)d\bxi\\
            &\leq \int_{M_{\delta,J}}\exp\left( -\frac{1}{2} \left(\bxi-\tilde{\bxi}_J\right)^\top \left( \cI_J(\tilde{\bxi}_J)(\Ib_2 - E_J(\bxi)) + \bSigma^{-1} \right) \left(\bxi-\tilde{\bxi}_J\right) \right)d\bxi\\
            &\leq (1+o_{P_{\bxi_0}}(1))\int_{M_{\delta,J}}\exp\left(-\frac{1-\varepsilon}{2}\left(\bxi-\tilde{\bxi}_J\right)^\top\tilde{\cI}_J(\tilde{\bxi}_J)\left(\bxi-\tilde{\bxi}_J\right)\right)d\bxi.
        \end{align*}

        Therefore,
        \begin{align*}
            &(1-o_{P_{\bxi_0}}(1))\tilde{\Phi}\left( \sqrt{1+\varepsilon} \tilde{\cI}_J(\tilde{\bxi}_J)^{1/2}(M_{\delta,J}-\tilde{\bxi}_J) \right)(2\pi)\frac{\pi(\tilde{\bxi}_J)}{1+\varepsilon}\\
            &\leq \frac{U_J\det(\tilde{\cI}_J(\tilde{\bxi}_J))^{1/2}}{P^{(J)}(\Rb^{(J)},\Tb^{(J)}\mid\tilde{\bxi}_J)}\\
            &\leq (1+o_{P_{\bxi_0}}(1))\tilde{\Phi}\left( \sqrt{1-\varepsilon} \tilde{\cI}_J(\tilde{\bxi}_J)^{1/2}(M_{\delta,J}-\tilde{\bxi}_J) \right)(2\pi)\frac{\pi(\tilde{\bxi}_J)}{1-\varepsilon}.
        \end{align*}
        By setting $\varepsilon$ arbitrarily small and continuous mapping theorem from the consistency of $\tilde{\bxi}_J$, the desired result is obtained.
    \end{proof}

    \begin{corollary}
        \label{coro:converge}
        Suppose a sequence $\{R_j,T_j\}_{j\in \NN}$ generated from a fixed $\bxi_0\in \Theta$, under Assumption 1-3, for $J\to\infty$,
        \begin{equation*}
            \left( \frac{P^{(J)}(\Rb^{(J)},\Tb^{(J)})\det(\tilde{\cI}_J (\tilde{\bxi}_J))^{1/2}}{P^{(J)}(\Rb^{(J)},\Tb^{(J)}\mid\tilde{\bxi}_J)} \right)^{-1} \stackrel{P_{\bxi_0}}{\to} \frac{1}{(2\pi)\pi(\bxi_0)}.
        \end{equation*}
    \end{corollary}
    \begin{proof}
        By setting $G_J(B)=\mathbb{R}^2$, $J\in \NN$, from Lemma \ref{lem:converge}, we have the desired result.
    \end{proof}

    \subsection{Proof of Theorem 2}
    First, we show the result for all bounded $B$. Then we extend to unbounded $B$ and convergence in probability. Let $G_J(B)=\{\tilde{\cI}_J(\tilde{\bxi}_J)^{-1/2}\bx+\tilde{\bxi}_J:\bx\in B \} = \tilde{\cI}_J(\tilde{\bxi}_J)^{-1/2}B + \tilde{\bxi}_J $. Note that
    \begin{align*}
        &P(\tilde{\cI}_J(\tilde{\bxi}_J)^{-1/2}(\bxi-\tilde{\bxi}_J)\in B\mid \Rb^{(J)}, \Tb^{(J)})\\
        &= \underbrace{\frac{\det(\tilde{\cI}_J) \int_{G_J(B)}P^{(J)}(\Rb, \Tb\mid \bxi)\pi(\bxi)d\bxi}{P^{(J)}(\Rb^{(J)},\Tb^{(J)}\mid \tilde{\bxi}_J)}}_{\alpha_1} \cdot \underbrace{\left(\frac{P^{(J)}(\Rb^{(J)},\Tb^{(J)}) \det(\tilde{\cI}_J (\tilde{\bxi}_J))}{P^{(J)}(\Rb^{(J)},\Tb^{(J)}\mid \tilde{\bxi}_J)}\right)^{-1}}_{\alpha_2}.
    \end{align*}

    From Lemma \ref{lem:consistency}, $\tilde{\bxi}_J\stackrel{P_{\bxi_0}}{\to}\bxi_0$, hence,
    \begin{align*}
        \left\|\tilde{\cI}_J^{-1}(\tilde{\bxi}_J)\right\| = \frac{1}{J}\left\| \left( \frac{1}{J}\tilde{\cI}_J(\tilde{\bxi}_J) \right)^{-1} \right\| = O_{P_{\bxi_0}}\left(\frac{1}{J}\right).
    \end{align*}
    Thus, $\tilde{\cI}_J^{-1}(\tilde{\bxi}_J)\stackrel{P_{\bxi_0}}{\to} 0$.

    From Lemma \ref{lem:converge} (2),
    \begin{equation*}
        \alpha_1 \stackrel{P_{\bxi_0}}{\to} \tilde{\Phi}\left( \tilde{\cI}_J(\tilde{\bxi}_J)^{-1/2}(G_J(B) - \tilde{\bxi}_J) \right) \pi(\bxi_0)(2\pi) = \tilde{\Phi}\left( B \right) \pi(\bxi_0)(2\pi), \quad J\to \infty.
    \end{equation*}

    From Corollary \ref{coro:converge},
    \begin{equation*}
        \alpha_2 \stackrel{P_{\bxi_0}}{\to} \frac{1}{(2\pi)\pi(\bxi_0)}.
    \end{equation*}

    Therefore, for every bounded $B$, combining limit distribution of $\alpha_1$ and $\alpha_2$, we have the desired result. Then, for unbounded $B\in \cB(\Theta)$, define the posterior probability measure as
    \begin{equation*}
        \tilde{\Psi}_J(A) = \int_{G_J(A)}P(\bxi\mid \Rb^{(J)},\Tb^{(J)})d\bxi.
    \end{equation*}

    For an unbounded Borel set $B$, it can be written as $B=\cup_{m=1}^{\infty}B_m $, where $B_m\cap B_n = \emptyset$, $\forall m\neq n$, and $B_m$'s are bounded. Hence, for any $\varepsilon>0$,
    \begin{equation*}
        \lim_{J\to\infty} P_{\bxi_0}\left( \left| \tilde{\Psi}_J(B_m) - \Phi_2(B_m) \right| < \varepsilon \right) = 1.
    \end{equation*}

    Let $\varepsilon=6\varepsilon^\prime/(\pi m^2)$, we have
    \begin{align*}
        |\tilde{\Psi}_J(B) - \Phi_2(B)| &\leq \sum_{m=1}^{\infty}|\tilde{\Psi}_J(B_m)-\Phi_2(B_m)|\\
        &< \sum_{m=1}^{\infty} \frac{6\varepsilon^{\prime}}{\pi^2 m^2}\\
        &= \varepsilon^{\prime}.
    \end{align*}
    Hence, the result holds for arbitrary $B$.

    Let $H_{d,\epsilon}(\bxi^\prime)=P(|\tilde{\Psi}_J(B)-\Phi_2(B)| > \epsilon\mid \bxi_0=\bxi^{\prime})$. Since $H_{d,\epsilon}\leq 1$ uniformly, by dominated convergence theorem,
    \begin{align*}
        \lim_{J\to\infty}P(|\tilde{\Psi}_J(B)-\Phi_2(B)|>\epsilon) &= \lim_{J\to\infty} \int_{\Theta} H_{d,\epsilon}(\bxi)d\cG(\bxi)\\
        &=\int_{\Theta}\lim_{J\to\infty} H_{d,\epsilon}(\bxi)d\cG(\bxi)=0,
    \end{align*}
    where $\cG(\bxi)$ is any proper probability measure on $\Theta$. The last inequality comes from convergence in $P_{\bxi_0}$.

\section{Simulation comparison with traditional SAEM implementation}
\label{append:SAEM}

This section compares the parameter estimation performance of smart initialized SAEM against traditional SAEM. The traditional SAEM is implemented with the first 20 steps with weight $\alpha_t=1$, followed by a decay of $\alpha_t=1/(t-20)$. The smart initialized SAEM starts with $\alpha_t=1/t$. Parameters are generated in the same configuration as in Section 5. We conduct 200 parallel simulations.

The result is presented in Figure \ref{fig:sim_com_saem_1} and \ref{fig:sim_com_saem_2}. For the probit part, there are significant outliers in the estimation of $\ba$ and $\bb$. The outlier results from instability of the stochastic approximation E-step, when $\alpha_t=1$. The suboptimal optimization target drives the estimate away from the true optimal region. Even excluding the outlier, the smart initialized SAEM consistently yields lower estimation error across all parameters. When the number of "burn-in" steps increases, the estimation accuracy exacerbates. We do not present the result because the many outliers when the "burn-in" steps become larger, and not ideal for presentation.

\begin{figure}[h!]
    \centering
    \includegraphics[width=\linewidth]{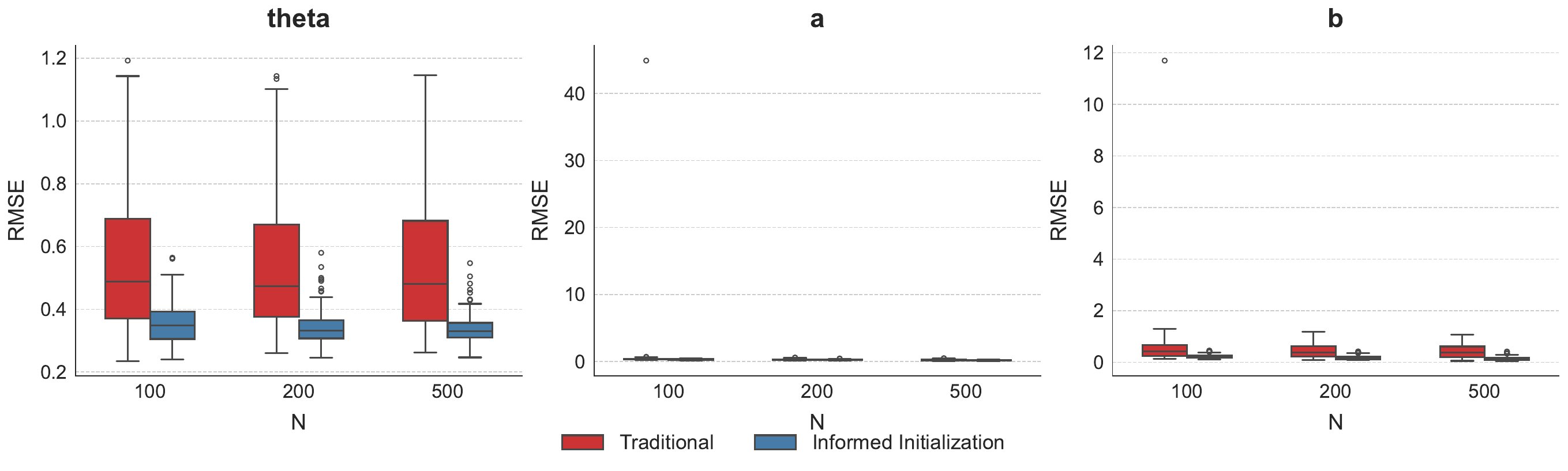}
    \caption{Comparison between smart initialized SAEM and traditional SAEM in $\btheta$, $\ba$, and $\bb$.}
    \label{fig:sim_com_saem_1}
\end{figure}

\begin{figure}[h!]
    \centering
    \includegraphics[width=0.6\linewidth]{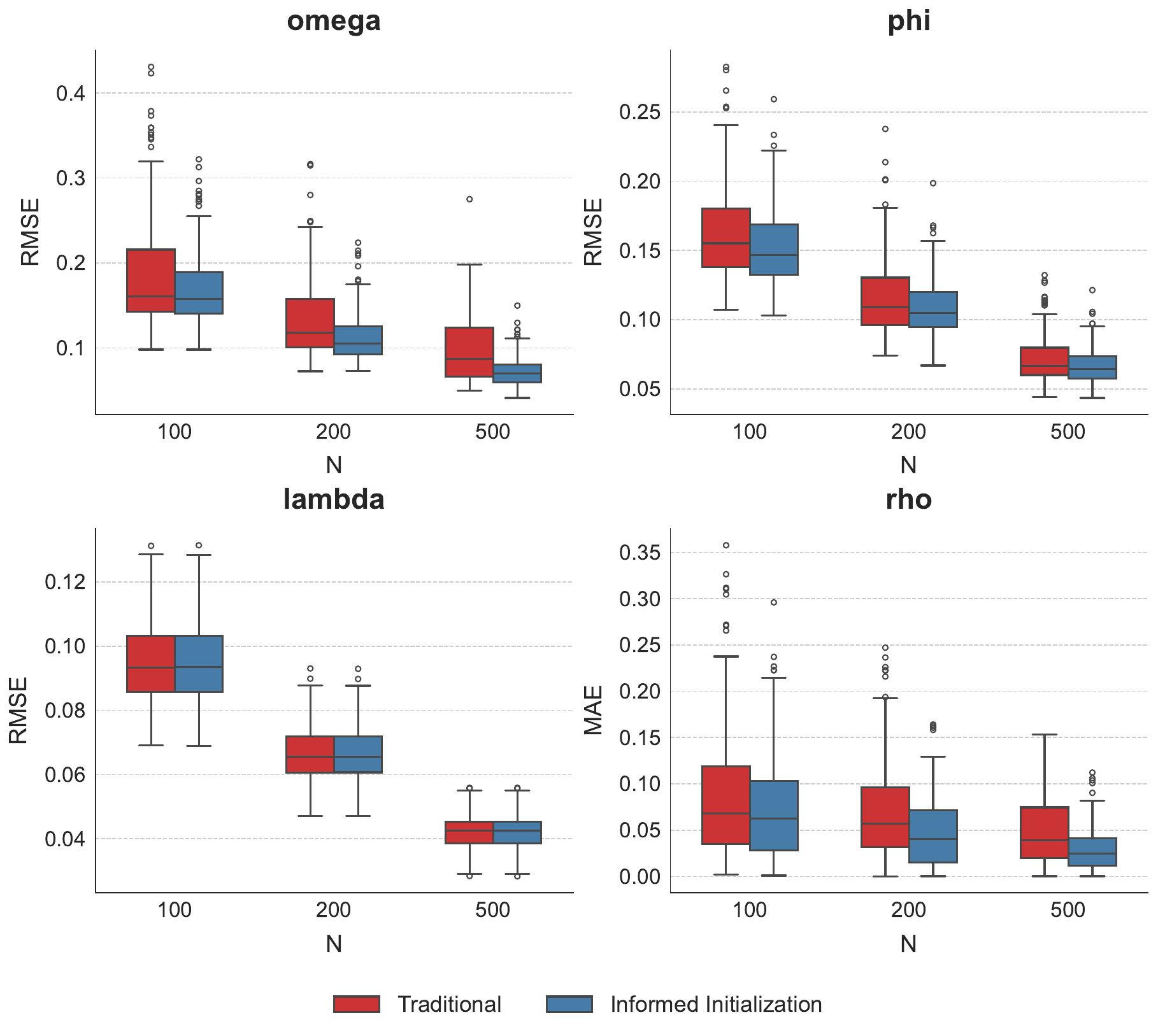}
    \caption{Comparison between smart initialized SAEM and traditional SAEM in $\bomega$, $\bvarphi$, $\blambda$, and $\rho$.}
    \label{fig:sim_com_saem_2}
\end{figure}

\section{List of Evaluated LLMs and Hyperparameters for generation}
\label{append:llm}
\subsection{List of Evaluated LLMs}
The list of models is as follows. For MATH500, we have an additional LLM google/gemma-2-27b-it.
\footnotesize
\begin{multicols}{2}
    \begin{itemize}[noitemsep, topsep=0pt]
        \item 01-ai/Yi-34B
        \item baidu/ERNIE-4.5-21B-A3B-PT
        \item baidu/ERNIE-4.5-21B-A3B-Thinking
        \item deepseek-ai/DeepSeek-R1-0528-Qwen3-8B
        \item deepseek-ai/DeepSeek-R1-Distill-Llama-8B
        \item deepseek-ai/DeepSeek-R1-Distill-Qwen-1.5B
        \item deepseek-ai/DeepSeek-R1-Distill-Qwen-14B
        \item deepseek-ai/DeepSeek-R1-Distill-Qwen-32B
        \item deepseek-ai/DeepSeek-R1-Distill-Qwen-7B
        \item dphn/dolphin-2.9.1-yi-1.5-34b
        \item dphn/Dolphin-Mistral-24B-Venice-Edition
        \item google/gemma-2b-it
        \item google/gemma-3-1b-it
        \item google/gemma-3-1b-pt
        \item google/gemma-7b-it
        \item google/vaultgemma-1b
        \item HuggingFaceTB/SmolLM3-3B
        \item huihui-ai/Huihui-gpt-oss-20b-BF16-abliterated
        \item huihui-ai/Huihui-Qwen3-8B-abliterated-v2
        \item ibm-granite/granite-3.3-2b-instruct
        \item internlm/internlm2-chat-20b
        \item LGAI-EXAONE/EXAONE-4.0.1-32B
        \item LLM360/K2-Think
        \item meta-llama/Llama-2-7b-chat-hf
        \item meta-llama/Llama-2-7b-hf
        \item meta-llama/Llama-3.1-8B-Instruct
        \item meta-llama/Llama-3.2-1B
        \item meta-llama/Llama-3.2-1B-Instruct
        \item meta-llama/Llama-3.2-3B
        \item meta-llama/Llama-3.2-3B-Instruct
        \item meta-llama/Meta-Llama-3-8B
        \item meta-llama/Meta-Llama-3-8B-Instruct
        \item microsoft/Phi-3.5-mini-instruct
        \item microsoft/Phi-3.5-MoE-instruct
        \item microsoft/phi-4
        \item microsoft/Phi-4-mini-instruct
        \item microsoft/Phi-4-reasoning
        \item microsoft/Phi-4-reasoning-plus
        \item mistralai/Magistral-Small-2507
        \item mistralai/Magistral-Small-2509
        \item mistralai/Mistral-7B-Instruct-v0.1
        \item mistralai/Mistral-7B-Instruct-v0.2
        \item mistralai/Mistral-7B-Instruct-v0.3
        \item mistralai/Mistral-Small-3.2-24B-Instruct-2506
        \item mistralai/Mistral-Small-Instruct-2409
        \item moonshotai/Moonlight-16B-A3B
        \item moonshotai/Moonlight-16B-A3B-Instruct
        \item nvidia/AceReason-Nemotron-1.1-7B
        \item nvidia/AceReason-Nemotron-14B
        \item nvidia/Llama-3.1-Nemotron-8B-UltraLong-4M-Instruct
        \item nvidia/Nemotron-Research-Reasoning-Qwen-1.5B
        \item nvidia/NVIDIA-Nemotron-Nano-12B-v2
        \item nvidia/NVIDIA-Nemotron-Nano-9B-v2
        \item nvidia/OpenReasoning-Nemotron-1.5B
        \item nvidia/OpenReasoning-Nemotron-7B
        \item openai-community/gpt2
        \item openai/gpt-oss-20b
        \item openbmb/MiniCPM4.1-8B
        \item Qwen/Qwen1.5-32B
        \item Qwen/Qwen2-7B-Instruct
        \item Qwen/Qwen2.5-0.5B-Instruct
        \item Qwen/Qwen2.5-1.5B-Instruct
        \item Qwen/Qwen2.5-14B-Instruct
        \item Qwen/Qwen2.5-32B-Instruct
        \item Qwen/Qwen2.5-3B-Instruct
        \item Qwen/Qwen2.5-7B-Instruct
        \item Qwen/Qwen3-0.6B
        \item Qwen/Qwen3-1.7B
        \item Qwen/Qwen3-14B
        \item Qwen/Qwen3-30B-A3B
        \item Qwen/Qwen3-30B-A3B-Instruct-2507
        \item Qwen/Qwen3-30B-A3B-Thinking-2507
        \item Qwen/Qwen3-32B
        \item Qwen/Qwen3-4B
        \item Qwen/Qwen3-4B-Instruct-2507
        \item Qwen/Qwen3-4B-Thinking-2507
        \item Qwen/Qwen3-8B
        \item Qwen/QwQ-32B
        \item swiss-ai/Apertus-8B-Instruct-2509
        \item THUDM/GLM-4-9B-0414
        \item TinyLlama/TinyLlama-1.1B-Chat-v1.0
        \item zai-org/GLM-4-32B-0414
    \end{itemize}
\end{multicols}
\normalsize % Restore font size after the appendix

\subsection{Hyperparameters and Prompts for generation}
For the hyperparameters of LLMs generation, we set the temperature to be 0.5, top p 0.95, max output tokens as 10,240, and repetition penalty of 1.05. Without the repetition penalty, some LLMs will keep repeat until reach the maximum output token. Therefore, we set a mild repetition penalty such that the CoT is not repeated, and the CoT length will be a better summary of the thinking quality of an LLM.

For the prompts, we use CoT zero-shot prompting and one-shot prompting. The specific forms of the prompts are as follows. The \{problem\} provides the detailed question of the item. The one-shot example comes from a question item in MATH dataset \citep{hendrycksmath2021} that is not included in MATH500.
\begin{enumerate}
    \item \textbf{Zero-shot Prompt}: Solve the following math problem. Be clear and concise.
    Problem: "\{problem\}"
    Provide a \textbf{step-by-step solution}. Start each step with a number followed by a period (e.g., '1.', '2.', etc.).
    Use basic LaTeX for mathematical expressions, such as for fractions, exponents, and variables. Avoid complex formatting.
    At the very end of your entire response, and only at the very end, state the final answer.
    This final answer must be enclosed in a single LaTeX box, like so: \boxed{{Your Answer}}.
    \item \textbf{One-shot Prompt}: Solve the following math problem. Please think \textbf{step-by-step} to obtain the solution. Use basic LaTeX for mathematical expressions, such as for fractions, exponents, and variables. Avoid complex formatting. At the very end of your entire response, and only at the very end, state the final answer. This final answer must be enclosed in a single LaTeX box, like so: \boxed{{Your Answer}}.

    Here is an example of how to format your response and think about solving the problem:
    Example Problem:
    What is the sum of the two values of $x$ for which $(x+3)^2 = 121$?

    Example Solution:
    Expanding the left side, we have $x^2+6x+9=121 \Rightarrow x^2+6x-112=0$. For a quadratic with the equation $ax^2+bx+c=0$, the sum of the roots is $-b/a$. Applying this formula to the problem, we have that the sum of the two roots is $-6/1=\boxed{-6}$.

    Solution: \boxed{{-6}}

    --- New Problem:
    \{problem\}.
\end{enumerate}

\section{Supplementary Materials for Application}
\label{append:applied}
\subsection{Behavior of One-shot Models}
\begin{figure}[htbp!]
    \centering
    \includegraphics[width=0.85\linewidth]{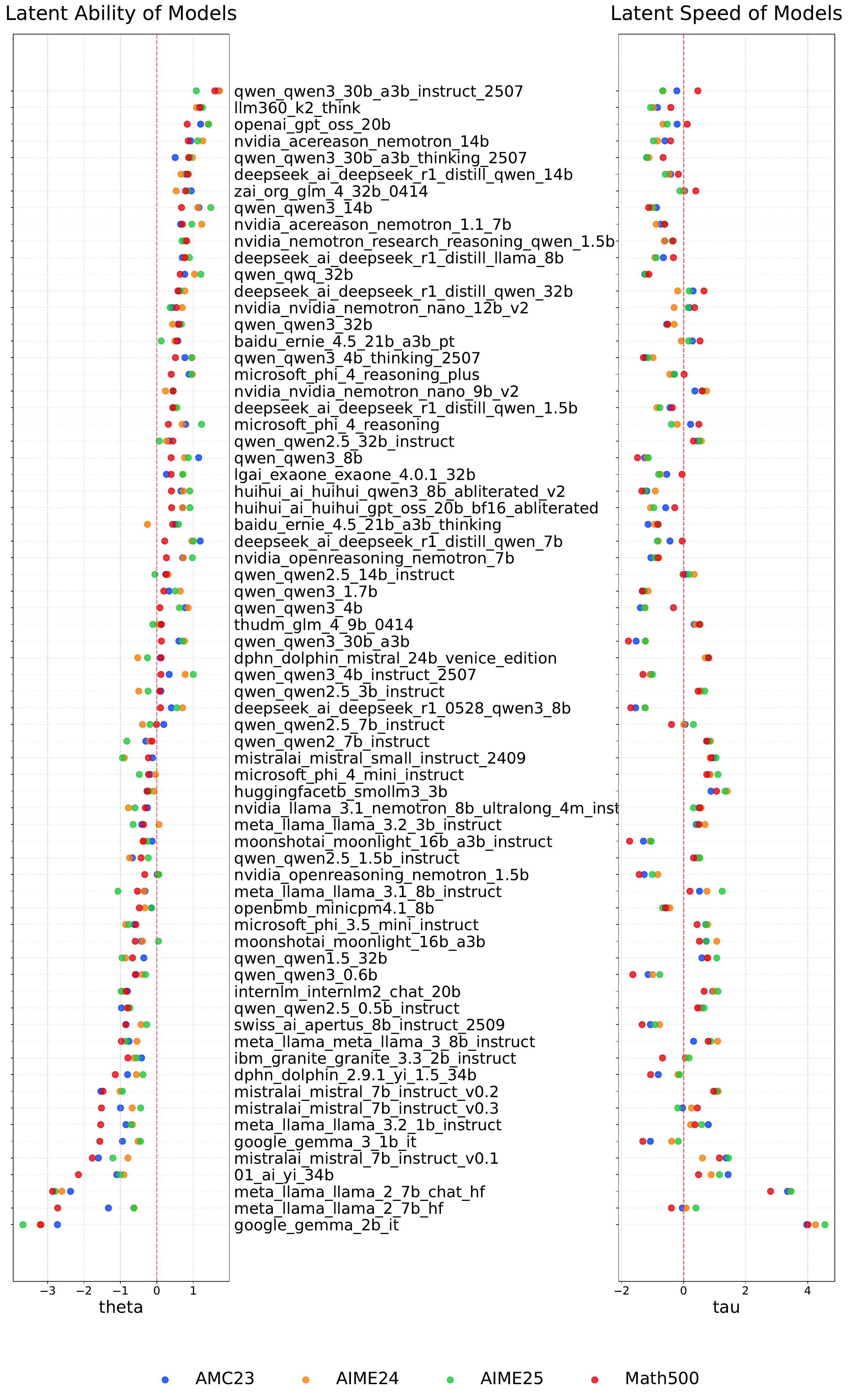}
    \caption{Estimated latent ability and latent speed of the LLMs with one-shot prompting by LaRT. The figure on the left is the estimated latent ability, while the one on the right is the estimated latent speed. The different color represents estimated results for different dataset.}
    \label{fig:model_one_all}
\end{figure}

\subsection{Ranking differences of Other Datasets}
\begin{figure}[htbp!]
    \centering
    \includegraphics[width=\linewidth]{fig/app_2/lart_irt_ranking_math500_combined.pdf}
    \caption{Differences in LLM rankings for both zero-shot models and one-shot models for MATH500. The figure on the left is for zero-shot models, and the figure on the right for one-shot models. For each of the figure, rankings by LaRT are on the left, and rankings by IRT on the right. LLMs that are higher have higher ranking. The lines connect the same models with different rankings by LaRT and IRT.}
    \label{fig:ranking_shift_math500}
\end{figure}

\begin{figure}[htbp!]
    \centering
    \includegraphics[width=\linewidth]{fig/app_2/lart_irt_ranking_amc23_combined.pdf}
    \caption{Differences in LLM rankings for both zero-shot models and one-shot models for AMC23. The figure on the left is for zero-shot models, and the figure on the right for one-shot models. For each of the figure, rankings by LaRT are on the left, and rankings by IRT on the right. LLMs that are higher have higher ranking. The lines connect the same models with different rankings by LaRT and IRT.}
    \label{fig:ranking_shift_amc23}
\end{figure}

\begin{figure}[htbp!]
    \centering
    \includegraphics[width=\linewidth]{fig/app_2/lart_irt_ranking_aime24_combined.pdf}
    \caption{Differences in LLM rankings for both zero-shot models and one-shot models for AIME24. The figure on the left is for zero-shot models, and the figure on the right for one-shot models. For each of the figure, rankings by LaRT are on the left, and rankings by IRT on the right. LLMs that are higher have higher ranking. The lines connect the same models with different rankings by LaRT and IRT.}
    \label{fig:ranking_shift_aime24}
\end{figure}

\bibliography{ref}
\bibliographystyle{imsart-nameyear}

% --- supplement: supplement.tex ---

\begin{frontmatter}
%%%%%%%%%%%%%%%%%%%%%%%%%%%%%%%%%%%%%%%%%%%%%%
%%                                          %%
%% Enter the title of your article here     %%
%%                                          %%
%%%%%%%%%%%%%%%%%%%%%%%%%%%%%%%%%%%%%%%%%%%%%%
\title{Supplementary Material to "Latency-Response Theory Model: Evaluating LLMs via Response Accuracy and Chain-of-Thought Length"}
\begin{aug}
%%%%%%%%%%%%%%%%%%%%%%%%%%%%%%%%%%%%%%%%%%%%%%%
%% Only one address is permitted per author. %%
%% Only division, organization and e-mail is %%
%% included in the address.                  %%
%% Additional information such as            %%
%% identifying the corresponding author must %%
%% be included in in the Acknowledgments     %%
%% section if necessary.                     %%
%% ORCID can be inserted by command:         %%
%% \orcid{0000-0000-0000-0000}               %%
%%%%%%%%%%%%%%%%%%%%%%%%%%%%%%%%%%%%%%%%%%%%%%%
\author[A]{\fnms{Zhiyu}~\snm{Xu}\ead[label=e1]{xu.zhiyu@columbia.edu}},
\author[A]{\fnms{Jia}~\snm{Liu}\ead[label=e2]{jl6795@columbia.edu}},
\author[B]{\fnms{Yixin}~\snm{Wang}\ead[label=e3]{yixinw@umich.edu}}
\and
\author[A]{\fnms{Yuqi}~\snm{Gu}\ead[label=e4]{yuqi.gu@columbia.edu}}
% %%%%%%%%%%%%%%%%%%%%%%%%%%%%%%%%%%%%%%%%%%%%%%
% %% Addresses                                %%
% %%%%%%%%%%%%%%%%%%%%%%%%%%%%%%%%%%%%%%%%%%%%%%
\address[A]{Department of Statistics, Columbia University\printead[presep={,\ }]{e1,e2,e4}}

\address[B]{Department of Statistics, University of Michigan\printead[presep={,\ }]{e3}}
\end{aug}

\end{frontmatter}

% \clearpage
\appendix
\setcounter{equation}{0}
\setcounter{figure}{0}
\setcounter{table}{0}
\setcounter{page}{1}
\renewcommand{\theequation}{S.\arabic{equation}}
\renewcommand{\thefigure}{S\arabic{figure}}
\renewcommand{\thetable}{S\arabic{table}}
\section{Derivation of SAEM Algorithm}
\label{append:alg}
\subsection{Proof of Lemma 1}
In this section, we derive the complete conditional distribution of $\bxi_i=(\theta_i,\tau_i)$ given $\bOmega$, $\Rb$, and $\Tb$. Due to the conditional independence structure, given $\bOmega$, $\Rb$, and $\Tb$, $\bxi_i$'s are independent. Hence, we only need to derive the complete conditional of $\bxi_i$. The complete conditional is,
\begin{align*}
    P(\theta_i,\tau_i\mid \Rb_{i,:}, \Tb_{i,:};\bOmega) &\propto P(\theta_i,\tau_i; \bOmega)\prod_{j=1}^J P(R_{ij}\mid \theta_i;\bOmega) P(T_{ij}\mid \tau_i;\bOmega)\\
    &= \left[P(\theta_i; \bOmega) \prod_{j=1}^JP(R_{ij}\mid \theta_i; \bOmega)\right]\left[ P(\tau_i\mid \theta_i;\bOmega)\prod_{j=1}^J P(T_{ij}\mid \tau_i;\bOmega) \right].
\end{align*}

First, we focus on the conditional distribution of $P(\tau_i\mid\theta_i,\Tb_{i,:};\bOmega)$. Since $(\theta_i,\tau_i)\sim N(0,\bSigma) $, $\tau_i\mid\theta_i,\bOmega\sim N(\rho\theta_i,1-\rho^2 )$. The likelihood $P(T_{ij}\mid \tau_i;\bOmega)$ is also a normal distribution. Hence, the conditional distribution is still normal.
\begin{align*}
    P(\tau_i\mid \theta_i, \Tb_{i,:};\bOmega) &\propto p(\tau_i\mid \theta_i;\bOmega) \prod_{j=1}^J P(T_{ij}\mid \tau_i;\bOmega)\\
    &= \exp\left\{ -\frac{1}{2(1-\rho^2)}(\tau_i-\rho\theta_i)^2 \right\}\exp\left\{ -\sum_{j=1}^J \frac{1}{2\lambda_j}(\log t_{ij}-\omega_j+\varphi_j\tau_i)^2 \right\}\\
    &\propto \exp\left\{-\frac{1}{2\check{\sigma}_{\tau}^{(i)2}} \left(\tau_i-\check{\mu}_{\tau}^{(i)}\right)^2\right\},
\end{align*}
where
\begin{equation*}
    \check{\sigma}_{\tau}^{(i)2} = \left(\frac{1}{1-\rho^2} + \sum_{j=1}^J\frac{\varphi_j^2}{\lambda_j}\right)^{-1}, \quad \check{\mu}_{\tau}^{(i)} = \left(\frac{1}{1-\rho^2} + \sum_{j=1}^J\frac{\varphi_j^2}{\lambda_j}\right)^{-1}\left(\frac{\rho\theta_i}{1-\rho^2} - \sum_{j=1}^J \frac{(\log t_{ij} - \omega_j)\varphi_j}{\lambda_j^2}\right).
\end{equation*}
Note that $\check{\sigma}_{\tau}^{(i)}$ is independent of the index $i$. For simplicity, we will denote it as $\check{\sigma}_{\tau}$ from now on. Thus, the conditional distribution of $\tau_i$ is,
\begin{equation*}
    \tau_i\mid \theta_i,\Tb_{i,:};\bOmega \sim N\left(\check{\mu}_{\tau}^{(i)}, \check{\sigma}_{\tau}^2\right).
\end{equation*}

Then, for the marginal distribution $P(\theta_i\mid \Rb_{i,:};\bOmega)$, note that by marginalizing out $\tau_i$, normalizing constants containing $\theta_i$ will contribute to the marginal posterior of $\theta_i$. In the sequel, we consider the normalizing constant "twist" $P(\theta_i;\bOmega)$, and compute the twisted prior. First, for the normalizing constant concerning $\theta_i$,
\begin{align*}
    &\int_{\RR}P(\tau_i\mid \theta_i;\bOmega)\prod_{j=1}^J P(T_{ij}\mid \tau_i;\bOmega)d\tau_i\\ &\propto \int_{\RR}\exp\left\{ -\frac{1}{2(1-\rho^2)}(\tau_i-\rho\theta_i)^2 \right\}\exp\left\{ -\sum_{j=1}^J \frac{1}{2\lambda_j}(\log t_{ij}-\omega_j+\varphi_j\tau_i)^2 \right\}d\tau_i\\
    &=\exp\left\{ -\frac{\rho^2}{2(1-\rho^2)}\theta_i^2 \right\}\exp\left\{\frac{1}{2\check{\sigma}_{\tau}^2}\left(\check{\mu}_{\tau}^{(i)}\right)^2 \right\}\int_{\RR} \exp\left\{-\frac{1}{2\check{\sigma}_{\tau}^{(i)2}} \left(\tau_i-\check{\mu}_{\tau}^{(i)}\right)^2\right\} d\tau_i\\
    &\propto \exp\left\{ -\frac{\rho^2}{2(1-\rho^2)}\theta_i^2 \right\} \exp\left\{ \frac{\rho^2}{2(1-\rho^2)^2}\left( \frac{1}{1-\rho^2} + \sum_{j=1}^J \frac{\varphi_j^2}{\lambda_j} \right)^{-1}\theta_i^2 \right.\\
    &\left.- \left( \frac{1}{1-\rho^2} + \sum_{j=1}^J \frac{\varphi_j^2}{\lambda_j} \right)^{-1}\left(\sum_{j=1}^J \frac{(\log T_{ij}-\omega_j)\varphi_j}{\lambda_j} \right)\frac{\rho}{1-\rho^2}\theta_i \right\}.
\end{align*}

Note that
\begin{align*}
    \frac{1}{1-\rho^2}\left( \frac{1}{1-\rho^2} + \sum_{j=1}^J \frac{\varphi_j^2}{\lambda_j} \right)^{-1} = \left( 1 + (1-\rho^2)\sum_{j=1}^J\frac{\varphi_j^2}{\lambda_j} \right)^{-1} \leq 1.
\end{align*}

Thus, for the normalizing constant concerning $\theta_i$, inside the exponential, it is still a quadratic form whose quadratic term has negative coefficient. Note that $P(\theta_i;\bOmega)=N(0,1)$. Denote the twisted prior of $\theta_i$ as $\tilde{\phi}(\theta_i\mid \Rb_{i,:} ;\bOmega)$, which is
\begin{align*}
    \tilde{\phi}(\theta_i\mid \Rb_{i,:}; \bOmega) &\propto \exp\left\{-\frac{1}{2}\theta_i^2\right\} \exp\left\{ -\frac{\rho^2}{2(1-\rho^2)}\theta_i^2 \right\} \exp\left\{ \frac{\rho^2}{2(1-\rho^2)^2}\left( \frac{1}{1-\rho^2} + \sum_{j=1}^J \frac{\varphi_j^2}{\lambda_j} \right)^{-1}\theta_i^2 \right.\\
    &\left.- \left( \frac{1}{1-\rho^2} + \sum_{j=1}^J \frac{\varphi_j^2}{\lambda_j} \right)^{-1}\left(\sum_{j=1}^J \frac{(\log T_{ij}-\omega_j)\varphi_j}{\lambda_j} \right)\frac{\rho}{1-\rho^2}\theta_i \right\}\\
    &\propto \exp\left\{ -\frac{1-\rho^2+\rho^2-(1/(1-\rho^2)+\sum_j\varphi_j^2/\lambda_j)^{-1}\rho^2}{2(1-\rho^2)} \theta_i^2 \right.\\
    &\left.- \left( \frac{1}{1-\rho^2} + \sum_{j=1}^J \frac{\varphi_j^2}{\lambda_j} \right)^{-1}\left(\sum_{j=1}^J \frac{(\log T_{ij}-\omega_j)\varphi_j}{\lambda_j} \right)\frac{\rho}{1-\rho^2}\theta_i  \right\}\\
    &\propto \exp\left\{ -\frac{1}{2\sigma_{\theta}^{(i)2}}(\theta_i-\mu_{\theta}^{(i)})^2 \right\},
\end{align*}
where
\begin{equation*}
    \sigma_{\theta}^2 = \biggl(\frac{1}{1-\rho^2} - \sigma_{\tau}^2\frac{\rho^2}{1-\rho^2}\biggr)^{-1},\quad \mu_{\theta}^{(i)} = \sigma_{\theta}^2 \biggl(-\sum_{j=1}^J\frac{(\log t_{ij}-\omega_j)\varphi_j}{\lambda_j^2}\biggr) \frac{\sigma_{\tau}^2 \rho}{1-\rho^2}.
\end{equation*}

\section{Proof of Identifiability}
\subsection{Proof of Proposition 1}
\label{append:prop_iden}
This proof is similar to the proof of Proposition 3.1 in \citet{fang2021identifiability}.
First, we derive the marginal distribution of $R_{ij}$ and $\log T_{ij}$ knowing $\bOmega$. Let $\varepsilon_{j}\sim N(0,1)$ independently for all $j$. Then, for $(\theta_i,\tau_i) \sim N(0,\bSigma)$,
\begin{align*}
    P(R_{ij}=1) = \EEE_{\theta_i}\left[P(R_{ij}=1\mid \theta_i)\right] &= \EEE_{\theta_i} \left[\EEE_{\varepsilon_j}\left[ \mathbbm{1}(\varepsilon_j \leq b_j+a_j\theta_i) \right]\right] \\
    &= P(\varepsilon_j\leq b_j+a_j\theta_i)\\
    &= P\left(\sqrt{a_{j}^2+1}\eta_j+b_j\geq 0\right)\\
    &= \Psi \biggl( -\frac{b_j}{\sqrt{a_j^2+1}} \biggr),
\end{align*}
where $\eta_{j} = (a_j\theta_i-\varepsilon_j)/\sqrt{a_{j}^2+1}\sim N(0,1)$ and $\Psi(x)=1-\Phi(x)$ the complementary cumulative density function for a standard normal variable.

Similarly, let $\epsilon_{j}\sim N(0,1)$ independently for all $j$,
\begin{align*}
    P\left(\log T_{ij} \geq \log t_{ij} \right) &= \EEE_{\tau_i}\left[P(\omega_j-\varphi_j\tau_i+\epsilon_j\geq \log t_{ij}\mid \tau_i)\right]\\
    &= \EEE_{\tau_i}\left[\EEE_{\epsilon_j}\left[\mathbbm{1}(\omega_j - \varphi_j\tau_i + \epsilon_j \geq \log t_{ij})\mid \tau_i\right]\right]\\
    &= P(\omega_j - \varphi_j\tau_i +\epsilon_j\geq \log t_{ij})\\
    &= P\left( \sqrt{\varphi_j^2+\lambda_j}\zeta_j + \omega_j \geq \log t_{ij} \right)\\
    &= \Psi \biggl( \frac{\log t_{ij} - \omega_j}{\sqrt{\varphi_j^2+\lambda_j}} \biggr),
\end{align*}
where $\zeta_{j} = (\epsilon_j-\varphi_j\tau_i)/\sqrt{\varphi_j^2+\lambda_j}\sim N(0,1)$.

Then, we calculate the two-component marginal distribution of $(R_{i,j_1}, R_{i,j_2})$, $(\log T_{i,j_1},\log T_{i,j_2})$, and $(R_{i,j_1},\log T_{i,j_2})$. We first compute the covariance between the following quantities.
\begin{equation*}
    \Cov{\eta_{j_1}}{\eta_{j_2}} = \Cov{\frac{a_{j_1}\theta_i-\varepsilon_{j_1}}{\sqrt{a_{j_1}^2+1}}}{\frac{a_{j_2}\theta_i-\varepsilon_{j_2}}{\sqrt{a_{j_2}^2+1}}} = \frac{a_{j_1}a_{j_2}}{\sqrt{a_{j_1}^2+1}\sqrt{a_{j_2}^2+1}}.
\end{equation*}
Similarly,
\begin{align*}
    \Cov{\zeta_{j_1}}{\zeta_{j_2}} &= \frac{\varphi_{j_1}\varphi_{j_2}}{\sqrt{\varphi_{j_1}^2+\lambda_{j_1}}\sqrt{\varphi_{j_2}^2+\lambda_{j_2}}},\\
    \Cov{\eta_{j_1}}{\zeta_{j_2}} &= -\frac{\rho a_{j_1}\varphi_{j_2}}{\sqrt{a_{j_1}^2+1}\sqrt{\varphi_{j_2}^2+\lambda_{j_2}}}.
\end{align*}

Therefore, the two-component marginal distributions are,
\begin{align*}
    P(R_{i,j_1}=1, R_{i,j_2}=1) &= P(\varepsilon_{j_1}\leq b_{j_1}+a_{j_1}\theta_i, \quad \varepsilon_{j_2}\leq b_{j_2}+a_{j_2}\theta_i)\\
    &= P\left(\sqrt{a_{j_1}^2+1} \eta_{j_1} + b_{j_1}\geq 0, \quad \sqrt{a_{j_2}^2+1} \eta_{j_2}+b_{j_2}\geq 0 \right)\\
    &= \Psi \biggl( -\frac{b_{j_1}}{\sqrt{a_{j_1}^2+1}}, - \frac{b_{j_2}}{\sqrt{a_{j_2}^2+1}}, \frac{a_{j_1}a_{j_2}}{\sqrt{a_{j_1}^2+1}\sqrt{a_{j_2}^2+1}} \biggr),
\end{align*}
where $\Psi(x_1,x_2,\rho)=P(X_1\geq x_1, X_{2}\geq x_2)$, $X_1, X_2\sim N(0,1)$ and $\Cov{X_1}{X_2}=\rho$. Similarly,
\begin{align*}
    &P(\log T_{i,j_1}\geq \log t_{i,j_1},\log T_{i,j_2}\geq \log t_{i,j_2}) \\
    &= P\left(\sqrt{\varphi_{j_1}^2+\lambda_{j_1}}\zeta_{j_1}+\omega_{j_1}\geq \log t_{i,j_1}, \sqrt{\varphi_{j_2}^2+\lambda_{j_2}}\zeta_{j_2}+\omega_{j_2}\geq \log t_{i,j_2} \right)\\
    &= \Psi\biggl( \frac{\log t_{i,j_1}-\omega_{j_1}}{\sqrt{\varphi_{j_1}^2+\lambda_{j_1}}}, \frac{\log t_{i,j_2}-\omega_{j_2}}{\sqrt{\varphi_{j_2}^2+\lambda_{j_2}}}, \frac{\varphi_{j_1}\varphi_{j_2}}{\sqrt{\varphi_{j_1}^2+\lambda_{j_1}}\sqrt{\varphi_{j_2}^2+\lambda_{j_2}}}  \biggr),
\end{align*}
\begin{align*}
    P(R_{ij}=1,\log T_{ij}\geq \log t_{ij}) &= P\left( b_{j_1}+\sqrt{a_{j_1}^2+1}\eta_{j_1} \geq 0, \sqrt{\varphi_{j_1}^2+\lambda_{j_2}} \zeta_{j_2}+\omega_{j_2}\geq \log t_{i,j_2} \right)\\
    &= \Psi\biggl(-\frac{b_{j_1}}{\sqrt{a_{j_1}^2+1}}, \frac{\log t_{i,j_2}-\omega_{j_2}}{\sqrt{\varphi_{j_2}^2+\lambda_{j_2}}}, -\frac{\rho a_{j_1}\varphi_{j_2}}{\sqrt{a_{j_1}^2+1}\sqrt{\varphi_{j_2}^2+\lambda_{j_2}}}\biggr).
\end{align*}
Following this strategy, we can write out the joint distribution of $R_{i,1},\ldots, R_{i,J}, \log T_{i,1},\ldots \log T_{i,J}$ by their pairwise covariance. For simplicity, we omit it here.

For sufficiency, suppose there are two sets of parameters $\bOmega$ and $\bOmega^{\prime}$ following conditions in Proposition 1. Then, note that the joint distribution of $\Rb_i$ and $\log \Tb_i$ only depends on $b_j/\sqrt{a_{j}^2+1}$, $\omega_j$, $\sqrt{\varphi_{j}^2+\lambda_j}$, $\Cov{\eta_{j_1}}{\eta_{j_2}} $, $\Cov{\zeta_{j_1}}{\zeta_{j_2}} $, and $\Cov{\eta_{j_1}}{\zeta_{j_2}} $. When $\bOmega$ and $\bOmega^{\prime}$ satisfy the set of conditions in Proposition 1, these quantities are the same. Therefore, $\bOmega$ and $\bOmega^{\prime}$ give rise to the same joint distribution of $\Rb_{i}$ and $\log \Tb_{i}$.

For necessity, suppose $\bOmega$ and $\bOmega^{\prime}$ give rise to the same joint distribution of $\Rb_i$ and $\log \Tb_i$. First, for one-component marginal distribution of $R_{ij}$ and $\log T_{ij}$, $\bOmega$ and $\bOmega^{\prime}$ need to satisfy,
\begin{equation*}
    \frac{b_{j}}{\sqrt{a_j^2+1}} = \frac{b_{j}^{\prime}}{\sqrt{a_j^{\prime 2}+1}},\quad \frac{\log t_{ij}-\omega_j}{\sqrt{\varphi_j^2 + \lambda_j}} = \frac{\log t_{ij}-\omega_j^{\prime}}{\sqrt{\varphi_j^{\prime 2} + \lambda_j^{\prime}}},
\end{equation*}
for all $\log t_{ij}\in \RR$. Therefore, 
\begin{equation}
    \label{eqn:mar_impli}
    \omega_j = \omega_{j}^{\prime},\quad \varphi_j^2 + \lambda_j = \varphi_j^{\prime 2} + \lambda_j^{\prime}.
\end{equation}

Then, consider the two-component marginals, $\bOmega$ and $\bOmega^{\prime}$ giving rise to the same distribution asks for the following equalities,
\begin{align*}
    \frac{a_{j_1}a_{j_2}}{\sqrt{a_{j_1}^2+1}\sqrt{a_{j_2}^2+1}} &= \frac{a_{j_1}^{\prime} a_{j_2}^{\prime}}{\sqrt{a_{j_1}^{\prime 2}+1} \sqrt{a_{j_2}^{\prime 2}+1}},\\
    \frac{\varphi_{j_1}\varphi_{j_2}}{\sqrt{\varphi_{j_1}^2 + \lambda_{j_1}} \sqrt{\varphi_{j_2}^2 + \lambda_{j_2}}} &= \frac{\varphi_{j_1}^{\prime} \varphi_{j_2}^{\prime}}{\sqrt{\varphi_{j_1}^{\prime 2} + \lambda_{j_1}^{\prime}} \sqrt{\varphi_{j_2}^{\prime 2} + \lambda_{j_2}^{\prime}}},\\
    \frac{\rho a_{j_1}\varphi_{j_2}}{\sqrt{a_{j_1}^2+1}\sqrt{\varphi_{j_2}+\lambda_{j_2}}} &= \frac{\rho^{\prime} a_{j_1}^{\prime}\varphi_{j_2}^{\prime}}{\sqrt{a_{j_1}^{\prime 2}+1}\sqrt{\varphi_{j_2}^{\prime 2}+\lambda_{j_2}^{\prime}}}.    
\end{align*}
Combining with (\ref{eqn:mar_impli}), this requires
\begin{align*}
    \frac{a_{j_1}a_{j_2}}{\sqrt{a_{j_1}^2+1}\sqrt{a_{j_2}^2+1}} = \frac{a_{j_1}^{\prime} a_{j_2}^{\prime}}{\sqrt{a_{j_1}^{\prime 2}+1} \sqrt{a_{j_2}^{\prime 2}+1}}, \quad \varphi_{j_1}\varphi_{j_2} = \varphi_{j_1}^{\prime} \varphi_{j_2}^{\prime}, \quad \frac{\rho a_{j_1}\varphi_{j_2}}{\sqrt{a_{j_1}^2+1}} = \frac{\rho^{\prime} a_{j_1}^{\prime}\varphi_{j_2}^{\prime}}{\sqrt{a_{j_1}^{\prime 2}+1}}.
\end{align*}
Therefore, conditions in Proposition 1 are necessary.

\subsection{Proof of Theorem 1}
\label{append:thm_iden}
First, we show the identifiability of the probit model part. Suppose there are two sets of parameters $\bOmega$ and $\bOmega^{\prime}$ that give rise to the same distribution for $\Rb$ and $\log \Tb$. Define $\tilde{\ba}=(\tilde{a}_1,\ldots ,\tilde{a}_J)$, where $\tilde{a}_j=a_j/\sqrt{a_j^2+1}$. Then, from Proposition 1, we have
\begin{equation*}
    \tilde{\ba}\tilde{\ba}^\top + \Sbb = \tilde{\ba}^{\prime}\tilde{\ba}^{\prime \top} + \Sbb^{\prime},
\end{equation*}
where $\Sbb=\diag\{ b_j/\sqrt{a_j^2+1}-a_j^2/(a_j^2+1) \}_{j=1}^J$ and similarly for $\Sbb^{\prime}$. 

If any row of $\tilde{\ba}$ is deleted, $\tilde{\ba}$ still ranks 1 because there are at least 2 non-zero entries in $\tilde{\ba}$. Then, from Theorem 5.1 in \citet{anderson1956statistical}, $\Sbb^{\prime}=\Sbb$ and $\tilde{\ba}\tilde{\ba}^\top=\tilde{\ba}^{\prime}\tilde{\ba}^{\prime \top}$. The diagonal entries of $\tilde{\ba}\tilde{\ba}^\top$ and $\tilde{\ba}^{\prime} \tilde{\ba}^{\prime \top}$ being equal implies
\begin{equation*}
    \frac{a_j^2}{a_j^2+1} = \frac{a_j^{\prime 2}}{a_j^{\prime 2}+1}.
\end{equation*}
Hence, $a_j^2=a_j^{\prime 2}$ for all $j\in [J]$. Combining with the definition of $\Sbb$, we have $b_j=b_j^{\prime}$ for all $j\in [J]$.

Additionally, by Lemma 5.1 in \citet{anderson1956statistical}, we have
\begin{equation*}
    \frac{a_j}{\sqrt{a_{j}^2+1}}=\frac{ca_j^{\prime}}{\sqrt{a_{j}^{\prime 2}+1}},
\end{equation*}
where $c\in \{-1,1\}$. Since $\sum_{j=1}^Ja_j>0$, $c$ can only be $1$. Therefore, $\ba=\ba^{\prime}$.

For the parameters $\bvarphi$, $\bomega$, the proof is the same as the probit model case. For $\blambda$, since $\bvarphi=\bvarphi^{\prime}$, $\varphi_j^2=\varphi_j^{\prime 2}$ for all $j\in [J]$, and thus $\lambda_j=\lambda_j^{\prime}$ for all $j\in [J]$. For $\rho$, since every other parameter is identified, following Proposition 1, $\rho=\rho^{\prime}$.

\section{Proof of APN}
\label{append:apn}
\subsection{Auxiliary Lemmas}
First, we present there the Kolmogorov's strong law of large numbers \citep{serfling2009approximation} for completeness.
\begin{theorem}[Kolmogorov's Strong Law of Large Numbers]
    \label{thm:k_slln}
    Let $\{X_i\}_{i\in\mathbb{N}}$ a sequence of independent random variables with $\EE{X_i}=\mu_i\in \mathbb{R}$ and $0<\Var{X_i}=\sigma_i^2< \infty$. If $\sum_{i=1}^{\infty}\sigma_i^2/i^2<\infty $, then almost surely
    \begin{equation*}
        \frac{1}{d}\sum_{i=1}^dX_i-\frac{1}{d}\sum_{i=1}^d\mu_i\to 0,
    \end{equation*}
    for $d\to \infty$.
\end{theorem}

Denote the probabilistic model as $P_{\xi_0}$, where $\xi_0=(\theta_0,\tau_0)$ the true value.
\begin{lemma}
    \label{lem:fix_xi}
    Let $\{R_j,T_j\}_{j\in \NN}$ be a set of data generated by fixed $\xi_0=(\theta_0,\tau_0)$. Under Assumption 1 and 2, 
    \begin{equation*}
        \limsup_{J\to \infty} \frac{1}{J}\left[ l^{(J)}(\xi\mid \Rb,\Tb)-l^{(J)}(\xi_0\mid \Rb, \Tb) \right] \leq c_1(\theta) + c_2(\tau)<0.
    \end{equation*}
\end{lemma}
\begin{proof}
    First, note that $l^{(J)}$ can be decomposed into two parts.
    \begin{align*}
        l^{(J)}(\xi\mid \Rb,\Tb) &= l^{(J)}_R(\theta) + l_T^{(J)}(\tau)\\
        &= \sum_{j=1}^J\left[R_{j}\log \Phi(a_j\theta+b_j)+(1-R_j)\log\Phi(-a_j\theta-b_j)\right]\\
        &- \sum_{j=1}^J \frac{1}{2\lambda_j}(\log T_{ij}+\varphi_j\tau-\omega_j)^2.
    \end{align*}

    Lemma W.2 in the web-appendix of \citet{kornely2022Asymptotic} shows that
    \begin{equation*}
        \limsup_{J\to \infty} \frac{1}{J}\left[ l_R^{(J)}(\theta\mid\Rb)-l_R^{(J)}(\theta_0\mid \Rb) \right]\leq c_1(\theta)<0. 
    \end{equation*}
    Then, we focus on proving $l_T^{(J)}$. Define $Z_j=[2\varphi_j(\tau-\tau_0)\log T_{ij} + \varphi_j^2(\tau^2-\tau_0)^2 - 2\varphi_j\omega_j(\tau-\tau_0)]/2\lambda_j $, then $l_T^{(J)}(\tau\mid \Tb)-l_T^{(J)}(\tau_0\mid \Tb) = \sum_{j=1}^J Z_j $. Since $\log T_{ij}\sim N(\omega_j-\varphi_j\tau_0,\lambda_j)$, we have
    \begin{equation*}
        \EE{Z_j} = -\frac{1}{2\lambda_j}\left[2\varphi_j(\tau-\tau_0)(\omega_j-\varphi_j\tau_0) + \varphi_j^2(\tau^2-\tau_0^2) - 2\varphi_j\omega_j(\tau-\tau_0)\right],
    \end{equation*}
    \begin{align*}
        \Var{Z_j} = \frac{\varphi_j^2(\tau-\tau_0)^2}{\lambda_j^2}\Var{\log T_{j}} = \frac{\varphi_j^2(\tau-\tau_0)^2}{\lambda_j} \leq (\tau-\tau_0)^2\sup_{j\in \mathbb{N}}\frac{\varphi_j^2}{\lambda_j}.
    \end{align*}

    Therefore, under Assumption 1,
    \begin{equation*}
        \sum_{j=1}^J \frac{\Var{Z_j}}{j^2} \leq (\tau-\tau_0)^2\sup_{j\in \mathbb{N}}\frac{\varphi_j^2}{\lambda_j} \sum_{j=1}^J\frac{1}{j^2} \leq \infty.
    \end{equation*}

    Then, by Kolmogorov's strong law of large numbers, we have,
    \begin{equation*}
        \frac{1}{J}\sum_{j=1}^J Z_j - \frac{1}{J}\sum_{j=1}^J\EEE_{\tau_0}[Z_j]\stackrel{a.s.}{\to} 0,\quad J\to \infty.
    \end{equation*}

    % Then, we want to show under Assumption \ref{assump:finite}, $\limsup_{J\to\infty}\sum_{j=1}^J\EEE_{\tau_0}[Z_j]/J\leq c_2(\tau) < 0$.
    % \begin{align*}
    %     \frac{1}{J}\sum_{j=1}^J\EEE_{\tau_0}[Z_j] &= -\frac{1}{J}\sum_{j=1}^J\frac{1}{2\lambda_j}\left[2\varphi_j(\tau-\tau_0)(\omega_j-\varphi_j\tau_0) + \varphi_j^2(\tau^2-\tau_0)^2 - 2\varphi_j\omega_j(\tau-\tau_0)\right]\\
    %     &= -\frac{1}{J}\sum_{j=1}^J\frac{\varphi_j^2}{2\lambda_j}(\tau-\tau_0)^2\\
    %     &\leq 
    % \end{align*}
    Under Assumption 2, $\limsup_{J\to\infty}\sum_{j=1}^J\EEE_{\tau_0}[Z_j]/J\leq c_2(\tau)$, and hence
    \begin{equation*}
        \limsup_{J\to \infty } \frac{1}{J}\left[ l^{(J)}_T(\tau\mid \Tb) - l^{(J)}_T(\tau_0\mid \Tb) \right] \leq c_2(\tau) <0. 
    \end{equation*}
\end{proof}

\begin{lemma}
    \label{lem:any_xi}
    Under Assumption 1 and 2, for any $\delta>0$, there exists a $k(\delta)<0$ so that
    \begin{equation*}
        \lim_{J\to \infty}P_{\xi_0}\left( \sup_{\xi\in \Theta\setminus B_{\delta}(\xi_0)} \frac{1}{J} \left( l^{(J)}(\xi\mid \Rb^{(J)},\Tb^{(J)}) - l^{(J)}(\xi_0\mid \Rb^{(J)},\Tb^{(J)}) \right) < k(\delta) \right) = 1.
    \end{equation*}
\end{lemma}
\begin{proof}
    Similarly as in the proof of Lemma \ref{lem:fix_xi}, the log likelihood can be decomposed into $l^{(J)}_R$ and $l_T^{(J)}$. The bound for $l^{(J)}_R$ is shown in Lemma 1 in \citet{kornely2022Asymptotic}. Here, we focus on the proving the following argument,
    \begin{equation}
        \label{eqn:any_xi_tau_target}
        \lim_{J\to \infty} P_{\tau_0}\left(\sup_{\xi \in \Theta\setminus B_\delta(\xi_0)}\frac{1}{J}\left( l_T^{(J)}(\tau\mid \Tb^{(J)}) - l_T^{(J)}(\tau_0\mid \Tb^{(J)}) \right) < k_2(\delta) \right) = 1.
    \end{equation}

    Before digging into the proof, we first show combining Lemma 1 in \citet{kornely2022Asymptotic} and (\ref{eqn:any_xi_tau_target}), we obtain the desired result. For simplicity, denote $A_J$ the series of events in Lemma 1 in \citet{kornely2022Asymptotic}, and $B_J$ the series of events in (\ref{eqn:any_xi_tau_target}). We know $\lim_{J\to \infty} P(A_J)=1 $, $\lim_{J\to\infty} P(B_J)=1$. Then, 
    \begin{equation*}
        \lim_{J\to \infty} P(A_J^{C}\cup B_J^C)\leq \lim_{J\to\infty} P(A_J^C) + \lim_{J\to \infty}P(B_J^C) = 0.
    \end{equation*}
    Therefore, $\lim_{J\to\infty}P(A_J\cap B_J)=1$, and we obtain the desired result.

    To prove (\ref{eqn:any_xi_tau_target}), we have the following decomposition. For any $\tau_i\neq \tau_0$, sufficiently small $\delta_i>0$,
    \begin{align*}
        \frac{1}{J}\left( l_T^{(J)}(\tau\mid \Tb^{(J)}) - l_T^{(J)}(\tau_0\mid \Tb^{(J)}) \right) &= \underbrace{\frac{1}{J}\left( l_T^{(J)}(\tau\mid \Tb^{(J)}) - l_T^{(J)}(\tau_i\mid \Tb^{(J)}) \right)}_{\alpha_1}\\
        &+ \underbrace{\frac{1}{J} \left( l_T^{(J)}(\tau_i\mid \Tb^{(J)}) - l_T^{(J)}(\tau_0\mid \Tb^{(J)}) \right)}_{\alpha_2}.
    \end{align*}

    For $\alpha_2$, from Lemma \ref{lem:fix_xi}, we have
    \begin{equation*}
        \limsup_{J\to\infty} \frac{1}{J} \left( l_T^{(J)}(\tau_i\mid \Tb^{(J)}) - l_T^{(J)}(\tau_0\mid \Tb^{(J)}) \right) \leq c_2(\tau_0) < 0,\quad P_{\xi_0}-a.s.
    \end{equation*}

    In the sequel, we will bound $\alpha_1$. Consider $\tau\in \bar{B}_{\delta_i}(\tau_i) $, we first bound $\sup_{\tau \in B_{\delta_i}(\tau_i)}(l_T^{(J)}(\tau\mid\Tb^{(J)})-l_T^{(J)}(\tau_i\mid\Tb^{(J)}))/J$. Define $Z_j=[2\varphi_j(\tau-\tau_i)\log T_{j}+\varphi_j^2(\tau^2-\tau_i^2)-2\varphi_j\omega_j(\tau-\tau_i)]/(2\lambda_j) $, and $l_T^{(J)}(\tau\mid\Tb^{(J)})-l_T^{(J)}(\tau_i\mid\Tb^{(J)}) = \sum_{j=1}^JZ_j$. Then, we bound $|\sum_{j=1}^JZ_j|/J$.
    \begin{align*}
        \frac{1}{J}\left|\sum_{j=1}^JZ_j\right|\leq \frac{1}{J}|\tau-\tau_i|\Biggl[ \underbrace{\left|\sum_{j=1}^J\frac{\varphi_j}{\lambda_j}(\log T_{j}-\omega_j+\varphi_j\tau) \right|}_{\beta_1} + \underbrace{|\tau-\tau_i|\left|\sum_{j=1}^J\frac{\varphi_j^2}{2\lambda_j} \right|}_{\beta_2} \Biggr].
    \end{align*}

    For $\beta_1$, note that $\varphi_j(\log T_j-\omega_j+\varphi_j\tau)/\lambda_j\sim N(0,\varphi_j^2/\lambda_j)$, and $\log T_j$s' are independent. Hence,
    \begin{equation*}
        \sum_{j=1}^J\frac{\varphi_j}{\lambda_j}(\log T_j-\omega_j+\varphi_j\tau) \sim N\left(0,\sum_{j=1}^J \frac{\varphi_j^2}{\lambda_j}\right).
    \end{equation*}
    By Assumption 1, $\sum_{j=1}^J\varphi_j^2/\lambda_j\leq CJ$.
    Then, by standard Gaussian tail bound, with probability $1-O(J^{-8})$,
    \begin{equation*}
        \beta_1 \leq 4\sqrt{CJ\log J}.
    \end{equation*}

    For $\beta_2$, by Assumption 1,
    \begin{equation*}
        \beta_2\leq \frac{CJ}{2}|\tau-\tau_i|.
    \end{equation*}
    
    Moreover, $|\tau-\tau_i|\leq \delta_i$. Thus, with probability at least $1-O(J^{-8})$,
    \begin{equation*}
        \frac{1}{J}\left|\sum_{j=1}^JZ_j\right| \leq \delta_i\left[4\sqrt{\frac{C\log J}{J}} + \frac{C}{2}\delta_i \right].
    \end{equation*}

    Therefore, 
    \begin{equation*}
        \lim_{\delta\to 0}\sup_{\xi\in \bar{B}_{\delta}(\xi_i)}\frac{1}{J}\left| l_T^{(J)}(\tau\mid\Tb^{(J)})-l_T^{(J)}(\tau_i\mid\Tb^{(J)}) \right| = 0.
    \end{equation*}
    Let $\varepsilon=-c_2(\tau_0)/2$, $\exists \delta_i>0$ and $c_i=c_2(\tau_0)/2$,
    \begin{equation*}
        \lim_{J\to\infty} P_{\bxi_0}\left( \sup_{\xi\in \bar{B}_{\delta_i}(\bxi_i)} \frac{1}{J}\left(l_T^{(J)}(\tau\mid\Tb^{(J)})-l_T^{(J)}(\tau_i\mid\Tb^{(J)})\right)<c_i<0\right) = 1.
    \end{equation*}

    Next, we first show the result assuming $\Theta$ is compact. Then, we extend the result to unbounded $\Theta$. For all $\delta>0$, $\Theta\setminus B_{\delta}(\bxi_0) $ is still compact. For each $\delta^{\prime}<\delta$, $\cup_{\bxi\in\Theta\setminus B_{\delta}(\bxi_0) } B_{\delta^{\prime}}(\bxi)$ is a cover for $\Theta\setminus B_{\delta}(\bxi_0)$. Hence, there exists a finite cover $B_{\delta}(\bxi_1), \ldots , B_{\delta}(\bxi_n) $ that form a cover of $\Theta\setminus B_{\delta}(\bxi_0) $.

    For each $B_{\delta}(\bxi_{k})$, there exists $c_k<0$, such that
    \begin{equation*}
        \lim_{J\to \infty} P_{\bxi_0}\left( \sup_{\xi\in \bar{B}_{\delta_i}(\bxi_i)} \frac{1}{J}\left(l_T^{(J)}(\tau\mid\Tb^{(J)})-l_T^{(J)}(\tau_i\mid\Tb^{(J)})\right)<c_k<0\right) = 1.
    \end{equation*}

    let $k = \max_{m\in [n]}c_m$, $\forall \bxi \in \Theta\setminus B_{\delta}(\bxi_0) $, by union bound
    \begin{align*}
        &\lim_{J\to \infty} P\left(\sup_{\bxi\in \Theta\setminus B_{\delta}(\bxi_0)} \frac{1}{J}\left(l_T^{(J)}(\tau\mid\Tb^{(J)})-l_T^{(J)}(\tau_i\mid\Tb^{(J)})\right)\geq k\right)\\&\leq \lim_{J\to \infty}\sum_{m=1}^n P\left(\sup_{\bxi\in B_{\delta^{\prime}}(\bxi_m)} \frac{1}{J}\left(l_T^{(J)}(\tau\mid\Tb^{(J)})-l_T^{(J)}(\tau_i\mid\Tb^{(J)})\right)\geq k\right)\\
        &= 0.
    \end{align*}

    Hence, for every compact $\Theta$, we have the desired result. Then, we extend the result to unbounded $\Theta$. Define $\Theta^{(j)}=\{ (\theta,\tau)\in\Theta: \delta+j\leq |\theta-\theta_0|\leq \delta+j+1, \quad \delta+j\leq|\tau-\tau_0|\leq \delta+j+1 \}$, $j\in \mathbb{N}$. Each $\Theta^{(j)}$ is compact and enjoys the above property. Recall the definition of $k\leq \sup_{\bxi\in\Theta\setminus B_{\delta}(\bxi_0)}c_2(\tau)/2$. Hence, let $k_j=\sup_{\bxi\in \Theta^{(j)}}c_2(\tau)/2 $, we have
    \begin{align*}
        \sup_{\bxi\in \Theta\setminus B_{\delta}(\bxi_0)} \frac{1}{J}\left(l_T^{(J)}(\tau\mid\Tb^{(J)})-l_T^{(J)}(\tau_i\mid\Tb^{(J)})\right) &= \sup_{j\in \mathbb{N}}\left( \sup_{\bxi\in \Theta^{(j)}} \frac{1}{J}\left(l_T^{(J)}(\tau\mid\Tb^{(J)})-l_T^{(J)}(\tau_i\mid\Tb^{(J)})\right)\right)\\
        \sup_{j\in \mathbb{N}} k_j &\leq \sup_{\bxi\in\Theta\setminus B_{\delta}(\bxi_0)}c_2(\tau)/2:=k_2(\delta).
    \end{align*}

    Therefore,
    \begin{align*}
        &\lim_{J\to \infty} P_{\bxi_0}\left(\sup_{\bxi \in \Theta\setminus B_\delta(\bxi_0)}\frac{1}{J}\left( l_T^{(J)}(\tau\mid \Tb^{(J)}) - l_T^{(J)}(\tau_0\mid \Tb^{(J)}) \right) < k_2(\delta) \right)\\
        &\geq \lim_{J\to\infty} P_{\bxi_0}\left(\sup_{j\in \mathbb{N}}\left( \sup_{\bxi\in \Theta^{(j)}} \frac{1}{J}\left(l_T^{(J)}(\tau\mid\Tb^{(J)})-l_T^{(J)}(\tau_i\mid\Tb^{(J)})\right)\right)\leq \sup_{j\in\NN}k_j\right)=1.
    \end{align*}
    \end{proof}

    \begin{lemma}
    \label{lem:consistency}
        \begin{enumerate}
            \item[(1)] There exists $\hat{\bxi}=(\hat{\theta},\hat{\tau})$ such that 
            \begin{align}
                \label{eqn:MLE_cond}
                \lim_{J\to \infty} P_{\bxi_0}\left( \nabla l^{(J)}(\hat{\theta}, \hat{\tau}\mid \Rb^{(J)}, \Tb^{(J)})=0 \right) &= 1,\\
                \label{eqn:MLE_reach}
                \lim_{J\to\infty}P_{\bxi_0}\left( l^{(J)}(\hat{\theta},\hat{\tau}\mid \Rb^{(J)},\Tb^{(J)}) = \max_{\bxi\in\Theta}l^{(J)}(\bxi\mid\Rb^{(J)},\Tb^{(J)}) \right)&=1,\\
                \label{eqn:consistent}
                (\hat{\theta},\hat{\tau}) &\stackrel{p}{\to} (\theta,\tau), \quad J\to \infty.
            \end{align}
            \item[(2)] There exists $\tilde{\bxi}=(\tilde{\theta},\tilde{\tau})$ such that when the log likelihood $l$ is replaced by log posterior $\tilde{l}$, the above result still holds.
        \end{enumerate}
    \end{lemma}
    \begin{proof}
        \textbf{(1)} First, we show the existence of such solution. 
        % From Theorem 5(i) in \citet{kornely2022Asymptotic}, we have
        % \begin{equation*}
        %     \lim_{J\to\infty} P_{\bxi_0}\left( \frac{1}{J}l_R^{(J)}(\theta\mid \Rb^{(J)}) < \frac{1}{J}l_R^{(J)}(\theta_0\mid \Rb^{(J)}) \right) = 1,\quad \forall \bxi \in \Theta\setminus B_{\delta}(\bxi_0).
        % \end{equation*}
        % Even though here we expand $\theta$ to $(\theta,\tau)$, in the likelihood, $\theta$ and $\tau$ can be fully separated. Hence, the result still holds.

        % Then, we want to show
        % \begin{equation*}
        %     \lim_{J\to \infty} P_{\bxi_0}\left( \frac{1}{J}l_T^{(J)}(\tau\mid \Tb^{(J)}) < \frac{1}{J} l_T^{(J)}(\tau\mid \Tb^{(J)}) \right) = 1,\quad \forall \bxi \in \Theta\setminus B_{\delta}(\bxi_0).
        % \end{equation*}
        Define
        \begin{equation*}
            A_{\delta,\varepsilon,J} = \left\{ \sup_{\bxi:\|\bxi-\bxi_0\|\geq \delta} \frac{1}{J}\left( l^{(J)}(\bxi\mid \Rb^{(J)}, \Tb^{(J)})- l^{(J)}(\bxi_0\mid \Rb^{(J)}, \Tb^{(J)}) \right) < \varepsilon \right\}.
        \end{equation*}
        By Lemma \ref{lem:any_xi}, $\lim_{J\to\infty}P_{\bxi_0}(A_{\delta,\varepsilon,J})=1$, for all $\delta > 0$ and $\varepsilon>0$. Given $A_{\delta,\varepsilon,J} $, the global minimum of the log likelihood must lie in $B_{\delta}(\bxi_0)$. Next, we construct a measurable mapping from $\left(\{0,1\}^J\times \RR^{J},\text{Pow}(\{0,1\}^J)\otimes \cB(\RR^J)\right) \to (\Theta, \cB(\Theta)) $. $\text{Pow}(\{0,1\}^J)$ denotes the power set of $\{0,1\}^J$. 

        Note that $l^{(J)}(\cdot \mid \Rb^{(J)}, \Tb^{(J)}) $ is continuous for every fixed $\Rb^{(J)} $, $\Tb^{(J)}$, and $l^{(J)}(\bxi\mid \cdot) $ is continuous for every fixed $\bxi$. Let $\Theta_{\delta}=\bar{B}_{\delta}(\bxi_0)\cap \Theta$. For simplicity, we assume $\Theta_{\delta}$ is compact. If $\Theta$ is unbounded, similar techniques as in the proof of Lemma \ref{lem:any_xi} can be applied similarly, and we omit it here.

        By continuity, there exists $\bxi^*$ such that,
        \begin{equation*}
            l^{(J)}(\bxi^*\mid \Rb^{(J)},\Tb^{(J)})=\sup_{\bxi \in \Theta_{\delta}} l^{(J)}(\bxi \mid \Rb^{(J)},\Tb^{(J)}).
        \end{equation*}
        Then by Lemma W.3 in \citet{kornely2022Asymptotic}, there exists a measurable mapping $\check{\boldsymbol{\xi}}_J$, such that $\bxi^*=\check{\boldsymbol{\xi}}_J(\Rb^{(J)},\Tb^{(J)})$. By Lemma \ref{lem:any_xi}, let $\hat{\bxi}_J = \check{\bxi}_{J}(\Rb^{(J)},\Tb^{(J)})$, we have a sequence $\hat{\bxi}_J$ that satisfies (\ref{eqn:MLE_cond}) and (\ref{eqn:MLE_reach}).

        For (\ref{eqn:consistent}), we prove by contradiction. Suppose $\hat{\bxi}_J$ is not consistent. There exists $\varepsilon_0>0$, for all $\delta_0>0$, $\forall J\in \NN$, $P(\|\hat{\bxi}_J-\bxi_0\|>\delta_0) \geq \varepsilon_0$. Let $\tilde{\delta} = \delta_0/2$, from Lemma \ref{lem:any_xi}, we have
        \begin{equation*}
            \lim_{J\to \infty} P\left( \sup_{\bxi\in\Theta \setminus B_{\tilde{\delta}}(\bxi_0)} \frac{1}{J}\left( l^{(J)}(\bxi\mid \Rb^{(J)},\Tb^{(J)}) - l^{(J)}(\bxi_0\mid \Rb^{(J)}, \Tb^{(J)})\right) < c(\tilde{\delta})<0 \right) = 1.
        \end{equation*}

        Let
        \begin{equation*}
            A_{J,\tilde{\delta}}=\left\{ \frac{1}{J}\left( l^{(J)}(\bxi\mid \Rb^{(J)},\Tb^{(J)}) - l^{(J)}(\bxi_0\mid \Rb^{(J)}, \Tb^{(J)}) \right)< c(\tilde{\delta})<0 \right\}.
        \end{equation*}
        For all $\tilde{\varepsilon}$, there exists $J_0>0$, $\forall J>J_0$, $P(A_{J,\tilde{\delta}})>1-\tilde{\varepsilon}$. Let $\tilde{\varepsilon} = \varepsilon_0/2$. Since $A_{J,\tilde{\delta}}\cap \{\|\bxi^*_J-\bxi_0\| > \delta_0\}=\emptyset$, $\{\|\bxi^*_J-\bxi_0\| > \delta_0\}\subseteq A_{J,\delta_0}^c$. Thus,
        \begin{equation*}
            P\left( \|\hat{\bxi}_J-\bxi_0\| > \delta_0 \right) \leq P(A_{J,\tilde{\delta}}^c) < \varepsilon_0,
        \end{equation*}
        for all $J>J_0$. There is a contradiction and $\hat{\bxi}_J$ is consistent.

        \textbf{(2)} The key difference for $\tilde{\bxi}_J$ to satisfy (\ref{eqn:MLE_cond})-(\ref{eqn:consistent}) is to show an equivalent version of Lemma \ref{lem:any_xi} for $\tilde{l}^{(J)}$. For $\tilde{l}^{(J)} $, there is the following decomposition,
        \begin{align*}
            &\frac{1}{J}\left[ \tilde{l}^{(J)}(\bxi\mid \Rb^{(J)}, \Tb^{(J)})-\tilde{l}^{(J)}(\bxi_0\mid \Rb^{(J)}, \Tb^{(J)}) \right]\\ 
            &= \underbrace{\frac{1}{J}\left[ {l}^{(J)}(\bxi\mid \Rb^{(J)}, \Tb^{(J)})-{l}^{(J)}(\bxi_0\mid \Rb^{(J)}, \Tb^{(J)}) \right]}_{\alpha_1} + \underbrace{\frac{1}{J}\left[ -\frac{1}{2}\bxi^\top\bSigma^{-1}\bxi + \frac{1}{2}\bxi_0^\top \bSigma^{-1}\bxi_0 \right]}_{\alpha_2} .
        \end{align*}

        We have shown $\alpha_1$ in Lemma \ref{lem:any_xi}. For $\alpha_2$, since $\bSigma$ is positive definite, $-\bxi^\top\bSigma^{-1}\bxi/2\leq 0 $. Because $\bxi_0$ and $\bSigma$ are constants, $\forall \tilde{\varepsilon} > 0$, $\exists J_0>0$, for $\forall J > J_0$,
        \begin{equation*}
            \sup_{\bxi \in \Theta\setminus B_{\delta}(\bxi_0)}\frac{1}{J}\left[ -\frac{1}{2}\bxi^\top\bSigma^{-1}\bxi + \frac{1}{2}\bxi_0^\top \bSigma^{-1}\bxi_0 \right] \leq \tilde{\varepsilon}.
        \end{equation*}

        Let $\tilde{\varepsilon}=k(\delta)/2$, $\tilde{k}(\delta)=k(\delta)/2$, then
        \begin{equation*}
            \lim_{J\to \infty}P_{\bxi_0}\left( \sup_{\xi\in \Theta\setminus B_{\delta}(\xi_0)} \frac{1}{J} \left( \tilde{l}^{(J)}(\bxi\mid \Rb^{(J)},\Tb^{(J)}) - \tilde{l}^{(J)}(\bxi_0\mid \Rb^{(J)},\Tb^{(J)}) \right) < \tilde{k}(\delta) \right) = 1.
        \end{equation*}

        The following proof is the same as the proof in (1).
    \end{proof}

    \begin{lemma}
        \label{lem:taylor_exp}
        \begin{enumerate}
            \item[1.] For any $\bxi\in \Theta$, there exists $\{a_J\}_{J\in \NN}$, $a_J\in [0,1] $, such that
            \begin{align*}
                &\tilde{l}^{(J)}(\bxi\mid \Rb^{(J)},\Tb^{(J)}) - \tilde{l}^{(J)}(\tilde{\bxi}_J\mid \Rb^{(J)}, \Tb^{(J)})\\ &= \frac{1}{2}\left( \bxi - \tilde{\bxi}_J \right)^{\top} \tilde{\Hb}_J(\bxi_J^*)\left( \bxi - \tilde{\bxi}_J \right)\\
                &= -\frac{1}{2}\left( \bxi - \tilde{\bxi}_J \right)^{\top} \left[ \cI_{J}(\tilde{\bxi}_J)(\Ib_2-E_J(\bxi))+\bSigma^{-1} \right]\left( \bxi - \tilde{\bxi}_J \right),
            \end{align*}
            where $\bxi_J^*=a_J\tilde{\bxi}_J+(1-a_J)\bxi$, $\tilde{\Hb}_J$ is the Hessian of the log posterior, $E_J = \Ib_K + \cI_{J}(\tilde{\bxi}_J)^{-1}\Hb_{J}(\bxi_J^*)$, and $\Hb_J$ is the Hessian of the log likelihood.
            \item[2.] For any $\varepsilon>0$, there is $\delta>0$, such that
            \begin{equation*}
                \lim_{J\to \infty} P_{\bxi_0}\left( \sup_{\bxi\in B_{\delta}(\bxi_0)}\|E_J(\bxi)\|<\varepsilon \right) = 1.
            \end{equation*}
            \item[3.] $\forall \varepsilon>0$, $\exists \delta>0$, for all $\bxi \in B_{\delta}(\bxi_0) $,
            \begin{align*}
                \lim_{J\to \infty} P_{\bxi_0}\biggl( (1+\varepsilon) \tilde{V}_J(\bxi)&\leq -\frac{1}{2}\left( \bxi - \tilde{\bxi}_J \right)^{\top} \left[ \cI_{J}(\tilde{\bxi}_J)(\Ib_2-E_J(\bxi))+\bSigma^{-1} \right]\left( \bxi - \tilde{\bxi}_J \right)\\
                &\leq (1-\varepsilon)\tilde{V}_J(\bxi) \biggr) = 1,
            \end{align*}
            where $\tilde{V}_J(\bxi) = -\frac{1}{2}\left( \bxi - \tilde{\bxi}_J \right)^{\top} \tilde{\cI}_{J}(\tilde{\bxi}_J)\left( \bxi - \tilde{\bxi}_J \right)$.
        \end{enumerate}
    \end{lemma}
    \begin{proof}
        \textbf{(1)} The inequality directly comes from Taylor Expansion with Cauchy form of the remainder. We omit the detailed algebraic computation here.

        \textbf{(2)} First, since 
        \begin{equation*}
            \frac{\partial^2 l^{(J)}}{\partial \theta \partial \tau} = 0,
        \end{equation*}
        the Hessian of the log likelihood is a diagonal matrix. Hence, the Fisher information of the log likelihood is also a diagonal matrix. By Assumption 3, because the diagonal entries are both greater than 0, the Fisher information of the log likelihood is full rank of 2. Additionally, since $\|\cI_J(\tilde{\bxi}_J)/J\|^{-1} = 1/\sigma_{\min}(\cI_J(\tilde{\bxi}_J)/J)$, there exists constant $C_0>0$, such that $\|\cI_J(\tilde{\bxi}_J)/J\|^{-1}\leq 1/C_0$, when $J$ is sufficiently large due to consistency of $\tilde{\bxi}_J$ shown in Lemma \ref{lem:consistency}.

        Hence,
        \begin{align*}
            \|E_J(\bxi)\| &= \left\| \left(\frac{1}{J}\cI_J(\tilde{\bxi}_J)\right)^{-1} \frac{1}{J}\left(\cI_J(\tilde{\bxi}_J)+\nabla^2l^{(J)}(\bxi\mid \Rb^{(J)}, \Tb^{(J)}) \right) \right\| \\
            &\leq \left\|\frac{1}{J}\cI_J(\tilde{\bxi}_J)\right\|^{-1} \left\| \frac{1}{J}\left(\cI_J(\tilde{\bxi}_J)+\nabla^2l^{(J)}(\bxi\mid \Rb^{(J)}, \Tb^{(J)}) \right) \right\|\\
            &\leq \frac{1}{C_0}\left\| \frac{1}{J}\left(\cI_J(\tilde{\bxi}_J)+\nabla^2l^{(J)}(\bxi\mid \Rb^{(J)}, \Tb^{(J)}) \right) \right\|\\
            &\leq \frac{1}{C_0}\max\left\{\alpha_1,\alpha_2\right\},
        \end{align*}
        where
        \begin{align*}
            \alpha_1 &= \frac{1}{J}\left[\sum_{j=1}^J \frac{a_j^2\phi(a_j\tilde{\theta}+b_j)}{\Phi(a_j\tilde{\theta}+b_j)[1-\Phi(a_j\tilde{\theta}+b_j)]} - \sum_{j=1}^J \frac{a_j^2\phi(a_j{\theta}+b_j)}{\Phi(a_j{\theta}+b_j)[1-\Phi(a_j{\theta}+b_j)]}\right]\\
            \alpha_2 &= \frac{1}{J}\left[ \sum_{j=1}^J\frac{\varphi_j^2}{\lambda_j} - \sum_{j=1}^J\frac{\varphi_j^2}{\lambda_j} \right] = 0.
        \end{align*}
        The last inequality utilizes that both the Fisher information $\cI_J(\tilde{\bxi}_J)$ and $\nabla^2l^{(J)}(\bxi\mid \Rb^{(J)},\Tb^{(J)})$ are diagonal matrices.

        Since $\alpha_2=0$, we only need to bound $\alpha_1$. Following the same proof of Lemma 2 in \citet{kornely2022Asymptotic}, for any $\varepsilon>0$, when $J$ is sufficiently large, there exists $\delta>0$, such that
        \begin{equation*}
            \lim_{J\to\infty} P_{\bxi_0}\left( \sup_{\bxi \in B_{\delta}(\bxi_0)} \frac{1}{C_0}\alpha_1 < \varepsilon \right) = 1.
        \end{equation*}

        Therefore, the second part of Lemma \ref{lem:taylor_exp} is proven.

        \textbf{(3)} For the result in the third part, we first show $\forall \varepsilon >0$, $\exists \delta > 0$, such that,
        \begin{equation*}
            \lim_{J\to \infty} P_{\bxi_0}\left( \left| (\bxi-\tilde{\bxi}_J)^\top\cI_J(\tilde{\bxi}_J)E_J(\bxi)(\bxi-\tilde{\bxi}_J) \right|\leq -2\varepsilon V_J(\bxi) \right) = 1,
        \end{equation*}
        where ${V}_J(\bxi) = -\frac{1}{2}( \bxi - \tilde{\bxi}_J )^{\top} {\cI}_{J}(\tilde{\bxi}_J)( \bxi - \tilde{\bxi}_J )$ 
        % \textcolor{red}{(should be some kind of MLE variant)}.

        By Lemma W.5 in \citet{kornely2022Asymptotic},
        \begin{equation*}
            \left|  \frac{1}{2}(\bxi-\tilde{\bxi}_J)^\top\cI_J(\tilde{\bxi}_J)E_J(\bxi)(\bxi-\tilde{\bxi}_J) \right| \leq -\kappa\bigl(\cI_J(\tilde{\bxi}_J)\bigr) \|E_J(\bxi)\|V_J(\bxi).
        \end{equation*}

        By Assumption 3 and Lemma \ref{lem:consistency} and continuous mapping theorem, there exists $C_1^{\prime}$ such that
        \begin{equation*}
            P_{\bxi_0}\left( \limsup_{J\to \infty} \kappa\bigl(\cI_J(\tilde{\bxi}_J)\bigr) \leq C_1^{\prime} \right) = 1.
        \end{equation*}

        Additionally, from (2) of this Lemma, $\|E_J(\bxi)\|$ converges to 0 in probability. Hence,
        \begin{equation*}
            % \label{eqn:bound_1}
            \lim_{J\to \infty} P_{\bxi_0}\left( \left| (\bxi-\tilde{\bxi}_J)^\top\cI_J(\tilde{\bxi}_J)E_J(\bxi)(\bxi-\tilde{\bxi}_J) \right|\leq -2\varepsilon V_J(\bxi) \right) = 1.
        \end{equation*}

        Therefore, under $\{ | (\bxi-\tilde{\bxi}_J)^\top\cI_J(\tilde{\bxi}_J)E_J(\bxi)(\bxi-\tilde{\bxi}_J) |\leq -2\varepsilon V_J(\bxi)\}$,
        \begin{align*}
            &-\frac{1}{2}\left( \bxi - \tilde{\bxi}_J \right)^{\top} \left[ \cI_{J}(\tilde{\bxi}_J)(\Ib_2-E_J(\bxi))+\bSigma^{-1} \right]\left( \bxi - \tilde{\bxi}_J \right)\\
            &\leq V_J(\bxi)+ \frac{1}{2}\left(\bxi-\tilde{\bxi}_J\right)^\top \cI_{J}(\tilde{\bxi}_J)E_J(\bxi)\left(\bxi-\tilde{\bxi}_J\right) - \frac{1}{2}\left( \bxi-\tilde{\bxi}_J \right)^{\top} \bSigma^{-1} \left( \bxi - \tilde{\bxi}_J \right)\\
            &\leq (1-\varepsilon)V_J(\bxi) - \frac{1}{2}\left( \bxi-\tilde{\bxi}_J \right)^{\top} \bSigma^{-1} \left( \bxi - \tilde{\bxi}_J \right).
        \end{align*}
        
        Since $\bSigma$ is positive definite,
        \begin{equation*}
            (1-\varepsilon)\left( \bxi-\tilde{\bxi}_J \right)^{\top} \bSigma^{-1} \left( \bxi - \tilde{\bxi}_J \right) \leq \left( \bxi-\tilde{\bxi}_J \right)^{\top} \bSigma^{-1} \left( \bxi - \tilde{\bxi}_J \right) \leq (1+\varepsilon)\left( \bxi-\tilde{\bxi}_J \right)^{\top} \bSigma^{-1} \left( \bxi - \tilde{\bxi}_J \right).
        \end{equation*}
        Also, note that
        \begin{equation*}
            \tilde{V}_J(\bxi) = V_J(\bxi) - \frac{1}{2}\left( \bxi-\tilde{\bxi}_J \right)^{\top} \bSigma^{-1} \left( \bxi - \tilde{\bxi}_J \right).
        \end{equation*}

        Therefore,
        \begin{align*}
            (1-\varepsilon)\tilde{V}_J(\bxi) &\geq (1-\varepsilon)V_J(\bxi) - \frac{1}{2}\left( \bxi-\tilde{\bxi}_J \right)^{\top} \bSigma^{-1} \left( \bxi - \tilde{\bxi}_J \right),\\
            (1+\varepsilon)\tilde{V}_J(\bxi) &\leq (1+\varepsilon)V_J(\bxi) - \frac{1}{2}\left( \bxi-\tilde{\bxi}_J \right)^{\top} \bSigma^{-1} \left( \bxi - \tilde{\bxi}_J \right).
        \end{align*}
        
        Hence,
        \begin{align*}
            -\frac{1}{2}\left( \bxi - \tilde{\bxi}_J \right)^{\top} \left[ \cI_{J}(\tilde{\bxi}_J)(\cI_2-E_J(\bxi))+\bSigma^{-1} \right]\left( \bxi - \tilde{\bxi}_J \right) \leq (1-\varepsilon)\tilde{V}_J(\bxi).
        \end{align*}
        The other side of the inequality holds similarly.
    \end{proof}

    \begin{lemma}
        \label{lem:converge}
        Let $\tilde{\Phi}(B) = P(Z\in B)$, where $Z\sim N(0,\cI)$. Under Assumption 1, 2, 3
        \begin{enumerate}
            \item[1.] For every function $f$ that the integral $\int_{\Theta}f(\bxi)\pi(\bxi)d\bxi$ exists, for every $\delta > 0$, we have,
            \begin{equation*}
                \frac{\int_{\Theta\setminus B_{\delta}(\bxi_0)}f(\bxi)P^{(J)}(\Rb^{J}, \Tb^{(J)}\mid \bxi)\pi(\bxi)d\bxi }{P^{(J)}(\Rb^{(J)},\Tb^{(J)}\mid \tilde{\bxi}_J)}\det(\tilde{\cI}_J(\tilde{\bxi}_J))^{1/2} \stackrel{P_{\bxi_0}}{\to} 0, \quad J\to \infty.
            \end{equation*}
            \item[2.] Consider a sequence of mappings $\{G_J\}_{J\in \NN} $, $G_J: \bigl(\Theta,\cB(\Theta)\bigr)\to \bigl(\Theta, \cB(\Theta)\bigr) $ satisfying either of the following condition
            \begin{align}
                \label{eqn:cond_1}
                \lim_{J\to \infty}P_{\bxi_0}\left( G_J(B) \subseteq B_{\delta}(\bxi_0) \right) &= 1, \quad \forall \delta > 0,\\
                \label{eqn:cond_2}
                \lim_{J\to \infty}P_{\bxi_0}\left( G_J(B) \supseteq B_{\delta}(\bxi_0) \right) &= 1, \quad \forall \delta>0,
            \end{align}
            for all bounded $B\in \cB(\Theta)$. Then,
            \begin{equation*}
                \frac{\int_{G_J(B)} P^{(J)}( \Rb^{(J)}, \Tb^{(J)}\mid \bxi ) \pi(\bxi) d\bxi }{P^{(J)}(\Rb^{(J)},\Tb^{(J)}\mid \tilde{\bxi}_J)} \det(\tilde{\cI}_J(\tilde{\bxi}_J))^{1/2} - \tilde{\Phi}\left( \tilde{\cI}(\tilde{\bxi}_J)^{1/2}(G_J(B)-\tilde{\bxi}_J) \right)\pi(\bxi_0)(2\pi) = o_{P_{\bxi_0}}(1).
            \end{equation*}
        \end{enumerate}
    \end{lemma}
    \begin{proof}
        \textbf{(1)} First, note that
        \begin{align*}
            &\frac{\int_{\Theta\setminus B_{\delta}(\bxi_0)} f(\bxi)P^{(J)}(\Rb^{(J)}, \Tb^{(J)} \mid \bxi) \pi(\bxi)d\bxi }{\pi(\tilde{\bxi}_J)P^{(J)}(\Rb^{(J)}, \Tb^{(J)}\mid \tilde{\bxi}_J)}\det(\tilde{\cI}_J(\tilde{\bxi}_J))^{1/2}\\
            &= \exp \left( \tilde{l}^{(J)}(\bxi_0\mid \Rb^{(J)}, \Tb^{(J)}) - \tilde{l}^{(J)}(\tilde{\bxi}_J\mid \Rb^{(J)},\Tb^{(J)}) \right) \tilde{L}_J \det(\tilde{\cI}_J(\tilde{\bxi}_J))^{1/2},
        \end{align*}
        where
        \begin{equation*}
            \tilde{L}_J = \int_{\Theta\setminus B_{\delta}(\bxi_0)}\exp \left( \tilde{l}^{(J)}(\bxi_0\mid \Rb^{(J)}, \Tb^{(J)}) - \tilde{l}^{(J)}(\tilde{\bxi}_J\mid \Rb^{(J)},\Tb^{(J)}) \right) f(\bxi) d\bxi.
        \end{equation*}

        Since $\tilde{\bxi}_J$ is a maximum of $\tilde{l}^{(J)} $, one has
        \begin{equation*}
            \exp \left( \tilde{l}^{(J)}(\bxi_0\mid \Rb^{(J)}, \Tb^{(J)}) - \tilde{l}^{(J)}(\tilde{\bxi}_J\mid \Rb^{(J)},\Tb^{(J)}) \right) \leq 1.
        \end{equation*}

        Hence,
        \begin{equation*}
            \left|\frac{\int_{\Theta\setminus B_{\delta}(\bxi_0)} f(\bxi)P^{(J)}(\Rb^{(J)}, \Tb^{(J)} \mid \bxi) \pi(\bxi)d\bxi }{\pi(\tilde{\bxi}_J)P^{(J)}(\Rb^{(J)}, \Tb^{(J)}\mid \tilde{\bxi}_J)}\det(\tilde{\cI}_J(\tilde{\bxi}_J))^{1/2}\right| \leq \left|\tilde{L}_J\det(\tilde{\cI}_J(\tilde{\bxi}_J))^{1/2}\right|.
        \end{equation*}

        For the determinant,
        \begin{align*}
            \det(\tilde{\cI}_J(\tilde{\bxi}_J))^{1/2} &= \sqrt{\det\left(\cI_J(\tilde{\bxi}_J) + \bSigma^{-1}\right)}\\
            &= J\sqrt{\det\left(\frac{1}{J}\cI_J(\tilde{\bxi}_J) + \frac{1}{J}\bSigma^{-1}\right) }\\
            &\leq J \sigma_1\left(\frac{1}{J}\cI_J(\tilde{\bxi}_J) + \frac{1}{J}\bSigma^{-1}\right)\\
            &\leq J\left[\sigma_1\left(\frac{1}{J}\cI_J(\tilde{\bxi}_J)\right)+\sigma_1\left( \frac{1}{J}\bSigma^{-1} \right)\right],
        \end{align*}
        where the last inequality comes from Weyl's inequality. By Assumption 1, $\sigma_1(\cI_J(\tilde{\bxi}_J)/J)$ is bounded by some constant. Since $\bSigma$ is a constant, $\sigma_1(\bSigma^{-1}/J)=O(1/J)$. Hence,
        \begin{equation*}
            \det(\tilde{\cI}_J(\tilde{\bxi}_J))^{1/2} = O_{P_{\bxi_0}}(J).
        \end{equation*}

        Since $\pi(\bxi)$ is proper and has support over $\Theta$,
        \begin{equation*}
            \frac{1}{P^{(J)}\left(\Rb^{(J)}, \Tb^{(J)}\mid \bxi_0\right)} \int_{\Theta\setminus B_{\delta}(\bxi_0)} P^{(J)}(\Rb^{(J)},\Tb^{(J)}\mid \bxi)\pi(\bxi)d\bxi = o_{P_{\bxi_0}}(J^{-1}),\quad \forall \delta > 0.
        \end{equation*}
        For detailed discussion, one can refer to Equation (28) in \citet{kornely2022Asymptotic}. Additionally, suppose there exists constant $C_f>0$, $|f(\bxi)|<C_f$ for $\bxi \in \Theta$ almost everywhere. Then,
        \begin{equation*}
            \tilde{L}_J \leq C_f\left| \frac{\int_{\Theta\setminus B_{\delta}(\bxi_0)}P^{(J)}\left(\Rb^{(J)}, \Tb^{(J)}\mid \bxi \right)\pi(\bxi)d\bxi}{\pi(\bxi_0)P^{(J)}(\Rb^{(J)},\Tb^{(J)}\mid\bxi_0)} \right| = o_{P_{\bxi_0}}(J^{-1}).
        \end{equation*}

        Therefore,
        \begin{equation*}
            \left|\tilde{L}_J\det(\tilde{\cI}_J(\tilde{\bxi}_J))^{1/2} \right| = o_{P_{\bxi_0}}(1).
        \end{equation*}

        \textbf{(2)}
        Let $M_{\delta,J}=B_{\delta}(\bxi_0)$, $U_J=\int_{M_{\delta,J}}P^{(J)}\left(\Rb^{(J)},\Tb^{(J)}\mid \bxi\right)\pi(\bxi)d\bxi $.
        \begin{align*}
            &\frac{U_J\det(\tilde{\cI}_J(\tilde{\bxi}_J))^{1/2}}{\pi(\tilde{\bxi}_J)P^{(J)}(\Rb^{(J)},\Tb^{(J)}\mid \tilde{\bxi}_J)}\\
            &= \frac{\det(\tilde{\cI}_J(\tilde{\bxi}_J))^{1/2}}{\pi(\tilde{\bxi}_J)}\int_{M_{\delta,J}}\exp\left( -\frac{1}{2} \left(\bxi-\tilde{\bxi}_J\right)^\top \left( \cI_J(\tilde{\bxi}_J)(\Ib_2 - E_J(\bxi)) + \bSigma^{-1} \right) \left(\bxi-\tilde{\bxi}_J\right) \right)d\bxi.
        \end{align*}

        From Lemma \ref{lem:consistency}, for any $\varepsilon > 0$,
        \begin{align*}
            &(1-o_{P_{\bxi_0}}(1))\int_{M_{\delta,J}}\exp\left(-\frac{\varepsilon+1}{2}\left(\bxi-\tilde{\bxi}_J\right)^\top\tilde{\cI}_J(\tilde{\bxi}_J)\left(\bxi-\tilde{\bxi}_J\right)\right)d\bxi\\
            &\leq \int_{M_{\delta,J}}\exp\left( -\frac{1}{2} \left(\bxi-\tilde{\bxi}_J\right)^\top \left( \cI_J(\tilde{\bxi}_J)(\Ib_2 - E_J(\bxi)) + \bSigma^{-1} \right) \left(\bxi-\tilde{\bxi}_J\right) \right)d\bxi\\
            &\leq (1+o_{P_{\bxi_0}}(1))\int_{M_{\delta,J}}\exp\left(-\frac{1-\varepsilon}{2}\left(\bxi-\tilde{\bxi}_J\right)^\top\tilde{\cI}_J(\tilde{\bxi}_J)\left(\bxi-\tilde{\bxi}_J\right)\right)d\bxi.
        \end{align*}

        Therefore,
        \begin{align*}
            &(1-o_{P_{\bxi_0}}(1))\tilde{\Phi}\left( \sqrt{1+\varepsilon} \tilde{\cI}_J(\tilde{\bxi}_J)^{1/2}(M_{\delta,J}-\tilde{\bxi}_J) \right)(2\pi)\frac{\pi(\tilde{\bxi}_J)}{1+\varepsilon}\\
            &\leq \frac{U_J\det(\tilde{\cI}_J(\tilde{\bxi}_J))^{1/2}}{P^{(J)}(\Rb^{(J)},\Tb^{(J)}\mid\tilde{\bxi}_J)}\\
            &\leq (1+o_{P_{\bxi_0}}(1))\tilde{\Phi}\left( \sqrt{1-\varepsilon} \tilde{\cI}_J(\tilde{\bxi}_J)^{1/2}(M_{\delta,J}-\tilde{\bxi}_J) \right)(2\pi)\frac{\pi(\tilde{\bxi}_J)}{1-\varepsilon}.
        \end{align*}
        By setting $\varepsilon$ arbitrarily small and continuous mapping theorem from the consistency of $\tilde{\bxi}_J$, the desired result is obtained.
    \end{proof}

    \begin{corollary}
        \label{coro:converge}
        Suppose a sequence $\{R_j,T_j\}_{j\in \NN}$ generated from a fixed $\bxi_0\in \Theta$, under Assumption 1-3, for $J\to\infty$,
        \begin{equation*}
            \left( \frac{P^{(J)}(\Rb^{(J)},\Tb^{(J)})\det(\tilde{\cI}_J (\tilde{\bxi}_J))^{1/2}}{P^{(J)}(\Rb^{(J)},\Tb^{(J)}\mid\tilde{\bxi}_J)} \right)^{-1} \stackrel{P_{\bxi_0}}{\to} \frac{1}{(2\pi)\pi(\bxi_0)}.
        \end{equation*}
    \end{corollary}
    \begin{proof}
        By setting $G_J(B)=\mathbb{R}^2$, $J\in \NN$, from Lemma \ref{lem:converge}, we have the desired result.
    \end{proof}

    \subsection{Proof of Theorem 2}
    First, we show the result for all bounded $B$. Then we extend to unbounded $B$ and convergence in probability. Let $G_J(B)=\{\tilde{\cI}_J(\tilde{\bxi}_J)^{-1/2}\bx+\tilde{\bxi}_J:\bx\in B \} = \tilde{\cI}_J(\tilde{\bxi}_J)^{-1/2}B + \tilde{\bxi}_J $. Note that
    \begin{align*}
        &P(\tilde{\cI}_J(\tilde{\bxi}_J)^{-1/2}(\bxi-\tilde{\bxi}_J)\in B\mid \Rb^{(J)}, \Tb^{(J)})\\
        &= \underbrace{\frac{\det(\tilde{\cI}_J) \int_{G_J(B)}P^{(J)}(\Rb, \Tb\mid \bxi)\pi(\bxi)d\bxi}{P^{(J)}(\Rb^{(J)},\Tb^{(J)}\mid \tilde{\bxi}_J)}}_{\alpha_1} \cdot \underbrace{\left(\frac{P^{(J)}(\Rb^{(J)},\Tb^{(J)}) \det(\tilde{\cI}_J (\tilde{\bxi}_J))}{P^{(J)}(\Rb^{(J)},\Tb^{(J)}\mid \tilde{\bxi}_J)}\right)^{-1}}_{\alpha_2}.
    \end{align*}

    From Lemma \ref{lem:consistency}, $\tilde{\bxi}_J\stackrel{P_{\bxi_0}}{\to}\bxi_0$, hence,
    \begin{align*}
        \left\|\tilde{\cI}_J^{-1}(\tilde{\bxi}_J)\right\| = \frac{1}{J}\left\| \left( \frac{1}{J}\tilde{\cI}_J(\tilde{\bxi}_J) \right)^{-1} \right\| = O_{P_{\bxi_0}}\left(\frac{1}{J}\right).
    \end{align*}
    Thus, $\tilde{\cI}_J^{-1}(\tilde{\bxi}_J)\stackrel{P_{\bxi_0}}{\to} 0$.

    From Lemma \ref{lem:converge} (2),
    \begin{equation*}
        \alpha_1 \stackrel{P_{\bxi_0}}{\to} \tilde{\Phi}\left( \tilde{\cI}_J(\tilde{\bxi}_J)^{-1/2}(G_J(B) - \tilde{\bxi}_J) \right) \pi(\bxi_0)(2\pi) = \tilde{\Phi}\left( B \right) \pi(\bxi_0)(2\pi), \quad J\to \infty.
    \end{equation*}

    From Corollary \ref{coro:converge},
    \begin{equation*}
        \alpha_2 \stackrel{P_{\bxi_0}}{\to} \frac{1}{(2\pi)\pi(\bxi_0)}.
    \end{equation*}

    Therefore, for every bounded $B$, combining limit distribution of $\alpha_1$ and $\alpha_2$, we have the desired result. Then, for unbounded $B\in \cB(\Theta)$, define the posterior probability measure as
    \begin{equation*}
        \tilde{\Psi}_J(A) = \int_{G_J(A)}P(\bxi\mid \Rb^{(J)},\Tb^{(J)})d\bxi.
    \end{equation*}

    For an unbounded Borel set $B$, it can be written as $B=\cup_{m=1}^{\infty}B_m $, where $B_m\cap B_n = \emptyset$, $\forall m\neq n$, and $B_m$'s are bounded. Hence, for any $\varepsilon>0$,
    \begin{equation*}
        \lim_{J\to\infty} P_{\bxi_0}\left( \left| \tilde{\Psi}_J(B_m) - \Phi_2(B_m) \right| < \varepsilon \right) = 1.
    \end{equation*}

    Let $\varepsilon=6\varepsilon^\prime/(\pi m^2)$, we have
    \begin{align*}
        |\tilde{\Psi}_J(B) - \Phi_2(B)| &\leq \sum_{m=1}^{\infty}|\tilde{\Psi}_J(B_m)-\Phi_2(B_m)|\\
        &< \sum_{m=1}^{\infty} \frac{6\varepsilon^{\prime}}{\pi^2 m^2}\\
        &= \varepsilon^{\prime}.
    \end{align*}
    Hence, the result holds for arbitrary $B$.

    Let $H_{d,\epsilon}(\bxi^\prime)=P(|\tilde{\Psi}_J(B)-\Phi_2(B)| > \epsilon\mid \bxi_0=\bxi^{\prime})$. Since $H_{d,\epsilon}\leq 1$ uniformly, by dominated convergence theorem,
    \begin{align*}
        \lim_{J\to\infty}P(|\tilde{\Psi}_J(B)-\Phi_2(B)|>\epsilon) &= \lim_{J\to\infty} \int_{\Theta} H_{d,\epsilon}(\bxi)d\cG(\bxi)\\
        &=\int_{\Theta}\lim_{J\to\infty} H_{d,\epsilon}(\bxi)d\cG(\bxi)=0,
    \end{align*}
    where $\cG(\bxi)$ is any proper probability measure on $\Theta$. The last inequality comes from convergence in $P_{\bxi_0}$.

\section{Simulation comparison with traditional SAEM implementation}
\label{append:SAEM}

This section compares the parameter estimation performance of smart initialized SAEM against traditional SAEM. The traditional SAEM is implemented with the first 20 steps with weight $\alpha_t=1$, followed by a decay of $\alpha_t=1/(t-20)$. The smart initialized SAEM starts with $\alpha_t=1/t$. Parameters are generated in the same configuration as in Section 5. We conduct 200 parallel simulations.

The result is presented in Figure \ref{fig:sim_com_saem_1} and \ref{fig:sim_com_saem_2}. For the probit part, there are significant outliers in the estimation of $\ba$ and $\bb$. The outlier results from instability of the stochastic approximation E-step, when $\alpha_t=1$. The suboptimal optimization target drives the estimate away from the true optimal region. Even excluding the outlier, the smart initialized SAEM consistently yields lower estimation error across all parameters. When the number of "burn-in" steps increases, the estimation accuracy exacerbates. We do not present the result because the many outliers when the "burn-in" steps become larger, and not ideal for presentation.

\begin{figure}[h!]
    \centering
    \includegraphics[width=\linewidth]{fig/app_2/rmse_comp_saem_1.pdf}
    \caption{Comparison between smart initialized SAEM and traditional SAEM in $\btheta$, $\ba$, and $\bb$.}
    \label{fig:sim_com_saem_1}
\end{figure}

\begin{figure}[h!]
    \centering
    \includegraphics[width=0.6\linewidth]{fig/app_2/rmse_comp_saem_2.pdf}
    \caption{Comparison between smart initialized SAEM and traditional SAEM in $\bomega$, $\bvarphi$, $\blambda$, and $\rho$.}
    \label{fig:sim_com_saem_2}
\end{figure}

\section{List of Evaluated LLMs and Hyperparameters for generation}
\label{append:llm}
\subsection{List of Evaluated LLMs}
The list of models is as follows. For MATH500, we have an additional LLM google/gemma-2-27b-it.
\footnotesize
\begin{multicols}{2}
    \begin{itemize}[noitemsep, topsep=0pt]
        \item 01-ai/Yi-34B
        \item baidu/ERNIE-4.5-21B-A3B-PT
        \item baidu/ERNIE-4.5-21B-A3B-Thinking
        \item deepseek-ai/DeepSeek-R1-0528-Qwen3-8B
        \item deepseek-ai/DeepSeek-R1-Distill-Llama-8B
        \item deepseek-ai/DeepSeek-R1-Distill-Qwen-1.5B
        \item deepseek-ai/DeepSeek-R1-Distill-Qwen-14B
        \item deepseek-ai/DeepSeek-R1-Distill-Qwen-32B
        \item deepseek-ai/DeepSeek-R1-Distill-Qwen-7B
        \item dphn/dolphin-2.9.1-yi-1.5-34b
        \item dphn/Dolphin-Mistral-24B-Venice-Edition
        \item google/gemma-2b-it
        \item google/gemma-3-1b-it
        \item google/gemma-3-1b-pt
        \item google/gemma-7b-it
        \item google/vaultgemma-1b
        \item HuggingFaceTB/SmolLM3-3B
        \item huihui-ai/Huihui-gpt-oss-20b-BF16-abliterated
        \item huihui-ai/Huihui-Qwen3-8B-abliterated-v2
        \item ibm-granite/granite-3.3-2b-instruct
        \item internlm/internlm2-chat-20b
        \item LGAI-EXAONE/EXAONE-4.0.1-32B
        \item LLM360/K2-Think
        \item meta-llama/Llama-2-7b-chat-hf
        \item meta-llama/Llama-2-7b-hf
        \item meta-llama/Llama-3.1-8B-Instruct
        \item meta-llama/Llama-3.2-1B
        \item meta-llama/Llama-3.2-1B-Instruct
        \item meta-llama/Llama-3.2-3B
        \item meta-llama/Llama-3.2-3B-Instruct
        \item meta-llama/Meta-Llama-3-8B
        \item meta-llama/Meta-Llama-3-8B-Instruct
        \item microsoft/Phi-3.5-mini-instruct
        \item microsoft/Phi-3.5-MoE-instruct
        \item microsoft/phi-4
        \item microsoft/Phi-4-mini-instruct
        \item microsoft/Phi-4-reasoning
        \item microsoft/Phi-4-reasoning-plus
        \item mistralai/Magistral-Small-2507
        \item mistralai/Magistral-Small-2509
        \item mistralai/Mistral-7B-Instruct-v0.1
        \item mistralai/Mistral-7B-Instruct-v0.2
        \item mistralai/Mistral-7B-Instruct-v0.3
        \item mistralai/Mistral-Small-3.2-24B-Instruct-2506
        \item mistralai/Mistral-Small-Instruct-2409
        \item moonshotai/Moonlight-16B-A3B
        \item moonshotai/Moonlight-16B-A3B-Instruct
        \item nvidia/AceReason-Nemotron-1.1-7B
        \item nvidia/AceReason-Nemotron-14B
        \item nvidia/Llama-3.1-Nemotron-8B-UltraLong-4M-Instruct
        \item nvidia/Nemotron-Research-Reasoning-Qwen-1.5B
        \item nvidia/NVIDIA-Nemotron-Nano-12B-v2
        \item nvidia/NVIDIA-Nemotron-Nano-9B-v2
        \item nvidia/OpenReasoning-Nemotron-1.5B
        \item nvidia/OpenReasoning-Nemotron-7B
        \item openai-community/gpt2
        \item openai/gpt-oss-20b
        \item openbmb/MiniCPM4.1-8B
        \item Qwen/Qwen1.5-32B
        \item Qwen/Qwen2-7B-Instruct
        \item Qwen/Qwen2.5-0.5B-Instruct
        \item Qwen/Qwen2.5-1.5B-Instruct
        \item Qwen/Qwen2.5-14B-Instruct
        \item Qwen/Qwen2.5-32B-Instruct
        \item Qwen/Qwen2.5-3B-Instruct
        \item Qwen/Qwen2.5-7B-Instruct
        \item Qwen/Qwen3-0.6B
        \item Qwen/Qwen3-1.7B
        \item Qwen/Qwen3-14B
        \item Qwen/Qwen3-30B-A3B
        \item Qwen/Qwen3-30B-A3B-Instruct-2507
        \item Qwen/Qwen3-30B-A3B-Thinking-2507
        \item Qwen/Qwen3-32B
        \item Qwen/Qwen3-4B
        \item Qwen/Qwen3-4B-Instruct-2507
        \item Qwen/Qwen3-4B-Thinking-2507
        \item Qwen/Qwen3-8B
        \item Qwen/QwQ-32B
        \item swiss-ai/Apertus-8B-Instruct-2509
        \item THUDM/GLM-4-9B-0414
        \item TinyLlama/TinyLlama-1.1B-Chat-v1.0
        \item zai-org/GLM-4-32B-0414
    \end{itemize}
\end{multicols}
\normalsize % Restore font size after the appendix

\subsection{Hyperparameters and Prompts for generation}
For the hyperparameters of LLMs generation, we set the temperature to be 0.5, top p 0.95, max output tokens as 10,240, and repetition penalty of 1.05. Without the repetition penalty, some LLMs will keep repeat until reach the maximum output token. Therefore, we set a mild repetition penalty such that the CoT is not repeated, and the CoT length will be a better summary of the thinking quality of an LLM.

For the prompts, we use CoT zero-shot prompting and one-shot prompting. The specific forms of the prompts are as follows. The \{problem\} provides the detailed question of the item. The one-shot example comes from a question item in MATH dataset \citep{hendrycksmath2021} that is not included in MATH500.
\begin{enumerate}
    \item \textbf{Zero-shot Prompt}: Solve the following math problem. Be clear and concise.
    Problem: "\{problem\}"
    Provide a \textbf{step-by-step solution}. Start each step with a number followed by a period (e.g., '1.', '2.', etc.).
    Use basic LaTeX for mathematical expressions, such as for fractions, exponents, and variables. Avoid complex formatting.
    At the very end of your entire response, and only at the very end, state the final answer.
    This final answer must be enclosed in a single LaTeX box, like so: \boxed{{Your Answer}}.
    \item \textbf{One-shot Prompt}: Solve the following math problem. Please think \textbf{step-by-step} to obtain the solution. Use basic LaTeX for mathematical expressions, such as for fractions, exponents, and variables. Avoid complex formatting. At the very end of your entire response, and only at the very end, state the final answer. This final answer must be enclosed in a single LaTeX box, like so: \boxed{{Your Answer}}.

    Here is an example of how to format your response and think about solving the problem:
    Example Problem:
    What is the sum of the two values of $x$ for which $(x+3)^2 = 121$?

    Example Solution:
    Expanding the left side, we have $x^2+6x+9=121 \Rightarrow x^2+6x-112=0$. For a quadratic with the equation $ax^2+bx+c=0$, the sum of the roots is $-b/a$. Applying this formula to the problem, we have that the sum of the two roots is $-6/1=\boxed{-6}$.

    Solution: \boxed{{-6}}

    --- New Problem:
    \{problem\}.
\end{enumerate}

\section{Supplementary Materials for Application}
\label{append:applied}
\subsection{Behavior of One-shot Models}
\begin{figure}[htbp!]
    \centering
    \includegraphics[width=0.85\linewidth]{fig/app_2/model_one_all_horizontal.pdf}
    \caption{Estimated latent ability and latent speed of the LLMs with one-shot prompting by LaRT. The figure on the left is the estimated latent ability, while the one on the right is the estimated latent speed. The different color represents estimated results for different dataset.}
    \label{fig:model_one_all}
\end{figure}

\subsection{Ranking differences of Other Datasets}
\begin{figure}[htbp!]
    \centering
    \includegraphics[width=\linewidth]{fig/app_2/lart_irt_ranking_math500_combined.pdf}
    \caption{Differences in LLM rankings for both zero-shot models and one-shot models for MATH500. The figure on the left is for zero-shot models, and the figure on the right for one-shot models. For each of the figure, rankings by LaRT are on the left, and rankings by IRT on the right. LLMs that are higher have higher ranking. The lines connect the same models with different rankings by LaRT and IRT.}
    \label{fig:ranking_shift_math500}
\end{figure}

\begin{figure}[htbp!]
    \centering
    \includegraphics[width=\linewidth]{fig/app_2/lart_irt_ranking_amc23_combined.pdf}
    \caption{Differences in LLM rankings for both zero-shot models and one-shot models for AMC23. The figure on the left is for zero-shot models, and the figure on the right for one-shot models. For each of the figure, rankings by LaRT are on the left, and rankings by IRT on the right. LLMs that are higher have higher ranking. The lines connect the same models with different rankings by LaRT and IRT.}
    \label{fig:ranking_shift_amc23}
\end{figure}

\begin{figure}[htbp!]
    \centering
    \includegraphics[width=\linewidth]{fig/app_2/lart_irt_ranking_aime24_combined.pdf}
    \caption{Differences in LLM rankings for both zero-shot models and one-shot models for AIME24. The figure on the left is for zero-shot models, and the figure on the right for one-shot models. For each of the figure, rankings by LaRT are on the left, and rankings by IRT on the right. LLMs that are higher have higher ranking. The lines connect the same models with different rankings by LaRT and IRT.}
    \label{fig:ranking_shift_aime24}
\end{figure}

\bibliography{ref}
\bibliographystyle{imsart-nameyear}